\def\bse{\begin{eqnarray*}}
\def\ese{\end{eqnarray*}}
\def\be{\begin{eqnarray}}
\def\ee{\end{eqnarray}}
\def\bsq{\begin{equation*}}
\def\esq{\end{equation*}}
\def\bq{\begin{equation}}
\def\eq{\end{equation}}
\def\n{\nonumber}
\newcommand{\X}{\mathbf{X}}
\renewcommand{\O}{\mathbf{O}}
\newcommand{\x}{\mathbf{x}}
\newcommand{\U}{\mathbf{U}}
\newcommand{\Z}{\mathbf{Z}}
\newcommand{\z}{\mathbf{z}}
\newcommand{\bbR}{\mathbb{R}}
\newcommand{\bbN}{\mathbb{N}}
\newcommand{\bbE}{\mathbb{E}}
\newcommand{\calG}{{\mathcal{G}}}
\newcommand{\calW}{\mathcal{W}}
\newcommand{\calF}{\mathcal{F}}
\newcommand{\calD}{\mathcal{D}}
\newcommand{\calS}{\mathcal{S}}
\newcommand{\calL}{\mathcal{L}}
\newcommand{\calV}{\mathcal{V}}
\newcommand{\calN}{\mathcal{N}}
\newcommand{\calB}{\mathcal{B}}
\newcommand{\balpha}{\boldsymbol{\alpha}}
\newcommand{\boldeta}{\boldsymbol{\eta}}
\newcommand{\blambda}{\boldsymbol{\lambda}}
\newcommand{\tp}{^\top}
\newcommand{\var}{\mathrm{Var}}
\newcommand{\tr}{\mathrm{tr}}
\newcommand{\Cov}{\mathrm{Cov}}
\renewcommand{\hat}{\widehat}
\renewcommand{\tilde}{\widetilde}
\renewcommand{\exp}{\mathrm{exp}}
\DeclareMathOperator{\argmax}{argmax}
\newtheorem{theorem}{Theorem} 
\newtheorem{lemma}{Lemma}
\newtheorem{assumption}{Assumption} 
\newtheorem{corollary}{Corollary}
\newtheorem{proposition}{Proposition}
\title{Semiparametric Causal Inference for Right-Censored Outcomes with Many Weak Invalid Instruments }
\author[1,2]{Qiushi Bu}
\author[1]{Wen Su}
\author[3]{Xingqiu Zhao}
\author[4]{Zhonghua Liu$^*$}
\affil[1]{\small Department of Biostatistics, City University of Hong Kong, Hong Kong}
\affil[2]{\small Academy of Mathematics and Systems Science, Chinese Academy of Sciences, Beijing, China}
\affil[3]{\small Department of Applied Mathematics, The Hong Kong Polytechnic University, Hong Kong}
\affil[4]{\small Department of Biostatistics, Columbia University, New York, NY, USA}
\begin{document}
\maketitle 
\begin{abstract}
We propose a semiparametric framework for causal inference with right-censored survival outcomes and many weak invalid instruments, motivated by Mendelian randomization in biobank studies where classical methods may fail. We adopt an accelerated failure time model and construct a moment condition based on augmented inverse probability of censoring weighting, incorporating both uncensored and censored observations. Under a heteroscedasticity-based condition on the treatment model, we establish point identification of the causal effect despite censoring and invalid instruments. We propose GEL-NOW (Generalized Empirical Likelihood with Non-Neyman Orthogonal and Weak moments) for valid inference under these conditions. A divergent number of Neyman orthogonal nuisance functions is estimated using deep neural networks. A key challenge is that the conditional censoring distribution is a non-Neyman orthogonal nuisance, contributing to the first-order asymptotics of the estimator for the target causal effect parameter. We derive the asymptotic distribution and explicitly incorporate this additional uncertainty into the asymptotic variance formula. We also introduce a censoring-adjusted over-identification test that accounts for this new variance component. Simulation studies and UK Biobank applications demonstrate the method's robustness and practical utility. 

{\bf Keywords: }{Censored outcomes; Deep neural networks; Generalized empirical likelihood; Mendelian randomization; Over-identification test; Semiparametric theory; Weak and invalid instruments}
\end{abstract}

%--------------------------------------------
\newpage
\section{Introduction}

\subsection{Motivation}

Understanding the causal effects of modifiable risk factors on disease onset is a central objective in epidemiological research. Large-scale cohorts such as the UK Biobank provide rich longitudinal data to study time-to-event (or survival) outcomes and evaluate how exposures, such as body mass index (BMI), influence disease onset \citep{Sudlow2015}. For example, a recent genetic study analyzed more than 800 time-to-event phenotypes derived from linked electronic health records (EHR) in the UK Biobank \citep{dey2022}.

A major obstacle to establishing causal relationships between modifiable risk factors and survival outcomes in observational studies is unmeasured confounding, which can bias effect estimates and lead to misleading conclusions. Mendelian randomization (MR) \citep{katan1986apoupoprotein,smith2003mendelian} addresses this problem by using genetic variants as instrumental variables (IVs), and has become a widely used approach in epidemiology, particularly in large-scale biobank studies where hundreds or thousands of candidate instruments are available.

However, applying MR to survival outcomes presents unique challenges. First, survival data are often subject to right-censoring, as in the UK Biobank \citep{dey2022}. Second, many genetic instruments may be weak \citep{liu2023,ye2024} or invalid, for example, in the presence of pervasive horizontal pleiotropy \citep{Verbanck2018}, violating core IV assumptions \citep{Didelez2010}. These complications render classical IV estimators biased or inconsistent, and recent methodological advances for uncensored continuous outcomes \citep{kang2024identification,lin2024} are not directly applicable. Existing semiparametric frameworks such as generalized empirical likelihood (GEL) \citep{smith1997alternative,newey2004} and generalized method of moments \citep{hansen1982,wang2025} typically rely on Neyman orthogonality \citep{neyman1959,neyman1979,Chernozhukov2018}, an assumption that fails when key nuisances, most notably the conditional censoring distribution, enter the proposed moment equations non-orthogonally. Together, these issues highlight a sharp methodological gap: there is currently no MR framework capable of delivering valid causal inference for censored survival outcomes in the presence of many weak and potentially invalid instruments.

\subsection{Related work}

While various strands of research have addressed aspects of this problem, a unified solution has not, to the best of our knowledge, been established. Classical IV methods for censored survival outcomes have been proposed \citep{nie2011,Tchetgen2015,Ertefaie2018}, but they assume fully valid instruments. GEL has shown robustness under many weak moments \citep{newey2009}, but existing theory does not extend to non-Neyman orthogonal and weak moment conditions with a diverging number of nuisance functions. Modern machine learning-based semiparametric approaches, such as double/debiased machine learning \citep{Chernozhukov2018}, rely on orthogonality assumption and cross-fitting techniques that are not applicable in censored survival settings where the censoring distribution is shown to be a non-Neyman orthogonal nuisance. 

\subsection{Our contributions}
\label{sec:contributions}

We propose \textbf{\texttt{MAWII-Surv}} (MAny Weak and Invalid Instruments for Survival outcomes), a unified framework for semiparametric causal inference with right-censored survival outcomes and many weak invalid instruments. Identification of causal effect is achieved through a heteroscedasticity-based strategy under a structural accelerated failure time (AFT) model \citep{kalbfleisch2002statistical,robins1992estimation}. Heteroscedasticity is common in genetics \citep{Pare2010,westerman2022variance,xiang2025genome}; for example, BMI exhibits variance differences across genetic variants \citep{yang2012fto,wang2019,miao2022quantile}. 

For estimation and inference, we introduce GEL-NOW (Generalized Empirical Likelihood with Non-Neyman Orthogonal and Weak moments), or GEL 2.0, a new semiparametric framework that incorporates non-Neyman orthogonal nuisances into moment conditions, accommodates many weak instruments, and explicitly derives variance inflation. This represents a fundamental theoretical advance, extending GEL beyond its classical domain.

Compared with GEL under many weak moments \citep{newey2009}, our approach substantially broadens the scope by accommodating (i) a diverging number of nuisance functions estimated via flexible deep neural networks (DNNs) \citep{Goodfellow2016}, and (ii) non-Neyman orthogonal nuisances that influence first-order asymptotics. Relative to \citet{ye2024}, who studied uncensored outcomes with weak and invalid instruments under orthogonality, our framework extends to censored survival outcomes, where the censoring distribution itself is a non-Neyman orthogonal nuisance. In contrast to existing methods, we therefore develop a new semiparametric causal inference framework for censored survival outcomes that simultaneously addresses many weak and invalid instruments, a diverging number of nuisance functions, and non-orthogonality.

\begin{enumerate}
\item \textbf{Identification under right-censoring and invalid instruments.} We establish point identification of the causal effect on log-survival time in a structural AFT model under a heteroscedasticity-based condition for the treatment variable, generalizing prior results \citep{Lewbel2012,Tchetgen2021,ye2024} to right-censored outcomes.

\item \textbf{Semiparametric inference via GEL 2.0.} Unlike previous frameworks that rely on orthogonality or cross-fitting \citep{Chernozhukov2018,farrell2021,xu2022deepmed,ye2024,wang2025}, we develop a new GEL-based estimator that accommodates non-Neyman orthogonal and weak moments with a diverging number of nuisance functions. We estimate orthogonal nuisances using deep neural networks (DNNs) \citep{Goodfellow2016}, avoiding cross-fitting via localization-based empirical process techniques \citep{bartlett2005,farrell2021}, and derive a variance decomposition that explicitly incorporates additional variability from estimating non-Neyman orthogonal nuisances such as the conditional censoring distribution in our setting.

\item \textbf{Diagnostics and software.} We propose a new over-identification test that refines classical Sargan–Hansen tests \citep{sargan1958,hansen1982,ye2024} to censored outcomes. We implement the framework in a scalable Python package, \texttt{`MAWII-Surv'}, and demonstrate its performance through simulations and UK Biobank applications.
\end{enumerate}

\subsection{Organization}

The remainder of the paper is organized as follows. Section~\ref{sec:probelsetup} introduces the problem setup and  a structural AFT model. Section~\ref{sec:identification} presents the identification strategy and moment condition under right-censoring. Section~\ref{sec:semiparametric} describes the GEL estimation procedure and our approach to nuisance estimation via DNNs. Section~\ref{sec:theoretical} establishes the asymptotic properties of the estimator. Section~\ref{sec:diagnostics} develops model diagnostic tools, including a weak identification measure and a censoring-adjusted over-identification test. Section~\ref{sec:simulation} presents the simulation results. Section~\ref{sec:ukb} applies our method to UK Biobank data to estimate the causal effect of BMI on the time-to-onset  of three diseases. Section~\ref{sec:discussion} concludes with a discussion. All technical proofs are in the Supplementary Material. The \texttt{Python} package \texttt{`MAWII-Surv'} for the proposed methods is publicly available at \url{https://github.com/424051725/mawiisurv}.
\section{Problem Setup}
\label{sec:probelsetup}
For causal inference with right-censored outcomes, we adopt a structural accelerated failure time (AFT) model \citep{robins1992estimation}, which generalizes the classical semiparametric AFT framework \citep{kalbfleisch2002statistical} to permit causal interpretation in the presence of confounding. This approach is motivated by two considerations. First, modeling the logarithm of survival time directly yields a straightforward interpretation of the causal effect of exposure on survival, which is particularly valuable for causal inference. Second, in contrast to proportional hazards models \citep{cox1972regression}, accelerated failure time models remain valid when the proportional hazards assumption is violated, a situation that frequently arises in observational health studies \citep{wei1992accelerated,robins1992semiparametric}. 

Let $\tilde T$ and $\tilde C$ be the true survival time and censoring time, and $\tilde Y=\min(\tilde T,\tilde C)$ be the observed survival time and $\delta=I(\tilde T\leq\tilde C)$ be the indicator of whether the outcome is uncensored. In the standard accelerated failure time model, the response is $\log \widetilde T$, where $\widetilde T>0$ denotes the original survival time. For ease of notation, let $T = \log \widetilde T$, $C = \log \widetilde C$, $Y = \log \widetilde Y$, and throughout the rest of the paper, we use $T$, $C$ and $Y$, to denote the log-transformed survival time, censoring time, and observed time, respectively. 
Moreover, $\Z = (Z_1, \dots, Z_m)\tp$ denotes an $m$-dimensional vector of candidate instrumental variables that might violate core IV assumptions \citep{Didelez2010}, where $m$ is allowed to diverge with the sample size $n$. Let $\X$ represent $d_x$-dimensional observed baseline covariates, and $A$ is the exposure. 
Let $\O = (Y,\delta, A, \Z, \X)$, $\O_T = \{T, A, \Z, \X\}$, and $\O_A = \{A, \Z, \X\}$. 

Let $ \mathbb{E}_{P_X}[f(X)] := \int f(x) \, dP_X(x) $ denote the population expectation under the distribution $P$ of a generic random variable $X$, and let $ \mathbb{E}_n[f(X)] := 1/n \sum_{i=1}^n f(X_i) $ denote the empirical average on the observed sample $ \{X_i\}_{i=1}^n $. When the domain of integration is clear from context, we simplify $ \mathbb{E}_{P_X} $ to $ \bbE $. In particular, the generic random variable $X$ may refer to either $\O$, $\O_T$, or $\O_A$, and $P$ denotes the corresponding distribution in each case. Moreover, we use \( \|\cdot\| \) to denote norms. For a vector \( \x \in \mathbb{R}^d \), we write \( \|\x\| = (\sum_{i=1}^d x_i^2)^{1/2} \) for the Euclidean norm. For a function \( f \colon \mathcal{X} \to \mathbb{R} \), denote \( \|f\|_{L^2} = (\int f^2(x)\, d P_X(x))^{1/2} \) as \( L^2 \) norm, and \( \|f\|_{\infty} = \sup_{x \in \mathcal{X}} |f(x)| \) as the supremum norm. For a matrix $\Sigma$, let $\|\Sigma\|$ be the largest singular value of $\Sigma$.

We consider the following semiparametric  structural AFT model:
\begin{equation}\label{ET}
T=\beta_{0}A+\alpha(\Z,\X)+\omega_{T}(\X,\U)+\epsilon,   
\end{equation}
and the structural equation for the treatment variable $A$:
\begin{equation}\label{EA}
\bbE(A | \Z,\X,\U)=\gamma(\Z,\X)+\omega_{A}(\X,\U),    
\end{equation}
where $\beta_0$ is the true causal effect, $\alpha(\cdot)$, $\gamma(\cdot)$, $\omega_T(\cdot)$, $\omega_A(\cdot)$ are unspecified functions, $\U$ represents the unobserved confounders, and $\epsilon$ is the random error satisfying $\bbE(\epsilon|A,\Z,\X,\U)=0$. The specification in \eqref{ET} is a semiparametric structural AFT  model that leaves the error distribution unspecified, thereby providing flexibility in survival data analysis. Several error distributions commonly used in AFT models satisfy this condition, including the normal distribution (yielding lognormal survival times) and the logistic distribution (yielding log-logistic survival times) \citep{lawless2011statistical}. The function $\alpha(\cdot)$ captures the direct effect of $\Z$ on the outcome $T$. When $\alpha(\cdot) \neq 0$, this direct path violates the exclusion restriction and $\Z$ does not satisfy the conditions to be a valid instrumental variable, as shown in Figure \ref{fig1}. If $\alpha(\cdot)=0$, then $\Z$ is a valid set of IVs.
\begin{figure}[h]
\centering
\begin{tikzpicture}[
node distance=2cm and 3cm,
every node/.style={font=\normalsize},
arrow/.style={-{Latex[width=2mm,length=2mm]}, line width=0.8pt},
dashedarrow/.style={arrow, dashed}
]

% Nodes
\node (Z) at (0,0) {$\Z$};
\node (A) at (3,0) {$A$};
\node (T) at (6,0) {$T$};
\node (U) at (4.5,2) {$\U$};

% Arrows
\draw[arrow] (Z) -- (A);                             % Z → A
\draw[arrow] (A) -- node[above] {\small $\beta_0$} (T); % A → Y
\draw[arrow] (Z) to[out=-30,in=-150] node[below] {\small $\alpha$} (T); % curved Z → Y
\draw[arrow] (U) -- (A);                             % U → A
\draw[arrow] (U) -- (T);                             % U → Y

\end{tikzpicture}
\caption{Relationship between candidate instrumental variable $\Z$, exposure $A$, outcome $T$, and unobserved confounder $\U$, conditional on observed $\X$.}
\label{fig1} 
\end{figure}
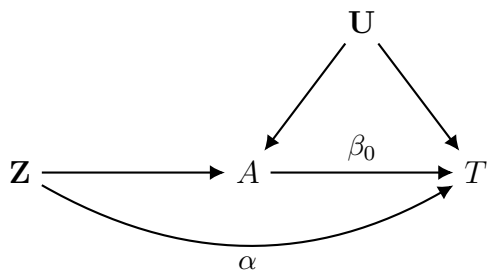

Following common practice in the analysis of right-censored data \citep{kalbfleisch2002statistical}, we adopt the conditional non-informative censoring assumption:
\begin{assumption}[Conditional non-informative censoring]\label{condindependent}
$\tilde T\perp\tilde C |\O_A$.
\end{assumption} 
This assumption states that, conditional on $\O_A$, the censoring mechanism is independent of the failure time. An immediate consequence is that, on the log scale, $T$ and $C$ are also conditionally independent given  $\O_A$.

\section{Identification under the Structural AFT Model}
\label{sec:identification}
% We first outline the identification strategy for the treatment effect under the proposed structural  models, allowing for the possibility that the instrumental variable violates the exclusion restriction:
% \begin{equation}\label{ET-MEAN}
% \bbE(T|A,\Z,\X,\U)=\beta_{0}A+\alpha(\Z,\X)+\omega_{T}(\X,\U).
% \end{equation}
% If $T$ were uncensored, then taking the conditional expectation of \eqref{ET} yields the mean model in \eqref{ET-MEAN}.
% Assumption \ref{condcondind} provides sufficient assumptions for the identifiability of $\beta_0$ in this setting. Let $R_A = A-\bbE(A|\Z,\X)$.
We begin by outlining the identification strategy for the treatment effect under the proposed structural model, which allows the instrumental variable to violate the exclusion restriction:
\begin{equation}\label{ET-MEAN}
\bbE(T \mid A, \Z, \X, \U) = \beta_{0} A + \alpha(\Z,\X) + \omega_{T}(\X,\U).
\end{equation}
If $T$ were fully observed, then taking the conditional expectation of \eqref{ET} would yield the mean model in \eqref{ET-MEAN}. 
Assumption~\ref{condcondind} specifies sufficient conditions for the identifiability of $\beta_{0}$ in this setting. For later use, define $R_A = A - \bbE(A \mid \Z,\X)$.

%When there are no censoring observations and no covariates $\X$, \cite{Tchetgen2021} proved that $\beta_0$ can be uniquely determined under $\Z\perp\U$ and the heteroscedasticity condition $\Cov(\Z,\var(A|\Z))\neq 0$. Building on this, \cite{ye2024} proposed a more general identification strategy for the setting with covariates and uncensored outcomes. 

\begin{assumption}[Identification assumptions] \label{condcondind}
(i).   $\Z\perp\U|\X;$  
(ii). $\bbE\{R_A(T-\beta_0 A)|\Z,\X\}=\bbE\{R_A(T-\beta_0 A)|\X\};$
(iii). $\Cov(\Z,\var(A|\Z,\X)|\X) \neq 0.$    
\end{assumption}
Assumption \ref{condcondind} (i) corresponds to the independence assumption commonly required for instrumental variables, namely that $\Z$ is independent of unmeasured confounders $\U$ given covariates $\X$. According to Lemma 3.2 of \cite{Tchetgen2021}, the conditional independent assumption $\Z \perp \U|\X$ can be relaxed to $\Cov(\omega_T(\X,\U),\omega_A(\X,\U)|\Z,\X)=\Cov(\omega_T(\X,\U),\omega_A(\X,\U)|\X)$. This relaxation accommodates violations of the instrumental variable independence assumption.
Assumption \ref{condcondind} (ii) ensures that conditional on $\X$, $R_A(T - \beta_0 A)$ is independent of $\Z$. Assumption \ref{condcondind} (iii) requires that $\var(A|\Z,\X)$ vary with $\Z$ given $\X$, that is, $\Z$ induces heteroscedasticity in $A$ conditional on $\X$. This assumption enables identification  even when $\Z$ may not satisfy the  exclusion restriction assumption. Similar heteroscedasticity-based identification has been used in the G-Estimation under No Interaction with Unmeasured Selection (GENIUS) approach \citep{Lewbel2012,lewbel2018,Tchetgen2021,ye2024} for robust Mendelian randomization. Assumption~\ref{condcondind}(iii) imposes a heteroscedasticity condition that is empirically testable, and can be assessed using conventional tests for heteroscedasticity \citep{Pare2010,Tchetgen2021,liu2023,ye2024}. Lemma~\ref{lemma0} extends the influence function for $\beta_{0}$, derived in \citet{ye2024}, to the semiparametric structural AFT model in the case where the log-survival time is fully observed.

\begin{lemma} \label{lemma0}
Under the structural  models \eqref{EA} and \eqref{ET-MEAN} and Assumptions \ref{condindependent}-\ref{condcondind}, the influence function of $\beta_0$ is 
$$
g(\beta;\O_{T})=R^h_Z\{R_AR_T-\beta R_A^2-\bbE(R_AR_T-\beta R_A^2|\X)\},
$$
where $R^h_Z=h(\Z)-\bbE(h(\Z)|\X)$ for any scalar-valued function $h$, and $R_T=T-\bbE(T|\Z,\X)$. And the closed-form solution for $\beta_0$ is 
\be\label{closeform1}
\beta_0=\frac{\bbE(R^h_Z\{R_AR_T-\bbE(R_AR_T|\X)\})}{\bbE(R^h_Z\{R_A^2-\bbE(R_A^2|\X)\})}.
\ee
\end{lemma}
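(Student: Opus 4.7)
The plan is to establish $\bbE[g(\beta_0;\O_T)]=0$ and then, using linearity in $\beta$, invert this moment identity to obtain the closed form \eqref{closeform1}, with Assumption \ref{condcondind}(iii) guaranteeing the denominator is nonzero. The decisive technical step is to show that the conditional mean $\bbE(R_AR_T-\beta_0 R_A^2\mid\Z,\X)$ is in fact a function of $\X$ only; once that measurability reduction is in hand, the rest is iterated-expectation bookkeeping.

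For that reduction, I would observe that $R_T-\beta_0 R_A=(T-\beta_0 A)-\{\bbE(T\mid\Z,\X)-\beta_0\bbE(A\mid\Z,\X)\}$. Multiplying by $R_A$ and using $\bbE(R_A\mid\Z,\X)=0$ gives $\bbE\{R_A(R_T-\beta_0 R_A)\mid\Z,\X\}=\bbE\{R_A(T-\beta_0 A)\mid\Z,\X\}$, and Assumption \ref{condcondind}(ii) collapses the right-hand side to $\bbE\{R_A(T-\beta_0 A)\mid\X\}$. The structural reason behind Assumption \ref{condcondind}(ii) in this model is transparent: combining \eqref{ET}--\eqref{EA} with Assumption \ref{condcondind}(i) and writing $\tilde\omega_T=\omega_T(\X,\U)-\bbE\{\omega_T(\X,\U)\mid\X\}$ and analogously $\tilde\omega_A$, one obtains $R_T-\beta_0 R_A=\tilde\omega_T+\epsilon$ and $\bbE(R_A\mid\Z,\X,\U)=\tilde\omega_A$, so that iterated expectations together with $\bbE(\epsilon\mid A,\Z,\X,\U)=0$ reduce $\bbE\{R_A(R_T-\beta_0 R_A)\mid\Z,\X\}$ first to $\bbE(\tilde\omega_A\tilde\omega_T\mid\Z,\X)$ and then, via $\Z\perp\U\mid\X$, to $\bbE(\tilde\omega_A\tilde\omega_T\mid\X)$.

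With $\phi:=R_AR_T-\beta_0R_A^2$ reduced in this way, the moment identity follows immediately. Because $\bbE(R^h_Z\mid\X)=0$ by construction of $R^h_Z$, and $\bbE(\phi\mid\X)$ depends only on $\X$, the tower property yields $\bbE\{R^h_Z\bbE(\phi\mid\X)\}=0$ and $\bbE(R^h_Z\phi)=\bbE\{R^h_Z\bbE(\phi\mid\Z,\X)\}=\bbE\{R^h_Z\bbE(\phi\mid\X)\}=0$, hence $\bbE[g(\beta_0;\O_T)]=0$. Since $g(\beta;\O_T)$ is affine in $\beta$, solving $\bbE[g(\beta;\O_T)]=0$ rearranges into the ratio \eqref{closeform1}.

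It remains to check that this rearrangement is legitimate, i.e.\ that the denominator does not vanish. Using $\bbE(R^h_Z\mid\X)=0$ and iterated expectations, $\bbE[R^h_Z\{R_A^2-\bbE(R_A^2\mid\X)\}]=\bbE\{R^h_Z\var(A\mid\Z,\X)\}=\bbE[\Cov\{h(\Z),\var(A\mid\Z,\X)\mid\X\}]$, which is nonzero for a suitable $h$ (for instance a coordinate of $\Z$) by Assumption \ref{condcondind}(iii). The main obstacle throughout is the conditional-mean reduction in the second paragraph; once the identity $\bbE(\phi\mid\Z,\X)=\bbE(\phi\mid\X)$ is secured, the remaining steps are standard iterated-expectation manipulations.
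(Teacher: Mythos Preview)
Your proposal is correct and follows essentially the same route as the paper. The paper's proof of this lemma appears as the opening computation in the proof of Theorem~\ref{thmid}: it expands $\bbE[g(\beta_0;\O_T)]$, uses $\bbE(R_Z\mid\X)=0$ to kill the centered piece, writes $R_T-\beta_0 R_A=(T-\beta_0 A)-\{\bbE(T\mid\Z,\X)-\beta_0\bbE(A\mid\Z,\X)\}$, uses $\bbE(R_A\mid\Z,\X)=0$ to drop the second term, and then invokes Assumption~\ref{condcondind}(ii) to reduce $\bbE\{R_A(T-\beta_0 A)\mid\Z,\X\}$ to a function of $\X$ only---exactly your key reduction $\bbE(\phi\mid\Z,\X)=\bbE(\phi\mid\X)$. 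The only organizational difference is that you isolate this reduction up front and then apply the tower property in one line, whereas the paper interleaves the two; your extra paragraph deriving Assumption~\ref{condcondind}(ii) from the structural equations and $\Z\perp\U\mid\X$ is additional context (echoing the remark after Assumption~\ref{condcondind}) rather than a different proof strategy.
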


In the presence of right censoring, the true log-survival time $T$ is unobserved for censored subjects, and only $Y$ is available. Consequently, the influence function above cannot be applied directly. Simply substituting $T$ with $Y$ without adjusting for censoring would introduce bias. To address this issue, we follow \citet{tang2020} and employ an augmented inverse probability censoring weighting (AIPCW) approach to construct a new moment function $\psi(\beta;\O)$ as follows
\begin{equation}\label{psi}
\psi(\beta;\O)=\frac{\delta}{G_0(Y|\O_A)}g(\beta;\O)+\left(1-\frac{\delta}{G_0(Y|\O_A)}\right)\xi_0(\beta;\O_A).   
\end{equation}
Here $G_{0}(y | \O_{A}) = P(C > y | \O_{A}) > 0$, the standard positivity assumption, refers to the conditional distribution of the log-censoring time $C$, while $\xi_{0}(\beta; \O_{A}) = \bbE[g(\beta; \O_{T}) \mid \O_{A}]$ denotes the conditional expectation of the uncensored-data influence function given the observed  data $\O_{A}$.

The proposed  AIPCW moment function \eqref{psi} contains two components: a weighted term $\delta G_0^{-1}(Y | \O_A) g(\beta; \O)$ and an augmentation term $(1-\delta/G_0(Y|\O_A))\xi_0(\beta;\O_A)$. For uncensored observations ($\delta = 1$), the true event time is observed ($Y = T$), and the original moment function $g(\beta;\O_A)$ is reweighted by the inverse of the conditional probability of being uncensored, $1 / G_0(Y|\O_A)$.  Censored observations ($\delta=0$) contribute to the moment function through $\xi_0(\beta;\O_A)$, which may be viewed as predicting the unobserved failure time $T$ using the observed data $\O_A$, and then evaluating the moment function at this imputed value. An important advantage of the AIPCW moment function $\psi(\beta; \O)$ is its linearity in $\beta$. This follows from the fact that the main component $g(\beta; \O)$ is linear in $\beta$, and the correction term $\xi_0(\beta; \O_A)$ is a conditional expectation that does not alter this linearity. Therefore, the AIPCW moment function $\psi(\beta; \O)$ remains linear in $\beta$.

We now show that the proposed AIPCW moment function \eqref{psi} is uniquely satisfied at the true parameter $\beta_{0}$. This property ensures identification under the structural AFT model.
\begin{theorem}\label{thmid}
Under Assumptions \ref{condindependent} and \ref{condcondind}, the equation $\bbE{\psi(\beta;\O)}=0$ admits a unique solution $\beta=\beta_0$, with the closed-form expression for the true causal effect $\beta_0$ given by 
\be\label{closeform2}
\beta_0=\frac{\bbE(R^h_Z\{\frac{\delta}{G_0}R_AR_Y+(1-\frac{\delta}{G_0})R_A\bbE(R_Y|\O_A)-\bbE(R_AR_Y|\X)\})}{\bbE(R^h_Z\{R_A^2-\bbE(R_A^2|\X)\})},
\ee
where $R_Y = Y-\bbE(Y|\Z,\X)$.
\end{theorem}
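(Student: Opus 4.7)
My strategy is to reduce the identification question for the censored moment function $\psi(\beta;\O)$ back to the uncensored result in Lemma~\ref{lemma0} via augmented inverse probability of censoring weighting (AIPCW) unbiasedness, and then to obtain the closed form \eqref{closeform2} by a direct algebraic expansion. The argument has three steps.

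First, I would establish that $\bbE\{\psi(\beta;\O)\} = \bbE\{g(\beta;\O_T)\}$. Under Assumption~\ref{condindependent} ($T\perp C\mid \O_A$) and positivity of $G_0$, a standard conditional expectation calculation yields, for any integrable $h(Y,\O_A)$ and any $\O_A$-measurable $r(\O_A)$,
\bsq
\bbE\!\left\{\tfrac{\delta}{G_0(Y|\O_A)}\,h(Y,\O_A)\mid T,\O_A\right\}=h(T,\O_A),\qquad \bbE\!\left\{\bigl(1-\tfrac{\delta}{G_0(Y|\O_A)}\bigr)\,r(\O_A)\right\}=0,
\esq
because on $\{T\le C\}$ we have $Y=T$ so the weight reduces to $1/G_0(T|\O_A)$, and $\bbE\{\delta/G_0(Y|\O_A)\mid\O_A\}=1$. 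Applying these term by term to $\psi$, the augmentation $(1-\delta/G_0)\xi_0(\beta;\O_A)$ has zero expectation, and the weighted term collapses to $\bbE\{g(\beta;\O_T)\}$ after reconciling the latent $R_T$ against the observable $R_Y$.

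Second, I would invoke Lemma~\ref{lemma0} to conclude that $\bbE\{g(\beta;\O_T)\}=0$, and hence $\bbE\{\psi(\beta;\O)\}=0$, admits the unique solution $\beta=\beta_0$. Uniqueness is automatic because $\psi$ is affine in $\beta$ (the AIPCW operator preserves the linear dependence on $\beta$ carried by $-\beta R_A^2$), so the slope of $\bbE\{\psi(\beta;\O)\}$ is $-\bbE\{R_Z^h(R_A^2-\bbE(R_A^2|\X))\}$. Conditioning on $\X$ and using $\bbE(R_A^2|\Z,\X)=\var(A|\Z,\X)$, this slope equals $-\bbE\{\Cov(h(\Z),\var(A|\Z,\X)\mid\X)\}$, which is nonzero for a suitable choice of $h$ by Assumption~\ref{condcondind}(iii).

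Third, to derive \eqref{closeform2}, I would substitute the explicit form $\xi_0(\beta;\O_A)=R_Z^h\{R_A\bbE(R_Y|\O_A)-\beta R_A^2 -\bbE(R_A R_Y-\beta R_A^2\mid\X)\}$ into $\psi$. Because the $\beta$-dependent piece $-\beta R_A^2$ and the $\X$-measurable residual $-\bbE(R_A R_Y - \beta R_A^2|\X)$ appear identically in both terms of the AIPCW decomposition, they aggregate with total weight $\delta/G_0+(1-\delta/G_0)=1$, giving
\bsq
\psi(\beta;\O)=R_Z^h\!\left\{\tfrac{\delta}{G_0}R_A R_Y+\bigl(1-\tfrac{\delta}{G_0}\bigr)R_A\bbE(R_Y|\O_A)-\bbE(R_A R_Y|\X)-\beta\bigl[R_A^2-\bbE(R_A^2|\X)\bigr]\right\}.
\esq
Solving $\bbE\{\psi(\beta;\O)\}=0$ for $\beta$ then yields \eqref{closeform2} directly.

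\noindent\textbf{Main obstacle.} The subtle part is the bookkeeping in Step 1: $g(\beta;\O)$ uses the observable centering $R_Y=Y-\bbE(Y|\Z,\X)$ whereas $g(\beta;\O_T)$ uses the latent $R_T=T-\bbE(T|\Z,\X)$, so the equality $\bbE\{\psi(\beta;\O)\}=\bbE\{g(\beta;\O_T)\}$ does not hold pointwise after conditioning on $\O_T$. The reconciliation relies essentially on $\bbE(R_A\mid\Z,\X)=0$, which kills cross terms such as $\bbE\{R_Z^h R_A\bbE(Y|\Z,\X)\}$ and lets the AIPCW-weighted $R_A R_Y$ integrate to $\bbE\{R_Z^h R_A R_T\}$; likewise $\bbE\{R_Z^h\bbE(R_A R_Y|\X)\}=0$ by iterated expectation with $\bbE(R_Z^h|\X)=0$. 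Once these moment identities are in hand, the rest is routine linear algebra.
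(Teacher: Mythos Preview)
Your proposal is correct and follows essentially the same route as the paper: reduce $\bbE\{\psi(\beta;\O)\}$ to $\bbE\{g(\beta;\O_T)\}$ via the AIPCW identity $\bbE\{\delta/G_0(T|\O_A)\mid\O_T\}=1$, invoke Lemma~\ref{lemma0} for the root, and read off uniqueness from the affine slope $-\bbE\{R_Z^h(R_A^2-\bbE(R_A^2|\X))\}=-\bbE\{\Cov(h(\Z),\var(A|\Z,\X)\mid\X)\}$ via Assumption~\ref{condcondind}(iii).

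You are in fact more careful than the paper on two points. First, the paper's proof silently passes from $g(\beta_0;\O)$ to $g(\beta_0;\O_T)$ on the event $\{\delta=1\}$ without commenting on the centering discrepancy between $R_Y=Y-\bbE(Y|\Z,\X)$ and $R_T=T-\bbE(T|\Z,\X)$; your ``main obstacle'' paragraph correctly identifies that this reconciliation hinges on $\bbE(R_A\mid\Z,\X)=0$ and $\bbE(R_Z^h\mid\X)=0$ killing the cross terms, which is the right justification. Second, the paper states the closed form \eqref{closeform2} but does not derive it; your Step~3 expansion, exploiting that the $\O_A$-measurable pieces of $g$ and $\xi_0$ recombine with total weight $\delta/G_0+(1-\delta/G_0)=1$, is the clean way to see it. One minor note: your displayed formula for $\xi_0$ uses $\bbE(R_Y|\O_A)$ and $\bbE(R_AR_Y|\X)$ where strictly $\xi_0=\bbE[g(\beta;\O_T)\mid\O_A]$ involves $R_T$; this is harmless at the level of expectations (the augmentation term has mean zero since $\bbE[(1-\delta/G_0)\mid\O_A]=0$, and the $\X$-conditional term is annihilated by $\bbE(R_Z^h|\X)=0$), but you should flag it as a deliberate re-expression for estimability rather than a literal identity.
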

Theorem~\ref{thmid} establishes point identification of the true causal effect $\beta_{0}$ from the observed data. From the closed-form solutions in both the uncensored and censored cases, it is evident that the denominator is identical, implying that identification in both settings relies on the heteroscedasticity of the treatment variable $A$. When the censoring rate is zero, so that $G_{0}(Y \mid \O_{A}) \equiv 1$, the expression reduces exactly to \eqref{closeform1}.

As noted in Lemma \ref{lemma0}, the function $h(\Z)$ can be any scalar-valued function.  In practice, we choose $h(\Z) = Z_j$ for each $j = 1, \dots, m$, and stack them to construct an $m$-dimensional moment function. This allows us to estimate $\beta$ using all components of $\Z$. That is, the resulting $g(\cdot)$ in AIPCW moment function \eqref{psi} takes the form
$$
g(\beta;\O)=R_Z\{R_AR_Y-\beta R_A^2-\bbE(R_AR_Y-\beta R_A^2|\X)\},
$$
where $R_Z=\Z-\bbE(\Z|\X)$. When multiple candidate instruments are available for a single scalar causal parameter, the parameter is said to be \emph{over-identified}, meaning that the number of moment conditions exceeds the number of parameters. This has two main advantages: (i) efficiency can be improved by aggregating information across multiple moment conditions; and (ii) over-identification permits specification checks through tests of over-identifying restrictions, such as the Sargan–Hansen test \citep{sargan1958,hansen1982}. By contrast, with a single candidate instrument the model is exactly identified, and such tests are not available.

\section{Estimation via GEL 2.0}

\label{sec:semiparametric}

\subsection{GEL estimation procedure}
\label{subsec:GEL}

% Given an $m$-dimensional generic moment function $\Psi(\beta; \O)$, a classical estimator is the  two-step generalized method of moments (GMM, \citet{hansen1982}). However, when identification is weak or moments are nearly collinear, two-step GMM can exhibit non-negligible finite-sample bias and size distortions \citep{newey2004,newey2009}. To improve stability in these regimes, we adopt the generalized empirical likelihood (GEL) framework \citep{newey2004}.

Given an $m$-dimensional generic moment function $\Psi(\beta; \mathcal{O})$, the generalized method of moments (GMM; \citealp{hansen1982}) is widely used. We adopt the generalized empirical likelihood (GEL) framework \citep{newey2004}, which provides a likelihood-based alternative for estimation from moment conditions and can deliver improved stability in certain settings. GEL constructs likelihood-type tilting weights that enforce the sample moment conditions and is known to possess superior higher-order properties, to allow refinements such as Bartlett correction \citep{Imbens2002,kitamura2004}, and to avoid the explicit inversion of ill-conditioned weighting matrices. We therefore estimate the target causal parameter using a GEL criterion, defined as
$$
\hat\beta=\arg\min_{\beta\in \calB} \hat Q(\beta)=\arg\min_{\beta\in \calB}\sup_{\blambda\in L(\beta)}\frac{1}{n}\sum_{i=1}^n\rho(\blambda\tp \Psi(\beta;\O_i)),
$$
where $\rho(\cdot)$ is a concave function defined on an open interval $\calV$ containing 0, $\lambda$ is an $m$-dimensional vector and $L(\beta) = \{\blambda: \blambda\tp \Psi(\beta;\O_i)\in \calV\}$. Within the GEL framework, the choice of concave function $\rho(\cdot)$ determines the specific estimator. Common choices include $\rho(v) = -v - v^{2}/2$ for the continuous updating estimator (CUE), $\rho(v) = -e^{v} + 1$ for exponential tilting (ET), and $\rho(v) = \log(1-v)$ for empirical likelihood (EL). While \citet{ye2024} employed CUE to address weak identification bias, CUE is generally regarded as less attractive within the GEL family because of its relatively larger higher-order bias and the absence of refinements such as Bartlett correction \citep{Imbens2002,kitamura2004}. In addition, CUE requires inversion of the moment covariance matrix, which may be numerically unstable when the moment conditions are nearly collinear.

Figure \ref{fig:rho_plots} plots the three commonly used functions $\rho$ together with their first and second derivatives, allowing a direct comparison of their local curvature near zero and their tail behavior, which in turn highlights the differences in higher-order properties across the GEL family. As shown in panel (a), all three functions behave similarly near zero. However, notable differences emerge in their higher-order properties: CUE has a constant second-order derivative, reflecting its limited curvature and inability to accommodate higher-order bias corrections. In contrast, EL and ET exhibit sharper curvature because of their logarithmic and exponential forms. Panel (d) zooms out on a wider range of $v$, further highlighting that CUE imposes substantially weaker penalties on large deviations compared to EL and ET, resulting in a flatter curve away from zero. This weaker penalization may lead to reduced numerical stability in finite samples. In contrast, the steeper slopes of EL and ET provide stronger control over tail behavior, making them more favorable under weak identification. 
\begin{figure}[htbp]
\centering
\resizebox{0.8\textwidth}{!}{%
\begin{minipage}{1\textwidth}
\subfloat[$\rho(v)$]{%
\includegraphics[width=0.4\textwidth]{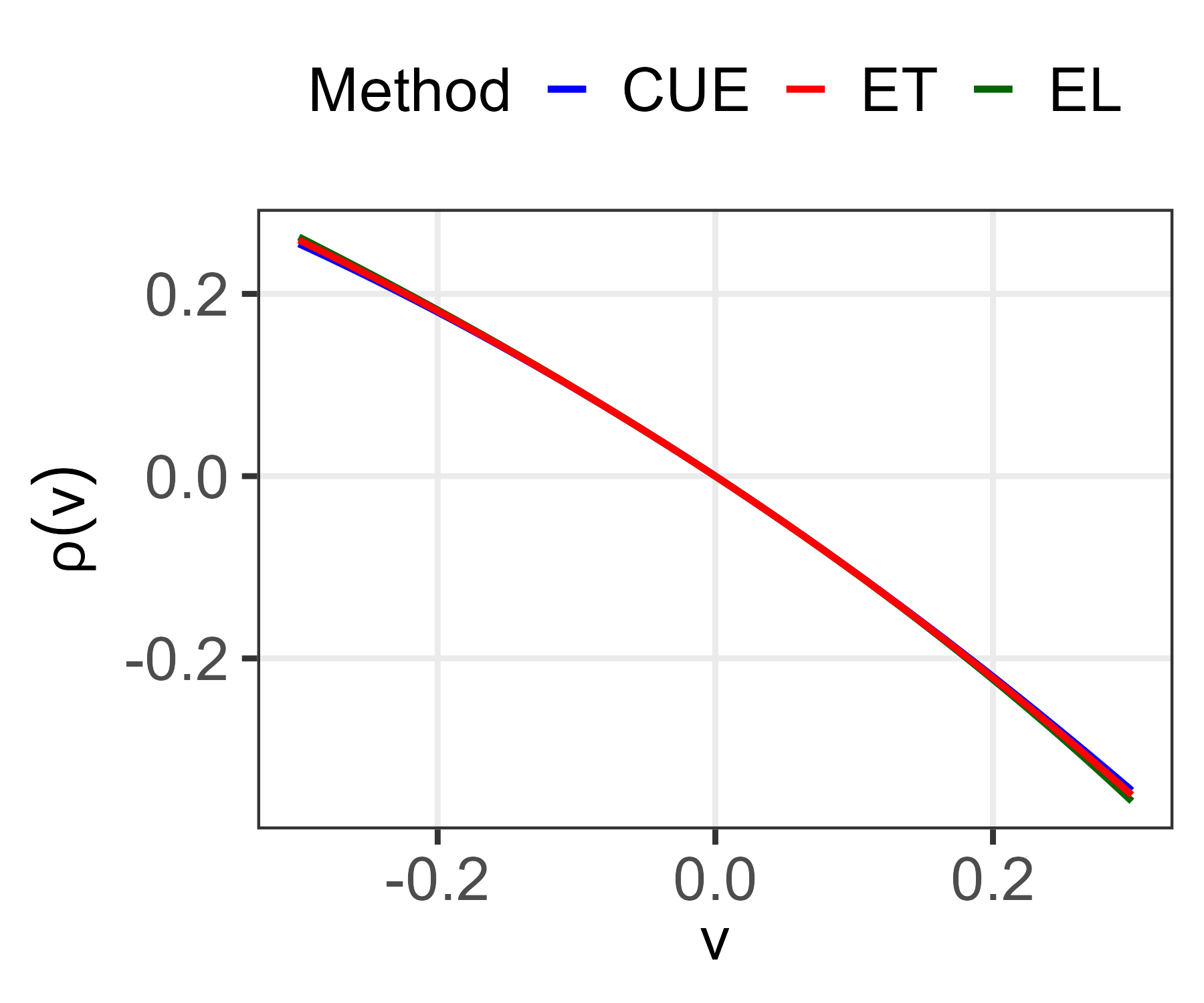}}
\hfill
\subfloat[$\rho'(v)$]{%
\includegraphics[width=0.4\textwidth]{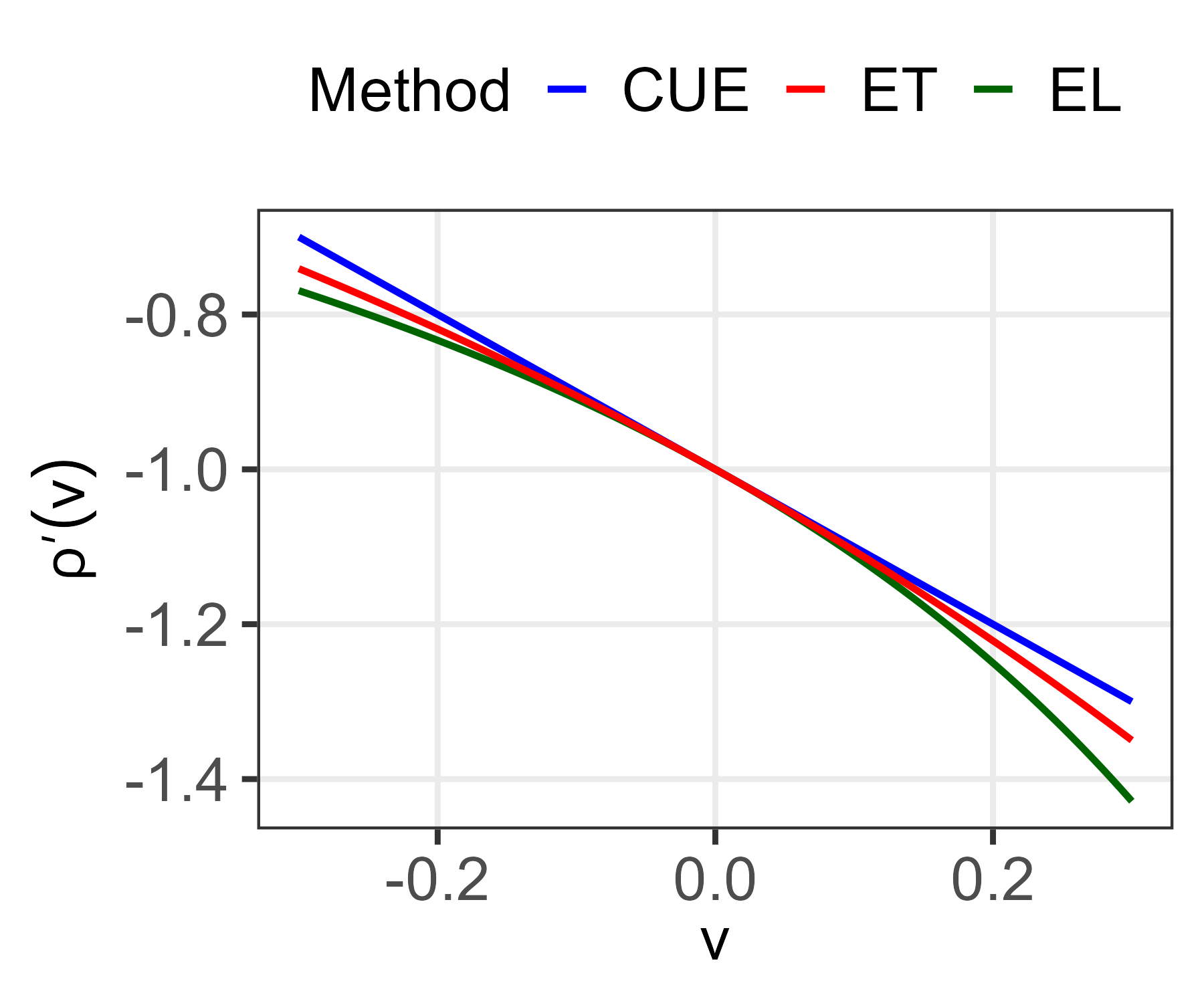}}
\vspace{1em}
\subfloat[$\rho''(v)$]{%
\includegraphics[width=0.4\textwidth]{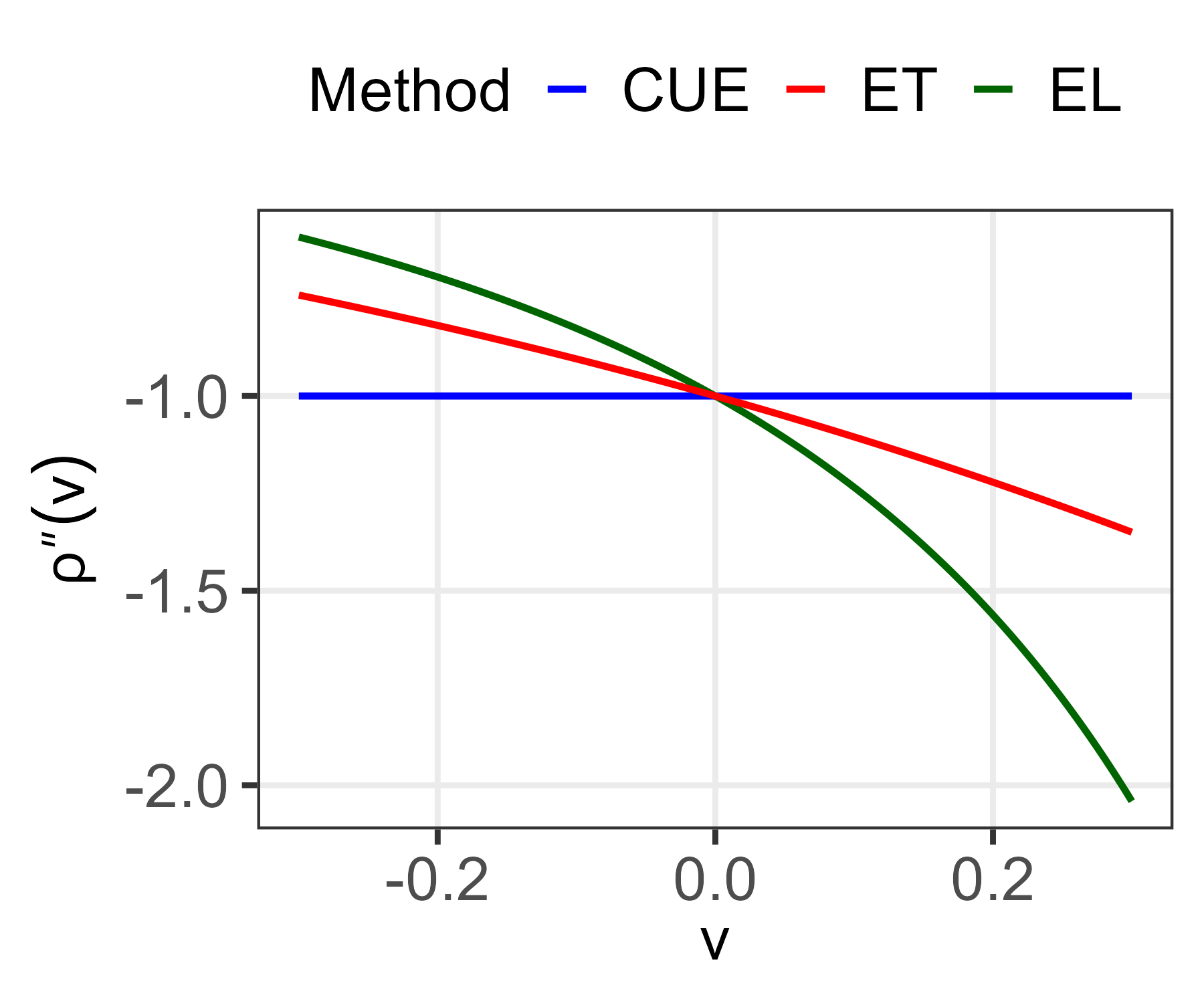}}
\hfill
\subfloat[$\rho(v)$ in a wider range]{%
\includegraphics[width=0.4\textwidth]{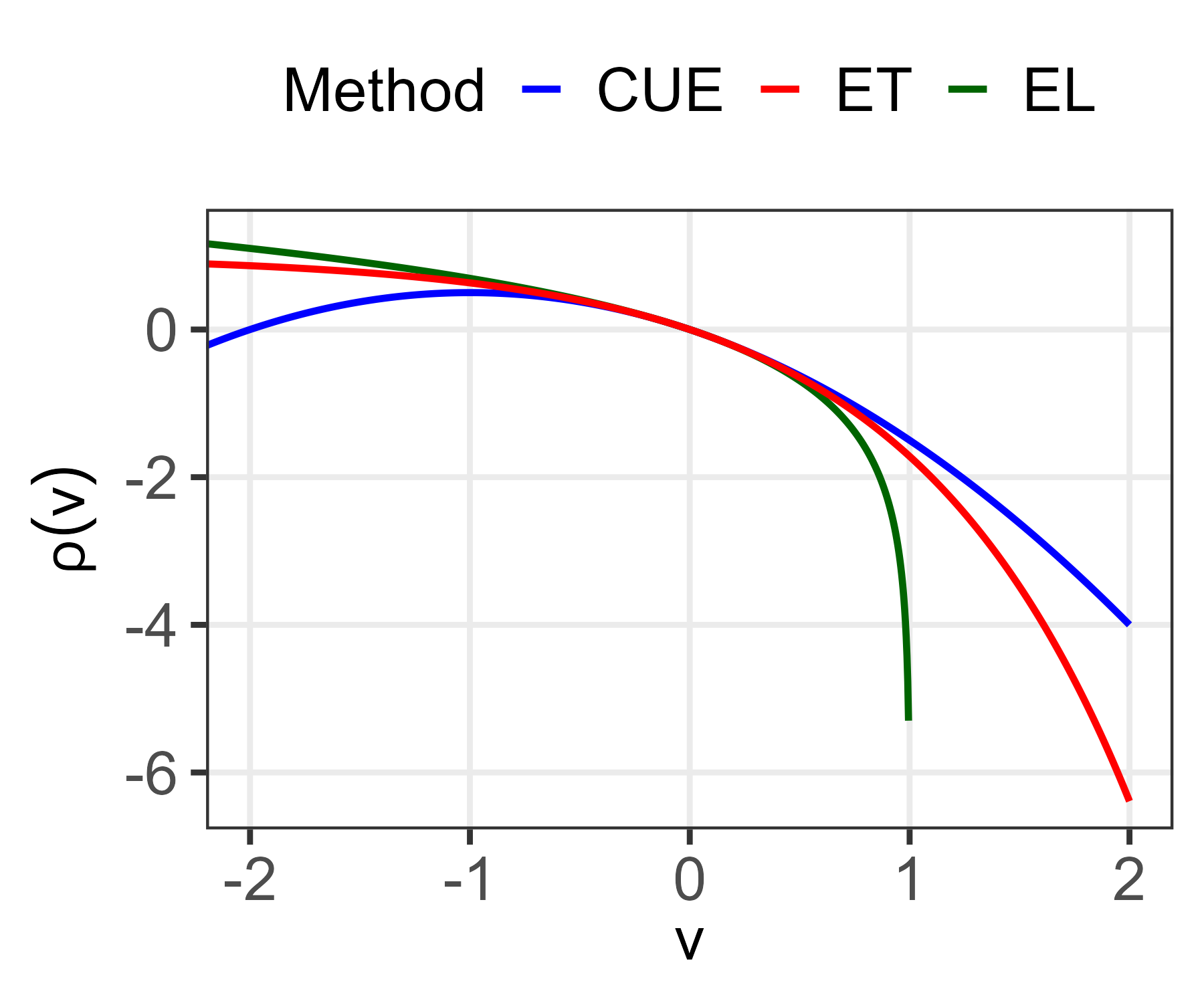}}
\end{minipage}
}
\caption{Plots of three different choices of $\rho(v)$ and their derivatives.\label{fig:rho_plots}}

\end{figure}

\subsection{Neyman orthogonal and non-Neyman orthogonal nuisances}
\label{subsec:nuisance}

A nuisance function is an auxiliary component that is not of direct interest but may affect estimation of the causal parameter. Certain nuisance functions satisfy the \textit{Neyman orthogonality condition} \citep{neyman1959,neyman1979,Chernozhukov2018}, which ensures that small estimation errors in these components do not influence the first-order behavior of the estimator of the parameter of interest. Formally, the Neyman orthogonality condition requires that the G\^ateaux derivative of the moment function $\Psi(\beta,\boldeta;\O)$ with respect to the nuisance function $\boldeta$, evaluated at the true nuisance component $\boldeta_0$, vanishes:
\[
\left. \frac{\partial}{\partial t}\bbE[\Psi(\beta_0,\boldeta_0+t(\boldeta-\boldeta_0);\O)] \right|_{t=0}=0.
\]
Intuitively, this condition ensures that the moment function is locally insensitive to errors in estimating $\boldeta$.

In contrast, when the derivative does not vanish, the nuisance function is \textit{non-Neyman orthogonal}, and its estimation error propagates directly into the first-order asymptotics of the estimator of the target parameter. This distinction is crucial in our setting: conditional expectations such as $\bbE(\Z|\X)$ or $\bbE(A|\Z,\X)$ are orthogonal nuisances, whereas the conditional censoring distribution $G(\cdot|\O_A)$ is non-Neyman orthogonal. As we will demonstrate, the latter introduces an additional variance component into the asymptotic distribution of the estimator of the target parameter, which must be explicitly accounted for to ensure valid inference. Rigorous definitions using G\^ateaux and Fr\'echet derivatives, together with a detailed proof establishing the non-orthogonality of $G(\cdot|\O_A)$, are provided in Section~\ref{suppproofproposition1} of the Supplementary Material.

\subsection{Nuisance function estimation}
\label{subsec:nuisance}
To evaluate the GEL criterion, all nuisance functions in the proposed AIPCW moment function $\psi$ must be estimated. We classify these nuisance functions into two categories. The first consists of \emph{Neyman orthogonal} nuisances, including $\bbE(\Z|\X)$, $\bbE(A|\Z,\X)$, $\bbE(Y|\Z,\X)$, $\bbE(R_A^2|\X)$, $\bbE(R_A R_Y|\X)$, and $\xi_0(\beta_0;\O_A)$. For these components, orthogonality ensures that small estimation errors do not affect the first-order behavior of the target estimator. This property allows us to employ flexible machine learning methods without altering the asymptotic distribution of $\hat\beta$ for the target causal parameter. In particular, we use DNNs, which, by the universal approximation theorem \citep{hornik1989}, can approximate complex nonlinear functions and thereby reduce bias from model misspecification.

The second category consists of the \emph{non-Neyman orthogonal} nuisance, namely the conditional censoring distribution $G(\cdot|\O_A)$. Because its first-order derivative does not vanish, estimation error in this component enters directly into the asymptotic distribution of $\hat\beta$. The following proposition formalizes this distinction, with proof provided in Section~\ref{pflemmaneyman} of the Supplementary Material.

\begin{proposition}\label{lemmaneyman}
For the AIPCW moment function \eqref{psi}, the set of nuisances ${\bbE(\Z|\X), \bbE(A|\Z,\X)}$, ${\bbE(Y|\Z,\X), \bbE(R_A^2|\X), \bbE(R_A R_Y|\X), \xi_0(\beta_0;\O_A)}$ are Neyman orthogonal, whereas $G(\cdot|\O_A)$ is non-Neyman orthogonal.
\end{proposition}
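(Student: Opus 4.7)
The plan is to verify (non-)orthogonality by computing, for each nuisance component $\eta_k$, the Gâteaux derivative $\left.\frac{\partial}{\partial t}\bbE[\psi(\beta_0,\boldeta_0+t(\boldeta-\boldeta_0);\O)]\right|_{t=0}$ along an arbitrary admissible direction, holding the remaining nuisances at their true values. I would work throughout with the algebraically equivalent separable form $\psi=\xi_0(\beta_0;\O_A)+\{\delta/G_0(Y|\O_A)\}\{g(\beta_0;\O)-\xi_0(\beta_0;\O_A)\}$, which cleanly isolates the censoring weight from the augmentation and makes the subsequent calculus transparent.

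For the six orthogonal nuisances I would group them by how they enter and rely on two repeatedly-used identities: (a) the IPCW identity $\bbE[\delta f(Y,\O_A)/G_0(Y|\O_A)\mid\O_A]=\bbE[f(T,\O_A)\mid\O_A]$ under Assumption~\ref{condindependent}, which in particular yields $\bbE[\delta/G_0(Y|\O_A)\mid\O_A]=1$; and (b) the identification identity $\bbE[R_A(T-\beta_0 A)\mid\Z,\X]=\bbE[R_A(T-\beta_0 A)\mid\X]$ from Assumption~\ref{condcondind}(ii). Perturbations of $\bbE(\Z|\X)$ only shift the outer factor $R_Z$; conditioning on $\X$ and invoking the $\X$-mean-zero structure of the bracket in $g$ (built in by the subtraction of $\bbE(R_AR_Y-\beta_0R_A^2|\X)$) together with (b) gives zero. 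Perturbations of $\bbE(A|\Z,\X)$ and $\bbE(Y|\Z,\X)$ shift $R_A$ and $R_Y$ inside the bracket; after applying (a) to reduce IPCW-weighted quantities to $T$-based expectations and conditioning on $(\Z,\X)$, the remaining first-order terms are annihilated by $\bbE(R_A|\Z,\X)=0$ and by (b). Perturbations of $\bbE(R_A^2|\X)$ and $\bbE(R_AR_Y|\X)$ only shift the inner $\X$-centering and are killed by the outer $R_Z$ via $\bbE(R_Z|\X)=0$. Finally, perturbing $\xi_0(\beta_0;\O_A)$ leaves a coefficient of the form $1-\delta/G_0(Y|\O_A)$, whose $\O_A$-conditional expectation vanishes by identity (a).

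For the non-orthogonality of $G$, perturbing $G_0\to G_0+t\Delta_G$ in the separable form and differentiating yields
$\left.\frac{d}{dt}\psi\right|_{t=0}=-\frac{\delta\,\Delta_G(Y|\O_A)}{G_0^2(Y|\O_A)}\{g(\beta_0;\O)-\xi_0(\beta_0;\O_A)\}.$
Taking expectations and applying identity (a) with $f(T,\O_A)=\Delta_G(T|\O_A)\{g^{\star}(T,\O_A)-\xi_0(\O_A)\}/G_0(T|\O_A)$, where $g^{\star}$ denotes $g$ evaluated at $Y=T$, the expected Gâteaux derivative collapses to $-\bbE[\Delta_G(T|\O_A)\{g^{\star}(T,\O_A)-\xi_0(\O_A)\}/G_0(T|\O_A)]$. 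Because $g^{\star}(T,\O_A)$ is a nontrivial function of $T$ through $R_Y=T-\bbE(Y|\Z,\X)$ while $\xi_0(\O_A)$ does not depend on $T$, the integrand is not identically zero. Exhibiting a single direction (for instance $\Delta_G(y|\O_A)\propto\{g^{\star}(y,\O_A)-\xi_0(\O_A)\}$ restricted to a compact set where $G_0$ is bounded away from zero) for which the derivative is strictly nonzero is then immediate, establishing that $G$ is non-Neyman orthogonal.

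The main obstacle is the two intertwined steps when perturbing $\bbE(Y|\Z,\X)$ and $\bbE(A|\Z,\X)$: the moment function is expressed in terms of observed $Y$ but the identification identity (b) is stated in terms of the true log-failure time $T$. Getting from one to the other requires applying identity (a) first to translate the IPCW-weighted moment into a $T$-based expectation, and then verifying that the surviving cross-term is exactly of the form annihilated by (b). Carried out carefully, this reveals why the specific AIPCW construction together with the subtraction $-\bbE(R_AR_Y-\beta_0R_A^2|\X)$ is essential: any alternative augmentation would leave first-order terms not matching (b) and would break orthogonality of the conditional-mean nuisances. Once this reduction is in place, the remaining steps are routine iterated-expectation manipulations.
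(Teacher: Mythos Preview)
Your approach is correct and follows essentially the same route as the paper's: compute Gâteaux derivatives, use the IPCW identity $\bbE[\delta/G_0(Y\mid\O_A)\mid\O_T]=1$ to reduce weighted pieces to $T$-based expectations, and then annihilate each term by conditioning on $\X$ or $(\Z,\X)$ using the residual definitions. One point worth flagging: the paper's proof does \emph{not} invoke Assumption~\ref{condcondind}(ii) anywhere. Orthogonality of the conditional-mean nuisances follows purely from the tower property together with $\bbE[R_A\mid\Z,\X]=0$, $\bbE[R_T-\beta_0 R_A\mid\Z,\X]=0$, $\bbE[R_Z\mid\X]=0$, and the fact that $h_3,h_4$ are by definition the relevant $\X$-conditional expectations. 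Your identity~(b) is the device that yields $\bbE[g(\beta_0;\O_T)]=0$ (identification), not orthogonality; Neyman orthogonality is a local linearization property that holds from the moment's structure alone. So the ``main obstacle'' you anticipate---bridging $Y$ to $T$ and then matching the cross-term to~(b)---largely dissolves once the IPCW reduction is done, and the remaining steps are exactly the routine iterated-expectation manipulations you describe. On the non-orthogonality side, your argument for $G$ is in fact sharper than the paper's: you exhibit an explicit direction $\Delta_G$ making the derivative strictly nonzero, whereas the paper simply observes that the expression in~\eqref{gateauxG} cannot be reduced to zero because $G(T\mid\O_A)$ depends nontrivially on $T$.
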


We provide some intuition for why $G(\cdot| \O_A)$ is non-Neyman orthogonal. Its first-order G\^ateaux derivative is 
\begin{equation}\label{gateauxG}
\bbE\Big[
\underbrace{
\frac{ G(T | \O_A) - G_0(T | \O_A)}{G_0(T | \O_A)}
}_{\text{Depends on } T \text{ and } \O_A}
\underbrace{
\left\{ g(\beta_0;\O_T) - \xi_0(\beta_0;\O_A) \right\}
}_{\text{Also depends on } T \text{ and } \O_A}
\Big] \neq 0.
\end{equation} 
Hypothetically, if $G(T|\O_A)$ were to depend only on $\O_A$ (and not on $T$), then by iterated expectation the entire expression would vanish, since $\bbE\{g(\beta_0;\O_T)-\xi_0(\beta_0;\O_A)\mid \O_A\}=0$. However, because $G(T|\O_A)$ is a function of both $T$ and $\O_A$, this cancellation does not occur, and the AIPCW moment function $\psi$ is not orthogonal with respect to $G(T|\O_A)$. Intuitively, the censoring weights depend jointly on the event time $T$ and $\O_A$, so small estimation errors in the nuisance function $G$ are transmitted directly into the AIPCW moment equation.

The computational implementation proceeds in three steps:
(i) estimate the non-Neyman orthogonal nuisance using a local Kaplan-Meier estimator \citep{dabrowska1989}; 
(ii) estimate the Neyman orthogonal nuisances using flexible DNNs \citep{Goodfellow2016}; and
(iii) plug the estimated nuisances into the GEL 2.0 criterion to obtain the causal effect estimate $\widehat\beta$.\\
\textit{(i) Estimating non-Neyman orthogonal nuisance via local Kaplan-Meier estimator.}
%Because the conditional censoring distribution $G(T| \O_A)$ is non-orthogonal, its estimation error enters the leading term of the asymptotic expansion. 
Following prior literature \citep{dabrowska1989,he2013,tang2020}, we estimate $G(\cdot| \O_A)$ using a local Kaplan-Meier estimator:
$$
\hat G (y|\O_A)=\prod_{i=1}^n\left\{1-\frac{B_{n i}(\O_A)}{\sum_{j=1}^n I\left(Y_j \geq Y_i\right) B_{n j}(\O_A)}\right\}^{I\left(Y_i \leq y, \delta_i=0\right)},
$$
where $B_{n j}(\O_A)=K\left(\frac{\O_A-\O_{A,j}}{h}\right) /\left\{\frac{1}{n} \sum_{i=1}^n K\left(\frac{\O_A-\O_{A,i}}{h}\right)\right\}, j=1, \ldots, n $, is the Nadaraya-Watson
weights, with $h$ denoting the bandwidth and $K$ a higher-order kernel function \citep{fan1992}.  One may use the Cox model \citep{cox1972regression}, parametric survival models \citep{cox2007parametric}, or other suitable approaches to estimate $G(\cdot|\O_A)$. The choice is flexible, provided the resulting estimator satisfies the conditions stated in Assumption~\ref{condkernel} in Section~\ref{additionalcond} of the Supplementary Material.

\textit{(ii) Estimating Neyman orthogonal nuisances via DNNs.}
To formalize the approximation properties of DNNs, we first introduce the H\"older class of functions \citep{schmidt2020}. Let $\zeta=s+r>0$ be the smoothness parameter, $r\in(0,1]$ and $s=\lfloor\zeta\rfloor\in\mathbb{N}_{0}$ be the fractional and integer parts of $\zeta$, respectively, where $\lfloor\zeta\rfloor$ denotes the largest integer not larger than $\zeta$, $d$ is the input dimension, and $\mathbb{N}_0$ denotes the set of non-negative integers. For a positive constant $B_0>0$, the H\"older class of functions $\mathcal{H}^\zeta([0,1]^d,B_0)$ is defined as
$$\begin{aligned}&\mathcal{H}^\zeta([0,1]^d,B_0)\\&=\Big\{f:[0,1]^d\to\mathbb{R},\max_{\|\balpha\|_1\leq s}\|\partial^{\balpha} f\|_\infty\leq B_0,\max_{\|\alpha\|_1=s}\sup_{x\neq y}\frac{|\partial^{\balpha} f(x)-\partial^{\balpha} f(y)|}{\|x-y\|^r}\leq B_0\Big\},\end{aligned}$$
where $\partial^{\balpha}=\partial^{\alpha_1}\cdots\partial^{\alpha_d}$ with $\balpha=(\alpha_1,\ldots,\alpha_d)^\top\in\mathbb{N}_0^d$ and $\|\balpha\|_1=\sum_{i=1}^d\alpha_i.$
Then we impose the following assumption on the nuisance functions. 

\begin{assumption}[Neyman orthogonal nuisance functions]\label{condnuispara}
Suppose that $\Z, \X$ are bounded, $\X \in \mathbb{R}^{d_x}$ with $d_x<\infty$, $\Z \in \mathbb{R}^{m}$,
\bse
\bbE\left(Z_j | \X=\x\right)&=&f_j(\x), j=1, \ldots, m, \\
\bbE(A | \Z=\z, \X=\x)&=&h_1(\z,\x), \\
\bbE(T | \Z=\z, \X=\x)&=&h_2(\z,\x), \\
\bbE\left(R_A R_T | \X =\x\right)&=&h_3(\x), \\
\bbE\left(R_A^2 | \X =\x\right)&=&h_4(\x).
\ese
Let $\boldeta_0=\{f_1,\dots,f_m,h_1, h_2, h_3, h_4\}$ collects $m+4$ nuisance functions, where $f_1,\dots,f_m, h_3,h_4$ all lie in a H\"older class $\mathcal{H}^\zeta([0,1]^{d_x},B_0)$, and $h_1, h_2$ lie in a H\"older class $\mathcal{H}^\zeta([0,1]^{m+d_x},B_0)$.
\end{assumption}
In Assumption \ref{condnuispara}, each conditional expectation, such as $\bbE(Z_j|\mathbf{X})$ and $\bbE(A|\mathbf{Z},\mathbf{X})$, is modeled as an unknown but sufficiently smooth function. Specifically, we assume that these functions lie in a H\"older class with bounded norm $B_0$, which imposes smoothness constraints and guarantees that DNNs can approximate them well \citep{jiao2023}. Importantly, the number of nuisance functions increases with sample size, accommodating a diverging number of candidate instruments. We estimate each nuisance function by minimizing the empirical risk over the rectified linear unit (ReLU) activated deep neural networks \citep{schmidt2020}
\be\label{reluclass}
\calF_{B_0,\calW,\calD,\calS,d}=\{f:\bbR^{d}\to\bbR \text{ with width } \calW, \text{ depth }\calD, \text{ size } \calS, \text{ and } \|f\|_{\infty}\leq B_0\}.
\ee
That is, $\hat f_j = \arg\min_{f\in\calF_{B_0,\calW,\calD,\calS,d_x}} 1/n\sum_{i=1}^n(Z_{ij}-f(\X_i))^2$ for $j=1,\dots,m$, and $h_1$ to $h_4$ can be estimated in the same way. The resulting estimators are denoted by $\hat\boldeta=\{\hat f_1, \dots,\hat f_m, \hat h_1, \hat h_2, \hat h_3, \hat h_4\}$. In addition, $\xi_0(\beta;\O_A)$ can be estimated by 
$$
\hat \xi(\beta;\O_A)=\frac{1}{n}\sum_{i=1}^n \frac{B_{n i}(\O_{A}) \delta_i}{\hat G \left(Y_i|\O_{A,i}\right)} g(\beta,\hat\boldeta;\O_i),
$$ 
where $g(\beta,\hat\boldeta;\O)$ replaces all the nuisance functions in the original $g(\beta;\O)$ with estimated values.

\textit{(iii) Plug-in of estimated nuisance functions into the GEL 2.0 criterion.}
After estimating all nuisance functions, we substitute the estimates into the GEL objective.
The corresponding plug-in moment function for \eqref{psi} is given by
$$
\psi(\beta,\hat G ,\hat\boldeta;\O)=\frac{\delta}{\hat G (Y|\O_A)}g(\beta,\hat\boldeta;\O)+(1-\frac{\delta}{\hat G (Y|\O_A)})\hat \xi(\beta;\O_A).
$$
Then the causal effect  is estimated by 
\be \label{betahat}
\hat\beta_{\textrm{GEL2}}^{\textrm{C}}= \arg\min_{\beta\in\calB}\hat Q(\beta,\hat G ,\hat\boldeta),
\ee
where 
$$
\hat Q(\beta,\hat G ,\hat\boldeta)=\sup_{\blambda\in L(\beta,\hat G ,\hat\boldeta)}\frac{1}{n}\sum_{i=1}^n \rho(\lambda(\beta,\hat G ,\hat\boldeta)\tp\psi(\beta,\hat G ,\hat\boldeta;\O_i)),
$$
and $L(\beta,\hat G ,\hat\boldeta) = \{\blambda:\blambda\tp\psi(\beta,\hat G ,\hat\boldeta;\O_i) \in \calV\}$. The GEL 2.0 formulation above generalizes the classical GEL criterion to settings where some nuisance functions may violate Neyman orthogonality and the moment condition is weak.

Note that when there is no censoring, that is, $\delta \equiv 1$, the censoring distribution satisfies $G_0(Y | \O_A) \equiv 1$, and the AIPCW moment function $\psi$ reduces to  the influence function $g(\beta;\O_T)$ in Lemma \ref{lemma0}. Even in the absence of censoring, our approach is more robust than previous methods \citep{ye2024,wang2025}, as it employs a more general GEL formulation and estimates nuisance functions nonparametrically with flexible  DNNs, without relying on cross-fitting. This added flexibility and computational efficiency enhance both the robustness and the practical utility of the resulting estimator for the target causal effect.

\section{Asymptotic Properties}
\label{sec:theoretical}

\subsection{Consistency and asymptotic normality}

In this subsection, we establish the consistency and asymptotic normality of the proposed estimator $\widehat{\beta}^{\textrm{C}}_{\textrm{GEL2}}$ in the presence of many weak and non-Neyman orthogonal moment conditions. We first formalize the many-weak-moment asymptotic regime, then set out the regularity assumptions on the nuisance functions and the moment structure, and finally present the main convergence result.

For clarity of exposition, we begin by introducing the notation that will be used in the theoretical
statements that follow. Let $\O_T^{'}$ be an independent copy of $\O_T$. Also let
\bse
&&\quad \bar \psi(\beta,G,\boldeta)=\frac{1}{n} \sum_{i=1}^n \psi(\beta,G,\boldeta;\O_i),\quad \bar\Sigma(\beta,G,\boldeta)=\frac{1}{n} \sum_{i=1}^n \psi(\beta,G,\boldeta;\O_i)\psi(\beta,G,\boldeta;\O_i)\tp,\\
&& \Sigma(\beta,G,\boldeta) = \bbE(\psi(\beta,G,\boldeta;\O)\psi(\beta,G,\boldeta;\O)\tp),\quad \Sigma_0=\Sigma(\beta_0,G_0,\boldeta_0),\\
&& \psi^{'}(G,\boldeta;\O_i)=\frac{\partial \psi(\beta,G,\boldeta;\O_i)}{\partial \beta},\quad  \bar \psi^{'}(G,\boldeta)=\frac{1}{n} \sum_{i=1}^n \psi^{'}(G,\boldeta;\O_i), \quad \psi^{*}=\bbE(\psi^{'}(G_0,\boldeta_0;\O)),\\
&&\phi(\O_T^{'},\O_T) =B_{nj}(\O_A)G_0(T|\O_A)\left\{\frac{I(T^{'}\leq T,\delta^{'}=0)}{P(T^{'}<T|\O_A)}-\int_0^{\min\{T^{'},T\}}\frac{d P(T^{'}\leq s, \delta^{'}=0)}{P(T^{'}<s|\O_A)^2}\right\}.
\ese

We now state the regularity conditions needed to establish the asymptotic properties of $\widehat{\beta}^{\textrm{C}}_{\textrm{GEL2}}$.

\begin{assumption}[Many weak moment asymptotics]\label{condweak}
(i). There exist scalars $\mu_n, c, c^{\prime}>0$, such that
$$
\mu_n^2 c \leq n \psi^{*\top}\Sigma_0^{-1} \psi^{*} \leq \mu_n^2 c^{\prime} ,
$$
and $\mu_n$ satisfies $\mu_n \rightarrow \infty$ as $n \rightarrow \infty$ and $m / \mu_n^2$ is bounded. 

(ii). Given the $L^2$ norm of the local Kaplan-Meier estimator is $$\kappa\triangleq\bbE\|\hat G(Y|\O_A)-G_0(Y|\O_A)\|_{L^2},$$ and we require $\sqrt{nm}\kappa/\mu_n=o(1)$.  
\end{assumption}

Assumption \ref{condweak} (i) imposes bounds on the \textit{concentration parameter} $n \psi^{*\top}\Sigma_0^{-1} \psi^{*}$ \citep{Staiger1997},
a measure of the overall moment strength. When $\mu_n = \sqrt{n}$, the assumption reduces to the classical setting of strong moments. Our framework allows $\mu_n = o(\sqrt{n})$, so that the concentration parameter may diverge more slowly, thus covering the many weak moment regimes. Moreover, $\mu_n$ directly determines the convergence rate of the estimator $\widehat\beta_{\textrm{GEL2}}^{\textrm{C}}$, and all asymptotic results are established without requiring $\mu_n$ to be of the classical $\sqrt{n}$ order.
Assumption \ref{condweak} (ii) imposes a requirement on the moment strength $\mu_n$ related to the convergence rate of the estimated censoring distribution. In particular, we require that $\mu_n$ is not too weak, so that $\sqrt{nm}\,\kappa/\mu_n = o(1)$. This condition arises because the AIPCW moment function $\psi$ is not Neyman orthogonal with respect to the conditional censoring distribution $G$. As a result, the estimation error in $G$ contributes to the leading term of the estimator, and therefore a minimal level of moment strength is needed to control its impact. This requirement reflects an additional cost of handling censored outcomes, which does not arise in the uncensored setting. Moreover, $\kappa$ is related to the bandwidth of the higher-order kernel estimator in the local Kaplan-Meier estimator. More details about the bandwidth $h$ can be found in Assumption \ref{condkernel} in the Supplementary Material.

Suppose that the true function $f_0 \in \mathcal{H}^{\zeta}([0,1]^d,B_0)$ and $\hat f$ is the empirical risk minimizer over the ReLU activated deep neural networks $\calF_{B_0,\calW,\calD,\calS,d}$ defined in \eqref{reluclass}. 
Let $\mathcal{E}(\zeta,B_0,\mathcal{W},\mathcal{D},\mathcal{S},d)$ denote the non-asymptotic 
error bound of $\hat f$ as defined in \cite{jiao2023}. The explicit form of 
$\mathcal{E}(\zeta,B_0,\mathcal{W},\mathcal{D},\mathcal{S},d)$ is given in 
Lemma~\ref{lemmadnn} of the Supplementary Material.

\begin{assumption}[Convergence rate of DNNs] \label{condDNNs}
(i). The empirical risk minimizers $\{\hat f_j(\X), \\\hat h_3(\X), \hat h_4(\X)\}$ over the ReLU activated deep neural networks $\mathcal \calF_{B_0, \calW, \calD, \calS, d_x}$ satisfy $$\mathcal{E}(\zeta,B_0, \calW, \calD, \calS, d_x)=o(n^{-1/4}m^{-1/4}).$$

(ii). The empirical risk minimizers $\{\hat h_1(\Z,\X),\hat h_2(\Z,\X)\}$ over the ReLU activated deep neural networks $\mathcal \calF_{B_0, \calW, \calD, \calS, d_x+m}$ satisfy $$\mathcal{E}(\zeta, B_0, \calW, \calD, \calS, d_x+m)=o(n^{-1/4}m^{-1/4}).$$
\end{assumption}

Assumption~\ref{condDNNs} imposes constraints on the width and depth of the DNNs so that the
non-asymptotic error bound converges at the rate $o(n^{-1/4}m^{-1/4})$. This rate is 
precisely what is required to establish the asymptotic normality of the estimator for 
the causal effect parameter $\beta$.
The non-asymptotic bound also plays a key role in the proof of the localization-based 
empirical process argument \citep{bartlett2005,farrell2021}, as demonstrated in 
Lemma~\ref{lemmag} of the Supplementary Material.
In Assumption \ref{condDNNs} (ii), we allow the input dimension $d_x+m$ to diverge, and the DNNs can still achieve the desirable convergence rate when the depth, width and the number of candidate IVs meet certain requirements. In Lemma \ref{lemmadnn} of the Supplementary Material, we provide an example in which Assumption \ref{condDNNs} holds. If other machine learning estimator $\hat f$ (e.g., random forests \citep{breiman2001} or gradient boosting \citep{chen2016}) is used to estimate the nuisance function $f_0$, and the estimator satisfies $(\bbE_n-\bbE)(l(\hat f)-l(f_0))=o_p(n^{-1/2}m^{-1/2})$ for any  Lipschitz function $l$, a convergence rate of $\bbE\|\hat f-f_0\|_{L^2}=o(n^{-1/4}m^{-1/4})$ is also sufficient to ensure the asymptotic properties of $\widehat{\beta}$. In addition, the rate $o(n^{-1/4})$ is well established in the nonparametric estimation literature; see, for instance, Assumption 5.1 in \citet{Chernozhukov2018} and Assumption (b) in Theorem 3 of \citet{farrell2021}. We include an extra factor $m^{-1/4}$ because our framework allows the number of moment functions $m$ to diverge. When the number of moment conditions $m$ is fixed, the required rate in Assumption \ref{condDNNs} reduces to $o(n^{-1/4})$. 

We now formally establish the consistency and asymptotic normality of the proposed estimator $\hat{\beta}_{\textrm{GEL2}}^{\textrm{C}}$.

\begin{theorem}\label{thmcensor}
Suppose Assumptions \ref{condindependent}-\ref{condDNNs} and the regularity conditions stated in Assumptions \ref{condmoment}-\ref{condkernel} in Section \ref{additionalcond} of the Supplementary Material hold, then $\hat{\beta}_{\textrm{GEL2}}^{\textrm{C}}$ is consistent and asymptotical normal, that is, as $n \rightarrow \infty$, $m^3/n\to 0$, $\hat{\beta}_{\textrm{GEL2}}^{\textrm{C}} \xrightarrow{p} \beta_0$ and
$$
\frac{\mu_n\left(\hat{\beta}_{\textrm{GEL2}}^{\textrm{C}}-\beta_0\right)}{\sqrt{H^{-1}(H+V_1+V_2)H^{-1}}} \xrightarrow{d} N(0,1),
$$
where  
$V_1 = \mu_n^{-2} \bbE\left[U_i^{C\top} \Sigma_0^{-1} U^C_i\right]$,
$V_2 = \bbE_{P_{\O_T^{'}}}(\bbE_{P_{\O_T}}D_G\{n/\mu_n\partial^2 \rho(\blambda\tp\psi(\beta_0,G_0,\boldeta_0;\O_{T}))/\partial \beta\allowbreak\phi(\O_{T}^{'},\O_{T})\})^2 +2\bbE_{P_{\O_T^{'}}}(\bbE_{P_{\O_T}}\{D_Gn^2/\mu_n^2\partial\rho(\blambda\tp\psi(\beta_0,G_0,\boldeta_0;\O_{T}))/\partial\beta\phi(\O_{T}^{'},\O_{T})\}\{\psi^{*}\Sigma_0^{-1}$\\$\psi(\beta_0,G_0,{\boldeta_0};\O_{T})\})$, 
$H = n \psi^{*\top} \Sigma_0^{-1} \psi^{*}/\mu_n^2$,
and $U^C_i=\psi^{*}(G_0,\boldeta_0;\O_i)-\psi^{*}-\{\Sigma_0^{-1} \bbE(\psi(\beta_0,G_0,\boldeta_0;\O)\allowbreak \psi^{'}(G_0,\boldeta_0;\O)\tp)\}\tp \psi(\beta_0,G_0,\boldeta_0;\O_i)$ is the population residual of the least squares regression of $\psi^{'}(G_0,\boldeta_0;\O_i)-\psi^{*}$ on $\psi(\beta_0,G_0,\boldeta_0;\O_i)$. 

\end{theorem}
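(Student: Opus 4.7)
The plan is to adapt the saddle-point analysis of GEL estimators pioneered by Newey--Smith (2004) and extended to the many-weak-moment regime by \citet{newey2009}, but to generalize it in two crucial directions demanded by our framework: (i) the moment function is evaluated at plug-in nuisances $(\hat G,\hat\boldeta)$ rather than at the truth, and (ii) the non-Neyman orthogonal nuisance $G$ contributes to the leading asymptotic term and must therefore be linearized explicitly. I would organize the argument in three stages: uniform consistency of the criterion and of the saddle-point pair $(\hat\beta^{\textrm{C}}_{\textrm{GEL2}},\hat\blambda)$, a Bahadur-type expansion of the first-order conditions, and a central limit theorem tailored to the many-weak-moment asymptotics.

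First, for consistency I would establish uniform convergence of the plug-in moment mean $\bar\psi(\beta,\hat G,\hat\boldeta)$ and sample covariance $\bar\Sigma(\beta,\hat G,\hat\boldeta)$ to their population counterparts over $\beta\in\calB$. The orthogonal DNN nuisances are handled through localization-based empirical-process techniques in the spirit of \cite{bartlett2005} and \cite{farrell2021}, which together with the non-asymptotic DNN bound $\mathcal{E}=o(n^{-1/4}m^{-1/4})$ in Assumption~\ref{condDNNs} avoid cross-fitting; the non-orthogonal censoring nuisance is controlled through the local Kaplan--Meier rate $\kappa=O((nh^{m+d_x+1})^{-1/2})$ under Assumption~\ref{condweak}(ii) and the regularity conditions of Assumption~\ref{condkernel}. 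The standard GEL saddle-point argument, combined with the concentration-parameter bounds of Assumption~\ref{condweak}(i), then yields $\hat\beta^{\textrm{C}}_{\textrm{GEL2}}\xrightarrow{p}\beta_0$ and $\|\hat\blambda\|=O_p(1/\mu_n)$, which is the correct order for the subsequent expansion.

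Second, for asymptotic normality I would expand the inner first-order condition in $\blambda$ around zero to obtain $\hat\blambda\approx -\bar\Sigma(\beta,\hat G,\hat\boldeta)^{-1}\bar\psi(\beta,\hat G,\hat\boldeta)$, substitute this into the outer FOC in $\beta$, and linearize in $\hat\beta-\beta_0$. This gives the representation
\begin{equation*}
\mu_n(\hat\beta^{\textrm{C}}_{\textrm{GEL2}}-\beta_0)\;\approx\; -H^{-1}\cdot \mu_n\,\psi^{*\top}\Sigma_0^{-1}\bar\psi(\beta_0,\hat G,\hat\boldeta).
\end{equation*}
The next step is to decompose $\bar\psi(\beta_0,\hat G,\hat\boldeta)$ into (a) the oracle sum $\bar\psi(\beta_0,G_0,\boldeta_0)$, (b) orthogonal-nuisance perturbations that are negligible relative to $\mu_n/n$ by Neyman orthogonality (Proposition~\ref{lemmaneyman}) together with the DNN rates, and (c) a non-orthogonal censoring perturbation. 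The censoring term is linearized via an influence-function expansion of the local Kaplan--Meier estimator in the style of \cite{dabrowska1989}, producing a degenerate U-statistic whose Hajek projection onto the leading score is encoded by the kernel $\phi(\O_T',\O_T)$ displayed in the theorem statement. Collecting the three contributions and invoking a CLT for triangular arrays of many-weak-moment form (for which the required moment conditions are supplied by Assumptions~\ref{condmoment}--\ref{condkernel} under $m^3/n\to 0$) yields the stated sandwich variance $H^{-1}(H+V_1+V_2)H^{-1}$, where $V_1$ is the classical Newey--Windmeijer second-order inflation arising from estimating $\Sigma_0$ in the many-weak-moment regime, and $V_2$ is the additional variance contributed by the non-orthogonal censoring estimator.

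The main obstacle is controlling the interplay among three simultaneous complications: a diverging number of weak moments ($m\to\infty$, $m/\mu_n^2$ bounded), the non-orthogonality of $G$ which precludes standard cross-fitted debiasing from cancelling its contribution, and the empirical-process behaviour of DNN plug-ins for a growing collection of orthogonal nuisances. Keeping track of the orders of magnitude of every cross-term is delicate because they scale with different powers of $m$, $\mu_n$, $n$, and $\kappa$. The precise calibration in Assumption~\ref{condweak}(ii), namely $\sqrt{nm}\,\kappa/\mu_n=o(1)$, is exactly the threshold required to push the censoring contribution into the second-order inflation $V_2$ rather than into the leading bias; the hardest single computation will be obtaining the explicit $\phi$-representation of this contribution and verifying its compatibility with the many-weak-moment CLT.
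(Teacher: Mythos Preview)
Your overall architecture matches the paper's proof: consistency via uniform approximation of $\hat Q$ by a population criterion, a Taylor expansion of the outer first-order condition in $\beta$, decomposition of the score into oracle, orthogonal-nuisance, and non-orthogonal censoring pieces, and an asymptotic-linearity expansion of the local Kaplan--Meier estimator to produce the $\phi$ kernel and hence $V_2$. The localization argument for the DNN nuisances and the role of Assumption~\ref{condweak}(ii) are correctly identified.

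There is, however, a genuine gap in the linearization you display. The representation
\[
\mu_n(\hat\beta-\beta_0)\;\approx\; -H^{-1}\cdot \mu_n\,\psi^{*\top}\Sigma_0^{-1}\bar\psi(\beta_0,\hat G,\hat\boldeta)
\]
is the \emph{standard-GMM} leading term and, taken literally, yields asymptotic variance $H^{-1}\cdot H\cdot H^{-1}=H^{-1}$, not $H^{-1}(H+V_1)H^{-1}$. The weak-moment inflation $V_1$ does \emph{not} arise from ``estimating $\Sigma_0$'' as you write; it comes from a second term in the GEL score that is of the \emph{same} order as the one you kept under weak identification. Concretely, expanding $\partial\hat Q/\partial\beta|_{\beta_0}=\tfrac{1}{n}\sum_i\rho'(\hat\blambda^\top\psi_i)\hat\blambda^\top\psi_i'$ with $\rho'(v)\approx -1-v$ and $\hat\blambda\approx-\bar\Sigma^{-1}\bar\psi$ gives
\[
\left.\frac{\partial\hat Q}{\partial\beta}\right|_{\beta_0}
=\psi^{*\top}\bar\Sigma^{-1}\bar\psi
\;+\;\frac{1}{n}\sum_i U_i^{C\top}\bar\Sigma^{-1}\bar\psi
\;+\;o_p(\mu_n/n),
\]
where $U_i^C$ is the residual from regressing $\psi'_i-\psi^*$ on $\psi_i$. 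Under strong identification the second summand is lower order, but when $\mu_n=o(\sqrt{n})$ it is not, and Lemma~A12 of \citet{newey2009} shows that $(n/\mu_n)$ times the pair converges to $N(0,H+V_1)$. If you omit the $U_i^C$ term you cannot recover $V_1$, and your decomposition of $\bar\psi(\beta_0,\hat G,\hat\boldeta)$ alone will not produce it either. The paper's proof keeps this term explicitly throughout.

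Two smaller corrections: the order of the Lagrange multiplier is $\sup_\beta\|\blambda(\beta,\hat G,\hat\boldeta)\|=O_p(\mu_n/\sqrt{n})$ (and $O_p(\sqrt{m/n})$ at $\beta_0$), not $O_p(1/\mu_n)$; getting this wrong will cause the remainder bounds in the $\rho$-expansion to fail. And you should explicitly verify that the Hessian term $(n/\mu_n^2)\,\partial^2\hat Q/\partial\beta^2$ converges to $H$ uniformly in a neighborhood of $\beta_0$ with the plug-in nuisances, which the paper does by decomposing the second derivative into four pieces and controlling each via the same lemmas.
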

Theorem~\ref{thmcensor} shows that under regularity assumptions, the proposed estimator $\hat\beta_{\textrm{GEL2}}^{\textrm{C}}$ is consistent and asymptotically normal with convergence rate $\mu_n$. A direct implication is that, when the censoring rate is zero, $\delta \equiv 1$ and 
$G_0(Y|\O_A) \equiv 1$, the AIPCW moment function $\psi$ reduces exactly to the influence function $g(\beta;\O_T)$ in the uncensored setting in Lemma \ref{lemma0}. As a result,  the asymptotic variance term $V_2$ vanishes, as shown in Corollary \ref{coro1} in the Supplementary Material.

To facilitate understanding of Theorem~\ref{thmcensor}, we outline the main steps of the
proof. The argument begins with a Taylor expansion around $\beta_0$, which yields 
the following decomposition:

\begin{equation*}
0 = \frac{n}{\mu_n}\frac{\partial \hat Q(\hat\beta,\hat G ,\hat\boldeta)}{\partial \beta}
= 
\underbrace{
\frac{n}{\mu_n}\frac{\partial\hat Q(\beta_0, \hat G ,\hat{\boldeta})}{\partial\beta}
}_{\substack{\text{first-order term,}\\ \text{converge to $N(0,H+V_1+V_2)$}}}
+ 
\underbrace{
\frac{n}{\mu_n^2}\frac{\partial^2\hat Q(\bar\beta,\hat G ,\hat\boldeta)}{\partial \beta^2}
}_{\substack{\text{second-order term,}\\ \text{converge to $H$}}}\mu_n(\hat{\beta}-\beta_0),
\end{equation*}
where $\bar\beta$ lies between $\hat\beta$ and $\beta_0$. 
In addition, the first-order term admits the following  decomposition:
\be \label{firstorderterm}
&&\frac{n}{\mu_n}\frac{\partial\hat Q(\beta_0, \hat G ,\hat{\boldeta})}{\partial\beta}\n\\
&=& (\bbE_n - \bbE)\left[\frac{n}{\mu_n}\frac{\partial \rho(\blambda\tp\psi(\beta_0,\hat G,\hat\boldeta;\O))}{\partial \beta}\right]+\bbE\left[\frac{n}{\mu_n}\frac{\partial \rho(\blambda\tp\psi(\beta_0,\hat G,\hat\boldeta;\O))}{\partial \beta}\right]\n\\
&=& \underbrace{
(\bbE_n - \bbE) \left[\frac{n}{\mu_n}
\frac{\partial \rho(\blambda\tp\psi(\beta_0, \hat G, \hat\boldeta;\O))}{\partial \beta}
- 
\frac{n}{\mu_n}\frac{\partial \rho(\blambda\tp\psi(\beta_0, G_0, \boldeta_0;\O))}{\partial \beta}
\right]
}_{=o_p(1)\text{, controlled by localization-based empirical process argument}} \n\\
&& + \underbrace{
(\bbE_n - \bbE)\left[\frac{n}{\mu_n}
\frac{\partial \rho(\blambda\tp\psi(\beta_0, G_0, \boldeta_0;\O))}{\partial \beta}\right]
}_{\text{converge to a normal distribution}}\n\\
&&+\underbrace{\bbE\left[\frac{n}{\mu_n}\frac{\partial \rho(\blambda\tp\psi(\beta_0,G_0,\boldeta_0;\O))}{\partial \beta}\right]}_{\text{=0, true moment function}} + \underbrace{
\bbE
\left[\frac{n}{\mu_n}D_{\boldeta}\frac{\partial \rho(\blambda\tp\psi(\beta_0, G_0, \boldeta_0;\O))}{\partial \beta}
(\hat\boldeta - \boldeta_0)\right]
}_{\text{=0, due to Neyman orthogonal}}\n\\
&&+ \underbrace{
\bbE
\left[\frac{n}{\mu_n}D_{G}\frac{\partial \rho(\blambda\tp\psi(\beta_0, G_0, \boldeta_0;\O))}{\partial \beta}
(\hat G - G_0)\right]
}_{\text{non-Neyman orthogonal, converge to a normal distribution}} + \underbrace{
o_p(1)
}_{\text{remainder}}.
\ee

The first term in equation \eqref{firstorderterm} involves estimated nuisance functions and is shown to be asymptotically negligible. To control it, we localize the nuisance estimators in a shrinking neighborhood of the true nuisance function, where the associated function class has controlled the growth of entropy \citep{bartlett2005}. Under Lipschitz continuity, applying a maximal inequality \citep{chernozhukov2014} yields stochastic equicontinuity of the empirical process over the function class. As a result, the impact of nuisance estimation error on this term has no impact on first-order asymptotics. The second term is the empirical process at the true nuisance functions, and converges to a normal distribution and includes two variance components: (i) a classical GMM variance \citep{hansen1982}; and (ii) an additional component due to weak moments \citep{newey2009}. The third term vanishes because the population moment condition holds at the truth. The fourth term also vanishes due to Neyman orthogonality, which eliminates the first-order impact of estimation error in orthogonal nuisances. The fifth term arises from  estimating non-Neyman orthogonal nuisance and can be written in the form of a Kaplan--Meier integral \citep{stute1995,gerds2017}:
$$
\frac{1}{n}\sum_{i=1}^n \bbE_{P_{\O_T}} [D_G\frac{n}{\mu_n}\frac{\partial\rho(\blambda\tp\psi(\beta_0,G_0,\boldeta_0;\O_{T}))}{\partial\beta}\phi(\O^{'}_{T,i},\O_{T})]+o_p(1).
$$ 
This term is asymptotically normal. The detailed proof of the theorem can be found in Section \ref{pfthm2} of the Supplementary Material. It contributes an additional variance component $V_2$ to the GEL 2.0 estimator of $\beta$.

\subsection{Comparison with classical GEL theory }

Theorem~\ref{thmcensor} illustrates how GEL~2.0 extends the classical GEL framework. Unlike \citet{newey2009}, who considered GEL under many weak moment conditions without nuisance functions, our framework admits a diverging number of nuisance functions (both Neyman orthogonal and non-Neyman orthogonal) estimated via nonparametric methods. In contrast to \citet{ye2024}, who employed parametric estimation of nuisance functions under a CUE criterion with Neyman orthogonal moments, our results allow for nonparametric estimation and explicitly accommodate non-Neyman orthogonal nuisances, thereby substantially broadening the theoretical scope of GEL.

Theorem \ref{thmcensor} also distinguishes between Neyman orthogonal and non-Neyman orthogonal nuisance functions. For orthogonal nuisances, the estimation error has no first-order impact on the estimator of $\beta$, so a valid inference for $\beta$ is achievable once the nuisance estimators converge at the rate $o(n^{-1/4} m^{-1/4})$ when the number of moments diverges. We also provide an example showing that DNNs can achieve this rate under suitable architectural choices in Lemma \ref{lemmadnn} of the Supplementary Material, illustrating the applicability of the theory to modern DNNs estimators. Compared with the common use of sample splitting to handle orthogonal nuisances \citep{Chernozhukov2018,wang2025}, our approach achieves the same convergence rate without sample splitting, thereby simplifying computation. \citet{Chernozhukov2018} study the case with a fixed number of moment conditions, where sample splitting is computationally feasible. In contrast, our setting involves a diverging number of moment conditions involving nuisance functions, for which sample splitting becomes computationally prohibitive, as it would substantially increase the computational burden and reduce the effective sample size available for estimating each nuisance function. To address this, we instead employ a localization-based empirical process technique that controls overfitting bias without requiring sample splitting.

For non-Neyman orthogonal nuisances, the plug-in error contributes directly to the first-order expansion of the estimator for the causal parameter because the pathwise derivative with respect to the nuisance is nonzero.
Consequently, the estimation of these nuisances contributes an additional first-order 
term to the asymptotic distribution of $\hat{\beta}_{\textrm{GEL2}}^{\textrm{C}}$. To retain the asymptotic normality of $\hat{\beta}_{\textrm{GEL2}}^{\textrm{C}}$, it is unnecessary
that the nuisance estimator itself is $\sqrt{n}$-normal; rather, it suffices
that the estimator admits an
asymptotically linear expansion with bias $o_p(n^{-1/2})$. In our framework, the 
local Kaplan-Meier estimator, with higher-order kernels and carefully chosen 
bandwidths, satisfies these requirements: the plug-in bias is negligible, but 
the associated variance component persists and appears as the $V_2$ term in the 
asymptotic variance of $\hat{\beta}_{\textrm{GEL2}}^{\textrm{C}}$. To our knowledge, this is the first extension of semiparametric GEL theory that
explicitly incorporates an additional variance component arising
from estimating non-Neyman orthogonal nuisances.
A summary comparing Neyman orthogonal and non-Neyman orthogonal nuisances is provided in Table~\ref{tab:orthovsnonorth}.

\begin{table}
\caption{\label{tab:orthovsnonorth}Comparison of Neyman orthogonal and non-Neyman orthogonal nuisances}
\begin{tabular}{lll}
\hline
\quad & Neyman orthogonal  & Non-Neyman orthogonal \\
\hline
Definition 
& pathwise derivative $=0$ 
& pathwise derivative $\neq 0$ \\
Rate  on nuisance estimator
& $o(n^{-1/4})$ sufficient 
& $\sqrt{n}$-consistency of functionals  \\
First-order bias 
& eliminated by orthogonality 
& may appear but can be corrected \\
Asymptotic variance of $\hat\beta$
& unaffected 
& additional variance component \\
\hline
\end{tabular}
\end{table}

\section{Model Diagnostics}
\label{sec:diagnostics}
\subsection{Measure of weak identification}
In this section, we apply the weak identification test \citep{ye2024} for the proposed \texttt{MAWII-Surv} framework. The heteroscedasticity-robust $F$-statistic can be applied here because we ultimately need to know whether the conditional variance of the exposure $A$ depends on the instruments $\Z$ after controlling for baseline covariates $\X$.  Concretely, let
\[
\widetilde R_A^{2}=R_A^{2}-{\bbE}(R_A^{2}|\X),
\qquad
\widetilde\Z=\Z-{\bbE}(\Z|\X).
\]
We run the no-intercept regression:
\begin{equation}\label{weakid}
\widetilde R_A^{2}=\widetilde\Z\gamma+\epsilon,
\end{equation}
and form the heteroscedasticity-robust $F$-statistic  \citep{koenker1981}, denoted by $F_{\text{MAWII}}$. As suggested by \citet{ye2024}, we recommend checking that  $F_{\text{MAWII}}>2$, which indicates detectable heteroscedasticity of $A$ in $\Z$ and provides the variation required for more reliable estimation.

\subsection{Over-identification test with censoring}
Testing the validity of over-identifying restrictions is an important step in MR with 
multiple candidate IVs, as it provides a diagnostic check of model specification and 
moment condition validity. In the GMM literature, this is typically carried out by testing the 
null hypothesis $H_0 : \bbE[\psi(\beta_0, G_0, \boldeta_0)] = 0$.
With uncensored data, a widely used test statistic is $2n\widehat{Q}(\widehat{\beta}_{\textrm{GEL2}}^{\textrm{C}}, \widehat{\boldeta})$, which asymptotically follows a normal distribution under $H_0$ \citep{newey2009,ye2024}. This test is closely related to Sargan's $J$-test \citep{sargan1958,hansen1982}, which was introduced as a way to assess the validity of over-identifying restrictions.

However, this classical approach is not directly applicable in our setting due to
the presence of non-Neyman orthogonal nuisance $G(Y|\O_A)$. Unlike the uncensored setting, 
the presence of this nuisance alters the asymptotic distribution of 
$2n\widehat{Q}(\widehat{\beta}_{\textrm{GEL2}}^{\textrm{C}}, \hat G , \widehat{\boldeta})$. 
To address this issue, we develop a new testing procedure that explicitly accounts for 
the impact of estimating the non-Neyman orthogonal nuisance $G(Y | \O_A)$.
% To obtain the asymptotic distribution of the test statistic, we need another condition: 
% \begin{assumption}\label{condtest}
% Given that the convergence rate of $\widehat G - G$ is $\kappa$, and that of the DNNs estimators is $\kappa_{\mathrm{DNNs}}$, we require
% $(\kappa^2+\kappa_{\textrm{DNNs}}^2)\mu_n^2/\sqrt{m}=o_p(1).$
% \end{assumption}
% This condition further constrains the relationship between $\mu_n$ and the convergence rate of the nuisance estimator. It is introduced to control the remainder term in the Taylor expansion of the objective function $Q$. Even if this condition does not hold, the over-identification test is still feasible. However, the asymptotic distribution of the test statistic may no longer be valid. In such case, we can apply to bootstrap methods to approximate the sampling distribution.
\begin{theorem}\label{thmtest}
Under Assumptions \ref{condindependent}-\ref{condDNNs} and Assumptions \ref{condmoment}-\ref{condkernel} in Section \ref{additionalcond} of the Supplementary Material, and under the null hypothesis $H_0: \bbE \psi(\beta_0,G_0,\boldeta_0)=0$, we have 
$$
\frac{2n\hat Q(\hat\beta_{\textrm{GEL2}}^{\textrm{C}},\hat G ,\hat\boldeta)-(m-1)}{\sqrt{2(m-1)(1+V_3)}}\to N(0,1),
$$
where $V_3 = \bbE_{P_{\O_T^{'}}}(\bbE_{P_{\O_T}} D_G 2n/\sqrt{2(m-1)}\rho(\blambda\tp\psi(\beta_0,G_0,\boldeta_0;\O_T))\phi(\O_{T}^{'},\O_{T}))^2$. 
\end{theorem}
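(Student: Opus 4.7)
The plan is to expand $2n\widehat Q(\widehat\beta_{\textrm{GEL2}}^{\textrm{C}}, \widehat G, \widehat\boldeta)$ around $(\beta_0, G_0, \boldeta_0)$ using the GEL saddle-point structure, reduce the test statistic to a quadratic form in the empirical moment vector, isolate the contribution from the non-Neyman-orthogonal nuisance $G$ via its Kaplan--Meier influence function, and then apply a central limit theorem for quadratic forms in the many-weak-moment regime. The recentering by $(m-1)$ and the variance inflation $(1+V_3)$ should then emerge from a careful accounting of how the non-orthogonal plug-in error enters the leading term.

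First, let $\widehat\blambda$ denote the inner saddle-point multiplier, satisfying $\sum_i \rho'(\widehat\blambda\tp \psi(\widehat\beta,\widehat G,\widehat\boldeta;\O_i))\,\psi(\widehat\beta,\widehat G,\widehat\boldeta;\O_i)=0$. Under the standard GEL normalization $\rho(0)=0$ and $\rho'(0)=\rho''(0)=-1$, a second-order Taylor expansion of $\rho$ at zero combined with the inner first-order condition yields
\begin{equation*}
2n\widehat Q(\widehat\beta_{\textrm{GEL2}}^{\textrm{C}},\widehat G,\widehat\boldeta)=n\,\bar\psi(\widehat\beta,\widehat G,\widehat\boldeta)\tp\bar\Sigma(\widehat\beta,\widehat G,\widehat\boldeta)^{-1}\bar\psi(\widehat\beta,\widehat G,\widehat\boldeta)+R_n,
\end{equation*}
with $R_n=o_p(\sqrt{m})$ by the higher-order GEL arguments already used for Theorem~\ref{thmcensor} together with Assumption~\ref{condweak}(i). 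I would then functionally expand $\bar\psi(\widehat\beta,\widehat G,\widehat\boldeta)$ around $(\beta_0,G_0,\boldeta_0)$. The $(\widehat\beta-\beta_0)$ direction is absorbed by the first-order optimality of $\widehat\beta_{\textrm{GEL2}}^{\textrm{C}}$, which projects the quadratic form onto the $(m-1)$-dimensional orthogonal complement of $\psi^{*}$ in the $\bar\Sigma^{-1}$ metric and produces the $(m-1)$-centering. Neyman orthogonality of $\boldeta_0$, the DNN rate in Assumption~\ref{condDNNs}, and the localization-based empirical process argument from the proof of Theorem~\ref{thmcensor} ensure that $D_{\boldeta}[\widehat\boldeta-\boldeta_0]$ contributes only $o_p(\sqrt{m})$ to the quadratic form, while the Kaplan--Meier integral representation \citep{stute1995,gerds2017} linearizes $D_G[\widehat G-G_0]$ as a sample average of i.i.d.\ terms involving the kernel $\phi$.

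With these reductions, the quadratic form decomposes into a pure empirical-process piece, a pure Kaplan--Meier piece, and a cross piece. I would invoke a CLT for quadratic forms in a diverging dimension (of de Jong type, or in the spirit of \citet{newey2009}): the empirical-process term satisfies $(n\bar\psi_0\tp\bar\Sigma_0^{-1}\bar\psi_0-(m-1))/\sqrt{2(m-1)}\to N(0,1)$ after the rank-$(m-1)$ projection; the Kaplan--Meier piece is, conditional on the primary sample, an average of i.i.d.\ terms whose rescaled variance equals exactly $V_3$ by the definition given in the statement; and the cross term has mean zero and $o(\sqrt{m})$ variance because $\bbE[\psi(\beta_0,G_0,\boldeta_0;\O)]=0$ under $H_0$. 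The two pieces are therefore asymptotically uncorrelated, the total variance inflation is $1+V_3$, and Slutsky's theorem delivers the claimed $N(0,1)$ limit.

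The hard part will be twofold. First, one must justify the Kaplan--Meier linearization uniformly across the $m$-dimensional moment vector with remainder $o_p(\sqrt{m})$; this requires the higher-order kernel properties from Assumption~\ref{condkernel} together with the rate $\sqrt{nm}\,\kappa/\mu_n=o(1)$ of Assumption~\ref{condweak}(ii). Second, one must verify the Lindeberg condition for the many-weak-moment CLT after including the Kaplan--Meier term: bounding fourth-moment quantities uniformly in $m$ and, crucially, confirming that the cross-term between the empirical process and the Kaplan--Meier correction contributes no leading-order covariance. This last point is precisely what produces the clean $1+V_3$ inflation rather than a more intricate variance formula. Once these two pieces are in place, the remaining steps, namely the $o_p(\sqrt{m})$ control of $R_n$ and of the orthogonal-nuisance remainders, follow from the same localization and Neyman-orthogonality techniques developed for Theorem~\ref{thmcensor}.
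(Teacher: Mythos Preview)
Your strategy is close to the paper's, but the paper organizes the expansion differently and your allocation of the $V_3$ contribution appears to be inverted.

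The paper does not reduce to the quadratic form and then expand $\bar\psi$; it expands $\hat Q$ itself sequentially as
\[
\tfrac{2n\hat Q(\hat\beta,\hat G,\hat\boldeta)-(m-1)}{\sqrt{2(m-1)}}
=\tfrac{2n[\hat Q(\hat\beta,\hat G,\hat\boldeta)-\hat Q(\beta_0,\hat G,\hat\boldeta)]}{\sqrt{2(m-1)}}
+\tfrac{2n[\hat Q(\beta_0,\hat G,\hat\boldeta)-\hat Q(\beta_0,G_0,\boldeta_0)]}{\sqrt{2(m-1)}}
+\tfrac{2n\hat Q(\beta_0,G_0,\boldeta_0)-(m-1)}{\sqrt{2(m-1)}},
\]
invoking \citet{newey2009} and \citet{ye2024} to obtain that the first summand is $o_p(1)$ and the third converges to $N(0,1)$. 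The first is $o_p(1)$ simply because $2n[\hat Q(\hat\beta)-\hat Q(\beta_0)]=O_p(1)$ (from $\partial^2\hat Q/\partial\beta^2=O_p(\mu_n^2/n)$ and $\hat\beta-\beta_0=O_p(\mu_n^{-1})$) together with $m\to\infty$; your rank-$(m-1)$ projection heuristic is the fixed-$m$ picture and does not by itself give this bound in the many-weak regime. The middle summand is a functional Taylor step in $(G,\boldeta)$: the $\boldeta$-derivative vanishes by Neyman orthogonality, and $D_G\hat Q(\beta_0,G_0,\boldeta_0)[\hat G-G_0]$, after Kaplan--Meier linearization, is a centered i.i.d.\ average converging to $N(0,V_3)$. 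Lemma~A10 of \citet{newey2009} then delivers the joint limit $N(0,1+V_3)$.

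If you go through the quadratic form, writing $\Delta_G:=D_G\bar\psi(\beta_0,G_0,\boldeta_0)[\hat G-G_0]$ so that $2n\hat Q\approx n(\bar\psi_0+\Delta_G)\tp\Sigma_0^{-1}(\bar\psi_0+\Delta_G)$, the term carrying $V_3$ is your \emph{cross} piece $2n\bar\psi_0\tp\Sigma_0^{-1}\Delta_G$, not the pure Kaplan--Meier quadratic $n\Delta_G\tp\Sigma_0^{-1}\Delta_G$. Indeed, using $\blambda\approx-\Sigma_0^{-1}\bar\psi_0$ and $\rho'(0)=-1$ one has $2nD_G\hat Q[\hat G-G_0]\approx 2n\bar\psi_0\tp\Sigma_0^{-1}\Delta_G$, and the $V_3$ formula in the statement is visibly bilinear in $\blambda$ and $D_G\psi$. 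The pure quadratic $n\Delta_G\tp\Sigma_0^{-1}\Delta_G$ is a second-order remainder that must instead be shown to be $o_p(\sqrt{m})$. Consequently the zero-covariance you need is between the centered quadratic $n\bar\psi_0\tp\Sigma_0^{-1}\bar\psi_0-(m-1)$ and the bilinear term $2n\bar\psi_0\tp\Sigma_0^{-1}\Delta_G$; this holds because the leading part of the former is a degenerate second-order U-statistic in the $\psi_i$'s, asymptotically orthogonal to any term admitting an i.i.d.\ linear representation---the content of Lemma~A10 in \citet{newey2009}---not merely because $\bbE[\psi]=0$.
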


Theorem~\ref{thmtest} establishes the asymptotic null distribution of $2n\hat Q(\hat\beta_{\textrm{GEL2}}^{\textrm{C}}, \hat G , \hat\boldeta)$, which includes a variance correction term $V_3$ to account for the additional uncertainty arising from the estimation of the non-Neyman orthogonal nuisance function $G(Y|\O_A)$. The inclusion of $V_3$ is a key contribution of our work, as it overcomes a fundamental limitation of standard over-identification tests, such as \citet{newey2009} and \citet{ye2024}, which are designed for fully observed data. Specifically, $V_3$ depends on the derivative of the GEL loss function with respect to $G(Y|\O_A)$ and captures the way in which the estimation of censoring weights interacts with the moment function. The magnitude of $V_3$ depends on the censoring rate: it increases as censoring becomes more severe and vanishes when censoring is absent, in which case the test statistic reduces to its classical form \citep{newey2009}. We then reject $H_0$ at level $\alpha$ if
\[
t_{\textrm{MAWII}}=\left| \frac{2n \hat Q(\hat\beta^{\textrm{C}}_{\textrm{GEL2}}, \hat G , \hat \boldeta) - (m-1)}{\sqrt{2(m-1)(1 + \hat V_3)}} \right| > z_{1 - \alpha/2},
\]
where $z_{1 - \alpha/2}$ is the $(1 - \alpha/2)$ quantile of the standard normal distribution, and $V_3$ is estimated by:
\[
\hat V_3 = \frac{1}{n}\sum_{j=1}^{n} \left(\frac{1}{n} \sum_{i=1}^{n}D_G\rho(\blambda(\hat\beta_{\textrm{GEL2}}^{\textrm{C}},\hat G,\hat \boldeta)^\top \psi(\hat\beta_{\textrm{GEL2}}^{\textrm{C}},\hat G,\hat \boldeta; \O_i)) \phi(\O_j, \O_i) \right)^2.
\]
%Simulation results in Section \ref{sec:simulation} confirm that our adjusted test maintains the desired significance level. This highlights the practical necessity of properly accounting for $V_3$ in censored settings.
%------------------------------------------------

% %------------------------------------------------

\section{Simulation Study}
\label{sec:simulation}
%------------------------------------------------
\subsection {Finite-sample performance }\label{sec:censorsimu}

The purpose of this section is to evaluate the finite-sample performance of the proposed  $\widehat{\beta}^{\textrm{C}}_{\textrm{GEL2}}$ estimator and to compare it with related methods in a wide range of  settings with right-censored outcomes. We assess bias, standard errors, and empirical 95\% confidence interval (CI) coverage across simulation settings that incorporate both weak and invalid instruments. 
%The study proceeds as follows: we first describe the data-generating process and explain the rationale for our design; we then outline the competing methods 
%and performance measures; and finally, we present results demonstrating that our proposed  approach is more robust and accurate than existing methods.

We generate data from an AFT model with right-censoring. Let $\X=(X_1,\dots, X_5)\tp$, where $X_j \sim \textrm{Unif}[-2,2]$ for $j=1,\dots,5$, $\epsilon_A \sim N(0,0.4(1-h^2))$, $\epsilon_T \sim N(0,0.4(1-h^2))$, $U \sim N(0,0.6(1-h^2))$, and $\upsilon_j \sim N(0,h^2/(1.5m))$ for $j=1,\dots,m$, where $h^2=0.2$ is the proportion of variance in $A$ that is attributed to $\bbE(A|\Z,\X)$. $\Z$ are generated as: 
$Z_{1} = \cos\left(\pi X_{1}\right) + N(0, 0.4)$, 
$Z_{2} = \left(X_{1} + 1\right)\left(X_{2} - 1\right) + N(0, 0.4)$,
$Z_{3} = X_{1} + X_{2} + N(0, 0.4)$, 
$Z_{4} = \left(X_{2} - 0.5\right)^2 + N(0, 0.4)$, and
$Z_{5} = \sin\left(X_{1} + X_{2}\right) + N(0, 0.4)$.
For $j=6,\dots,m$, $Z_j\sim \textrm{Unif}[-2,2]$. We specify exposure $A$ and outcome $T$ as:
$$
A = \alpha_1(\Z) + \alpha_2(\X) + U + (1 + \sum_{j=1}^{m} \upsilon_j Z_j )\epsilon_A,
$$
\begin{equation*}
T = \beta_0 A + \gamma_1(\Z) + \gamma_2(\X) - U + \epsilon_T.   
\end{equation*}
Here $\alpha_1,\alpha_2,\gamma_1$, and $\gamma_2$ are defined as
$
\alpha_1(\Z) = \sum_{j=1}^5 \xi_{A,j}\cos(\pi Z_j), \gamma_1(\Z) = \sum_{j=1}^5 \xi_{Y,j}\cos(\pi Z_j),
$
$
\alpha_2(\X)=2\sin(\sin(\sin(\sum_{j=1}^2 X_j + 1) + \sin(\sum_{j=3}^5 X_j + 1)) + \sin(\sum_{j=1}^5 X_j + 1)),
$
and
$
\gamma_2(\X) = \{\cos(X_1) + X_1X_2 + \sin( X_3+X_4+X_5 - 1)\}/2,
$ where $ \xi_{A,j} \sim N(0, 0.4(1 - h^2)) $ and $\xi_{Y,j}$ determines whether IVs are valid, with the specific form given in the next paragraph. Moreover, let $C\sim \textrm{Unif}[-3+\tau, 3+\tau]$, where $\tau$ is used to keep the censoring rate around 0.4 \citep{cai2007}. Finally, let $Y = \min(T, C)$ be the observed outcome and $\delta = I(T\leq C)$ be the censoring indicator. 

Figure \ref{figdgp} illustrates three types of candidate IVs, $\mathcal{S}_1,\mathcal{S}_2,\mathcal{S}_3$, representing: (i) valid IVs, $\xi_{Y,j}=0$; (ii) invalid IVs under the instrument strength independent of direct effect (InSIDE) assumption \citep{Bowden2015}, $\xi_{Y,j} \sim N(0,0.4(1-h^2))$; (iii)  invalid IVs with correlated pleiotropy, $\xi_{Y,j} = \xi_{A,j}/2$, respectively. We consider the following four settings: 
% We consider three different choices for $\xi_{Y,i}$, corresponding to different cases of weak or invalid instrumental variables, as illustrated in Figure \ref{figdgp}. We classify these candidate IVs into three types: $\mathcal{S}_1$, $\mathcal{S}_2$, and $\mathcal{S}_3$, with proportions $p_1$, $p_2$, $p_3$ such that $p_1 + p_2 + p_3 = 1$. Set $\mathcal{S}_1$ consists of valid IVs with no direct effect on $Y$, thus $\xi_{Y,i}=0$. Set $\mathcal{S}_2$ comprises invalid IVs whose pleiotropic effects are uncorrelated with their effects on $A$, ensuring the instrument strength independent of direct effect (InSIDE, \cite{Bowden2015}) assumption holds. For these IVs, we have $\xi_{Y,i} \sim_{\text{i.i.d.}} N(0,0.4(1-h^2))$. Set $\mathcal{S}_3$ contains invalid IVs that affect both $A$ and $Y$ through a common factor $C$, resulting in correlated pleiotropic effects, and set $\xi_{Y,i} = \xi_{A,i}/2$.  Here are four scenarios of weak and invalid IVs:

\begin{itemize}
\item Case 1. (No invalid IVs): $p_1 = 1$, $p_2 = p_3 = 0$;
\item Case 2. (40\% invalid IVs): $p_1 = 0.6$, $p_2 = 0.2$, $p_3 = 0.2$;
\item Case 3. (90\% invalid IVs with InSIDE): $p_1 = 0.1$, $p_2 = 0.9$, $p_3 = 0$;
\item Case 4. (90\% invalid IVs without InSIDE): $p_1 = 0.1$, $p_2 = 0$, $p_3 = 0.9$.
\end{itemize}

\begin{figure}[pbht]
\center
\includegraphics[width=0.35\linewidth]{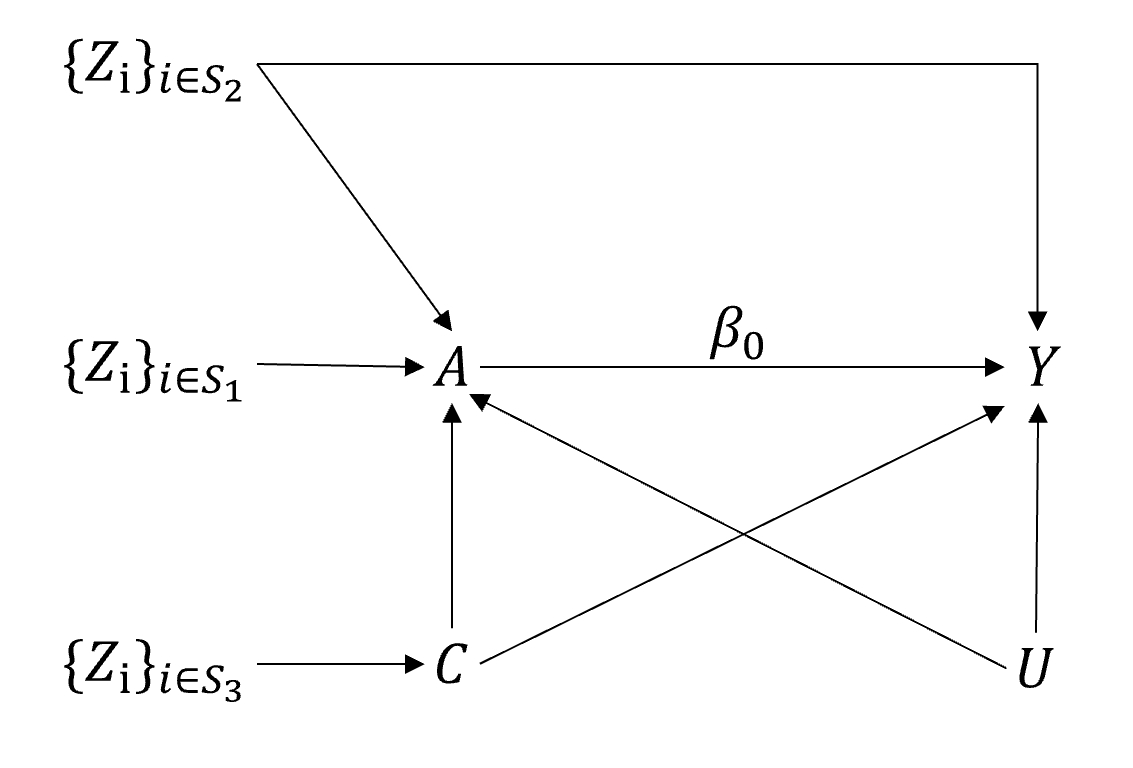}
\caption{Illustration of three types of candidate IVs: $\mathcal{S}_1$ for valid IV, $\mathcal{S}_2$ for invalid IV with InSIDE, and $\mathcal{S}_3$ for invalid IV without InSIDE.}
\label{figdgp}
\end{figure}

We compare three GEL estimators: exponential tilting (ET), empirical likelihood (EL), and continuous updating (CUE); we also used multiple machine learning algorithms to estimate Neyman-orthogonal nuisances: deep neural networks (DNNs), linear regression (LR), random forest (RF), and extreme gradient boosting (XGB). Details on architectures and hyperparameter selection for DNNs, RFs, and XGB are provided in Section \ref{suppsec:censorsimu} of the Supplementary Material. As a benchmark, we also report results from a classical AFT model that cannot control unmeasured confounding. Performance is assessed in terms of the average percentage bias (BIAS), the Monte Carlo standard deviation (SD), the average estimated standard error (SE), and the empirical coverage probability (CP) in 200 replications.

Table~\ref{censor-simu-nonlinear} reports the simulation results with non-linear nuisance functions. Across all four settings, our DNN-based GEL estimators (particularly with ET or EL) consistently exhibit lower bias and more accurate CI coverage than the other competing methods. In contrast, LR, RF, and XGB suffer from non-negligible bias, because they cannot capture complex nonlinear relationships well. Across the three GEL criteria with DNN-based orthogonal nuisance estimation, CUE shows slightly greater bias than EL and ET, in line with theory \citep{Imbens2002,newey2004}. The classical AFT model  performs poorly across all cases due to unmeasured confounding as expected. In addition, the $F_{\textrm{MAWII}}$ statistic for heteroscedasticity is 5.74, exceeding the recommended threshold of 2. 

Overall, our simulation results demonstrate the good performance of the proposed \texttt{MAWII-Surv} framework, particularly with EL and ET in the GEL family. Additional simulation settings and results in uncensored settings show that our proposed estimator outperforms previous methods as in \citet{ye2024}, especially with complex non-linear nuisance functions. More results are provided in Section~\ref{suppsec:uncensorsimu} of the Supplementary Material.
\begin{table} 
\caption{\label{censor-simu-nonlinear}Simulation results for censored outcomes with nonlinear nuisance functions, based on 200 replications with $n=10{,}000$ and $m=20$. BIAS denotes the percentage bias between $\hat\beta$ and $\beta_0$, SD the Monte Carlo standard deviation of $\hat\beta$, SE the average estimated standard error, and CP the $95\%$ coverage probability. AFT stands for classical AFT model.}
\begin{tabular}{lrrrrlrrrr}
\hline
\multicolumn{5}{c}{Case 1 (No invalid IVs)}                 & \multicolumn{5}{c}{Case 2 (40\% invalid   IVs)}              \\ \hline
Model       & BIAS         & SD      & SE      & CP       & Model        & BIAS          & SD       & SE     & CP      \\ \hline
DNN\_ET     & 0.875\%      & 0.253    & 0.253    & 0.945    & DNN\_ET      & -0.988\%      & 0.278     & 0.275   & 0.940   \\
DNN\_EL     & -1.049\%     & 0.257    & 0.303    & 0.945    & DNN\_EL      & -1.731\%      & 0.281     & 0.282   & 0.940   \\
DNN\_CUE    & 3.968\%      & 0.255    & 0.268    & 0.945    & DNN\_CUE     & 1.960\%       & 0.283     & 0.291   & 0.940   \\
LR\_ET      & 36.894\%     & 0.149    & 0.209    & 0.995    & LR\_ET       & 21.936\%      & 0.163     & 0.218   & 0.995   \\
LR\_EL      & 36.449\%     & 0.151    & 0.293    & 1.000    & LR\_EL       & 22.827\%      & 0.164     & 0.235   & 0.995   \\
LR\_CUE     & 38.225\%     & 0.149    & 0.221    & 0.995    & LR\_CUE      & 22.759\%      & 0.170     & 0.229   & 0.995   \\
RF\_ET      & 0.084\%      & 0.183    & 0.263    & 0.995    & RF\_ET       & -18.796\%     & 0.201     & 0.272   & 0.980   \\
RF\_EL      & -1.523\%     & 0.186    & 0.322    & 0.995    & RF\_EL       & -19.457\%     & 0.200     & 0.290   & 0.980   \\
RF\_CUE     & 2.764\%      & 0.186    & 0.272    & 0.990    & RF\_CUE      & -16.603\%     & 0.207     & 0.281   & 0.985   \\
XGB\_ET     & 13.859\%     & 0.173    & 0.246    & 1.000    & XGB\_ET      & -0.613\%      & 0.187     & 0.255   & 0.990   \\
XGB\_EL     & 13.271\%     & 0.176    & 0.317    & 1.000    & XGB\_EL      & -0.755\%      & 0.188     & 0.277   & 0.995   \\
XGB\_CUE    & 16.266\%     & 0.176    & 0.256    & 1.000    & XGB\_CUE     & 0.647\%       & 0.190     & 0.266   & 0.995   \\
AFT         & -23.575\%    & 0.008    & 0.007    & 0.000    & AFT          & -20.428\%     & 0.008     & 0.007   & 0.000   \\ \hline
\multicolumn{5}{c}{Case 3 (90\%   invalid IVs with InSIDE)} & \multicolumn{5}{c}{Case 4 (90\% invalid IVs without InSIDE)} \\ \hline
Model       & BIAS         & SD      & SE      & CP       & Model        & BIAS          & SD       & SE     & CP      \\ \hline
DNN\_ET     & -0.037\%     & 0.259    & 0.254    & 0.940    & DNN\_ET      & 0.992\%       & 0.260     & 0.257   & 0.940   \\
DNN\_EL     & -1.490\%     & 0.262    & 0.260    & 0.935    & DNN\_EL      & -0.658\%      & 0.269     & 0.316   & 0.940   \\
DNN\_CUE    & 2.692\%      & 0.262    & 0.268    & 0.945    & DNN\_CUE     & 3.480\%       & 0.262     & 0.272   & 0.950   \\
LR\_ET      & 26.322\%     & 0.160    & 0.209    & 0.980    & LR\_ET       & 35.205\%      & 0.165     & 0.211   & 0.985   \\
LR\_EL      & 26.382\%     & 0.161    & 0.228    & 0.985    & LR\_EL       & 35.154\%      & 0.168     & 0.269   & 0.985   \\
LR\_CUE     & 27.552\%     & 0.161    & 0.219    & 0.980    & LR\_CUE      & 36.221\%      & 0.165     & 0.223   & 0.985   \\
RF\_ET      & -12.135\%    & 0.194    & 0.271    & 0.995    & RF\_ET       & -4.082\%      & 0.199     & 0.261   & 0.990   \\
RF\_EL      & -12.999\%    & 0.195    & 0.282    & 0.990    & RF\_EL       & -3.881\%      & 0.202     & 0.323   & 0.990   \\
RF\_CUE     & -9.882\%     & 0.195    & 0.279    & 0.995    & RF\_CUE      & -1.917\%      & 0.205     & 0.269   & 0.995   \\
XGB\_ET     & 5.270\%      & 0.190    & 0.252    & 0.995    & XGB\_ET      & 13.549\%      & 0.197     & 0.243   & 0.985   \\
XGB\_EL     & 5.123\%      & 0.192    & 0.263    & 0.995    & XGB\_EL      & 13.181\%      & 0.199     & 0.318   & 0.990   \\
XGB\_CUE    & 6.830\%      & 0.191    & 0.263    & 1.000    & XGB\_CUE     & 15.756\%      & 0.199     & 0.253   & 0.990   \\
AFT         & -22.861\%    & 0.010    & 0.007    & 0.000    & AFT          & -23.093\%     & 0.009     & 0.007   & 0.000   \\ \hline
\end{tabular}

\end{table}
\subsection{Type I error and power of over-identification test}

We next evaluate the finite-sample performance of the proposed over-identification test. 
The simulation design follows Section~\ref{suppsec:censorsimu} in the Supplementary Material with linear nuisance functions.  Considering a setting with violations of causal identification assumptions in Assumption \ref{condcondind} (i),  we allow the unmeasured confounder $U$ to depend on the candidate IVs  $\Z$ in both the exposure and  outcome models. Specifically,
\[
A = \alpha_1(\Z) + \alpha_2(\X) + \bigl(1+\vartheta\sum_{j=1}^m\mu_{A,j} Z_j\bigr)U
+ \bigl(1+\sum_{j=1}^{m} \upsilon_j Z_j \bigr)\epsilon_A,
\]
\[
T = \beta_0 A + \gamma_1(\Z) + \gamma_2(\X)
- \bigl(1+\vartheta\sum_{j=1}^m \mu_{T,j} Z_j\bigr)U + \epsilon_T,
\]
where $\vartheta \in \{0,0.1,0.2,0.3,0.4\}$ controls the degree of violation of Assumption \ref{condcondind} (i), and $\mu_{A,j}, \mu_{T,j}$ are drawn once from $N\!\bigl(0,\,h^2/(1.5m)\bigr)$, where $m=20$ and $h^2=0.2$. When $\vartheta=0$,  Assumption \ref{condcondind} (i) holds, allowing us to evaluate type I error under the null. For $\vartheta > 0$,  Assumption \ref{condcondind} (i) is violated, and we will assess the power of our test.

% Under model misspecification, additional simulations are carried out to evaluate the over-identification test statistic. The model setup is the same as that in Section \ref{suppsec:censorsimu} in the Supplementary Material under linear nuisance, except that we allow the unmeasured confounder $U$ to depend on the instruments $\Z$ in both the exposure and outcome equations, which induces violations of the instrument independence assumption. That is, $A$ and $T$ are generated by
% $$
% A = \alpha_1(\Z) + \alpha_2(\X) + (1+\vartheta\sum_{i=1}^m\delta_{A,i} Z_i)U + (1 + \sum_{i=1}^{m} \delta_i Z_i )\epsilon_A,
% $$
% $$
% T = \beta_0 A + \gamma_1(\Z) + \gamma_2(\X) - (1+\vartheta\sum_{i=1}^m \delta_{T,i}Z_i)U + \epsilon_T,    
% $$
% where $\vartheta$ is used to control the level of model misspecification, and $\delta_{A,i}, \delta_{T,i}$ are constants generated once from a normal distribution $N(0,h^2/(1.5m))$. In particular, we vary $\vartheta$ in $\{0, 0.1, 0.2, 0.3, 0.4\}$. When $\vartheta = 0$, there is no model misspecification. As $\vartheta$ increases, the degree of model misspecification becomes more severe. Table \ref{table-test} presents the power of the test under 200 repeated experiments (when $\vartheta = 0$, the result means the size of the test).

The simulation results in Table~\ref{table-test} show that our proposed test has correct type I error. 
Among the GEL criteria, CUE exhibits slightly lower type I error than EL and ET. As $\vartheta$ increases, the power increases sharply. The results show that our proposed over-identification test maintains correct Type~I error  
and has good power to detect violations of model assumptions and validity of moment conditions.

\begin{table}
\caption{Type I error and power of the proposed over-identification test, based on 200 replications with $n=10{,}000$ and $m=20$. Results with $\vartheta=0$ correspond to Type I error, while results with $\vartheta>0$ correspond to power. \label{table-test}}
\begin{tabular}{ccccccccccc}
\hline
& Censoring rate & \multicolumn{3}{c}{0} & \multicolumn{3}{c}{0.2} & \multicolumn{3}{c}{0.4} \\ \cline{2-11} 
$\vartheta$          & Model       & ET    & EL    & CUE   & ET     & EL     & CUE   & ET     & EL     & CUE   \\ \hline
\multirow{2}{*}{0}   & DNN         & 0.045 & 0.045 & 0.045 & 0.060  & 0.050  & 0.040 & 0.050  & 0.050  & 0.040 \\
& LR          & 0.035 & 0.035 & 0.030 & 0.045  & 0.040  & 0.035 & 0.045  & 0.040  & 0.030 \\ \hline
\multirow{2}{*}{0.1} & DNN         & 0.145 & 0.150 & 0.120 & 0.115  & 0.100  & 0.075 & 0.110  & 0.095  & 0.090 \\
& LR          & 0.160 & 0.155 & 0.115 & 0.125  & 0.110  & 0.080 & 0.075  & 0.075  & 0.065 \\ \hline
\multirow{2}{*}{0.2} & DNN        & 0.700 & 0.665 & 0.660 & 0.550  & 0.540  & 0.500 & 0.405  & 0.380  & 0.340 \\
& LR          & 0.715 & 0.700 & 0.675 & 0.545  & 0.530  & 0.510 & 0.365  & 0.335  & 0.295 \\ \hline
\multirow{2}{*}{0.3} & DNN         & 0.995 & 0.995 & 0.990 & 0.920  & 0.905  & 0.890 & 0.765  & 0.765  & 0.750 \\
& LR          & 0.990 & 0.990 & 0.990 & 0.920  & 0.920  & 0.910 & 0.750  & 0.725  & 0.700 \\ \hline
\multirow{2}{*}{0.4} & DNN         & 1.000 & 1.000 & 1.000 & 0.995  & 0.995  & 0.995 & 0.975  & 0.975  & 0.965 \\
& LR          & 1.000 & 1.000 & 1.000 & 1.000  & 1.000  & 1.000 & 0.975  & 0.975  & 0.965 \\ \hline
\end{tabular}

\end{table}

\section{Application to Time-to-Event Outcomes in the UK Biobank}
\label{sec:ukb}

\subsection{Data description}
\label{subsec:ukb-data}

UK Biobank is a population-based cohort of approximately 500,000 adults recruited in the 
United Kingdom between 2006 and 2010 \citep{Sudlow2015}. At baseline, participants underwent 
physical examinations, provided biological samples, and completed lifestyle and health 
questionnaires. Follow-up is achieved through linkage to national electronic health records (EHR), which capture hospital admissions, primary care visits, and death records. These 
records include dates of diagnoses (coded using ICD-9/10), prescriptions, and medical 
procedures, enabling the study of disease onset and progression using real-world clinical 
data. We focus on three right-censored outcomes derived from EHR: ischaemic heart disease, 
delirium dementia, and diabetes mellitus. Each time-to-event outcome is defined by the first recorded 
date of a relevant ICD-9/10 code mapped to grouped disease phenotypes \citep{dey2022}. Time 
is measured in months from birth year to diagnosis, and for censored participants, follow-up 
is defined as the latest of the last UK Biobank visit, most recent hospital or primary care 
record, or date of death.  

The exposure of interest is BMI, 
with age, sex, and systolic blood pressure as covariates. For genetic data pre-processing, we follow previous quality control procedures 
\citep{liu2023,ye2024}. We restrict the sample to unrelated individuals of white British 
ancestry and aged older than 65 years.  Following prior literature \citep{locke2015,ye2024}, we use $m=93$ single-nucleotide polymorphisms (SNPs) as candidate instruments for the exposure BMI.

The final sample for analysis comprises 32,810 individuals with ischaemic heart disease (3,944 events), 
35,722 with delirium dementia (1,039 events), and 34,213 with diabetes mellitus (2,664 events). The incidence of these diseases increases dramatically after age 65, and clinical guidelines emphasize body weight management
to reduce cardiovascular, and metabolic risks \citep{lane2022}. More
details on disease subcategories, event counts, and censoring proportions are provided in
Section~\ref{suppsec:ukb} of the Supplementary Material.

\subsection{Results and interpretation}

We apply our \texttt{MAWII-Surv} framework to estimate the causal effects of BMI on the three right-censored time-to-event outcomes described in Section~\ref{subsec:ukb-data}. Specifically, we use DNNs to estimate the Neyman orthogonal nuisance functions and employ ET as the GEL criterion. The DNN is a two-layer fully connected 
network with 50 neurons per layer, trained with the Adam optimizer (learning rate 0.0005, 
batch size 256) for up to 1,000 epochs with early stopping (see Section~\ref{suppsec:censorsimu} for details). We also report 
results with random forests, extreme gradient boosting, and linear regressions as nuisance estimators, and with EL and CUE as alternative 
GEL functions, which show similar results; see Section~\ref{suppsec:ukb} of the Supplementary Material. 
Table~\ref{tab:ihd} summarizes the estimated coefficients across specifications.  

\begin{table}
\center
\caption{Estimated causal effect of BMI on three survival outcomes. The $\exp(\hat\beta)$ column shows the point estimate with standard error (SE) in parentheses, and the corresponding $p$-value is reported in the third column. 
$F_{\text{MAWII}}>2$ suggests the presence of heteroscedasticity; $p_{\text{over-ID}}$: $p$-value for the over-identification test. \texttt{MAWII-Surv} uses DNNs to estimate nuisances and ET as the GEL function. AFT (without covariates) regresses log-survival time on BMI only, and AFT (with covariates) regresses log-survival time on BMI together with age, sex, and systolic blood pressure.\label{tab:ihd}} 
\begin{tabular}{lrrrr}
\hline
Model & \multicolumn{1}{c}{$\exp(\hat{\beta})$ (SE)} & $p$-value & \multicolumn{1}{c}{$F_\textrm{MAWII}$} & \multicolumn{1}{c}{$p_{\textrm{over-ID}}$} \\
\hline
\multicolumn{5}{c}{{Ischaemic Heart Disease}} \\\hline
AFT (without covariates) & $0.9766\ (0.0019)$ & $<0.001$ & -- & -- \\
AFT (with covariates)    & $0.9760\ (0.0020)$ & $<0.001$ & -- & -- \\
\texttt{MAWII-Surv}     & $0.8776\ (0.0422)$ & $0.007$  & 2.10 & 0.889 \\\hline
\multicolumn{5}{c}{{Delirium Dementia}} \\\hline
AFT (without covariates) & $0.9988\ (0.0023)$ & $0.602$ & -- & -- \\
AFT (with covariates)    & $0.9991\ (0.0024)$ & $0.708$ & -- & -- \\
\texttt{MAWII-Surv}     & $1.0867\ (0.0231)$ & $<0.001$ & 2.01 & 0.064 \\\hline
\multicolumn{5}{c}{{Diabetes Mellitus}} \\\hline
AFT (without covariates) & $0.9315\ (0.0022)$ & $<0.001$ & -- & -- \\
AFT (with covariates)    & $0.9290\ (0.0022)$ & $<0.001$ & -- & -- \\
\texttt{MAWII-Surv}     & $0.7291\ (0.0399)$ & $<0.001$ & 2.76 & 0.435 \\
\hline
\end{tabular}
\end{table}

Across the three diseases, the estimated $F$-statistics are 2.10, 2.01, and 2.76 respectively, providing evidence of heteroscedasticity in BMI. The over-identification 
tests yield $p$-values of 0.889, 0.064, and 0.435 respectively providing no evidence against the 
validity of the moment conditions. Traditional AFT models serve as benchmarks because they cannot adequately control unmeasured confounding. 
%This limitation is evident in our analysis. For delirium dementia, the traditional AFT model produces statistically insignificant associations regardless of covariate adjustment. By contrast, \texttt{MAWII-Surv}, which is developed to remove unmeasured confounding,  detects a significant protective effect. For ischaemic heart disease and diabetes mellitus, the AFT model may underestimate the estimated effects relative to \texttt{MAWII-Surv}, again reflecting the bias from unmeasured confounding. 

Under \texttt{MAWII-Surv}, the estimated causal effect on time to ischaemic heart disease is
$\exp(\hat\beta_{\text{IHD}}) = 0.8776 \ (95\%$ CI: $ [0.7949,\ 0.9603])$,
indicating that a one-unit increase in BMI multiplies the time to disease onset by $0.8776$, corresponding to an average $12.24\%$ earlier onset.  For diabetes mellitus, the estimate is
$\exp(\hat\beta_{\text{DM}}) = 0.7291\ (95\%$ CI: $ [0.6509,\ 0.8073])$,
implying that a one-unit increase in BMI multiplies the time to disease onset by $0.7291$, corresponding to an average $27.09\%$ earlier onset. For those two diseases, the traditional AFT model appears to underestimate the effects relative to \texttt{MAWII-Surv}, likely reflecting the influence of unmeasured confounding.

For delirium dementia, the estimated causal effect under \texttt{MAWII-Surv} is
$\exp(\hat\beta_{\text{DD}}) = 1.0867 \ ( 
95\%$ CI: $ [1.0416,\ 1.1318])$,
indicating that each one-unit increase in BMI multiplies the time to disease onset by $1.0867$, corresponding to an average $8.67\%$ longer time to onset.
Although
counterintuitive, this finding is in line with the so-called obesity paradox
\citep{elagizi2018}. The association is robust across alternative nuisance estimators
(Table~\ref{supptab:ihd}, Supplementary Material). Whereas midlife obesity has been linked
to elevated dementia risk \citep{xu2011}, later-life studies report the opposite pattern,
with higher BMI associated with slower cognitive decline or reduced incidence and
mortality of dementia \citep{fitzpatrick2009,natale2023obesity}. Importantly, \texttt{MAWII-Surv} uncovers this seemingly paradoxical protective association, which is entirely missed by standard AFT models, thereby underscoring the ability of robust MR methods to reveal clinically relevant associations masked by unmeasured confounding. Nevertheless, the results should be interpreted with caution given the potential for selection bias in biobank data.
\citep{swanson2012uk}.

\section{Discussion}
\label{sec:discussion}

We developed \texttt{MAWII-Surv}, a semiparametric framework for causal inference with censored outcomes and many weak  invalid instruments. Our approach combines a heteroscedasticity-based identification strategy with generalized empirical likelihood framework for inference, extending existing IV tools to settings with right-censored survival outcomes and non-Neyman orthogonal nuisance functions. The key contributions are fourfold: (i) adapt heteroscedasticity-based identification to right-censored survival outcomes; (ii) provide a systematic treatment of Neyman orthogonal versus non-Neyman orthogonal nuisance functions; (iii) develop GEL 2.0, which enables valid inference without Neyman orthogonality or cross-fitting; and (iv) introduce new inferential tools, including a variance decomposition that quantifies the effect of non-Neyman orthogonal nuisance estimation and an over-identification test for censored outcomes. In the UK Biobank data analysis, \texttt{MAWII-Surv} can help correct the attenuation bias of standard AFT models. As a practical recommendation, we suggest combining deep neural networks with empirical likelihood or exponential tilting for robust performance in complex data applications.

Several open directions remain. One is to extend the framework from the structural AFT model to more general transformation models \citep{zeng2007maximum}. Another is to establish semiparametric efficiency bound under many weak and non-Neyman orthogonal moment conditions, as existing theory generally does not cover this setting. More broadly, the techniques underlying GEL 2.0 provide a foundation for  inference in other challenging problems such as missing not at random, causal inference with time-varying confounding, and deep learning applications.

\section*{Acknowledgments}
This research has been conducted using the UK Biobank resource under application number 52008.

\newpage
\appendix
\begin{center}
\LARGE \textbf{Supplementary Material}
\end{center}
\setcounter{lemma}{1}
\setcounter{assumption}{5}
\renewcommand{\thetable}{S\arabic{table}}
\setcounter{table}{0}
\renewcommand{\thesection}{S\arabic{section}}
\renewcommand{\thesubsection}{S\arabic{section}.\arabic{subsection}}
This Supplementary Material includes technical details for the proofs of Lemmas and Theorems. All symbols in this
section are consistent with those in the main body unless otherwise specified.
%--------------------------------------------
\section{Proof of Theorem \ref{thmid}}
First, we prove $\bbE(g(\beta_0;\O_T))=0$ under \eqref{ET} and \eqref{EA}. 
\bse
&&\bbE(g(\beta_0;\O_T))\\
&=&\bbE\{R_Z(R_AR_T-\beta_0R_A^2-\bbE(R_AR_T-\beta_0R_A^2|\X))\}\\
&=&\bbE\{R_Z(R_AR_T-\beta_0R_A^2)\}-\bbE\{R_Z(\bbE(R_AR_T-\beta_0R_A^2|\X))\}\\
&=&\bbE[\bbE\{R_ZR_A(R_T-\beta_0R_A)|\Z,\X\}]-\bbE\left(\bbE[R_Z\{\bbE(R_AR_T-\beta_0R_A^2|\X)\}|\X]\right)\\
&=&\bbE\left[R_Z\bbE\big\{R_A(T-\beta_0A-\{\bbE(T|\Z,\X)-\beta_0\bbE(A|\Z,\X)\})|\Z,\X\big\}\right]\\
&&+\bbE[\underbrace{\bbE\{\Z-\bbE(\Z|\X)|\X\}}_{=0}\bbE(R_AR_T-\beta_0R_A^2|\X)]\\
&=&\bbE[R_Z\bbE\{R_A(T-\beta_0A)|\Z,\X\}]- \bbE\left(R_Z\bbE\big[R_A\{\bbE(T|\Z,\X)-\beta_0\bbE(A|\Z,\X)\}|\Z,\X\big]\right)\\
&=&\bbE[R_Z\underbrace{\bbE\{R_A(T-\beta_0A)|\X\}}_{\text{by Assumption \ref{condcondind} (ii)}}]- \bbE\left(R_Z\bbE\big[R_A|\Z,\X\big]\{\bbE(T|\Z,\X)-\beta_0\bbE(A|\Z,\X)\}\right)\\
&=&\bbE[\underbrace{\bbE\{\Z-\bbE(\Z|\X)|\X\}}_{=0}\bbE\{R_A(T-\beta_0A)|\X\}]\\
&&- \bbE(R_Z\underbrace{\bbE\{A-\bbE(A|\Z,\X)|\Z,\X\}}_{=0}\{\bbE(T|\Z,\X)-\beta_0\bbE(A|\Z,\X)\})\\
&=&0.
\ese

Noticing that $T\perp C|\X$ and $\delta=0$ when $Y\neq T$, we have
\bse
&&\bbE\{\psi(\beta_0;\O)\}\\
&=&\bbE\{\frac{\delta}{G_0(Y|\O_A)}\{g(\beta_0;\O)-\xi_0(\beta_0;\O_A)\}+\xi_0(\beta_0;\O_A)\}\\
&=&\bbE\{\frac{\delta}{G_0(T|\O_A)}\{g(\beta_0;\O_T)-\xi_0(\beta_0;\O_A)\}+\xi_0(\beta_0;\O_A)\}\\
&=&\bbE\{\frac{I(T\leq C)}{G_0(T|\O_A)}\{g(\beta;\O_T)-\xi_0(\beta_0;\O_A)\}+\xi_0(\beta_0;\O_A)\}\\
&=&\bbE\left(\bbE[\frac{I(T\leq C)}{G_0(T|\O_A)}\{g(\beta_0;\O_T)-\xi_0(\beta_0;\O_A)\}|\O_T]\right)+\bbE\{\xi_0(\beta_0;\O_A)\}\\
&=&\bbE\left(\frac{\bbE\{I(T\leq C)|\O_T\}}{G_0(T|\O_A)}\{g(\beta_0;\O_T)-\xi_0(\beta_0;\O_A)\}\right)+\bbE\{\xi_0(\beta_0;\O_A)\}\\
&=&\bbE\{g(\beta_0;\O_T)-\xi_0(\beta_0;\O_A)\}+\bbE\{\xi_0(\beta_0;\O_A)\}\\
&=&\bbE\{g(\beta_0;\O_T)\}\\
&=&0.
\ese
Next we prove that $\beta_0$ is the unique solution to $\bbE\{\psi(\beta;\O)\}=0$. Suppose there is another $\beta$ also satisfying $\bbE\{\psi(\beta;\O)\}=0$, then we have
\bse
&&\bbE\{\psi(\beta;\O)\}-\bbE\{\psi(\beta_0;\O)\}\n\\
&=&\bbE\{g(\beta,\boldeta_0;\O_T)\}-\bbE\{g(\beta_0,\boldeta_0;\O_T)\}\n\\
&=&\bbE[(\Z-\bbE(\Z|\X))\{R_AR_T-\bbE(R_AR_T|\X)-\beta(R_A^2-\bbE(R_A^2|\X))\}]\n\\
&&-\bbE[(\Z-\bbE(\Z|\X))\{R_AR_T-\bbE(R_AR_T|\X)-\beta_0(R_A^2-\bbE(R_A^2|\X))\}]\n\\
&=&\bbE[(\Z-\bbE(\Z|\X))\{(\beta_0-\beta)(R_A^2-\bbE(R_A^2|\X))\}]\n\\
&=&(\beta_0-\beta)\bbE\{(\Z-\bbE(\Z|\X))(R_A^2-\bbE(R_A^2|\X))\}\n\\
&=&(\beta_0-\beta)\Cov(\Z,R_A^2|\X)\n\\
&=&(\beta_0-\beta)\Cov(\Z,\var(A|\Z,\X)|\X)\n\\
&=&0,
\ese 
which means $\beta=\beta_0$, given the heteroskedasticity assumption $\Cov(\Z,\var(A|\Z,\X)|\X)\neq 0$.

\section{Neyman orthogonal nuisance, non-Neyman orthogonal nuisance, and Proof of Proposition \ref{lemmaneyman} \label{pflemmaneyman}}

% Formal definitions of Neyman orthogonality

In this section, we provide the formal mathematical definitions of G\^ateaux and Fr\'echet derivatives, and show how they characterize Neyman orthogonality of nuisance functions. These details are omitted from the main text for readability.

\subsection{ G\^ateaux and Fr\'echet derivatives}

Let $\Psi(\beta,\boldeta;\O)$ be a generic moment function that depends on the parameter of interest $\beta$, a nuisance function $\boldeta$, and data $\O$. Suppose the true values are $(\beta_0,\boldeta_0)$.

\paragraph{G\^ateaux derivative.}  
The G\^ateaux derivative \citep{gateaux1919} measures the sensitivity of $\Psi$ along a fixed direction. For any fixed direction $\boldeta-\boldeta_0$, the G\^ateaux derivative of $\Psi(\beta_0,\cdot;\O)$ at $\boldeta_0$ in the direction $\boldeta-\boldeta_0$ is
\[
\left. \frac{\partial}{\partial t}\bbE_{P_\O}\left[\Psi(\beta_0,\boldeta_0+t(\boldeta-\boldeta_0);\O)\right]\right|_{t=0}.
\]

\paragraph{Fr\'echet derivative.}  
The Fr\'echet derivative \citep{frechet1906} is a stronger notion, requiring a uniform linear approximation in all directions. Formally, $D_{\boldeta}\Psi(\beta_0,\boldeta_0;\O)(\cdot)$ is the unique bounded linear operator such that
\[
\lim_{\|\boldeta-\boldeta_0\|_{L^2}\to 0}
\frac{\big\|\Psi(\beta_0,\boldeta;\O)-\Psi(\beta_0,\boldeta_0;\O)-D_{\boldeta}\Psi(\beta_0,\boldeta_0;\O)(\boldeta-\boldeta_0)\big\|}
{\|\boldeta-\boldeta_0\|_{L^2}} = 0.
\]

\subsection{ Neyman orthogonality}

With these definitions in place, a nuisance function $\boldeta$ is said to be \textit{Neyman orthogonal} if the moment function is locally insensitive to perturbations in $\boldeta$. That is,
\[
\bbE_{P_\O}\left[ D_{\boldeta}\Psi(\beta_0,G_0,\boldeta_0;\O)(\boldeta-\boldeta_0)\right] = 0 \quad \text{for all } \boldeta-\boldeta_0.
\]
Equivalently, the G\^ateaux derivative vanishes in every fixed direction.  

In practice, when $\boldeta$ is estimated by some $\hat{\boldeta}$, Neyman orthogonality ensures that the first-order asymptotics of the estimator of $\beta$ are unaffected by small estimation errors in $\hat{\boldeta}$.

\subsection{Neyman Orthogonal vs. non-Neyman orthogonal nuisances in our setting}

For the AIPCW moment function considered in the main text,
\[
\psi(\beta;\O) = \frac{\delta}{G_0(Y|\O_A)}g(\beta;\O) + \Big(1-\frac{\delta}{G_0(Y|\O_A)}\Big)\xi_0(\beta;\O_A),
\]
we show that:
\begin{itemize}
\item Conditional expectations such as $\bbE(\Z|\X)$, $\bbE(A|\Z,\X)$, $\bbE(Y|\Z,\X)$, $\bbE(R^2_A|\X)$, $\bbE(R_AR_Y|\X)$, and $\xi_0(\beta_0;\O_A)$ are \textit{Neyman orthogonal nuisances}. Their estimation error is asymptotically negligible to first order.
\item The conditional censoring distribution $G(\cdot|\O_A)$ is a \textit{non-Neyman orthogonal nuisance}. Its G\^ateaux derivative does not vanish, because $G(T|\O_A)$ depends jointly on both $T$ and $\O_A$. As a result, estimation error in $G$ enters directly into the first-order expansion of the estimator and contributes an additional variance component $V_2$.
\end{itemize}

This distinction underpins the variance decomposition in Section \ref{secvardecom} of the main text, where $V_2$ explicitly accounts for the impact of estimating the non-Neyman orthogonal nuisance $G(\cdot|\O_A)$.

\section{Proof of Proposition \ref{lemmaneyman} }\label{suppproofproposition1}

For the G\^ateaux derivative of $G$, we have
\be
&&\bbE \frac{\partial \psi(\beta_0,G_0+t(G-G_0),\boldeta_0;\O)}{\partial t}\Big |_{t=0}
\n\\
&=& \bbE \frac{\delta \{ G(Y \mid \O_A) - G_0(Y \mid \O_A)\}}{G_0^2(Y \mid \O_A)}\left\{ g(\beta_0,\boldeta_0;\O) - \xi_0(\beta_0;\O_A) \right\}\n\\
&=&\bbE\left( \bbE\left[\frac{\delta \{ G(T \mid \O_A) - G_0(T \mid \O_A)\}}{G_0^2(T \mid \O_A)}\left\{ \xi_0(\beta_0;\O_A) \right\}|\O_T\right]\right) \n\\
&=& \bbE\left[\frac{ \{  G(T \mid \O_A) - G_0(T \mid \O_A)\}}{G_0(T \mid \O_A)}\left\{ g(\beta_0,\boldeta_0;\O_T) - \xi_0(\beta_0;\O_A) \right\}\right].\n
\ee
Note that though written as $G(T | \O_A)$, $G$ is actually a function of $T$ and $\O_A$, thus we can not cancel this term through taking expectation about $\O_A$. This means it cannot be simplified any further, and the first order G\^ateaux derivative is not 0. Therefore, $\psi$ is not Neyman Orthogonal to $G$. Let $\boldeta$ collect $\{\bbE(\Z|\X),\bbE(A|\Z,\X),\bbE(Y|\Z,\X),\bbE(R_A|\X),\bbE(R_T|\X),\xi\}$. For any direction $\boldeta^*=\{f^*,h_1^*,\dots,h_4^*,\xi^*\}$, define $\Phi(t;\O)=\psi(\beta_0,G_0,\boldeta_0+t(\boldeta^*-\boldeta_0);\O)$. Then we have
\be
&&\bbE \frac{\partial \psi(\beta_0,G_0,\boldeta_0+t(\boldeta^*-\boldeta_0);\O)}{\partial t}\Big |_{t=0}\n\\
&=&\bbE\left( \frac{\delta}{G_0}\frac{\partial g(\beta_0,\boldeta_0+t(\boldeta^*-\boldeta_0);\O)}{\partial t}\Big |_{t=0}\right)+\bbE\left( (1-\frac{\delta}{G_0})(\xi^*(\beta_0;\O_A)-\xi_0(\beta_0;\O_A))\right)\n\\
&=&\bbE\left\{\partial[\Z-\{f+t(f^*-f)\}]\Big([A-\{h_1+t(h_1^*-h_1)\}][T-\{h_2+t(h_2^*-h_2)\}\right.\n\\
&&\left.\quad -\beta_0(A-\{h_1+t(h_1^*-h_1)\})]-[h_3+t(h_3^*-h_3)-\beta_0\{h_4+t(h_4^*-h_4)\}]\Big)/\partial t|_{t=0}\right\}\n\\
&&+\bbE\left\{\bbE\left( (1-\frac{\delta}{G_0})(\xi(\beta_0;\O_A)-\xi_0(\beta_0;\O_A))|\O_T\right)\right\}\n\\
&=&\bbE\left\{(f-f^* )(R_A(R_T-\beta_0R_A)-h_3+\beta_0h_4)\right\}\n\\
&&+\bbE\left\{(\Z-f)(h_1-h_1^*)(R_T-\beta_0R_A)\right\}\n\\
&&+\bbE\left[(\Z-f)R_A\{(h_2-h_2^*)-\beta_0(h_1-h_1^*)\}\right]  \n\\
&&+\bbE\left((\Z-f)[R_A\{R_T-\beta_0R_A\}-(h_3^*-h_3)+\beta_0(h_4- h_4^*)]\right)  \n\\
&=&\bbE\left[\bbE\left\{(f-f^* )(R_A(R_T-\beta_0R_A)-h_3+\beta_0h_4)|\X\right\}\right]\n\\
&&+\bbE\left[\bbE\left\{(\Z-f)(h_1-h_1^*)(R_T-\beta_0R_A)|\Z,\X\right\}\right]\n\\
&&+\bbE\left(\bbE\left[(\Z-f)R_A\{(h_2-h_2^*)-\beta_0(h_1-h_1^*)\}|\Z,\X\right]\right)  \n\\
&&+\bbE\left[\bbE\left((\Z-f)\{(h_3-h_3^*)-\beta_0( h_4-h_4^*)\}|\X \right) \right] \n\\
&=&\bbE\left[(f-f^* )\bbE\left\{(R_AR_T-h_3)-\beta_0(R_A^2-h_4)|\X\right\}\right]\n\\
&&+\bbE\left[(\Z-f)(h_1-h_1^*)\bbE\left\{R_T-\beta_0R_A|\Z,\X\right\}\right]\n\\
&&+\bbE\left[(\Z-f)\{(h_2-h_2^*)-\beta_0(h_1-h_1^*)\}\bbE(R_A|\Z,\X)\right] \n\\
&&+\bbE\left[\{(h_3-h_3^*)-\beta_0( h_4-h_4^*)\}\bbE\left(\Z-f|\X \right) \right] \n\\
&=&0\n.
\ee  
This implies that $\psi$ is Neyman Orthogonal to $\boldeta$. We can also write into the form of Fr\'echet derivative 
\bse
\left.\frac{\partial}{\partial t} \bbE\left[\psi(\beta_0, \boldeta_0 + t (\boldeta - \boldeta_0))\right]\right|_{t = 0} 
= \bbE\left[ D_{\boldeta} \psi(\beta_0, \boldeta_0)(\boldeta - \boldeta_0) \right] = 0,
\ese
where 
\bse
D_{\boldeta} \psi(\beta_0, \boldeta_0)
= \left(
\begin{array}{c}
\frac{\delta}{G_0}(R_A(R_T-\beta_0R_A)-h_3+\beta_0h_4) \\
\frac{\delta}{G_0}(\Z - f)(R_T - 2\beta_0 R_A) \\
\frac{\delta}{G_0}(\Z - f) R_A  \\
\frac{\delta}{G_0}(f - \Z)  \\
\frac{\delta}{G_0}\beta_0 (\Z - f)\\
1-\frac{\delta}{G_0}
\end{array}
\right)\tp
\ese
and 
\bse
\boldeta^* - \boldeta &=& 
\left(
\begin{array}{l}
f^*-f_0\\
h_1^* - h_1 \\
h_2^* - h_2 \\
h_3^* - h_3 \\
h_4^* - h_4 \\
\xi^*-\xi_0
\end{array}
\right).
\ese

\section{Proof of Theorem \ref{thmcensor}} \label{pfthm2}
Since the uncensored case is easier than the censored case, we first present Corollary \ref{coro1} and its proof, and then generalize it to the right-censored case.

For notational simplicity, we omit the superscript and subscript in \(\hat\beta_{\mathrm{GEL2}}^{\mathrm{C}}\) and write it simply as \(\hat\beta\) throughout the proof. Since uncensored data can be viewed as a special case of censored data, and the proof is generally simpler in the absence of censoring, we first establish key results under the uncensored setting. Here are some notations for the uncensored data, where $\psi$ reduces to $g$. Let
\bse
&&\quad \bar g(\beta,\boldeta)=\frac{1}{n} \sum_{i=1}^n g(\beta,\boldeta;\O_i),\quad\tilde g(\beta,\boldeta)=\bbE(g(\beta,\boldeta;\O)),\\
&& \bar\Omega(\beta,\boldeta)=\frac{1}{n} \sum_{i=1}^n g(\beta,\boldeta;\O_i)g(\beta,\boldeta;\O_i)\tp,\quad \Omega(\beta,\boldeta) = \bbE(g(\beta,\boldeta;\O)g(\beta,\boldeta;\O)\tp),\quad \Omega_0=\Omega(\beta_0,\boldeta_0),\\
&& g^{'}(\boldeta;\O_i)=\frac{\partial g(\beta,\boldeta;\O_i)}{\partial \beta},\quad  \bar {g^{'}}(\boldeta)=\frac{1}{n} \sum_{i=1}^n g^{'}(\boldeta;\O_i), \quad g^{*}=\bbE(g^{'}(\boldeta_0;\O)).
\ese

\subsection{Regularity conditions}\label{additionalcond}

\begin{assumption}[Bounded higher-order moments]\label{condmoment}
(i). $\bbE(T^8), \bbE(A^8)$, and $\Z$ are bounded;\\
(ii). $\{\bbE(\|\psi(\beta_0,G_0,\boldeta_0;\O)\|^4)+\bbE(\|\psi^{'}(G_0,\boldeta_0;\O)\|^4)\}m/n$=o(1);\\
(iii). There exists $\gamma>4$ such that $n^{1/\gamma}\{\bbE(\sup_{\beta\in \calB}\|\psi(\beta,G_0,\boldeta_0;\O)\|^\gamma)\}^{1/\gamma}(m+\mu_n)/\sqrt{n}\rightarrow 0$. 
\end{assumption}
\begin{assumption}[Bounded eigenvalue] \label{condeigen}
There exists a positive constant $c$ such that $ 1/c\leq \lambda_{\min}(\Sigma(\beta,\boldeta_0)) < \lambda_{\max}(\Sigma(\beta,\boldeta_0))\leq c $ for all $\beta\in\calB$, and $\lambda_{\max}\bbE(\psi^{'}(G_0,\boldeta_0;\O)\psi^{'}(G_0,\boldeta_0;\O)\tp)\leq c $.
\end{assumption}

Assumptions \ref{condmoment} (i) and \ref{condeigen} require that $\bbE(T^8), \bbE(A^8)$, $\Z$ and the eigenvalue of matrices $\Sigma$ are bounded. Assumptions \ref{condmoment} (ii) and (iii) impose further constraints on the AIPCW moment function $\psi$, which are particularly important when $m$ diverges with $n$ or when the moment conditions are weak. These assumptions are also commonly adopted in the weak IV literature and ensure the validity of asymptotic expansions for GEL-type estimators; see Assumption 6 in \citet{newey2009} and Assumptions 5 and 6 in \citet{ye2024}.

\begin{assumption}[GEL function] \label{condgel}
(i). $\rho(\cdot)$ is concave and three times continuously differentiable; (ii). $\rho(0)=0$, and $\rho^{'}(0)=\rho^{''}(0)=-1$.
\end{assumption}
Assumption \ref{condgel} (i) requires sufficient smoothness to allow for a third-order Taylor expansion. Assumption \ref{condgel} (ii) normalizes the function at zero, ensuring the GEL objective is centered and correctly scaled. These assumptions are standard in the GEL literature and align with the requirements of GEL function in \citet{newey2004}.

\begin{assumption}[Local Kaplan-Meier estimator]  \label{condkernel} 
(i). The conditional survival function $G_0$ is $\lfloor 3(m+d_x+1)/2\rfloor +1$ order differentiable in $t$ and in each component of $\O_A$, with all partial derivatives uniformly bounded. In addition, there exists a constant $c > 0$ such that $\inf_{\O_A} P(t \leq T \leq C |\O_A) \geq c$. Moreover, there exist constants $b_1, b_2 > 0$ such that 
$$     
b_1 \leq \sup\{ t : G_0(t| \O_A) > 0 \} \leq b_2
$$
holds uniformly over $\O_A$, and $\sup\{ t : P(T > t| \O_A) > 0 \} \geq \sup\{ t : G_0(t|\O_A) > 0 \}$ almost surely in $\O_A$;

(ii). The bandwidth $h$ satisfies $nh^{3(d_x+m+1)+2} \to 0$, $[\log(n)]^{-3}n h^{3(d_x+m+1)} \to \infty$ as $n \to \infty$;

(iii). The higher-order kernel function $K$ is a $\lfloor 3(m+d_x+1)/2\rfloor +1$-order kernel, e.g., $\int K(u)du=1$, $\int u^i K(u)du=0$ for $i=1,\dots, \lfloor 3(m+d_x+1)/2\rfloor$, and $\int u^{\lfloor 3(m+d_x+1)/2\rfloor+1} K(u)du <\infty$.
\end{assumption}
Assumption~\ref{condkernel} imposes regularity assumptions on the conditional censoring distribution $G_0(t | \O_A)$, the higher-order kernel function $K$, and bandwidth $h$ used in the nonparametric estimation. Specifically, Assumption~\ref{condkernel} (i) ensures that $G_0(t | \O_A)$ is sufficiently smooth and uniformly bounded away from zero over its support. This prevents the inverse weight $\delta / G_0(Y | \O_A)$ in the estimator from becoming excessively large, which could otherwise lead to numerical instability. Assumptions \ref{condkernel} (ii) and (iii) impose bandwidth assumptions, ensuring the integrated local Kaplan-Meier estimator achieves $\sqrt{n}$ asymptotic normality. In the case of local Kaplan-Meier estimation with finite-dimensional covariates, asymptotic normality has already been established, for example, by \cite{stute1995,gerds2017}. Since in our setting the dimension $m$ diverges, we employ higher-order kernel to control the bias and impose stronger requirements on the bandwidth in Assumptions \ref{condkernel} (ii) and (iii). The higher order kernel has previously been used in the literature; see, for example, \citet{parzen1962,fan1992}, and a concrete construction of a higher order kernel is given in Section \ref{suppsec:kernel}. 

\textbf{Remark}: Assumptions \ref{condweak} (ii) and \ref{condkernel} jointly ensure that the influence of the non-Neyman orthogonal nuisance function $G$ on the estimation of $\beta$ is minimized. Specifically, the first-order bias introduced by the non-orthogonality of $G$ is eliminated by employing a high-order kernel and constraining the moment strength in Assumption {condweak} (ii), while the second-order bias is controlled via the convergence rate requirement imposed in Assumption \ref{condkernel} (ii). However, the additional variance induced by the lack of Neyman orthogonality is generally difficult to eliminate, as we show in the following Theorem \ref{thmcensor}. In other words, while the bias resulting from a non-Neyman orthogonal nuisance can be removed through careful estimation and regularity conditions, variance inflation is often an inherent consequence in such settings.

\subsection{Some useful Lemmas and their proofs}
\begin{lemma} \label{lemma1}
(i) Under Assumptions \ref{condweak} and \ref{condeigen}, there is a positive constant $c>0$ with $\left|\beta-\beta_0\right| \leq c \sqrt{n}\left\|\tilde{g}\left(\beta, \boldeta_0\right)\right\| / \mu_n$ for all $\beta \in \calB$.

(ii)  Under Assumptions \ref{condweak} and \ref{condeigen}, there is a positive constant $c$ and $\hat{M}=O_p(1)$ such that for all $\beta^{\prime}, \beta \in \calB$,
$$\sqrt{n}\left\|\tilde{g}\left(\beta^{\prime}, \boldeta_0\right)-\tilde{g}\left(\beta, \boldeta_0\right)\right\| / \mu_n \leq c\left|\beta^{\prime}-\beta\right|$$ and $$\sqrt{n}\left\|\bar{g}\left(\beta^{\prime}, \boldeta_0\right)-\bar{g}\left(\beta, \boldeta_0\right)\right\| / \mu_n \leq \hat{M}\left|\beta^{\prime}-\beta\right|.$$

(iii) Under Assumption \ref{condmoment}, $\left|a\tp\left\{\Omega\left(\beta^{\prime}, \boldeta_0\right)-\Omega\left(\beta, \boldeta_0\right)\right\} b\right| \leq c\|a\|\|b\|\left|\beta^{\prime}-\beta\right|$ for all $a, b \in \mathbb{R}^m, \beta^{\prime}, \beta \in \calB.$

(iv) Under Assumption \ref{condeigen}, there is a positive constant $c>0$ such that
$$
\sup _{\beta \in \calB} \bbE\left[\left\{g(\beta, \boldeta_0;\O_i)\tp g(\beta, \boldeta_0;\O_i)\right\}^2\right] \leq c m^2 .
$$
\end{lemma}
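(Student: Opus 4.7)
All four parts exploit the fundamental linearity of $g(\beta,\boldeta_0;\O)$ in $\beta$, which allows the decomposition
\[
g(\beta,\boldeta_0;\O) = g(\beta_0,\boldeta_0;\O) + (\beta-\beta_0)\, g^{'}(\boldeta_0;\O).
\]
The plan is to reduce every statement to moment properties of $g^{*}=\bbE\{g^{'}(\boldeta_0;\O)\}$ and of the covariance matrix $\Omega_0$, then invoke Assumptions~\ref{condweak}, \ref{condmoment}, and \ref{condeigen} to obtain the stated bounds.

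For part (i), I would use Theorem~\ref{thmid} (namely $\tilde g(\beta_0,\boldeta_0)=0$) together with linearity to write $\tilde g(\beta,\boldeta_0) = (\beta-\beta_0)g^{*}$. The lower bound in Assumption~\ref{condweak}(i) combined with $\lambda_{\max}(\Omega_0^{-1})\le 1/\lambda_{\min}(\Omega_0)\le c$ from Assumption~\ref{condeigen} yields $\|g^{*}\|^2 \ge \mu_n^2/(nc)$, hence $|\beta-\beta_0|=\|\tilde g(\beta,\boldeta_0)\|/\|g^{*}\| \le c\sqrt n\,\|\tilde g(\beta,\boldeta_0)\|/\mu_n$. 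For part (ii), the Lipschitz constant of $\tilde g$ equals $\|g^{*}\|$, and the matching upper bound in Assumption~\ref{condweak}(i) together with $\lambda_{\min}(\Omega_0^{-1})\ge 1/c$ gives $\|g^{*}\|\le c\mu_n/\sqrt n$, as required. The empirical version reduces to bounding $\|\bar g^{'}(\boldeta_0)\|$; I would control $\bbE\|\bar g^{'}(\boldeta_0)-g^{*}\|^2 \le \tr(\bbE g^{'}g^{'\top})/n \le cm/n$ using Assumption~\ref{condeigen}, and then appeal to $m/\mu_n^2=O(1)$ in Assumption~\ref{condweak}(i) so that $\sqrt n\|\bar g^{'}(\boldeta_0)\|/\mu_n = O_p(1)$.

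For part (iii), expand using linearity to obtain
\[
g(\beta,\boldeta_0;\O)g(\beta,\boldeta_0;\O)\tp - g(\beta',\boldeta_0;\O)g(\beta',\boldeta_0;\O)\tp = (\beta-\beta')\bigl[g(\beta,\boldeta_0;\O)g^{'}(\boldeta_0;\O)\tp + g^{'}(\boldeta_0;\O)g(\beta',\boldeta_0;\O)\tp\bigr],
\]
then apply Cauchy--Schwarz on each cross term so that the resulting bilinear form is controlled by $\|a\|\|b\|$ times the square root of the product of $\lambda_{\max}(\Omega(\beta,\boldeta_0))$ and $\lambda_{\max}(\bbE g^{'}g^{'\top})$, both bounded by $c$ under Assumption~\ref{condeigen}. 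For part (iv), I would use the elementary inequality $\|g\|^4 = (\sum_{j=1}^m g_j^2)^2 \le m\sum_{j=1}^m g_j^4$ and bound $\bbE g_j^4$ uniformly in $j$: each coordinate $g_j$ is a polynomial in $R_Z$, $R_A$, and $R_T$ of total degree at most four, so Assumption~\ref{condmoment}(i) ($\Z$ bounded, $\bbE T^8$ and $\bbE A^8$ finite) plus Jensen/Cauchy--Schwarz gives $\bbE g_j^4\le c$ uniformly in $\beta\in\calB$, yielding $\bbE\|g\|^4\le cm^2$.

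The main obstacle is keeping the constants dimension-free and independent of $n$ in parts (i)--(ii). The two bounds on $\|g^{*}\|$ must simultaneously track the weak-identification rate $\mu_n$: the lower bound drives the identifiability argument, while the upper bound is what makes the Lipschitz constant of $\tilde g$ vanish at the correct rate. The variance argument for $\bar g^{'}(\boldeta_0)-g^{*}$ is the only genuinely stochastic step, and it is precisely here that the constraint $m/\mu_n^2=O(1)$ enters to ensure $\hat M=O_p(1)$ rather than merely $O_p(\sqrt{m}/\mu_n)$.
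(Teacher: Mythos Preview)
The paper does not supply its own proof of this lemma; it states that the four parts coincide with Lemmas~S1--S4 of \cite{ye2024} and omits the argument. Your proposal therefore fills in what the citation is meant to cover, and the strategy you outline---exploiting the exact linearity $g(\beta,\boldeta_0;\O)=g(\beta_0,\boldeta_0;\O)+(\beta-\beta_0)g'(\boldeta_0;\O)$ and reducing everything to two-sided bounds on $\|g^*\|$ via the concentration-parameter inequality in Assumption~\ref{condweak}(i) together with the eigenvalue sandwich in Assumption~\ref{condeigen}---is precisely the standard route. Your treatment of the stochastic half of (ii), controlling $\|\bar g'(\boldeta_0)-g^*\|$ through the trace bound $\tr(\bbE g'g'^{\top})\le cm$ and then absorbing the resulting $O_p(\sqrt m/\mu_n)$ via $m/\mu_n^2=O(1)$, is the correct argument and is indeed the one place where the many-weak-moment restriction enters.

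One minor remark: you invoke Assumption~\ref{condeigen} for part~(iii) and Assumption~\ref{condmoment}(i) for part~(iv), whereas the lemma statement lists them the other way around. Your assignment is the mathematically natural one---the spectral bounds on $\Omega(\beta,\boldeta_0)$ and $\bbE g'g'^{\top}$ are what drive the Cauchy--Schwarz step in (iii), and the eighth-moment bounds on $A,T$ together with boundedness of $\Z$ are what deliver $\sup_j\bbE g_j^4\le c$ in (iv). Since both assumptions are carried throughout the paper anyway, the distinction is cosmetic, but it is worth noting that the hypotheses as printed in the lemma appear to be transposed.
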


Lemma \ref{lemma1} is the same as Lemmas S1-S4 in \cite{ye2024}. So, the proof is omitted here.
\begin{lemma}[Theorem 4.2 in \cite{jiao2023}]\label{lemmadnn}
For any $M,N\in \bbN^{+}$, the function class of ReLU MLP with width $\mathcal{W}=38(\lfloor\zeta\rfloor+$ $1)^2 d^{\lfloor\zeta\rfloor+1} N\left\lceil\log _2(8 N)\right\rceil$ and depth $\mathcal{D}=21(\lfloor\zeta\rfloor+1)^2 M\left\lceil\log _2(8 M)\right\rceil$, for $n \geq \operatorname{Pdim}\left(\mathcal{F}_n\right) / 2$, the prediction error of the $\hat{f}_n$ satisfies
$$
\mathbb{E}\left\|\hat{f}_n-f\right\|_{L^2}^2 \leq \mathcal{E}(\zeta,B_0,\mathcal{W},\mathcal{D},\mathcal{S},d)= c B_0^5(\log (n))^5 \frac{1}{n} \mathcal{S D} \log (\mathcal{S})+324 B_0^2(\lfloor\zeta\rfloor+1)^4 d^{2\lfloor\zeta\rfloor+\zeta \vee 1}(N M)^{-4 \zeta / d}.
$$
\end{lemma}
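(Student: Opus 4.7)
The plan is to decompose the squared $L^2$ prediction error of the empirical risk minimizer $\hat f_n$ into an approximation (bias) term and a stochastic (variance) term, and then bound each separately using, respectively, a constructive ReLU approximation of Hölder functions and a pseudo-dimension-based generalization inequality. More precisely, for any candidate $g\in\calF_n:=\calF_{B_0,\calW,\calD,\calS,d}$ the standard oracle decomposition gives
\begin{equation*}
\bbE\|\hat f_n-f\|_{L^2}^2 \;\lesssim\; \underbrace{\inf_{g\in\calF_n}\|g-f\|_{L^2}^2}_{\text{approximation error}} \;+\; \underbrace{\sup_{g\in\calF_n}\bigl|(\bbE_n-\bbE)(g-f)^2\bigr|}_{\text{stochastic error}},
\end{equation*}
so the proof reduces to bounding these two pieces at the stated rates.

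For the approximation term I would invoke a constructive result for ReLU multilayer perceptrons approximating Hölder-$\zeta$ functions on $[0,1]^d$. The idea is to first partition $[0,1]^d$ into a grid of $(NM)^{d}$ cells, use the ReLU network to implement a local Taylor polynomial of degree $\lfloor\zeta\rfloor$ on each cell via a sum of products of clipped linear pieces, and glue these local polynomials together with a ReLU partition of unity. The depth $\mathcal{D}=21(\lfloor\zeta\rfloor+1)^2 M\lceil\log_2(8M)\rceil$ is needed to implement the polynomial multiplications (which ReLU implements only approximately, using $O(\log M)$-depth subnetworks per monomial), while the width $\calW=38(\lfloor\zeta\rfloor+1)^2 d^{\lfloor\zeta\rfloor+1}N\lceil\log_2(8N)\rceil$ is needed to encode the $\binom{\lfloor\zeta\rfloor+d}{d}\lesssim d^{\lfloor\zeta\rfloor}$ coefficients on each of the $N^d$ cells in parallel. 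Combining the Hölder remainder bound on each cell (order $(NM)^{-\zeta/d}$ in sup-norm) with the network's multiplicative error then yields the $L^\infty$-approximation rate $\lesssim B_0(\lfloor\zeta\rfloor+1)^2 d^{\lfloor\zeta\rfloor+(\zeta\vee 1)/2}(NM)^{-2\zeta/d}$, and squaring gives the deterministic $(NM)^{-4\zeta/d}$ term in the bound.

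For the stochastic term I would use the pseudo-dimension bound for piecewise-linear networks, namely $\Pdim(\calF_n)\lesssim \calS\calD\log\calS$ (Bartlett--Harvey--Liaw--Mehrabian style). Feeding this into a uniform-deviation inequality for bounded function classes (e.g.\ Dudley's entropy integral with covering numbers $\log\mathcal{N}(\epsilon,\calF_n,\|\cdot\|_\infty)\lesssim \Pdim(\calF_n)\log(B_0/\epsilon)$), together with the condition $n\geq \Pdim(\calF_n)/2$ and the uniform bound $\|f\|_\infty,\|g\|_\infty\leq B_0$, yields
\begin{equation*}
\bbE\sup_{g\in\calF_n}\bigl|(\bbE_n-\bbE)(g-f)^2\bigr| \;\lesssim\; B_0^5(\log n)^5\,\frac{\calS\calD\log\calS}{n},
\end{equation*}
where the $B_0^5(\log n)^5$ factor absorbs the range constants, the truncation level, and the chaining/localization losses. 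Summing the squared approximation bound and this stochastic bound produces exactly the claimed $\mathcal{E}(\zeta,B_0,\calW,\calD,\calS,d)$.

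The hard part is the approximation step: getting the correct dependence on both $d$ and $\zeta$ simultaneously, while restricting to networks of the prescribed width $\calW$ and depth $\calD$, requires a delicate explicit construction that mixes bit-extraction (to encode grid indices cheaply in width) with depth-efficient approximate multiplication (to evaluate monomials of degree up to $\lfloor\zeta\rfloor$). Balancing these two resources without blowing up the constants $(\lfloor\zeta\rfloor+1)^4 d^{2\lfloor\zeta\rfloor+\zeta\vee 1}$ is what makes the rate $(NM)^{-4\zeta/d}$ attainable rather than, say, the weaker $N^{-4\zeta/d}$ coming from purely width-based schemes; the pseudo-dimension argument, by contrast, is essentially a black-box application of existing VC-type bounds for ReLU networks and is not the bottleneck.
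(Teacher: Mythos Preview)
The paper does not prove this lemma at all: it is stated as a direct citation of Theorem 4.2 in \cite{jiao2023}, and the surrounding text immediately moves on to use the bound $\mathcal{E}(\zeta,B_0,\calW,\calD,\calS,d)$ as a black box (together with Assumption~\ref{condDNNs}) to obtain the rate $o(n^{-1/4}m^{-1/4})$ and to exhibit a concrete configuration of $\calW,\calD,\calS$ satisfying that assumption. There is no proof to compare your proposal against.

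That said, your outline is a reasonable reconstruction of the argument one would expect to find in the cited source: the bias--variance (approximation/stochastic) decomposition, the constructive ReLU approximation of H\"older-$\zeta$ functions via local Taylor polynomials on an $(NM)^d$ grid with depth-efficient multiplication and bit-extraction, and the pseudo-dimension bound $\Pdim(\calF_n)\lesssim \calS\calD\log\calS$ fed into a uniform-deviation inequality are indeed the standard ingredients for results of this type. If you want to supply a proof rather than cite one, this is the right architecture; just be aware that in the context of this paper the lemma is purely an imported tool, so a full proof would be out of place here.
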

Together with Assumption \ref{condDNNs}, we can obtain the convergence rate of DNN is $o(n^{-1/4}m^{-1/4})$, and $\|\hat\boldeta-\boldeta_0\|=o_p(n^{-1/4})$. Here we provide a set of configurations under which Assumption 7 is satisfied. Set H\"older smooth parameter $\zeta=m+d_x$, $(m+d_x)^{3(m+d_x)+4.5}=o(n^{1/6})$, $\calW=\calD=O(n^{1/12}), \calS=O(n^{1/4})$, then we have $\mathbb{E}\left\|\hat{f}_n-f\right\|_{L^2}^2 = O((\log n)^6 n^{1/4+1/12-1})+O((m+d_x)^{3(m+d_x)+4}(n^{-(1/12)\times 2\times 4}))=o(n^{-1/2}m^{-1/2})$. If the candidate IVs is concentrated on some neighborhood of a low-dimensional manifold, the error bound of neural networks can be correspondingly loosened, and the restriction $(m+d_x)^{3(m+d_x)+4.5}=o(n^{1/6})$ can be suitably relaxed.  For detailed discussions on low-dimensional manifold assumptions, see, for example, \citet{schmidt2020,nakada2020,jiao2023}.

\begin{lemma}\label{lemmag}
Under Assumptions \ref{condnuispara}-\ref{condeigen}, 
(i). $\left\|\bar{g}\left(\beta_0, \hat{\boldeta}\right)-\bar{g}\left(\beta_0, \boldeta_0\right)\right\|=o_p(n^{-1/2})$;

(ii). $\left\|\bar{g^{'}}(\hat{\boldeta})-\bar{g^{'}}\left(\boldeta_0\right)\right\|=o_p(n^{-1/2})$;

(iii). $\sup _{\beta \in \calB}\left\|\bar{g}(\beta, \hat{\boldeta})-\bar{g}\left(\beta, \boldeta_0\right)\right\|=o_p(n^{-1/2})$;   

(iv). $\sup_{\beta\in\calB}\|\bar g(\beta,\hat\boldeta)\|=O_p(\mu_n/\sqrt{n})$ and $\|\bar g(\beta_0,\hat\boldeta)\|=O_p(\sqrt{m/n})$.
\end{lemma}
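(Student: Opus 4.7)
The plan is to establish part (i) first via a standard bias-variance decomposition, then leverage the linearity of $g$ in $\beta$ to reduce parts (ii)--(iv) to (i) together with Lemma \ref{lemma1} and Proposition \ref{lemmaneyman}. For (i), I would decompose
\[
\bar g(\beta_0,\hat\boldeta)-\bar g(\beta_0,\boldeta_0)
= (\bbE_n-\bbE)\bigl[g(\beta_0,\hat\boldeta;\O)-g(\beta_0,\boldeta_0;\O)\bigr]
+\bbE\bigl[g(\beta_0,\hat\boldeta;\O)-g(\beta_0,\boldeta_0;\O)\bigr].
\]
The second (population-bias) piece is handled by Neyman orthogonality: Proposition \ref{lemmaneyman} shows every component of $\boldeta_0$ is orthogonal, so a second-order Taylor expansion leaves only a quadratic remainder of order $O_p(\|\hat\boldeta-\boldeta_0\|_{L^2}^2)$. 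Under Assumption \ref{condDNNs} this squared rate is $o_p(n^{-1/2}m^{-1/2})$ per coordinate, which aggregates to $o_p(n^{-1/2})$ in the Euclidean norm on $\mathbb{R}^m$.

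The more delicate empirical-process piece is controlled by the localization-based argument of \citet{bartlett2005} and \citet{farrell2021}, which bypasses sample splitting. The idea is to restrict attention to the high-probability event $\{\hat\boldeta\in B_n\}$, where $B_n=\{\boldeta:\|\boldeta-\boldeta_0\|_{L^2}\le r_n\}$ with $r_n=o(n^{-1/4}m^{-1/4})$. On this event, the centered class $\mathcal{G}_n=\{g(\beta_0,\boldeta;\cdot)-g(\beta_0,\boldeta_0;\cdot):\boldeta\in B_n\}$ has envelope of order $r_n$ (using Lipschitz continuity of $g$ in $\boldeta$ guaranteed by Assumptions \ref{condnuispara}--\ref{condmoment}) and uniform entropy controlled through the VC-type complexity of the ReLU DNN class in Lemma \ref{lemmadnn}. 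Applying a maximal inequality (e.g.\ Chernozhukov, Chetverikov, and Kato) gives a bound of order $r_n\sqrt{\log N_n/n}$, which is $o_p(n^{-1/2})$ coordinate-wise and, after summing $m$ coordinates, still $o_p(n^{-1/2})$ because of the extra $m^{-1/4}$ factor built into Assumption \ref{condDNNs}.

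For (ii), $g'(\boldeta;\O)=-R_Z(R_A^2-h_4)$ depends on $\boldeta$ only through $(f,h_1,h_4)$; the same decomposition plus Neyman orthogonality of these components (verifiable as in the proof of Proposition \ref{lemmaneyman}) and the same localized empirical-process bound yield $o_p(n^{-1/2})$. Part (iii) then follows from the exact linearity of $g$ in $\beta$: write
\[
\bar g(\beta,\hat\boldeta)-\bar g(\beta,\boldeta_0)
=\bar g(\beta_0,\hat\boldeta)-\bar g(\beta_0,\boldeta_0)
+(\beta-\beta_0)\bigl\{\bar{g'}(\hat\boldeta)-\bar{g'}(\boldeta_0)\bigr\},
\]
so by compactness of $\calB$ the uniform bound is the sum of (i) and (ii). Finally, for (iv), the triangle inequality and (iii) reduce the problem to bounding $\sup_{\beta\in\calB}\|\bar g(\beta,\boldeta_0)\|$. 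Lemma \ref{lemma1}(ii) gives $\|\bar g(\beta,\boldeta_0)-\bar g(\beta_0,\boldeta_0)\|=O_p(\mu_n/\sqrt n)$, while a second-moment calculation via Assumption \ref{condeigen} yields $\|\bar g(\beta_0,\boldeta_0)\|=O_p(\sqrt{m/n})$; combining these, and using $m/\mu_n^2=O(1)$ from Assumption \ref{condweak}, delivers $O_p(\mu_n/\sqrt n)$ uniformly in $\beta$ and $O_p(\sqrt{m/n})$ at $\beta_0$.

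The main obstacle will be the localized empirical-process bound, because the function class depends on $\hat\boldeta$ estimated on the same sample and the input dimension $m+d_x$ diverges. Executing this cleanly requires a peeling argument that iteratively shrinks the localization radius, uniform entropy control for the ReLU DNN class that tracks both $n$ and $m$, and a careful verification that the envelope's second moment remains integrable so that the maximal inequality applies. The extra $m^{-1/4}$ in Assumption \ref{condDNNs}, which is typically absent in the fixed-$m$ orthogonal-nuisance literature, will be exactly what allows the $m$-dimensional aggregation to preserve the $o_p(n^{-1/2})$ rate.
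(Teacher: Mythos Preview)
Your proposal is correct and follows essentially the same route as the paper: a bias--variance split with the population-bias piece killed by Neyman orthogonality (the paper verifies the relevant conditional-expectation identities by hand rather than citing Proposition~\ref{lemmaneyman}, but the content is identical) and the empirical-process piece controlled by the same localization argument using the Chernozhukov--Chetverikov--Kato maximal inequality together with the ReLU-network entropy bound; parts (ii)--(iv) then follow exactly as you outline via linearity in $\beta$, Lemma~\ref{lemma1}(ii), and the second-moment bound from Assumption~\ref{condeigen}. One minor remark: the paper works coordinate-wise throughout (so the relevant nuisances for $g_j$ are only $\{f_j,h_1,\dots,h_4\}$, a fixed-dimensional set) and uses a single localization radius rather than peeling, so the execution is somewhat simpler than the obstacles you anticipate.
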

\begin{proof}
(i). Let $g_j(\beta_0,\boldeta;\O_i)$ be the $j$th element of $g(\beta_0,\boldeta;\O_i)$ for $j=1,\dots,m$. Let $\hat R_{A,i} = A_i-\hat h_1(\Z_i,\X_i)$, $\hat R_{Y,i} = Y_i-\hat h_2(\Z_i,\X_i)$, $\Delta_i = R_{A,i}R_{Y,i}-\beta_0R_{A,i}^2$, and $\hat\Delta_i = \hat R_{A,i}\hat R_{Y,i}-\beta_0\hat R_{A,i}^2$. First, we will prove $\bar g_j\left(\beta_0, \hat{\boldeta}\right)-\bar g_j\left(\beta_0, \boldeta_0\right)$ has order $o_p(n^{-1/2}m^{-1/2})$, i.e., $\bbE_n[\{g_j(\beta_0, \hat{\boldeta};\O)-g_j(\beta_0, \boldeta_0;\O)\}]=o_p(n^{-1/2}m^{-1/2})$. It can be decomposed as
\be 
&&\bbE_n[g_j(\beta_0, \hat{\boldeta};\O)-g_j(\beta_0, \boldeta_0;\O)]\n\\
&=&\frac{1}{n}\sum_{i=1}^n(Z_{ij}-\hat f_j(\X_i))(\hat\Delta_{i}-(\hat h_3(\X_i)-\beta_0\hat h_4(\X_i)))\n\\
&&\quad -\frac{1}{n}\sum_{i=1}^n(Z_{ij}- f_j(\X_i))(\Delta_{i}-( h_3(\X_i)-\beta_0 h_4(\X_i)))  \n\\
&=&     \frac{1}{n}\sum_{i=1}^n(Z_{ij}-\hat  f_j(\X_i))(\hat\Delta_{i}-(\hat h_3(\X_i)-\beta_0\hat h_4(\X_i)))\n\\
&&\quad -\frac{1}{n}\sum_{i=1}^n(Z_{ij}- \hat  f_j(\X_i))(\Delta_{i}-(h_3(\X_i)-\beta_0 h_4(\X_i))) \n\\
&&+ \frac{1}{n}\sum_{i=1}^n(Z_{ij}- \hat  f_j(\X_i))(\Delta_{i}-(h_3(\X_i)-\beta_0 h_4(\X_i)))\n\\
&&\quad -\frac{1}{n}\sum_{i=1}^n(Z_{ij}-  f_j(\X_i))(\Delta_{i}-( h_3(\X_i)-\beta_0 h_4(\X_i)))  \n\\
&=&\frac{1}{n}\sum_{i=1}^n(\hat  f_j(\X_i)- f_j(\X_i))(\Delta_{i}-( h_3(\X_i)-\beta_0 h_4(\X_i)))   \n\\
&&+ \frac{1}{n}\sum_{i=1}^n(Z_{ij}- \hat  f_j(\X_i))(\hat R_{A,i}(\hat R_{Y,i}- \beta_0\hat R_{A,i})-R_{A,i}( R_{Y,i}- \beta_0 R_A))     \n\\
&&+  \frac{1}{n}\sum_{i=1}^n(Z_{ij}- \hat  f_j(\X_i))(h_3(\X_i)-\beta_0 h_4(\X_i)-(\hat h_3(\X_i)-\beta_0\hat h_4(\X_i)))      \n\\
&\equiv& \bbE_n(A_1+A_2+A_3)\n.
\ee 
For $A_1$, we will prove $\bbE(A_1)=0$ and $(\bbE_n-\bbE)(A_1)=o_p(n^{-1/2}m^{-1/2})$. First, we have
\be\label{eqpa1}
&&\bbE(A_1)\n\\
&=&\bbE(\bbE((\hat  f_j(\X)- f_j(\X))(\Delta-( h_3(\X)-\beta_0 h_4(\X)))|\X))\n\\
&=&\bbE((\hat  f_j(\X)- f_j(\X))\bbE((\Delta-( h_3(\X)-\beta_0 h_4(\X)))|\X))\n\\
&=& 0. 
\ee 
Then we will prove $(\bbE_n-\bbE)(A_1)=o_p(n^{-1/2}m^{-1/2})$. Here we use the idea of localization \citep{bartlett2005,farrell2021}. Define 
$$
\calL_{j,\delta}=\{(\hat  f_j- f_j)(\Delta-( h_3-\beta_0 h_4)):\hat f_j \in\calG_{1},\|\hat f_j-f_{j}\|_{L^2}\leq\delta\}.
$$
Let $F(\O_i)=2B_0[(|A_i|+B_0)\{|Y_i|+B_0+|\beta_0|(|A_i|+B_0)\}+B_0+|\beta_0|B_0]$. It is easy to prove that $\sup_{l\in \calL_{j,\delta}}\bbE l^2\leq \|F\|^2$ since each term in $F(\O_i)$ is larger than the corresponding term in $A_1$. This also implies that $l\in \calL_{j,\delta}$ is Lipschitz on $\hat f_j-f_j$ since $\Delta-( h_3-\beta_0 h_4)$ is uniformly bounded. Let $\sigma^2=c B_0^5(\log (n))^5 \mathcal{S D} \log (\mathcal{S})/N+324 B_0^2(\lfloor\zeta\rfloor+1)^4 d_x^{2\lfloor\zeta\rfloor+\zeta \vee 1}(N M)^{-4 \zeta / d_x},$ and Lemma \ref{lemmadnn} guarantees that $\sup_{l\in \calL_{j,\delta}}\bbE l^2\leq \sigma^2\leq \|F\|$. Let $\delta=\sigma/\|F\|$, and $F_{\max}=\max_{1\leq i\leq n}F(\O_i)$. According to Theorem 5.2 in \cite{chernozhukov2014}, we have
$$
\bbE\left\{\sup_{l\in\calL_{j,\delta}}|\sqrt{n}(\bbE_n-\bbE)(l)|\right\} \lesssim J(\delta, \calL_{j,\delta}, F)\|F\|+\frac{\|F_{\max}\| J^2(\delta, \calL_{j,\delta}, F)}{\delta^2 \sqrt{n}},
$$
where $a\lesssim b$ means there exists a positive constant $c$ such that $a\leq cb$, $J(\delta, \calL_{j,\delta}, F)=\int_{0}^{\delta}\sup_{Q_n}\\\sqrt{1+\log\calN(\epsilon\|F\|,\calL_1,\|\cdot\|_{2,Q_n})}d\epsilon$ and $Q_n$ is  all finitely discrete probability measures on $\bbR^{m+d_x+2}$. For any $\epsilon>0$, based on Lemma 6 in \cite{jiao2023}, we have
\be    \label{maxineq}
&&J(\delta, \calL_{j,\delta}, F)\n\\
&=&\int_{0}^{\delta}\sup_{Q_n}\sqrt{1+\log\calN(\epsilon\|F\|,\calL_1,\|\cdot\|_{2,Q_n})}d\epsilon\n\\
&\leq&\int_{0}^{\delta}\sup_{Q_n}\sqrt{1+\log\calN_{[]}(2\epsilon\|F\|,\calL_1,\|\cdot\|_{2,Q_n})}d\epsilon\n\\
&\lesssim& \int_{0}^{\delta}\sqrt{1+\calS\calD\log(\calS)\log(B_0n/\epsilon)}d\epsilon\n\\
&\lesssim& \delta \sqrt{\mathcal{SD}\log(\calS)\log(B_0n/\delta)}\n,
\ee
where the last step comes from $\int_{0}^\delta \sqrt{\log(1/\epsilon)}d\epsilon\leq\delta\sqrt{\log(1/\delta)}$. Since $\delta\geq 1/n$, according to Assumption \ref{condDNNs}, we have 
\be    \label{pn-p}
&&\bbE\left\{\sup_{l\in\calL_{j,\delta}}|\sqrt{n}(\bbE_n-\bbE)(l)|\right\}\n\\
&\lesssim& \delta \sqrt{\calS\calD\log(\calS)\log(B_0n/\delta)}\|F\|+\frac{\|F_{\max}\|\delta^2\calS\calD\log(\calS)\log(B_0n/\delta)}{\delta^2\sqrt{n}} \n\\
&\lesssim& \sigma\sqrt{\calS\calD\log(\calS)\log(B_0n/\delta)}+\frac{c\calS\calD\log(\calS)\log(B_0)(\log (n))^2}{\sqrt{n}}\n\\
&=&o(m^{-1/2}).
\ee 
This bound ensures stochastic equicontinuity over shrinking $L^2$-neighbourhoods. Since $A_1\in\calL_{j,\delta}$, combining \eqref{eqpa1} and \eqref{pn-p}, we have $\bbE_n (A_1)=o_p(n^{-1/2}m^{-1/2})$.  For $A_2$, we have
\bse 
&& \frac{1}{n}\sum_{i=1}^n(Z_{ij}- \hat  f_j(\X_i))(\hat R_{A,i}(\hat R_{Y,i}- \beta_0\hat R_{A,i})-R_{A,i}( R_{Y,i}- \beta_0 R_{A,i}))      \\
&=& \frac{1}{n}\sum_{i=1}^n(Z_{ij}-  f_j(\X_i))(\hat R_{A,i}(\hat R_{Y,i}- \beta_0\hat R_{A,i})-R_{A,i}( R_{Y,i}- \beta_0 R_{A,i}))      \\
&&+  \frac{1}{n}\sum_{i=1}^n( f_j(\X_i)- \hat  f_j(\X_i))(\hat R_{A,i}(\hat R_{Y,i}- \beta_0\hat R_{A,i})-R_{A,i}( R_{Y,i}- \beta_0 R_{A,i}))      \\ 
&=&  \frac{1}{n}\sum_{i=1}^n(Z_{ij}-  f_j(\X_i))(\hat R_{A,i}(\hat R_{Y,i}- \beta_0\hat R_{A,i})- R_{A,i}(\hat R_{Y,i}- \beta_0\hat R_{A,i}))     \\
&&+  \frac{1}{n}\sum_{i=1}^n(Z_{ij}-  f_j(\X_i))(R_{A,i}(\hat R_{Y,i}- \beta_0\hat  R_{A,i})- R_{A,i}( R_{Y,i}- \beta_0 R_{A,i})) \\
&&+  \frac{1}{n}\sum_{i=1}^n( f_j(\X_i)- \hat  f_j(\X_i))(\hat R_{A,i}(\hat R_{Y,i}- \beta_0\hat R_{A,i})-R_{A,i}( R_{Y,i}- \beta_0 R_{A,i}))      \\ 
&=&  \frac{1}{n}\sum_{i=1}^n(Z_{ij}-  f_j(\X_i))(\hat R_{A,i}- R_{A,i})(\hat R_{Y,i}-R_{Y,i}-\beta_0(\hat R_{A,i}-R_{A,i})) \\
&&+  \frac{1}{n}\sum_{i=1}^n(Z_{ij}-  f_j(\X_i))(\hat R_{A,i}- R_{A,i})(R_{Y,i}-\beta_0 R_{A,i}) \\
&&+ \frac{1}{n}\sum_{i=1}^n(Z_{ij}-  f_j(\X_i))R_{A,i}(\hat R_{Y,i}-R_{Y,i}-\beta_0(\hat R_{A,i}-R_{A,i})) \\
&&+  \frac{1}{n}\sum_{i=1}^n( f_j(\X_i)- \hat  f_j(\X_i))(\hat R_{A,i}(\hat R_{Y,i}- \beta_0\hat R_{A,i})-R_{A,i}( R_{Y,i}- \beta_0 R_{A,i}))      \\ 
&&\equiv \bbE_n(A_{21}+A_{22}+A_{23}+A_{24}).
\ese 
For $A_{21}$, it contains a random error $Z_{ij}-  f_j(\X_i)$ and two residual term of DNNs $R_A-\hat R_A$ and $R_Y-\hat R_Y$. Similar to the proof of $A_1$, we have
$$
(\bbE_n-\bbE)\{(Z_j-  f_j(\X))(\hat R_{A}- R_{A})(\hat R_{Y}-R_{Y}-\beta_0(\hat R_{A}-R_{A}))\}=o_p(n^{-1/2}m^{-1/2}).
$$
According to Lemma \ref{lemmadnn} and Assumption \ref{condDNNs}, we have
\be \label{err}
&& \bbE(Z_j- f_j(\X))(\hat R_{A}- R_{A})(\hat R_{Y}-R_{Y}-\beta_0(\hat R_{A}-R_{A}))  \n\\
&=& \bbE(Z_j- f_j(\X))(\hat h_1(\Z,\X)-h_1(\Z,\X))(\hat h_2(\Z,\X)-h_2(\Z,\X))  \n\\
&&-\bbE(Z_j- f_j(\X))\beta_0(\hat h_1(\Z,\X)-h_1(\Z,\X))^2 \n\\
&\leq& c\bbE\left|(\hat h_1(\Z,\X)-h_1(\Z,\X))(\hat h(\Z,\X)-h_2(\Z,\X))\right|\n\\
&&+\beta_0c\bbE(\hat h_1(\Z,\X)-h_1(\Z,\X))^2\n\\
&\leq& c\sqrt{\bbE(\hat h_1(\Z,\X)-h_1(\Z,\X))^2}\sqrt{\bbE(\hat h_2(\Z,\X)-h_2(\Z,\X))^2}\n\\
&&+\beta_0\bbE(\hat h_1(\Z,\X)-h_1(\Z,\X))^2\n\\
&\lesssim& c B_0^5(\log (n))^5 \frac{1}{n} \mathcal{S D} \log (\mathcal{S})+324 B_0^2(\lfloor\zeta\rfloor+1)^4 d_x^{2\lfloor\zeta\rfloor+\zeta \vee 1}(N M)^{-4 \zeta / d_x}\n\\
&=& o_p(n^{-1/2}m^{-1/2}).
\ee
Thus $\bbE_n A_{21}=(\bbE_n-\bbE)A_{21}+\bbE A_{21}=o_p(n^{-1/2}m^{-1/2})$.
For $A_{22}$ and $A_{23}$, similarly to the proof of 
\eqref{pn-p} and \eqref{err}, we can prove that
$$
(\bbE_n-\bbE)\{(Z_j-  f_j(\X))(\hat R_A- R_A)(R_Y-\beta_0 R_A)\} =o_p(n^{-1/2}m^{-1/2}),
$$
and 
$$
(\bbE_n-\bbE)\{(Z_j-  f_j(\X))R_A(\hat R_Y-R_Y-\beta_0(\hat R_A-R_A))\}=o_p(n^{-1/2}m^{-1/2}).
$$
Furthermore, we also have
\bse
&&\bbE\left[(Z_j-  f_j(\X))(\hat R_A- R_A)(R_Y-\beta_0 R_A)\right]\\
&=&\bbE\left[\bbE\left\{(Z_j-  f_j(\X))(h_1(\Z,\X)-\hat h_1(\Z,\X))(R_Y-\beta_0 R_A)|\Z,\X\right\}\right]\\
&=&\bbE\left[(Z_j-  f_j(\X))(h_1(\Z,\X)-\hat h_1(\Z,\X))\bbE\left\{(Y-h_2(\Z,\X))-\beta_0 (A-h_1(\Z,\X))|\Z,\X\right\}\right]\\
&=&0,
\ese
and similarly
$$
\bbE(Z_j-  f_j(\X))R_A(\hat R_Y-R_Y-\beta_0(\hat R_A-R_A))=0.
$$
Thus $\bbE_nA_{22},\bbE_nA_{23}=o_p(n^{-1/2}m^{-1/2})$.  For $A_{24}$, the proof is similar to \eqref{err} since they all contain two residuals of DNNs and a noise term. 

Finally, similar to the proof of $A_1$, we can constrain the order of $A_3$:
$$
(\bbE_n - \bbE)\{(Z_j- f_j(\X))(h_3(\X)-\beta_0 h_4(\X)-(\hat h_3(\X)-\beta_0\hat h_4(\X)))\} =o_p(n^{-1/2}m^{-1/2}) ,
$$
and
\bse
&&\bbE\left\{(Z_j- f_j(\X))(h_3(\X)-\beta_0 h_4(\X)-(\hat h_3(\X)-\beta_0\hat h_4(\X)))\right\}\\
&=&\bbE\left[\bbE\left\{(Z_j- f_j(\X))\{h_3(\X)-\beta_0 h_4(\X)-(\hat h_3(\X)-\beta_0\hat h_4(\X))\}|\X\right\}\right]\\
&=&\bbE\left[\{h_3(\X)-\beta_0 h_4(\X)-(\hat h_3(\X)-\beta_0\hat h_4(\X))\}\bbE\left\{Z_j- f_j(\X)|\X\right\}\right]\\
&=& 0.
\ese 
So we can obtain $\bbE_nA_3=o_p(n^{-1/2}m^{-1/2})$. Until now, we have proved that all three term $\bbE_n A_1,\bbE_nA_2,\bbE_nA_3$ have order $o_p(n^{-1/2}m^{-1/2})$. Thus we obtain $g_j(\beta_0,\hat\boldeta;\O_i)-g_j(\beta_0,\boldeta_{0};\O_i)=o_p(n^{-1/2}m^{-1/2})$ for $j=1,\dots,m$, and it follows $\|\bar{g}(\beta_0, \hat{\boldeta})-\bar{g}(\beta_0, \boldeta_0)\|=o_p(m^{1/2}n^{-1/2}m^{-1/2})=o_p(n^{-1/2})$.

(ii). The origin term can be decomposed as
\bse 
&&\left\|\bar{g^{'}}(\hat{\boldeta})-\bar{g^{'}}\left(\boldeta_0\right)\right\|\\
&=& \left\|\frac{1}{n}\sum_{i=1}^n(\Z_i-\hat  f(\X_i))(\hat R_{A,i}^2-\hat h_4(\X_i))-\frac{1}{n}\sum_{i=1}^n(\Z_i-  f(\X_i))( R_{A,i}^2- h_4(\X_i))\right\|\\
&\leq& \left\|\frac{1}{n}\sum_{i=1}^n(\Z_i-\hat  f(\X_i))(\hat R_{A,i}^2-\hat h_4(\X_i)-R_{A,i}^2+h_4(\X_i))  \right\|\\
&&+\left\|\frac{1}{n}\sum_{i=1}^n( f(\X_i)-\hat  f(\X_i))(R_{A,i}^2-h_4(\X_i))  \right\|.
\ese
We can use similar techniques like those in (i), including the localization-based empirical process and the expectation term being zero, and the details are omitted here.

(iii). Notice that $g(\beta,\boldeta;\O_i)$ is a linear function of $\beta$, which means $\bar g(\beta,\boldeta)-\bar g(\beta_0,\boldeta)=\bar {g^{'}}(\boldeta)(\beta-\beta_0)$. So we have
\bse 
&&\sup_{\beta\in \calB}\left\|\bar{g}\left(\beta, \hat{\boldeta}\right)-\bar{g}\left(\beta, \boldeta_0\right)\right\|\\
&=& \sup_{\beta\in \calB}\left\|\bar{g}\left(\beta, \hat{\boldeta}\right)-\bar{g}\left(\beta_0, \hat{\boldeta}\right)+\bar{g}\left(\beta_0, \hat{\boldeta}\right)-\bar{g}\left(\beta_0, {\boldeta}_0\right)+\bar{g}\left(\beta_0, {\boldeta}_0\right)-\bar{g}\left(\beta, {\boldeta}_0\right) \right\|\\
&\leq&\sup_{\beta\in \calB}\left\|(\bar{g^{'}}(\hat\boldeta)-\bar{g^{'}}(\boldeta_0))(\beta-\beta_0)\right\|  + \left\|\bar{g}\left(\beta_0, \hat{\boldeta}\right)-\bar{g}\left(\beta_0, {\boldeta}_0\right) \right\|\\
&=&o_p\left(n^{-1 / 2}\right).
\ese

(iv). By Assumption \ref{condeigen}, we know that $n\bbE(\|\bar g(\beta_0,\boldeta_0)\|^2)/m\leq \tr(\Omega_0)/m\leq c$. Combining the Markov inequality, we can obtain that $\|\bar g(\beta_0,\boldeta_0)\|=O_p(\sqrt{m/n})$. Also by Lemma \ref{lemma1} (ii) and \ref{lemmag} (iii), we have 
\bse 
&&\sup_{\beta\in\calB}\|\bar g(\beta,\hat\boldeta)\|\\
&=&\sup_{\beta\in\calB}\|\bar g(\beta,\hat\boldeta)-\bar g(\beta,\boldeta_0)+\bar g(\beta,\boldeta_0)-\bar g(\beta_0,\boldeta_0)+\bar g(\beta_0,\boldeta_0)\|\\
&\leq &\sup_{\beta\in\calB}\|\bar g(\beta,\hat\boldeta)-\bar g(\beta,\boldeta_0)\|+\sup_{\beta\in\calB}\|\bar g(\beta,\boldeta_0)-\bar g(\beta_0,\boldeta_0)\|+\|\bar g(\beta_0,\boldeta_0)\|\\
&\leq& o_p(n^{-1/2})+O_p(\mu_n/\sqrt{n})+O_p(\sqrt{m/n})\\
&=&O_p(\mu_n/\sqrt{n}).
\ese
For $\sup_{\beta\in\calB}\|\bar g(\beta_0,\hat\boldeta)\|$, the second term in the above decomposition is zero, so we have $\sup_{\beta\in\calB}\|\bar g(\beta_0,\hat\boldeta)\|=O_p(\sqrt{m/n})$. This completes the proof of Lemma \ref{lemmag}.
\end{proof}

\begin{lemma} \label{lemmaomega}
Under Assumptions \ref{condnuispara}-\ref{condeigen}, we have

(i). $\left\|\bar{\Omega}\left(\beta_0, \hat{\boldeta}\right)-\Omega_0\right\|=O_p(\sqrt{m\log (m)/n})$;

(ii). $\left\|n^{-1} \sum_{i=1}^n g(\beta_0, \hat{\boldeta};\O_i) g^{'}(\hat{\boldeta};\O_i)\tp-\bbE\left\{g(\beta_0, \boldeta_0;\O) g^{'}(\boldeta_0;\O)\tp\right\}\right\|=O_p(\sqrt{m\log(m)/n})$;

(iii). $\left\|n^{-1} \sum_{i=1}^n g^{'}(\hat{\boldeta};\O_i) g^{'}(\hat{\boldeta};\O_i)\tp-\bbE\left\{g^{'}(\boldeta_0;\O) g^{'}(\boldeta_0;\O)\tp\right\}\right\|=O_p(\sqrt{m\log(m)/n})$;

(iv). $\sup _{\beta \in \calB}\left\|\bar{\Omega}(\beta, \hat{\boldeta})-\Omega_0\left(\beta, \boldeta_0\right)\right\|=O_p(\sqrt{m\log(m)/n})$;

(v). $\sup _{\beta \in \calB}\left\|n^{-1} \sum_{i=1}^n g(\beta, \hat{\boldeta};\O_i) g^{'}(\hat{\boldeta};\O_i)\tp-\bbE\left\{g(\beta, \boldeta_0;\O) g^{'}(\boldeta_0;\O)\tp\right\}\right\|=O_p(\sqrt{m\log(m)/n})$.
\end{lemma}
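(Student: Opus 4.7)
The five parts share a common structure, so I would prove (i) in detail and reduce the remaining parts to the same template. For (i), I add and subtract the true-nuisance matrix to split
$$\bar\Omega(\beta_0,\hat\boldeta) - \Omega_0 = \bigl[\bar\Omega(\beta_0,\boldeta_0) - \Omega_0\bigr] + \bigl[\bar\Omega(\beta_0,\hat\boldeta) - \bar\Omega(\beta_0,\boldeta_0)\bigr],$$
so the task cleanly separates into an empirical-process piece at the true nuisance and a nuisance-perturbation piece. The same decomposition is used for (ii) and (iii), with one or both copies of $g$ replaced by $g^{'}$.

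For the first piece, a centered sample mean of $m\times m$ outer products, I would apply Tropp's matrix Bernstein inequality after a standard truncation. The key ingredients are the variance proxy $\|\bbE[(gg\tp)^2]\|$ and a post-truncation envelope. Using Assumption \ref{condeigen} to control $\|\Omega_0\| = O(1)$ and $\tr(\Omega_0) = O(m)$, together with Assumption \ref{condmoment} (ii), which in the uncensored reduction yields $\bbE\|g(\beta_0,\boldeta_0;\O)\|^4 = o(n/m)$ and makes the truncation residual negligible, one obtains the effective-rank type bound
$$\bigl\|\bar\Omega(\beta_0,\boldeta_0) - \Omega_0\bigr\| = O_p\bigl(\sqrt{m\log m/n}\bigr),$$
with the $\log m$ factor arising from the union bound over the $m$ eigendirections in matrix Bernstein.

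For the second piece, set $\Delta_i = g(\beta_0,\hat\boldeta;\O_i) - g(\beta_0,\boldeta_0;\O_i)$ and write the difference as
$$\frac{1}{n}\sum_{i=1}^n \Delta_i\, g(\beta_0,\hat\boldeta;\O_i)\tp + \frac{1}{n}\sum_{i=1}^n g(\beta_0,\boldeta_0;\O_i)\, \Delta_i\tp.$$
The spectral Cauchy-Schwarz inequality bounds the operator norm of each summand by $\sqrt{\|n^{-1}\!\sum_i \Delta_i\Delta_i\tp\|\,\|n^{-1}\!\sum_i g_i g_i\tp\|}$. The second factor is $O_p(1)$ by Assumption \ref{condeigen}, and the first factor is at most $n^{-1}\!\sum_i \|\Delta_i\|^2$. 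Lipschitz continuity of $g$ in $\boldeta$ combined with the DNN rate $\|\hat\boldeta-\boldeta_0\|_{L^2} = o_p(n^{-1/4}m^{-1/4})$ implied by Lemma \ref{lemmadnn} and Assumption \ref{condDNNs} gives $\bbE\|\Delta_i\|^2 = o(n^{-1/2} m^{1/2})$, so the perturbation piece is $o_p(n^{-1/4} m^{1/4})$, which is absorbed by $\sqrt{m\log m/n}$ under the regime $m^3/n \to 0$ used in Theorem \ref{thmcensor}.

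For the uniform statements (iv) and (v) I would exploit the linearity of $g$ in $\beta$ emphasized after equation \eqref{closeform2}, writing $g(\beta,\boldeta;\O) = g(\beta_0,\boldeta;\O) + (\beta-\beta_0)\, g^{'}(\boldeta;\O)$. Expanding the outer product, $\bar\Omega(\beta,\hat\boldeta)$ becomes a quadratic polynomial in $\beta$ whose matrix coefficients are precisely those bounded in (i)-(iii); compactness of $\calB$ then promotes pointwise bounds to uniform bounds at the same rate, and an identical expansion for $n^{-1}\!\sum_i g(\beta,\hat\boldeta;\O_i) g^{'}(\hat\boldeta;\O_i)\tp$ handles (v). The main obstacle is the operator-norm rate in the matrix-concentration step: a Frobenius or entrywise argument would deliver only $O_p(m/\sqrt{n})$, which is too loose, and obtaining $\sqrt{m\log m/n}$ requires the sharper matrix Bernstein route with careful bookkeeping of the variance and envelope parameters under the paper's specific moment and spectral assumptions.
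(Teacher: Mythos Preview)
Your decomposition, the matrix-Bernstein step for $\|\bar\Omega(\beta_0,\boldeta_0)-\Omega_0\|$, and the linearity-in-$\beta$ reduction for (iv)--(v) all match the paper. The gap is in the nuisance-perturbation piece: your final absorption claim is false.

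You correctly arrive at $\|\bar\Omega(\beta_0,\hat\boldeta)-\bar\Omega(\beta_0,\boldeta_0)\| = o_p(n^{-1/4}m^{1/4})$ via spectral Cauchy--Schwarz. But $n^{-1/4}m^{1/4}$ is \emph{not} dominated by $\sqrt{m\log m/n}$: the ratio
\[
\frac{n^{-1/4}m^{1/4}}{\sqrt{m\log m/n}} \;=\; \Bigl(\frac{n}{m}\Bigr)^{1/4}(\log m)^{-1/2}
\]
diverges under the regime $m^3/n\to 0$ used in Theorem~\ref{thmcensor}. So your perturbation bound swamps the target rate rather than being absorbed by it, and (i) does not follow.

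The paper avoids this square-root loss by not using Cauchy--Schwarz for the perturbation term. It works entrywise, expanding $g_j(\beta_0,\hat\boldeta;\O_i)^2-g_j(\beta_0,\boldeta_0;\O_i)^2$ as a telescoping sum and re-running the localization-based empirical-process machinery from Lemma~\ref{lemmag}: each piece splits into an $(\bbE_n-\bbE)$ part, controlled by maximal inequalities over the shrinking DNN neighborhood, and a population part that either vanishes by conditional-mean-zero structure or is a product of two nuisance residuals and hence $o_p(n^{-1/2}m^{-1/2})$. This gives each entry rate $o_p(n^{-1/2}m^{-1/2})$, and aggregating over the $m$ coordinates yields $o_p(\sqrt{m/n})$ for the whole perturbation matrix, which \emph{is} absorbed by $\sqrt{m\log m/n}$. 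The structure your Cauchy--Schwarz bound discards is precisely that the first-order-in-$(\hat\boldeta-\boldeta_0)$ part of the perturbation inherits a centering from the moment function, so its empirical average is an order smaller than its pointwise size.
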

\begin{proof}
(i). The matrix spectral norm is smaller than the matrix trace:
\be
\left\|\bar{\Omega}\left(\beta_0, \hat{\boldeta}\right)-\Omega_0\right\|&\leq& \left\|\bar{\Omega}\left(\beta_0, \hat{\boldeta}\right)-\bar{\Omega}\left(\beta_0, {\boldeta}_0\right)\right\|+\left\|\bar{\Omega}\left(\beta_0, {\boldeta}_0\right)-\Omega_0\right\|\n\\
&\leq&\tr(\bar{\Omega}\left(\beta_0, \hat{\boldeta}\right)-\bar{\Omega}\left(\beta_0, {\boldeta}_0\right))+\left\|\bar{\Omega}\left(\beta_0, {\boldeta}_0\right)-\Omega_0\right\|\n\\
&=&\sum_{j=1}^{m}\frac{1}{n}\sum_{i=1}^n\{g_j(\beta_0,\hat\boldeta;\O_i)^2-g_j(\beta_0,\boldeta_0;\O_i)^2\}+\left\|\bar{\Omega}\left(\beta_0, {\boldeta}_0\right)-\Omega_0\right\|\n
\ee 
First, we will prove $1/n\sum_{i=1}^n\{g_j(\beta_0,\hat\boldeta;\O_i)^2-g_j(\beta_0,\boldeta_0;\O_i)^2\}=o_p(n^{-1/2})$.
\be
&&\frac{1}{n}\sum_{i=1}^n\{g_j(\beta_0,\hat\boldeta;\O_i)^2-g_j(\beta_0,\boldeta_0;\O_i)^2\}\n\\
&=&\frac{1}{n}\sum_{i=1}^n(Z_{ij}-\hat f_j(\X_i))^2(\hat\Delta_{i}-(\hat h_3(\X_i)-\beta_0\hat h_4(\X_i)))^2\n\\
&&\quad -\frac{1}{n}\sum_{i=1}^n(Z_{ij}- f_j(\X_i))^2(\Delta_{i}-( h_3(\X_i)-\beta_0 h_4(\X_i)))^2  \n\\
&=&\frac{1}{n}\sum_{i=1}^n(Z_{ij}-\hat f_j(\X_i))^2(\hat\Delta_{i}-(\hat h_3(\X_i)-\beta_0\hat h_4(\X_i)))^2\n\\
&&\quad -\frac{1}{n}\sum_{i=1}^n(Z_{ij}- \hat f_j(\X_i))^2(\Delta_{i}-( h_3(\X_i)-\beta_0 h_4(\X_i)))^2  \n\\
&& +\frac{1}{n}\sum_{i=1}^n(Z_{ij}- \hat f_j(\X_i))^2(\Delta_{i}-( h_3(\X_i)-\beta_0 h_4(\X_i)))^2  \n\\
&&\quad -\frac{1}{n}\sum_{i=1}^n(Z_{ij}- f_j(\X_i))^2(\Delta_{i}-( h_3(\X_i)-\beta_0 h_4(\X_i)))^2  \n\\
&=&\frac{1}{n}\sum_{i=1}^n(f_j(\X_i)-\hat f_j(\X_i))(2Z_{ij}-f_j(\X_i)-\hat f_j(\X_i))(\Delta_{i}-(h_3(\X_i)-\beta_0 h_4(\X_i)))^2\n\\
&&+\frac{1}{n}\sum_{i=1}^n(Z_{ij}- \hat f_j(\X_i))^2(\hat \Delta_{i}-( \hat h_3(\X_i)-\beta_0 \hat h_4(\X_i))+\Delta_{i}-( h_3(\X_i)-\beta_0 h_4(\X_i)))\n\\
&& \quad \times (\hat\Delta_{i}-\Delta_{i}+(h_3(\X_i)-\hat h_3(\X_i))+\beta_0 (\hat h_4(\X_i)- h_4(\X_i)))\n\\
&\equiv& B_1+B_2.\n
\ee
For $B_1$, we have 
\be 
&&B_1\n\\
&=&\frac{1}{n}\sum_{i=1}^n(f_j(\X_i)-\hat f_j(\X_i))(2Z_{ij}-2f_j(\X_i)+f_j(\X_i)-\hat f_j(\X_i))(\Delta_{i}-(h_3(\X_i)-\beta_0 h_4(\X_i)))^2\n\\
&=&\frac{1}{n}\sum_{i=1}^n2(f_j(\X_i)-\hat f_j(\X_i))(Z_{ij}-f_j(\X_i))(\Delta_{i}-(h_3(\X_i)-\beta_0 h_4(\X_i)))^2\n\\
&&+\frac{1}{n}\sum_{i=1}^n(f_j(\X_i)-\hat f_j(\X_i))^2(\Delta_{i}-(h_3(\X_i)-\beta_0 h_4(\X_i)))^2\n\\
&\equiv&  B_{1,1}+B_{1,2}.\n
\ee
We can prove $B_{1,1}=o_p(n^{-1/2}m^{-1/2})$ similar to the proof of \eqref{pn-p}, and 
$$
B_{1,2}\lesssim c^2\frac{1}{n}\sum_{i=1}^n(f_j(\X_i)-\hat f_j(\X_i))^2=o_p(n^{-1/2}m^{-1/2}).
$$
Thus we obtain $B_1=o_p(n^{-1/2}m^{-1/2})$. Similar proofs can be applied to $B_2$, but it requires breaking down $B_2$ into several terms, just like the decomposition of $A_2$. Since the process is repeating, the proof details are omitted here. Therefore, we obtain 
\bse
\left\|\bar{\Omega}\left(\beta_0, \hat{\boldeta}\right)-\bar{\Omega}\left(\beta_0, {\boldeta}_0\right)\right\|&\leq&\sum_{j=1}^{m}\frac{1}{n}\sum_{i=1}^n\{g_j(\beta_0,\hat\boldeta;\O_i)^2-g_j(\beta_0,\boldeta_0;\O_i)^2\}\\
&=&mo_p(n^{-1/2}m^{-1/2})=o_p(\sqrt{m/n}).
\ese
For $\left\|\bar{\Omega}\left(\beta_0, {\boldeta}_0\right)-\Omega_0\right\|$, applying the matrix concentration inequality (Theorem 1 in \cite{tropp2016}, Lemma 6 in \cite{ye2024}), we obtain $\left\|\bar{\Omega}\left(\beta_0, {\boldeta}_0\right)-\Omega_0\right\|=O_p(\sqrt{m\log(m)/n})$. Since $O_p(\sqrt{m\log(m)/n})$ dominates $o_p(\sqrt{m/n})$, we obtain $\left\|\bar{\Omega}\left(\beta_0, \hat{\boldeta}\right)-\Omega_0\right\|=o_p(\sqrt{m/n})+O_p(\sqrt{m\log (m)/n})=O_p(\sqrt{m\log (m)/n})$, and thus (i) is proved.

(ii) and (iii). The proof is similar to the proof of (i) and is omitted here.

(iv). According to Lemma \ref{lemmag}, Lemma \ref{lemmaomega} and Taylor expansion, we have
\be 
&&\sup_{\beta \in \calB}\left\|\bar{\Omega}(\beta, \hat{\boldeta})-\Omega_0\left(\beta, \boldeta_0\right)\right\|\n\\
&=&\sup_{\beta \in \calB}\left\|\bar{\Omega}(\beta, \hat{\boldeta})-\bar\Omega(\beta_0,\hat\boldeta)+\bar\Omega(\beta_0,\hat\boldeta)-\Omega(\beta_0,\boldeta_0)+\Omega(\beta_0,\boldeta_0)-\Omega\left(\beta, \boldeta_0\right)\right\|\n\\
&\leq& \sup_{\beta \in \calB}\Big\|\frac{2}{n}\sum_{i=1}^ng(\beta_0,\hat\boldeta;\O_i)g^{'}(\hat\boldeta;\O_i)\tp(\beta-\beta_0)+ \frac{1}{n}\sum_{i=1}^ng^{'}(\hat\boldeta;\O_i)g^{'}(\hat\boldeta;\O_i)\tp(\beta-\beta_0)^2\n\\
&&-2\bbE \{g(\beta_0,\boldeta_0;\O_i)g^{'}(\boldeta_0;\O_i)\tp\}(\beta-\beta_0)- \bbE\{ g^{'}(\boldeta_0;\O_i)g^{'}(\boldeta_0;\O_i)\tp\}(\beta-\beta_0)^2 \Big\|\n\\
&&+\|\bar\Omega(\beta_0,\hat\boldeta)-\Omega(\beta_0,\boldeta_0)\|\n\\
&\leq& 2\sup_{\beta \in \calB}|\beta-\beta_0|\left\|\frac{1}{n}\sum_{i=1}^ng(\beta_0,\hat\boldeta;\O_i)g^{'}(\hat\boldeta;\O_i)\tp-\bbE\{ g(\beta_0,\boldeta_0;\O_i)g^{'}(\boldeta_0;\O_i)\tp\}\right\|\n\\
&&+\sup_{\beta \in \calB}|\beta-\beta_0|^2\left\|\frac{1}{n}\sum_{i=1}^ng^{'}(\boldeta_0;\O_i)g^{'}(\hat\boldeta;\O_i)\tp-\bbE \{g^{'}(\boldeta_0;\O_i)g^{'}(\boldeta_0;\O_i)\tp\}\right\| \n\\
&&+\|\bar\Omega(\beta_0,\hat\boldeta)-\Omega(\beta_0,\boldeta_0)\|\n\\
&=&O_p(\sqrt{m\log(m)/n}).\n
\ee 
A useful implication of this conclusion here is $\sup_{\beta\in\calB}\|\bar \Omega(\beta,\hat\boldeta)\|\leq\sup_{\beta\in\calB}\|\bar \Omega(\beta,\hat\boldeta)-\Omega(\beta,\boldeta_0)\|+\sup_{\beta\in\calB}\|\Omega(\beta,\boldeta_0)\|=O_p(1) $.

(v). The proof is similar to the proof of (iv) and is omitted here.
\end{proof}
Define $P(\blambda,\beta,\boldeta)=1/n\sum_{i=1}^n \rho(\blambda\tp g(\beta,\boldeta;\O_i))$, $\blambda(\beta,\hat\boldeta)=\argmax_{\blambda \in {L}(\beta,\hat\boldeta)} P(\blambda,\beta,\hat\boldeta)$ and $\blambda(\beta,\boldeta_0)=\argmax_{\blambda \in {L}(\beta,\boldeta_0)}P(\blambda,\beta,\boldeta_0)$.
\begin{lemma} \label{lemmalambda}
Under Assumptions \ref{condnuispara}-\ref{condgel}, we have 

(i) $\sup_{\beta\in \calB}\|\blambda(\beta,\hat\boldeta)\|=O_p(\mu_n/\sqrt{n})$, $\sup_{\beta\in \calB}\|\blambda(\beta,\boldeta_0)\|=O_p(\mu_n/\sqrt{n})$, $\|\blambda(\beta_0,\boldeta_0)\|=O_p(\sqrt{m/n})$.

(ii) $\|\blambda(\beta_0,\hat\boldeta)- \blambda(\beta_0,\boldeta_0)\|  =o_p(1/\sqrt{n})$, and $\sup_{\beta\in\calB}\|\blambda(\beta,\hat\boldeta)- \blambda(\beta,\boldeta_0)\|  =o_p(\mu_n/\sqrt{n})$.  
\end{lemma}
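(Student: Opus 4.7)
The plan is to adapt the classical GEL arguments of \citet{newey2004} and \citet{newey2009} to our setting, using the uniform controls on $\bar g$ and $\bar\Omega$ from Lemmas~\ref{lemmag} and~\ref{lemmaomega} as the key inputs. Throughout I exploit that $\blambda(\beta,\cdot)$ is the unique interior maximizer of the concave objective $P(\blambda,\beta,\cdot)$ on the convex set $L(\beta,\cdot) \ni 0$, with $P(0,\beta,\cdot) = \rho(0) = 0$ by Assumption~\ref{condgel}.

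For part (i), I proceed by a second-order Taylor expansion around $\blambda = 0$. For any $\blambda \in L(\beta,\hat\boldeta)$ with $\blambda\tp g(\beta,\hat\boldeta;\O_i) \in \calV$ for all $i$,
$$P(\blambda,\beta,\hat\boldeta) = -\blambda\tp \bar g(\beta,\hat\boldeta) + \tfrac{1}{2}\blambda\tp \tilde\Omega(\beta,\hat\boldeta)\, \blambda,$$
where $\tilde\Omega$ is the analogue of $\bar\Omega$ with weights $\rho''(\tilde\tau_i)$ for intermediate points $\tilde\tau_i$ between $0$ and $\blambda\tp g_i$. Continuity of $\rho''$ and $\rho''(0)=-1$ ensure $\rho''(\tau) \leq -1/2$ on a neighborhood of $0$; Assumption~\ref{condmoment}(iii) controls $\max_i \|g(\beta,\hat\boldeta;\O_i)\|$, so that for candidate $\blambda$ with $\|\blambda\|=O_p(\mu_n/\sqrt{n})$ all $\tilde\tau_i$ lie in this neighborhood with probability approaching one. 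Combining $P(\blambda(\beta,\hat\boldeta),\beta,\hat\boldeta) \geq 0$ with the smallest-eigenvalue lower bound on $\bar\Omega$ inherited from Assumption~\ref{condeigen} and Lemma~\ref{lemmaomega}(iv) gives
$$c\,\|\blambda(\beta,\hat\boldeta)\|^2 \;\leq\; \|\blambda(\beta,\hat\boldeta)\|\cdot \|\bar g(\beta,\hat\boldeta)\|,$$
whence $\sup_\beta \|\blambda(\beta,\hat\boldeta)\| \lesssim \sup_\beta \|\bar g(\beta,\hat\boldeta)\| = O_p(\mu_n/\sqrt{n})$ by Lemma~\ref{lemmag}(iv). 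The same argument with $\boldeta_0$ replacing $\hat\boldeta$ yields the second rate, and inserting the sharper bound $\|\bar g(\beta_0,\boldeta_0)\| = O_p(\sqrt{m/n})$ at $\beta=\beta_0$ delivers the third.

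For part (ii), I exploit the interior first-order conditions
$$\frac{1}{n}\sum_{i=1}^n \rho'\bigl(\blambda(\beta,\cdot)\tp g(\beta,\cdot;\O_i)\bigr)\, g(\beta,\cdot;\O_i) \;=\; 0,$$
evaluated at both $(\beta,\hat\boldeta)$ and $(\beta,\boldeta_0)$. Subtracting and Taylor-expanding $\rho'$ around $0$ using $\rho'(0)=-1$ and $\rho''(0)=-1$, I obtain the linearization
$$0 \;=\; \bar g(\beta,\hat\boldeta) - \bar g(\beta,\boldeta_0) + \bigl[\bar\Omega(\beta,\hat\boldeta) + o_p(1)\bigr]\bigl(\blambda(\beta,\hat\boldeta) - \blambda(\beta,\boldeta_0)\bigr) + R_n,$$
where the remainder $R_n$ is controlled by part (i), the uniform rates on $\|\bar\Omega(\beta,\hat\boldeta) - \bar\Omega(\beta,\boldeta_0)\|$ from Lemma~\ref{lemmaomega}, and a uniform bound on $\max_i \|g(\beta,\hat\boldeta;\O_i)\|^3$ from Assumption~\ref{condmoment}(iii). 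Inverting via the bounded smallest eigenvalue of $\bar\Omega$ and invoking Lemma~\ref{lemmag}(i), namely $\|\bar g(\beta_0,\hat\boldeta) - \bar g(\beta_0,\boldeta_0)\| = o_p(n^{-1/2})$, at $\beta=\beta_0$ gives the first bound. The uniform statement $\sup_\beta \|\blambda(\beta,\hat\boldeta) - \blambda(\beta,\boldeta_0)\| = o_p(\mu_n/\sqrt{n})$ follows by the identical calculation using Lemma~\ref{lemmag}(iii).

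\emph{Main obstacle.} The principal technical difficulty is justifying the Taylor expansions uniformly in $\beta \in \calB$, which requires that $\blambda(\beta,\cdot)\tp g(\beta,\cdot;\O_i)$ remain in a compact subset of the interior of $\calV$ for all $i$ simultaneously with high probability. This needs a delicate interaction between the rate $O_p(\mu_n/\sqrt{n})$ on $\|\blambda\|$, the uniform growth of $\max_i \|g(\beta,\hat\boldeta;\O_i)\|$, and the divergence of $m$ and $\mu_n$. The moment condition in Assumption~\ref{condmoment}(iii) is precisely calibrated to deliver this control, and verifying the uniform containment condition is the step where the nontrivial work actually resides.
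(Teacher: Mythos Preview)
Your proposal is correct and follows essentially the same route as the paper: part (i) via a second-order expansion of $P$ at $\blambda=0$ combined with $P(\blambda(\beta,\cdot),\beta,\cdot)\geq P(0,\beta,\cdot)=0$ and the eigenvalue bound on $\bar\Omega$, yielding $\|\blambda\|\lesssim\|\bar g\|$; part (ii) via the first-order conditions and the explicit linearization $\blambda\approx -\bar\Omega^{-1}\bar g$, then differencing with Lemmas~\ref{lemmag} and~\ref{lemmaomega}.

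The one place where your write-up is circular is the sentence ``for candidate $\blambda$ with $\|\blambda\|=O_p(\mu_n/\sqrt{n})$ all $\tilde\tau_i$ lie in this neighborhood'': this presupposes the rate you are trying to establish. You correctly flag this as the main obstacle but do not say how to break the loop. The paper's device is to introduce an auxiliary deterministic sequence $\tau_n$ with $\mu_n/\sqrt{n}=o(\tau_n)$ and $\tau_n\cdot\max_i\sup_\beta\|g(\beta,\hat\boldeta;\O_i)\|=o_p(1)$ (such a $\tau_n$ exists precisely by Assumption~\ref{condmoment}(iii)), then maximize $P$ over the \emph{constrained} ball $L_n=\{\|\blambda\|\leq\tau_n\}$. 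On $L_n$ the Taylor expansion is valid uniformly, the quadratic inequality gives $\|\blambda_{L_n}\|\leq c\|\bar g\|=O_p(\mu_n/\sqrt{n})=o_p(\tau_n)$, so the constrained maximizer lies strictly inside $L_n$; concavity of $P$ then forces it to coincide with the global maximizer $\blambda(\beta,\hat\boldeta)$. This is the missing step you should spell out.
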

\begin{proof}
(i) Let $\tau=\max_{i\leq n}\sup_{\beta\in\calB}\|g(\beta,\boldeta_0;\O_i)\|$. A standard result gives that $\tau=O_p(n^{1/\gamma}\\\bbE\{\sup_{\beta\in\calB}\|g(\beta,\boldeta_0;\O)\|^{\gamma}\}^{1/\gamma})$ for $\gamma>2$. Also we can prove that 
\bse
&&\bbE\{\sup_{\beta\in\calB}\|g(\beta,\boldeta_0;\O)\|^{\gamma}\}^{1/\gamma}-\bbE\{\sup_{\beta\in\calB}\|g(\beta,\hat\boldeta;\O)\|^{\gamma}\}^{1/\gamma}\\
&\leq& \bbE\{\sup_{\beta\in\calB}\|g(\beta,\boldeta_0;\O)-g(\beta,\hat\boldeta;\O)\|^{\gamma}\}^{1/\gamma}   \\
&=& \bbE\{\|D_{\boldeta} g(\beta,\boldeta_0;\O) \|^{\gamma}\}^{1/\gamma}\|\hat\boldeta-\boldeta_0\|\\
&=& O_p(\|\hat\boldeta-\boldeta_0\|).
\ese
Given $\gamma>4$, by Assumption \ref{condmoment} and $\|\hat\boldeta-\boldeta_0\|=o_p(n^{-1/4})$ from Lemma \ref{lemmadnn}, we can obtain 
\bse 
&&n^{1/\gamma}\bbE\{\sup_{\beta\in\calB}\|g(\beta,\hat\boldeta;\O)\|^{\gamma}\}^{1/\gamma}\frac{m+\mu_n}{\sqrt{n}}\\
&=&n^{1/\gamma}[\bbE\{\sup_{\beta\in\calB}\|g(\beta,\hat\boldeta;\O)\|^{\gamma}\}^{1/\gamma}-\bbE\{\sup_{\beta\in\calB}\|g(\beta,\boldeta_0;\O)\|^{\gamma}\}^{1/\gamma}]\frac{m+\mu_n}{\sqrt{n}}  \\
&&+ n^{1/\gamma}\bbE\{\sup_{\beta\in\calB}\|g(\beta,\boldeta_0;\O)\|^{\gamma}\}^{1/\gamma}\frac{m+\mu_n}{\sqrt{n}}  \\
&=& n^{1/\gamma}o_p(n^{-1/4})\frac{m+\mu_n}{\sqrt{n}}  +o_p(1)\\
&=&o_p(1).
\ese
According to Assumption \ref{condmoment}, there exists $\tau_n$ such that $\tau_n=o(\tau^{-1})$ and $\mu_n/\sqrt{n}=o(\tau_n)$. Let $L_n=\{\blambda: \|\blambda\|\leq\tau_n\}$. For $\blambda\in L_n$, we have
\be\label{suplambdag}
\sup_{1\leq i\leq n,  \beta\in \calB}|\blambda\tp g(\beta,\hat\boldeta;\O_i)|\leq \tau_n\tau=o_p(1).
\ee
So $ L_n\subset L(\beta,\hat\boldeta) $ and there exists a positive constant $c$ such that $-c^{-1}\leq\rho^{''}(\blambda\tp g(\beta,\hat\boldeta;\O_i))\leq -c$ and $|\rho^{'''}(\blambda g(\beta,\hat\boldeta;\O_i))|\leq c$. Using Taylor expansion around $\blambda=0$, we have
$$
\rho(\blambda\tp g(\beta,\hat\boldeta;\O_i))=\rho(0)+\rho^{'}(0)\blambda\tp g(\beta,\hat\boldeta;\O_i)+\frac{1}{2}\rho^{''}(\bar x)\blambda\tp g(\beta,\hat\boldeta;\O_i)g(\beta,\hat\boldeta;\O_i)\tp\blambda,
$$
where $\bar x$ lies between 0 and $\blambda(\beta,\hat\boldeta)\tp g(\beta,\hat\boldeta;\O_i).$ Because $\rho(0)=0$, $\rho^{'}(0)=-1$, $P(\blambda,\beta,\hat\boldeta)$ becomes
$$
P(\blambda,\beta,\hat\boldeta)=- \blambda\tp\bar g(\beta,\hat\boldeta)+\frac{1}{2}\rho^{''}(\bar x)\blambda\tp \bar\Omega(\beta,\hat\boldeta)\blambda.
$$
Let $\blambda(\beta,\hat\boldeta)=\argmax_{\blambda\in L_n} P(\blambda,\beta,\hat\boldeta)$. Based on Assumption \ref{condeigen} and $\rho^{''}(\blambda\tp g(\beta,\hat\boldeta;\O_i))>-c^{-1}$, we have 
$$
0=P(0,\beta,\hat\boldeta)\leq P(\blambda(\beta,\hat\boldeta),\beta,\hat\boldeta)\leq\|\blambda(\beta,\hat\boldeta)\|\|\bar g(\beta,\hat\boldeta)\|-c\|\blambda(\beta,\hat\boldeta)\|^2,
$$
which implies $\|\blambda(\beta,\hat\boldeta)\|\leq c\|\bar g(\beta,\hat\boldeta)\|$. Then Lemma \ref{lemmag} gives that 
\be\label{eqlambda}
\sup_{\beta\in \calB}\|\blambda(\beta,\hat\boldeta)\|\leq c\sup_{\beta\in \calB}\|\bar g(\beta,\hat\boldeta)\|=O_p(\frac{\mu_n}{\sqrt{n}})=o_p(\tau_n).
\ee
Thus $\blambda(\beta,\hat\boldeta)\in L_n$. Since $\blambda(\beta,\hat\boldeta)$ is a local maximum of $P(\blambda,\beta,\hat\boldeta)$ and $P$ is convex function, $\blambda(\beta,\hat\boldeta)$ is also the global maximum value, i.e., $\blambda(\beta,\hat\boldeta)=\argmax_{\blambda \in L(\beta,\hat\boldeta)} P(\blambda,\beta,\hat\boldeta)$.

Similarly, we can prove that $\sup_{\beta\in \calB}\|\blambda(\beta,\boldeta_0)\|=O_p(\mu_n/\sqrt{n})$. Moreover, because $\bar g(\beta_0,\eta_0)=O_p(\sqrt{{m}/{n}})$, \eqref{eqlambda} yields $\|\blambda(\beta_0,\boldeta_0)\|=O_p(\sqrt{{m}/{n}})$.

(ii) Since $\blambda(\beta_0,\hat\boldeta)=\argmax_{\blambda\in L(\beta_0,\hat\boldeta)}1/n\sum_{i=1}^n\rho(\blambda\tp g(\beta_0,\hat\boldeta;\O_i))$, its first order derivative is 0:
\be \label{rho1}
\frac{1}{n}\sum_{i=1}^n \rho^{'}(\blambda(\beta_0,\hat\boldeta)\tp g(\beta,\hat\boldeta;\O_i))g(\beta,\hat\boldeta;\O_i)=0.
\ee
Applying Taylor expansion on $\rho^{'}$ around 0, we have
$$
\rho^{'}(\blambda(\beta_0,\hat\boldeta)\tp g(\beta_0,\hat\boldeta;\O_i))=\rho^{'}(0)+\rho^{''}(0)\blambda(\beta_0,\hat\boldeta)\tp g(\beta_0,\hat\boldeta;\O_i)+1/2\rho^{'''}(\bar x)(\blambda(\beta_0,\hat\boldeta)\tp g(\beta_0,\hat\boldeta;\O_i))^2 ,
$$
where $\bar x$ lies between 0 and $\blambda(\beta_0,\hat\boldeta)$. So \eqref{rho1} becomes 
\be \label{Taylor_lambda}
-\bar g(\beta_0,\hat\boldeta)-\bar\Omega(\beta_0,\widehat\boldeta)\blambda(\beta_0,\hat\boldeta)+R=0,
\ee
where $R=1/(2n)\sum_{i=1}^n\rho^{'''}(\bar x)(\blambda(\beta_0,\hat\boldeta)\tp g(\beta_0,\hat\boldeta;\O_i))^2g(\beta_0,\hat\boldeta;\O_i)$. According to Assumptions \ref{condeigen} and \ref{condgel}, we have
\be\label{R-order}
&&\|R\|\n\\
&\leq& c\sup_{1\leq i\leq n}\|g(\beta_0,\hat\boldeta;\O_i)\|\|\blambda(\beta_0,\hat\boldeta)\|^2\|\bar\Omega(\beta_0,\hat\boldeta)\|\n\\
&=&O_p(n^{1/\gamma}\{\bbE(\|g(\beta_0,\hat\boldeta;\O)\|^\gamma)\}^{1/\gamma})O_p(\mu_n^2/n)\n\\
&=&o_p(\mu_n/\sqrt{n})
\ee
is a small order term, which means 
\be \label{lambda1}
\blambda(\beta_0,\hat\boldeta)=-\bar\Omega(\beta_0,\hat\boldeta)^{-1}\bar g(\beta_0,\hat\boldeta)+o_p(\mu_n/\sqrt{n}),
\ee
and similarly $\blambda(\beta_0,\boldeta_0)=-\bar\Omega(\beta_0,\boldeta_0)^{-1}\bar g(\beta_0,\boldeta_0)+o_p(\mu_n/\sqrt{n})$. According to Lemma \ref{lemmag} and Lemma \ref{lemmaomega}, we have
\be  \label{lambda-lambda}
&&\|\blambda(\beta_0,\hat\boldeta)-\blambda(\beta_0,\boldeta_0)\|\n\\
&\leq & \| \bar\Omega(\beta_0,\boldeta_0)^{-1}\bar g(\beta_0,\boldeta_0)-\bar\Omega(\beta_0,\hat\boldeta)^{-1}\bar g(\beta_0,\hat\boldeta)\|+o_p(\mu_n/\sqrt{n})\n\\
&\leq& \|(\bar\Omega(\beta_0,\boldeta_0)^{-1}-\bar\Omega(\beta_0,\hat\boldeta)^{-1})\bar g(\beta_0,\boldeta_0)\|\n\\
&&+\|\bar\Omega(\beta_0,\hat\boldeta)^{-1}(\bar g(\beta_0,\boldeta_0)-\bar g(\beta_0,\hat\boldeta))\|+o_p(\mu_n/\sqrt{n})\n\\
&=& O_p(\sqrt{m\log (m)/n})O_p(\sqrt{m/n})+O_p(1)o_p(\mu_n/\sqrt{n})+o_p(\mu_n/\sqrt{n})\n\\
&=&o_p(\mu_n/\sqrt{n}).
\ee
The last step comes from that $m^3/n\to 0$ and $m/\mu_n^2 $ is bounded.
Similarly, by replacing $\blambda(\beta_0,\hat\boldeta)$ with $\blambda(\beta,\hat\boldeta)$ in \eqref{lambda-lambda}, we can obtian $\sup_{\beta \in \calB}|\blambda(\beta,\hat\boldeta)-\blambda(\beta,\boldeta_0)|=o_p(\sqrt{\mu_n/n})$.
\end{proof}

\begin{lemma} \label{lemmaq}
Let $\tilde Q(\beta,\boldeta) = \tilde g(\beta,\boldeta)\tp \Omega(\beta,\boldeta)^{-1}\tilde g(\beta,\boldeta)\tp/2+m/(2n)$. Under Assumptions \ref{condnuispara}-\ref{condgel}, $$\sup_{\beta\in \calB}|\tilde Q(\beta,\boldeta_0)-\hat Q(\beta,\hat\boldeta)|=o_p(\frac{\mu_n^2}{n}).$$
\end{lemma}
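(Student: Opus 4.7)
The plan is to show that the GEL saddle-point objective $\hat Q(\beta,\hat\boldeta)$ admits a quadratic expansion whose leading term is $\bar g(\beta,\hat\boldeta)\tp\bar\Omega(\beta,\hat\boldeta)^{-1}\bar g(\beta,\hat\boldeta)/2$, and then to bridge this empirical quadratic to its population counterpart $\tilde Q(\beta,\boldeta_0)$, with the $m/(2n)$ term absorbing the resulting bias. To begin, I would fix any $\beta\in\calB$ and expand $\rho$ around $0$ using Assumption \ref{condgel}: because $\rho(0)=0$, $\rho'(0)=\rho''(0)=-1$, a third-order Taylor expansion of $\rho(\blambda\tp g(\beta,\hat\boldeta;\O_i))$ around $0$ at the maximizer $\hat\blambda=\blambda(\beta,\hat\boldeta)$ yields
\[
\hat Q(\beta,\hat\boldeta)=-\hat\blambda\tp\bar g(\beta,\hat\boldeta)-\tfrac12\hat\blambda\tp\bar\Omega(\beta,\hat\boldeta)\hat\blambda+R_1(\beta),
\]
where $R_1(\beta)=\tfrac{1}{6n}\sum_i\rho'''(\bar x_i)(\hat\blambda\tp g(\beta,\hat\boldeta;\O_i))^3$. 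The argument of Lemma \ref{lemmalambda}(i), combined with Assumption \ref{condmoment}(iii) controlling $\max_i\sup_\beta\|g(\beta,\hat\boldeta;\O_i)\|$, gives $\sup_\beta|R_1(\beta)|=o_p(\mu_n^2/n)$ by the same reasoning used for the remainder $R$ in \eqref{R-order}.

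Next, I would substitute the first-order characterization of the maximizer. Lemma \ref{lemmalambda}(ii) and equation \eqref{lambda1} give $\hat\blambda=-\bar\Omega(\beta,\hat\boldeta)^{-1}\bar g(\beta,\hat\boldeta)+o_p(\mu_n/\sqrt n)$ uniformly in $\beta$; substituting into the display above and using $\sup_\beta\|\bar g(\beta,\hat\boldeta)\|=O_p(\mu_n/\sqrt n)$ from Lemma \ref{lemmag}(iv) together with $\|\bar\Omega(\beta,\hat\boldeta)^{-1}\|=O_p(1)$ from Lemma \ref{lemmaomega}, the cross terms collapse and I obtain
\[
\hat Q(\beta,\hat\boldeta)=\tfrac12\,\bar g(\beta,\hat\boldeta)\tp\bar\Omega(\beta,\hat\boldeta)^{-1}\bar g(\beta,\hat\boldeta)+o_p(\mu_n^2/n),
\]
uniformly over $\beta\in\calB$. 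At this point the problem reduces to comparing this empirical quadratic form with $\tilde Q(\beta,\boldeta_0)=\tilde g(\beta,\boldeta_0)\tp\Omega(\beta,\boldeta_0)^{-1}\tilde g(\beta,\boldeta_0)/2+m/(2n)$.

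For that comparison, I would write $\bar g(\beta,\hat\boldeta)=\tilde g(\beta,\boldeta_0)+\{\bar g(\beta,\hat\boldeta)-\bar g(\beta,\boldeta_0)\}+\{\bar g(\beta,\boldeta_0)-\tilde g(\beta,\boldeta_0)\}$, and similarly replace $\bar\Omega(\beta,\hat\boldeta)^{-1}$ by $\Omega(\beta,\boldeta_0)^{-1}$ plus a perturbation. The nuisance-error term in $\bar g$ is $o_p(n^{-1/2})$ uniformly by Lemma \ref{lemmag}(iii), and the centered empirical process term $\bar g(\beta,\boldeta_0)-\tilde g(\beta,\boldeta_0)$ is $O_p(\sqrt{m/n})$; Lemma \ref{lemmaomega}(iv) gives $\|\bar\Omega(\beta,\hat\boldeta)^{-1}-\Omega(\beta,\boldeta_0)^{-1}\|=O_p(\sqrt{m\log m/n})$. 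Expanding the quadratic form, all cross-terms involving $\tilde g(\beta,\boldeta_0)\tp\cdot$ with a centered empirical-process factor are $O_p(\|\tilde g(\beta,\boldeta_0)\|\cdot\sqrt{m/n})=O_p(\mu_n/\sqrt n\cdot \sqrt{m/n})=o_p(\mu_n^2/n)$ using Assumption \ref{condweak}(i) and $m/\mu_n^2$ bounded, while the purely quadratic remainder $\{\bar g(\beta,\boldeta_0)-\tilde g(\beta,\boldeta_0)\}\tp\Omega(\beta,\boldeta_0)^{-1}\{\bar g(\beta,\boldeta_0)-\tilde g(\beta,\boldeta_0)\}$ concentrates around its mean $\tr(\Omega(\beta,\boldeta_0)^{-1}\Omega(\beta,\boldeta_0)/n)=m/n$, with stochastic fluctuation of order $O_p(\sqrt{m/n^2}\cdot\sqrt m)=O_p(m/n)\cdot o_p(1)$ by a matrix-Bernstein-type argument analogous to that used in Lemma \ref{lemmaomega}(i). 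The centering precisely cancels against the $m/(2n)$ added to $\tilde Q$.

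The main technical obstacle is the last step: obtaining \emph{uniform} control of the quadratic form in the centered empirical process, in particular establishing that $\sup_\beta|\bar g(\beta,\boldeta_0)\tp\Omega(\beta,\boldeta_0)^{-1}\bar g(\beta,\boldeta_0)-\tilde g(\beta,\boldeta_0)\tp\Omega(\beta,\boldeta_0)^{-1}\tilde g(\beta,\boldeta_0)-m/n|=o_p(\mu_n^2/n)$. I would handle this by exploiting the linearity of $g$ in $\beta$ (so that $\bar g(\beta,\boldeta_0)-\tilde g(\beta,\boldeta_0)$ is an affine function of $\beta$ whose coefficients are two averages of $m$-vectors), reducing uniform control to concentration of $\bar g(\beta_0,\boldeta_0)-\tilde g(\beta_0,\boldeta_0)$ and of $\bar g'(\boldeta_0)-\bbE g'(\boldeta_0;\O)$, both of which have been bounded in Lemmas \ref{lemma1} and \ref{lemmaomega}. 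Combining all pieces yields $\sup_{\beta\in\calB}|\tilde Q(\beta,\boldeta_0)-\hat Q(\beta,\hat\boldeta)|=o_p(\mu_n^2/n)$, as required.
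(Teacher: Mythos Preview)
Your overall architecture matches the paper's proof: Taylor-expand $\rho$ at zero, control the cubic remainder, substitute the expansion $\hat\blambda=-\bar\Omega^{-1}\bar g+o_p(\mu_n/\sqrt n)$, and reduce to comparing $\tfrac12\bar g(\beta,\hat\boldeta)\tp\bar\Omega(\beta,\hat\boldeta)^{-1}\bar g(\beta,\hat\boldeta)$ with $\tilde Q(\beta,\boldeta_0)$. The paper then simply invokes Lemma~7 of \cite{ye2024} (equivalently, Lemma~A.2 of \cite{newey2009}) for that last comparison, whereas you attempt it directly.

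Your direct argument has one genuine gap. For the cross-term $\tilde g(\beta,\boldeta_0)\tp\Omega(\beta,\boldeta_0)^{-1}\{\bar g(\beta,\boldeta_0)-\tilde g(\beta,\boldeta_0)\}$ you apply the norm bound $O_p(\|\tilde g\|\cdot\|\bar g-\tilde g\|)=O_p(\mu_n\sqrt m/n)$ and then claim this is $o_p(\mu_n^2/n)$ by ``$m/\mu_n^2$ bounded.'' But boundedness of $m/\mu_n^2$ only yields $\sqrt m=O(\mu_n)$, hence $\mu_n\sqrt m/n=O_p(\mu_n^2/n)$, \emph{not} $o_p(\mu_n^2/n)$; Assumption~\ref{condweak}(i) does not require $m/\mu_n^2\to0$. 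The Cauchy--Schwarz bound is simply too crude here. The fix is to treat the cross-term as a one-dimensional projection: $\tilde g\tp\Omega^{-1}(\bar g-\tilde g)=n^{-1}\sum_i \tilde g\tp\Omega^{-1}\{g(\beta,\boldeta_0;\O_i)-\tilde g\}$ is a scalar average of mean-zero terms with variance $n^{-1}\tilde g\tp\Omega^{-1}\Omega\Omega^{-1}\tilde g=n^{-1}\tilde g\tp\Omega^{-1}\tilde g=O(\mu_n^2/n^2)$, so the term is $O_p(\mu_n/n)=o_p(\mu_n^2/n)$ since $\mu_n\to\infty$. Uniformity over $\beta$ then follows from your linearity-in-$\beta$ reduction, since the cross-term is a quadratic polynomial in $\beta$ whose coefficients are fixed random scalars. (Your treatment of the pure quadratic remainder $(\bar g-\tilde g)\tp\Omega^{-1}(\bar g-\tilde g)-m/n$ is essentially correct, though the display ``$O_p(\sqrt{m/n^2}\cdot\sqrt m)=O_p(m/n)\cdot o_p(1)$'' is garbled; the fluctuation is $O_p(\sqrt m/n)$, and $\sqrt m/\mu_n^2\to0$ gives the needed rate.)
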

\begin{proof}
According to the Taylor expansion at 0, we have
\be \label{qhat}
\hat Q(\beta,\hat\boldeta) &=& \frac{1}{n}\sum_{i=1}^n \rho( \blambda(\beta,\hat\boldeta)\tp g(\beta,\hat\boldeta;\O_i))\n\\
&=&\frac{1}{n}\sum_{i=1}^n \rho(0)+\frac{1}{n}\sum_{i=1}^n \rho^{'}(0) \blambda(\beta,\hat\boldeta)\tp g(\beta,\hat\boldeta;\O_i)\n\\
&&+\quad \frac{1}{2n}\sum_{i=1}^n \rho^{''}(0) \blambda(\beta,\hat\boldeta)\tp g(\beta,\hat\boldeta;\O_i)g(\beta,\hat\boldeta;\O_i)\tp \blambda(\beta,\hat\boldeta)+r,
\ee
where $r=1/(6n)\sum_{i=1}^n\rho^{'''}(\bar x)( \blambda(\beta,\hat\boldeta)\tp g(\beta,\hat\boldeta;\O_i))^3$ is a remainder term and $\bar x$ lies between $0$ and $ \blambda(\beta,\hat\boldeta)\tp g(\beta,\hat\boldeta;\O_i)$. It can be proved that 
\be\label{rorder}
r&=&\frac{1}{6n}\sum_{i=1}^n\rho^{'''}(\bar x)( \blambda(\beta,\hat\boldeta)\tp g(\beta,\hat\boldeta;\O_i))^3\n\\
&\leq&c\|\blambda(\beta,\hat\boldeta)\|\max_{i}\|g(\beta,\hat\boldeta;\O_i)\|\blambda(\beta,\hat\boldeta)\tp\bar\Omega(\beta,\hat\boldeta)\blambda(\beta,\hat\boldeta)   \n\\
&=&O_p(\frac{\mu_n}{\sqrt{n}}\tau_n)O_p(\mu_n^2/n)\n\\
&=&o_p(\mu_n^2/n).
\ee
Based on \eqref{R-order} and \eqref{rorder}, \eqref{qhat} becomes
\be
&&\hat Q(\beta,\hat\boldeta) \n\\
&=& -\frac{1}{n}\sum_{i=1}^n  (-\bar g(\beta,\hat\boldeta)+R)\tp\bar\Omega(\beta,\hat\boldeta)^{-1} g(\beta,\hat\boldeta;\O_i)\n\\
&&- \frac{1}{2n}\sum_{i=1}^n(-\bar g(\beta,\hat\boldeta)+R)\tp\bar\Omega(\beta,\hat\boldeta)^{-1} g(\beta,\hat\boldeta;\O_i)g(\beta,\hat\boldeta;\O_i)\tp\bar\Omega(\beta,\hat\boldeta)^{-1}(-\bar g(\beta,\hat\boldeta)+R)+r\n\\
&=& \frac{1}{2} \bar  g(\beta,\hat\boldeta)\tp\bar\Omega(\beta,\hat\boldeta)^{-1} \bar g(\beta,\hat\boldeta) +o_p(\mu_n^2/n).
\ee
Lemma 7 in \cite{ye2024} gives that $\sup_{\beta\in\calB}|1/2g(\beta,\hat\boldeta)\tp\bar\Omega(\beta,\hat\boldeta)^{-1} \bar g(\beta,\hat\boldeta)-\tilde Q(\beta,\boldeta_0)|=o_p(\mu_n^2/n)$, thus we have
$$
\sup_{\beta\in\calB}|\hat Q(\beta,\hat\boldeta)-\tilde Q(\beta,\boldeta_0)  |=\sup_{\beta\in\calB}|\frac{1}{2}g(\beta,\hat\boldeta)\tp\bar\Omega(\beta,\hat\boldeta)^{-1} \bar g(\beta,\hat\boldeta)-\tilde Q(\beta,\boldeta_0)|+o_p(\mu_n^2/n)=o_p(\mu_n^2/n).
$$
\end{proof}

\begin{lemma} \label{lemmapsi}
Under Assumptions \ref{condnuispara}-\ref{condeigen} and \ref{condkernel}, 
(i). $\left\|\bar{\psi}\left(\beta_0,G_0, \hat{\boldeta}\right)-\bar{\psi}\left(\beta_0,G_0, \boldeta_0\right)\right\|=o_p(n^{-1/2})$.

(ii). $\left\|\bar{\psi^{'}}(G_0,\hat{\boldeta})-\bar{\psi^{'}}\left(G_0,\boldeta_0\right)\right\|=o_p(n^{-1/2})$.

(iii). $\sup _{\beta \in \calB}\left\|\bar{\psi}(\beta,G_0, \hat{\boldeta})-\bar{\psi}\left(\beta,G_0, \boldeta_0\right)\right\|=o_p(n^{-1/2})$.   

(iv). $\sup_{\beta\in\calB}\|\bar \psi(\beta,G_0,\hat\boldeta)\|=O_p(\mu_n/\sqrt{n})$ and $\|\bar \psi(\beta_0,G_0,\hat\boldeta)\|=O_p(\sqrt{m/n})$.     
\end{lemma}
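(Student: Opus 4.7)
The plan is to mirror the structure of Lemma~\ref{lemmag}, exploiting the decomposition
\[
\psi(\beta, G_0, \boldeta; \O)
= \frac{\delta}{G_0(Y|\O_A)}\, g(\beta, \boldeta; \O)
 + \Big(1 - \frac{\delta}{G_0(Y|\O_A)}\Big)\, \xi(\beta, \boldeta; \O_A),
\]
together with two key facts: the weight $\delta/G_0(Y|\O_A)$ is uniformly bounded by Assumption~\ref{condkernel}(i), so the inverse-probability-of-censoring factor acts as a bounded multiplier; and $\xi_0$ is Neyman orthogonal with respect to $\boldeta$ by Proposition~\ref{lemmaneyman}, so its sensitivity to $\hat{\boldeta}-\boldeta_0$ is second order.

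For part (i), I would split
\[
\bar\psi(\beta_0, G_0, \hat\boldeta) - \bar\psi(\beta_0, G_0, \boldeta_0)
= \bbE_n\!\left[\tfrac{\delta}{G_0}\{g(\beta_0, \hat\boldeta; \O) - g(\beta_0, \boldeta_0; \O)\}\right]
+ \bbE_n\!\left[\Big(1 - \tfrac{\delta}{G_0}\Big)\{\hat\xi - \xi_0\}\right].
\]
The first term inherits the rate $o_p(n^{-1/2})$ from Lemma~\ref{lemmag}(i): the bounded multiplier $\delta/G_0$ enters the envelope function $F$ and the Lipschitz constant of the localized function class $\calL_{j,\delta}$ only by a constant factor, leaving the maximal inequality \eqref{maxineq} and the bound \eqref{pn-p} intact. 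For the second term, I split the population part from the empirical-process part: the population part vanishes up to second order by Neyman orthogonality of $\xi_0$ (Proposition~\ref{lemmaneyman}), while the empirical-process part is controlled by a localization argument analogous to that in Lemma~\ref{lemmag}(i), using $\|\hat\boldeta-\boldeta_0\|_{L^2}=o_p(n^{-1/4})$ from Lemma~\ref{lemmadnn} together with $\|1-\delta/G_0\|_\infty\leq 1+1/c$.

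Part (ii) follows the same template applied to $\partial\psi/\partial\beta$, noting that $\psi$ is linear in $\beta$ (so $\psi^{'}(G_0,\boldeta;\O)$ has the same AIPCW form with $g$ replaced by $g^{'}$ and $\xi$ by its $\beta$-derivative); Lemma~\ref{lemmag}(ii) then transfers verbatim with bounded weights. Part (iii) follows from the linearity of $\psi$ in $\beta$: writing
\[
\bar\psi(\beta, G_0, \hat\boldeta) - \bar\psi(\beta, G_0, \boldeta_0)
= \{\bar\psi(\beta_0, G_0, \hat\boldeta) - \bar\psi(\beta_0, G_0, \boldeta_0)\}
+ \{\bar\psi^{'}(G_0, \hat\boldeta) - \bar\psi^{'}(G_0, \boldeta_0)\}(\beta-\beta_0),
\]
the uniform bound reduces to parts (i) and (ii) multiplied by the diameter of $\calB$. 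Part (iv) is a Markov-inequality argument: by Assumption~\ref{condeigen}, $n\,\bbE\|\bar\psi(\beta_0,G_0,\boldeta_0)\|^2/m \leq \tr(\Sigma_0)/m \leq c$, which gives $\|\bar\psi(\beta_0,G_0,\boldeta_0)\|=O_p(\sqrt{m/n})$, and combining with (i) yields $\|\bar\psi(\beta_0,G_0,\hat\boldeta)\|=O_p(\sqrt{m/n})$; the uniform bound follows by adding part (iii) and Lemma~\ref{lemma1}(ii) applied to $\psi$.

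The main obstacle is the $(1-\delta/G_0)(\hat\xi-\xi_0)$ term in part (i), because in the paper $\hat\xi$ is defined via Nadaraya--Watson kernel weights combined with a DNN-based $\hat\boldeta$, so its function class is more complex than a pure DNN class. I would address this by inserting an intermediate nuisance $\tilde\xi$ that uses the true $G_0$ but the same kernel weights and $\hat\boldeta$, bound $\|\hat\xi-\tilde\xi\|_{L^2}$ via the uniform rate $\kappa$ of the local Kaplan--Meier estimator (Assumption~\ref{condweak}(ii)) and the boundedness of $g$, and bound $\|\tilde\xi-\xi_0\|_{L^2}$ via the smoothing bias/variance rates from Assumption~\ref{condkernel}(ii)--(iii). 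Neyman orthogonality then eliminates the first-order contribution of each piece, reducing the problem to a product of two $o_p(n^{-1/4})$ rates plus a negligible kernel bias, which is $o_p(n^{-1/2})$ as required.
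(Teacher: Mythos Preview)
Your plan is correct and reaches the same conclusions, but the decomposition differs from the paper's. The paper linearizes in the nuisance direction: it sets $\Phi(t;\O)=\psi(\beta_0,G_0,\boldeta_0+t(\hat\boldeta-\boldeta_0);\O)$, writes $\bar\psi(\beta_0,G_0,\hat\boldeta)-\bar\psi(\beta_0,G_0,\boldeta_0)=\bbE_n[\Phi'(0;\O)]+r$ with $r=O_p(\|\hat\boldeta-\boldeta_0\|^2)=o_p(n^{-1/2})$, and then shows each component of $\Phi'(0;\O)$ has population mean exactly zero via the double conditioning $\bbE[\,\cdot\,]=\bbE[\bbE(\,\cdot\mid\O_T)]$ followed by conditioning on $\X$ (or $\Z,\X$). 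Your AIPCW split into a $\delta/G_0$-weighted $g$-difference and an augmentation $(1-\delta/G_0)(\hat\xi-\xi_0)$ is equivalent but organizes the cancellation differently.

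One point to tighten: your claim that the first term ``inherits'' Lemma~\ref{lemmag}(i) because $\delta/G_0$ only inflates the envelope and Lipschitz constant handles the empirical-process piece $(\bbE_n-\bbE)$, but Lemma~\ref{lemmag}(i) also relies on population-mean identities such as $\bbE[A_1]=0$. With the weight present you need the extra step $\bbE[\delta/G_0(Y\mid\O_A)\mid\O_T]=1$, which reduces the weighted means to the unweighted ones before the conditioning on $\X$ can be applied. The paper carries out exactly this conditioning for each $\Phi_i$. For the augmentation term, the same identity gives $\bbE[(1-\delta/G_0)(\hat\xi-\xi_0)]=0$ \emph{exactly} (since $\hat\xi-\xi_0$ is $\O_A$-measurable), so invoking Neyman orthogonality ``up to second order'' understates the cancellation and the intermediate $\tilde\xi$ you propose is unnecessary for the population part; only the empirical-process bound on $(\bbE_n-\bbE)[(1-\delta/G_0)(\hat\xi-\xi_0)]$ requires any care about the function class of $\hat\xi$, and there the paper is no more detailed than you are. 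Parts (ii)--(iv) match the paper's argument essentially verbatim.
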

\begin{proof}
(i). Define $\Phi$ as follows:
\be
&&\Phi(\O,t)\n\\
&=& {\psi}\left(\beta_0,G_0, \boldeta+t(\hat{\boldeta}-\boldeta_0)\right) \n\\
&=& \frac{\delta}{G_0(Y|X)}[\Z-\{f+t(\hat f-f)\}]\Big([A-\{h_1+t(\hat h_1-h_1)\}][Y-\{h_2+t(\hat h_2-h_2)\}\n\\
&&\quad -\beta_0(A-\{h_1+t(\hat h_1-h_1)\})]-[h_3+t(\hat h_3-h_3)-\beta_0\{h_4+t(\hat h_4-h_4)\}]\Big)\n\\
&&+(1-\frac{\delta}{G_0(Y|X)})[\xi_0(\beta_0;\O_{A})+t\{\hat\xi(\beta_0;\O_{A})-\xi_0(\beta_0,\hat\boldeta;\O_{A})\}].\n
\ee
Then $\bar{\psi}\left(\beta_0,G_0, \hat{\boldeta}\right)-\bar{\psi}\left(\beta_0,G_0, \boldeta_0\right)= 1/n\sum_{i=1}^n \{\Phi(1;\O_i)-\Phi(0;\O_i)\}=1/n\sum_{i=1}^n \Phi^{'}(0;\O_i)+r$, where $r=O_p(\|\hat\boldeta -\boldeta_0\|^2)=o_p(n^{-1/2})$ is the second order term. Moreover,
\be
&&\Phi^{'}(0;\O)\n\\
&=&\frac{\delta}{G_0(Y|\O_A)}\Big((\hat f-f)[(A-h_1)\{Y-h_2-\beta_0(A-h_1)\}-h_3+\beta_0 h_4]\Big)\n\\
&&+ \frac{\delta}{G_0(Y|\O_A)}(\Z-f(\X))(A-h_1)\{\beta_0(\hat h_1-h_1)-(\hat h_2 -h_2)\}\n \\
&&+ \frac{\delta}{G_0(Y|\O_A)}(\Z-f(\X))(\hat h_1- h_1)\{Y-h_2-\beta_0(A-h_1)\}\n\\
&&+ (1-\frac{\delta}{G_0(Y|\O_A)})\{\hat\xi(\beta_0;\O_{A})-\xi_0(\beta_0;\O_{A})\}\n\\
&\equiv&\Phi_1+\Phi_2+\Phi_3+\Phi_4.\n
\ee
For $\Phi_1$, the empirical process gives that $(\bbE_n-\bbE)(\Phi_i)=o_p(n^{-1/2})$ for $i=1,2,3,4$, which can be proved similarly to that of Lemma \ref{lemmag}. Besides, we have
\be
&&\bbE \Phi_1=\bbE(\bbE(\Phi_1|\O_T))\n\\
&=&\bbE\left\{\frac{\bbE(\delta|\O_T)}{G_0(T|\O_A)}\Big((\hat f-f)[(A-h_1)\{Y-h_2-\beta_0(A-h_1)\}-h_3+\beta_0 h_4]\Big)\right\}\n\\
&=&\bbE\Big((\hat f-f)[(A-h_1)\{Y-h_2-\beta_0(A-h_1)\}-h_3+\beta_0 h_4] \Big)\n\\
&=&\bbE\left\{\bbE\Big((\hat f-f)[(A-h_1)\{Y-h_2-\beta_0(A-h_1)\}-h_3+\beta_0 h_4]|\X \Big)\right\}\n\\
&=&\bbE\left\{(\hat f-f)\bbE\{R_AR_Y-h_3-\beta_0(R_AR_A-h_4)|\X \}  \right\}\n\\
&=&0.\n
\ee
Thus, we obtain $\bbE \Phi_1=o_p(n^{-1/2})$. Similarly, by taking the conditional expectation concerning $\O_T$, $\delta / G_0$ can be eliminated, and then by taking the conditional expectation concerning $\X$, we can obtain $\bbE \Phi_2 =  \bbE \Phi_3 =\bbE \Phi_4=0$. Consequently, we have $P_n \Phi^{'}(0;\O) = o_p(n^{-1/2})$, and $\left\|\bar{\psi}\left(\beta_0,G_0, \hat{\boldeta}\right)-\bar{\psi}\left(\beta_0,G_0, \boldeta_0\right)\right\|= \|1/n\sum_{i=1}^n \Phi^{'}(0;\O_i)+r\|=o_p(n^{-1/2})$. In fact, a similar proof as that in Lemma \ref{lemmag} can be applied by decomposing $\bar{\psi}\left(\beta_0,G_0, \hat{\boldeta}\right)-\bar{\psi}\left(\beta_0,G_0, \boldeta_0\right)$ into several parts. Each part can be controlled using empirical process techniques. The terms corresponding to $P=0$ or higher-order remainders are $o_p(1)$, and hence the same conclusion follows.

(ii). Define $\xi^{'}(\beta;\O_A) = \partial\xi(\beta;\O_A)/\partial \beta$ and $\Phi$ as follows:
\be
&&\Phi(t;\O)\n\\
&=&\psi^{'}(G_0,\boldeta_0+t(\hat{\boldeta}-\boldeta_0);\O)\n\\
&=& \frac{\delta}{G_0(Y|\O_A)}[\Z-\{f+t(\hat f-f)\}]\left(h_4+t(\hat h_4-h_4)-[A-\{h_1+t(\hat h_1-h_1)\}]^2\right)\n\\
&&+(1-\frac{\delta}{G_0(Y|\O_A)})[\xi_0^{'}(\beta_0;\O_{A})+t\{\hat\xi^{'}(\beta_0;\O_{A})-\xi_0^{'}(\beta_0;\O_{A})\}].\n
\ee
Then we can obtain $\left\|\bar{\psi^{'}}(G_0,\hat{\boldeta})-\bar{\psi^{'}}\left(G_0,\boldeta_0\right)\right\|=1/n \sum_{i=1}^n\{\Phi(1;\O_i)-\Phi(0;\O_i)\}+r = o_p(n^{-1/2})$ using the similar steps in (i). 

(iii). Since $\psi$ is a linear function of $\beta$, we have
\be
&&\sup _{\beta \in \calB}\left\|\bar{\psi}(\beta,G_0, \hat{\boldeta})-\bar{\psi}\left(\beta,G_0, \boldeta_0\right)\right\|\n\\
&=&\sup _{\beta \in \calB}\left\|\bar{\psi}(\beta,G_0, \hat{\boldeta})-\bar{\psi}(\beta_0,G_0, \hat{\boldeta})+\bar{\psi}(\beta_0,G_0, \hat{\boldeta})-\bar{\psi}(\beta_0,G_0, \boldeta_0)+\bar{\psi}(\beta_0,G_0, \boldeta_0)-\bar{\psi}\left(\beta,G_0, \boldeta_0\right)\right\|\n\\
&\leq&\sup _{\beta \in \calB}|\beta-\beta_0|\left\|\bar{\psi^{'}}(G_0, \hat{\boldeta})-\bar{\psi^{'}}\left(G_0, \boldeta_0\right)\right\|+\|\bar{\psi}(\beta_0,G_0, \hat{\boldeta})-\bar{\psi}(\beta_0,G_0, \boldeta_0)\|\n\\
&=&o_p(n^{-1/2})\n.
\ee

(iv). By Assumption \ref{condeigen}, we know that $n\bbE(\|\bar \psi(\beta_0,G_0,\boldeta_0)\|^2)/m\leq \tr(\Sigma(\beta_0,G_0,\boldeta_0))/m\leq c$. Combining the Markov inequality, we can obtain that $\|\bar \psi(\beta_0,G_0,\boldeta_0)\|=O_p(\sqrt{m/n})$. Moreover, we have
\be
&&\bbE(\psi^{'}(G_0,\boldeta_0;\O))\n\\
&=& \bbE\{\frac{\delta}{G_0(Y|\O_A)}(g^{'}(\boldeta_0;\O)-\xi_0^{'}(G_0;\O_A))+\xi_0^{'}(G_0;\O_A)\}\n\\
&=&\bbE\{\frac{\bbE(I(T\leq C)|\O_T)}{G_0(T|\O_A)}g^{'}(\boldeta_0;\O_T)\}+\bbE\{(1-\frac{\bbE(I(T\leq C)|\O_T)}{G_0(T|\O_A)})\xi_0^{'}(G_0;\O_A)  \}\n\\
&=& \bbE(g^{'}(\boldeta_0;\O_T)).\n
\ee
By the law of large numbers, we have
\be\label{psi'}
\|\bar\psi^{'}(G_0,\boldeta_0)\|&\leq& \|\bar\psi^{'}(G_0,\boldeta_0)-\bbE(\psi^{'}(G_0,\boldeta_0;\O))\|+\|\bbE(\psi^{'}(G_0,\boldeta_0;\O))\|\n\\
& = &O_p (\sqrt{m/n} )+\|\bbE(g^{'}(\boldeta_0;\O_T))\|\n\\
&=& O_p(\mu_n/\sqrt{n}).
\ee
Because $\psi$ is the linear function of $\beta$, we have 
\be\label{psibar}
\sup_{\beta\in\calB}\|\bar \psi(\beta,G_0,\boldeta_0)-\bar \psi(\beta_0,G_0,\boldeta_0)\| = \|\bar\psi^{'}(G_0,\boldeta_0)\|\sup_{\beta\in\calB}(\beta-\beta_0) = O_p(\mu_n/\sqrt{n}).
\ee
Then Lemma \ref{lemmapsi}(i), \eqref{psi'} and \eqref{psibar} gives
\bse 
&&\sup_{\beta\in\calB}\|\bar \psi(\beta,G_0,\hat\boldeta)\|\\
&=&\sup_{\beta\in\calB}\|\bar \psi(\beta,G_0,\hat\boldeta)-\bar \psi(\beta,G_0,\boldeta_0)+\bar \psi(\beta,G_0,\boldeta_0)-\bar \psi(\beta_0,G_0,\boldeta_0)+\bar \psi(\beta_0,G_0,\boldeta_0)\|\\
&\leq &\sup_{\beta\in\calB}\|\bar \psi(\beta,G_0,\hat\boldeta)-\bar \psi(\beta,G_0,\boldeta_0)\|+\sup_{\beta\in\calB}\|\bar \psi(\beta,G_0,\boldeta_0)-\bar \psi(\beta_0,G_0,\boldeta_0)\|+\|\bar \psi(\beta_0,G_0,\boldeta_0)\|\\
&\leq& o_p(n^{-1/2})+O_p(\mu_n/\sqrt{n})+O_p(\sqrt{m/n})\\
&=&O_p(\mu_n/\sqrt{n}).
\ese
For $\sup_{\beta\in\calB}\|\bar \psi(\beta_0,G_0,\hat\boldeta)\|$, the second term in the above decomposition is zero, so we have $\sup_{\beta\in\calB}\|\bar \psi(\beta_0,G_0,\hat\boldeta)\|=O_p(\sqrt{m/n})$. This completes the proof of Lemma \ref{lemmapsi}.
\end{proof}

\begin{lemma}\label{lemmapsi-G} Under Assumptions \ref{condnuispara}-\ref{condkernel}
(i). $\left\|\bar{\psi}\left(\beta_0,\hat G , \hat{\boldeta}\right)-\bar{\psi}\left(\beta_0,G_0, \hat{\boldeta}\right)\right\|=O_p(\sqrt{m}\kappa)$.

(ii). $\left\|\bar{\psi^{'}}(\hat G ,\hat{\boldeta})-\bar{\psi^{'}}\left(G_0,\hat{\boldeta}\right)\right\|=O_p(\sqrt{m}\kappa)$.

(iii). $\sup _{\beta \in \calB}\left\|\bar{\psi}(\beta,\hat G , \hat{\boldeta})-\bar{\psi}\left(\beta,G_0, \hat{\boldeta}\right)\right\|=O_p(\sqrt{m}\kappa)$.   

(iv). $\sup_{\beta\in\calB}\|\bar \psi(\beta,\hat G ,\hat\boldeta)\|=O_p(\mu_n/\sqrt{n})$.    
\end{lemma}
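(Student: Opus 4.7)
The plan is to show that replacing $G_0$ by $\hat G$ in the AIPCW moment function perturbs $\bar\psi$ only through the inverse-censoring weight $\delta/G$ and through the plug-in augmentation $\hat\xi$, and that both perturbations are linearly controlled by $\hat G-G_0$ in $L^2$. The key algebraic identity is
\[
\frac{1}{\hat G(Y|\O_A)}-\frac{1}{G_0(Y|\O_A)}=\frac{G_0(Y|\O_A)-\hat G(Y|\O_A)}{\hat G(Y|\O_A)\,G_0(Y|\O_A)},
\]
and the positivity condition in Assumption~\ref{condkernel}(i), combined with the uniform consistency of $\hat G$, keeps the denominator bounded away from zero with probability tending to one.

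For part~(i), I would write each coordinate of the difference as
\[
\bar\psi_j(\beta_0,\hat G,\hat\boldeta)-\bar\psi_j(\beta_0,G_0,\hat\boldeta)
=\frac{1}{n}\sum_{i=1}^n\delta_i\,\frac{G_0(Y_i|\O_{A,i})-\hat G(Y_i|\O_{A,i})}{\hat G(Y_i|\O_{A,i})\,G_0(Y_i|\O_{A,i})}\bigl\{g_j(\beta_0,\hat\boldeta;\O_i)-\hat\xi_j(\beta_0;\O_{A,i})\bigr\},
\]
where the bracketed factor has uniformly bounded second moment by Assumptions~\ref{condmoment} and~\ref{condnuispara}. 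Applying Cauchy--Schwarz across the sample yields
\[
|\bar\psi_j(\beta_0,\hat G,\hat\boldeta)-\bar\psi_j(\beta_0,G_0,\hat\boldeta)|
\le c\Bigl(\tfrac{1}{n}\sum_{i=1}^n(\hat G(Y_i|\O_{A,i})-G_0(Y_i|\O_{A,i}))^2\Bigr)^{1/2}\cdot O_p(1)=O_p(\kappa),
\]
where the final bound uses $\bbE[\tfrac{1}{n}\sum_i(\hat G-G_0)^2]=O(\kappa^2)$ together with Markov's inequality. Stacking the $m$ coordinates inside the Euclidean norm produces the extra $\sqrt m$ factor, giving~(i).

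Part~(ii) follows by the same argument applied to $\psi^{'}$, which has the identical AIPCW form with $g$ replaced by $\partial g/\partial\beta$ and $\xi$ by $\xi^{'}$; both replacement factors still have bounded second moments under Assumptions~\ref{condnuispara} and~\ref{condeigen}. Part~(iii) is immediate from~(i) and~(ii) by writing
\[
\bar\psi(\beta,\hat G,\hat\boldeta)-\bar\psi(\beta,G_0,\hat\boldeta)
=\bigl[\bar\psi(\beta_0,\hat G,\hat\boldeta)-\bar\psi(\beta_0,G_0,\hat\boldeta)\bigr]+(\beta-\beta_0)\bigl[\bar\psi^{'}(\hat G,\hat\boldeta)-\bar\psi^{'}(G_0,\hat\boldeta)\bigr],
\]
using the linearity of $\psi$ in $\beta$ and the boundedness of $\calB$. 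Part~(iv) follows from~(iii) combined with Lemma~\ref{lemmapsi}(iv) and Assumption~\ref{condweak}(ii): since $\sqrt{nm}\,\kappa/\mu_n=o(1)$ implies $\sqrt m\,\kappa=o(\mu_n/\sqrt n)$, the $\hat G$-induced error is dominated by the $O_p(\mu_n/\sqrt n)$ rate already established for $\bar\psi(\cdot,G_0,\hat\boldeta)$.

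The main obstacle I anticipate is that $\hat G$ is estimated from the same sample used in the empirical averages defining $\bar\psi$, so that $\hat G(Y_i|\O_{A,i})$ is not independent of the other summands in the same sum and Markov's inequality on $\bbE[\tfrac{1}{n}\sum_i(\hat G-G_0)^2]$ is not directly applicable. Resolving this requires invoking the uniform convergence of the local Kaplan--Meier estimator under Assumption~\ref{condkernel}, which yields a uniform-in-$(y,\O_A)$ rate of essentially the same order as $\kappa$ (up to logarithmic factors), so that the empirical $L^2$-norm of $\hat G-G_0$ is controlled on a high-probability event of the form $\{\|\hat G-G_0\|_\infty\le c\kappa\log n\}$. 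A secondary issue is that $\hat\xi$ itself depends on $\hat G$ through the local Kaplan--Meier weights; the bandwidth and higher-order kernel conditions in Assumption~\ref{condkernel}(ii)--(iii) ensure that $\|\hat\xi-\xi_0\|_{L^2}$ is of smaller order than $\kappa$, so that the $\hat\xi$ contribution only adds higher-order terms and does not affect the stated $\sqrt m\,\kappa$ rate.
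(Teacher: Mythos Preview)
Your proposal is correct and follows essentially the same route as the paper. The paper's proof of (i) is a one-line appeal to Lipschitz continuity of $\psi$ in $G$ (using the positivity bound on $G_0$) to get $\|\bar\psi(\beta_0,\hat G,\hat\boldeta)-\bar\psi(\beta_0,G_0,\hat\boldeta)\|\le c\|\hat G-G_0\|_{L^2}=O_p(\sqrt m\,\kappa)$; you unpack this into the explicit identity $1/\hat G-1/G_0=(G_0-\hat G)/(\hat G G_0)$ and a coordinatewise Cauchy--Schwarz bound, which is exactly what underlies the Lipschitz claim. Parts (ii)--(iv) match the paper's argument almost verbatim: (ii) and (iii) are declared ``similar to (i)'' and use linearity in $\beta$, and (iv) combines (iii) with Lemma~\ref{lemmapsi}(iv) and the rate condition $\sqrt{nm}\,\kappa/\mu_n=o(1)$ from Assumption~\ref{condweak}(ii), just as you do. The technical concerns you flag at the end (same-sample dependence of $\hat G$, and $\hat\xi$'s internal dependence on $\hat G$) are glossed over in the paper's terse proof, so your discussion there is more careful than what appears in the paper rather than a deviation from its strategy.
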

\begin{proof}
(i) Since $G(Y|\O_A)\geq c>0$ for some positive constant $c$, and the function class $\boldeta$ is bounded, the moment function $\psi$ is Lipschitz continuous with respect to $G$. Because $\psi$ is a $m$-dimensional, we have 
$$
\left\|\bar{\psi}(\beta_0,\hat G , \hat{\boldeta})-\bar{\psi}\left(\beta_0,G_0, \hat{\boldeta}\right)\right\|\leq c \|\hat G-G \|_{L^2}=O_p(\sqrt{m}\kappa). 
$$
The difference between Lemma \ref{lemmapsi-G} and Lemma \ref{lemmapsi} is that $\psi$ is Neyman orthogonal to $\boldeta$ but not to $G$. As a result, the rate is not $o_p(n^{-1/2})$, but rather $O_p(\sqrt{m}\kappa)$. 

(ii), (iii) The proofs are similar to (i). 

(iv) According to Lemma \ref{lemmapsi} (iv) and Lemma \ref{lemmapsi-G} (iii), we have
\bse 
&&\sup_{\beta\in\calB}\|\bar \psi(\beta,\hat G ,\hat\boldeta)\|\\
&=&\sup_{\beta\in\calB}\|\bar \psi(\beta,\hat G ,\hat\boldeta)-\bar \psi(\beta,G_0,\hat\boldeta)+\bar \psi(\beta,G_0,\hat\boldeta)\|\\
&\leq &\sup_{\beta\in\calB}\|\bar \psi(\beta,\hat G ,\hat\boldeta)-\bar \psi(\beta,G_0,\hat\boldeta)\|+\sup_{\beta\in\calB}\|\bar \psi(\beta,G_0,\hat\boldeta)\|\\
&\leq& O_p(\sqrt{m}\kappa)+O_p(\mu_n/\sqrt{n})\\
&=&O_p(\mu_n/\sqrt{n}).
\ese
the last line comes from Assumption \ref{condweak} (ii) : $\sqrt{nm}\kappa/\mu_n=o(1)$, which implies $\sqrt{m}\kappa=o(\mu_n/\sqrt{n})$ and $\mu_n/\sqrt{n}$ dominates $\sqrt{m}\kappa$.
\end{proof}
\begin{lemma}\label{lemmasigma}
Under Assumptions \ref{condnuispara}-\ref{condeigen},

(i). $\left\|\bar{\Sigma}\left(\beta_0, G_0,\hat{\boldeta}\right)-\Sigma_0\right\|=O_p(\sqrt{m\log(m)/n})$;

(ii). $\left\|n^{-1} \sum_{i=1}^n \psi(\beta_0, G_0, \hat{\boldeta};\O_i) \psi^{'}(G_0,\hat{\boldeta};\O_i)\tp-\bbE\left\{\psi(\beta_0,G_0,\boldeta_0;\O) \psi^{'}(G_0,\boldeta_0;\O)\tp\right\}\right\|\\=O_p(\sqrt{m\log(m)/n})$;

(iii). $\left\|n^{-1} \sum_{i=1}^n \psi^{'}(G_0,\hat{\boldeta};\O_i) \psi^{'}(G_0,\hat{\boldeta};\O_i)\tp-\bbE\left\{\psi^{'}(G_0,\boldeta_0;\O) \psi^{'}(G_0,\boldeta_0;\O)\tp\right\}\right\|=\\
O_p(\sqrt{m\log(m)/n})$;

(iv). $\sup _{\beta \in \calB}\left\|\bar{\Sigma}(\beta,G_0, \hat{\boldeta})-\Sigma\left(\beta,G_0, \boldeta_0\right)\right\|=O_p(\sqrt{m\log(m)/n})$;

(v). $\sup _{\beta \in \calB}\left\|n^{-1} \sum_{i=1}^n \psi(\beta,G_0, \hat{\boldeta};\O_i)\psi^{'}(G_0,\hat{\boldeta};\O_i)\tp-\bbE\left\{\psi(\beta,G_0, \boldeta_0;\O) \psi^{'}(G_0,\boldeta_0;\O)\tp\right\}\right\|=O_p(\sqrt{m\log(m)/n})$.    
\end{lemma}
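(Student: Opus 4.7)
The plan is to mirror the strategy used for Lemma \ref{lemmaomega} (the uncensored analog), while handling the AIPCW structure of $\psi$ at the fixed true censoring distribution $G_0$. Note crucially that $G$ is held at the truth in this lemma, so the non-Neyman-orthogonality of $G$ does \emph{not} enter here; only the Neyman-orthogonal nuisance $\boldeta$ is perturbed. First I would write the standard ``replacement + concentration'' split
\begin{equation*}
\|\bar\Sigma(\beta_0,G_0,\hat\boldeta)-\Sigma_0\|
\;\le\;
\|\bar\Sigma(\beta_0,G_0,\hat\boldeta)-\bar\Sigma(\beta_0,G_0,\boldeta_0)\|
+\|\bar\Sigma(\beta_0,G_0,\boldeta_0)-\Sigma_0\|,
\end{equation*}
and treat the two pieces separately.

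For the \emph{concentration} term, since each summand $\psi(\beta_0,G_0,\boldeta_0;\O_i)\psi(\beta_0,G_0,\boldeta_0;\O_i)\tp$ is an $m\times m$ i.i.d.\ positive semidefinite matrix with bounded operator norm (by the positivity of $G_0$ in Assumption \ref{condkernel}(i) and the fourth-moment bound in Assumption \ref{condmoment}(ii), together with the bounded-eigenvalue Assumption \ref{condeigen}), I would invoke the same matrix Bernstein / truncation argument used in the proof of Lemma \ref{lemmaomega}(i) (Theorem 1 of \citet{tropp2016}; Lemma 6 of \citet{ye2024}). This yields the $O_p(\sqrt{m\log m/n})$ rate on the second term.

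For the \emph{replacement} term, I would bound the operator norm by the trace (valid since the difference of outer products, while not PSD, can be controlled by summing the diagonal contributions):
\begin{equation*}
\|\bar\Sigma(\beta_0,G_0,\hat\boldeta)-\bar\Sigma(\beta_0,G_0,\boldeta_0)\|
\;\le\;\sum_{j=1}^m\frac{1}{n}\sum_{i=1}^n\bigl\{\psi_j(\beta_0,G_0,\hat\boldeta;\O_i)^2-\psi_j(\beta_0,G_0,\boldeta_0;\O_i)^2\bigr\}.
\end{equation*}
The per-coordinate difference factors as (sum)$\times$(difference) of $\psi_j$, so it suffices to show the product structure $\tfrac{1}{n}\sum_i \psi_j(\beta_0,G_0,\hat\boldeta;\O_i)^2-\psi_j(\beta_0,G_0,\boldeta_0;\O_i)^2=o_p(n^{-1/2}m^{-1/2})$ uniformly in $j$. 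This mimics the argument for $B_1,B_2$ in Lemma \ref{lemmaomega}(i), except that each $\psi_j$ carries the extra factors $\delta/G_0(Y|\O_A)$ and the augmentation $(1-\delta/G_0)\xi_0$. Because $G_0$ is bounded below, the Lipschitz constants in $\boldeta$ remain uniformly bounded. I would then (a) condition on $\O_T$ to remove the $\delta/G_0$ factor when taking population expectations, which is exactly the device used in Lemma \ref{lemmapsi}(i); (b) invoke the Neyman orthogonality of $\psi$ in $\boldeta$ (Proposition \ref{lemmaneyman}) to kill the leading expectation; and (c) apply the localization-based maximal inequality with the DNN entropy bound \eqref{maxineq} together with the convergence rate in Assumption \ref{condDNNs} to control the centered empirical process by $o_p(n^{-1/2}m^{-1/2})$. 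Summing over $j$ gives $o_p(\sqrt{m/n})$, which is dominated by $O_p(\sqrt{m\log m/n})$ from the concentration piece, establishing (i).

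Parts (ii)--(iii) are proved by the identical template with $\psi\psi\tp$ replaced by $\psi \psi'{}\tp$ or $\psi'\psi'{}\tp$; the Neyman orthogonality of $\boldeta$ for $\psi'$ can be verified exactly as in Proposition \ref{lemmaneyman} since the derivative in $\beta$ does not introduce $G$-dependence in the nuisance score. Parts (iv)--(v) follow by the linearity of $\psi$ in $\beta$: expanding $\bar\Sigma(\beta,G_0,\hat\boldeta)-\Sigma(\beta,G_0,\boldeta_0)$ around $\beta_0$ reduces the supremum over $\beta\in\calB$ to a linear combination of (i)--(iii) with coefficients $(\beta-\beta_0)^k$ for $k=0,1,2$, which are bounded on the compact parameter space. \textbf{Main obstacle.} The delicate step is (c): controlling the uniform empirical process over a shrinking neighborhood of $\boldeta_0$ when the effective function class has $m$-dimensional outputs and the nuisance estimators are trained by ReLU DNNs. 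The per-coordinate localization bound in \eqref{pn-p} needs to deliver $o(m^{-1/2})$ \emph{uniformly} across $j=1,\dots,m$, and I would address this by a union bound combined with the DNN covering-number estimate, which is affordable thanks to the $m^{-1/4}$ factor baked into Assumption \ref{condDNNs}.
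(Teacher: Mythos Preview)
Your overall plan coincides with the paper's: the same replacement--plus--concentration split, the same matrix-Bernstein step (Tropp) for $\|\bar\Sigma(\beta_0,G_0,\boldeta_0)-\Sigma_0\|$, the same trace-then-coordinate reduction for the replacement piece, and the same extension to (ii)--(v) via linearity of $\psi$ in $\beta$.

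The one genuine gap is your step~(b). Neyman orthogonality as stated in Proposition~\ref{lemmaneyman} is a \emph{first-moment} property: it gives $\bbE\bigl[D_{\boldeta}\psi_j(\beta_0,G_0,\boldeta_0)(\hat\boldeta-\boldeta_0)\bigr]=0$. The leading term you need to kill for the squared difference is $2\,\bbE\bigl[\psi_j(\beta_0,G_0,\boldeta_0)\,D_{\boldeta}\psi_j(\beta_0,G_0,\boldeta_0)(\hat\boldeta-\boldeta_0)\bigr]$, and the extra $\psi_j$ factor destroys the conditional-expectation cancellation that Proposition~\ref{lemmaneyman} relies on. So orthogonality as a black box does not do the job here.

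The paper does not invoke Neyman orthogonality for the second-moment replacement. Instead it expands $\psi_j^2$ along the AIPCW structure into three pieces,
\[
E_1\;\propto\;\tfrac{\delta}{G_0}\{g_j^2(\hat\boldeta)-g_j^2(\boldeta_0)\},\qquad
E_2\;\propto\;(1-\tfrac{\delta}{G_0})\{\hat\xi_j^2-\xi_{0j}^2\},\qquad
E_3\;\propto\;\tfrac{\delta}{G_0}(1-\tfrac{\delta}{G_0})\{\text{cross}\},
\]
and handles them separately. For $E_2$ and $E_3$, the population expectation is \emph{exactly} zero because $\bbE[1-\delta/G_0(T|\O_A)\mid\O_T]=0$ and the $\xi$ factors are $\O_A$-measurable; this is your device~(a), not orthogonality, and it already suffices with no step~(b). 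For $E_1$, conditioning on $\O_T$ removes the weight and reduces the population piece to $\bbE[g_j^2(\hat\boldeta)-g_j^2(\boldeta_0)]$, which is controlled by recycling the explicit $B_1,B_2$ expansion from Lemma~\ref{lemmaomega}(i) (again via the specific conditional mean identities $\bbE[Z_j-f_j\mid\X]=0$, $\bbE[R_A\mid\Z,\X]=0$, etc., together with the DNN rate), not by first-moment orthogonality of $\psi$. Replacing your (b) with ``decompose via AIPCW, apply (a) to $E_2,E_3$, and invoke Lemma~\ref{lemmaomega}(i) for $E_1$'' closes the gap and brings your argument in line with the paper.
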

\begin{proof}
(i). Let $\psi_j(\beta,G,\boldeta;\O_i)$ and $\xi_{0j}(\beta;\O_{A,i})$ be the $j$th element of $\psi(\beta,G,\boldeta;\O_i)$ and $\xi_0(\beta;\O_A)$. The matrix norm is smaller than the matrix trace:
\be
&&\left\|\bar{\Sigma}\left(\beta_0,G_0, \hat{\boldeta}\right)-\Sigma_0\right\|\n\\
&\leq& \left\|\bar{\Sigma}\left(\beta_0,G_0, \hat{\boldeta}\right)-\bar{\Sigma}\left(\beta_0,G_0, {\boldeta}_0\right)\right\|+\left\|\bar{\Sigma}\left(\beta_0,G_0, {\boldeta}_0\right)-\Sigma_0\right\|\n\\
&\leq&\tr(\bar{\Sigma}\left(\beta_0,G_0, \hat{\boldeta}\right)-\bar{\Sigma}\left(\beta_0,G_0, {\boldeta}_0\right))+\left\|\bar{\Sigma}\left(\beta_0,G_0, {\boldeta}_0\right)-\Sigma_0\right\|\n\\
&=&\sum_{j=1}^{m}\frac{1}{n}\sum_{i=1}^n\{\psi_j(\beta_0,G_0,\hat\boldeta;\O_i)^2-\psi_j(\beta_0,G_0,\boldeta_0;\O_i)^2\}+\left\|\bar{\Sigma}\left(\beta_0,G_0, {\boldeta}_0\right)-\Sigma_0\right\|\n.
\ee 
Theorem 1 in \cite{tropp2016} gives that 
\be \label{eqtropp}
\left\|\bar{\Sigma}\left(\beta_0,G_0, {\boldeta}_0\right)-\Sigma_0\right\|=O_p(\sqrt{m\log(m)/n}).
\ee

Next, we will prove $1/n\sum_{i=1}^n\{\psi_j(\beta_0,G_0,\hat\boldeta;\O_i)^2-\psi_j(\beta_0,G_0,\boldeta_0;\O_i)^2\}=o_p(\sqrt{m/n})$.
\be     
&&\frac{1}{n}\sum_{i=1}^n\{\psi_j(\beta_0,G_0,\hat\boldeta;\O_i)^2-\psi_j(\beta_0,G_0,\boldeta_0;\O_i)^2\}\n\\
&=&\frac{1}{n}\sum_{i=1}^n \frac{\delta_i}{G_0(Y_i|\O_{A,i})}\{g_j(\beta_0,\hat\boldeta;\O_i)^2-g_j(\beta_0,\boldeta_0;\O_i)^2 \}  \n\\
&&+\frac{1}{n}\sum_{i=1}^n (1-\frac{\delta_i}{G_0(Y_i|\O_{A,i})})\{\xi_{0j}(\beta_0;\O_{A,i})^2-\xi_{0j}(\beta_0;\O_i)^2 \} \n\\
&&-\frac{2}{n}\sum_{i=1}^n \frac{\delta_i}{G_0(Y_i|\O_{A,i})}(1-\frac{\delta_i}{G_0(Y_i|\O_{A,i})})\{g_j(\beta_0,\hat\boldeta;\O_{A,i})\xi_{0j}(\beta_0;\O_{A,i})\n\\
&&\quad\quad\quad\quad-g_j(\beta_0,\boldeta_0;\O_i)\xi_{0j}(\beta_0;\O_{A,i}) \} \n\\
&\equiv&E_1+E_2+E_3\n.
\ee
For $E_1$, Lemma \ref{lemmaomega} (i) gives that $\bbE\{g_j(\beta_0,\hat\boldeta;\O_i)^2-g_j(\beta_0,\boldeta_0;\O_i)^2 \}=o_p(\sqrt{m/n})$, so
\bse
&&\frac{1}{n}\sum_{i=1}^n \frac{\delta_i}{G_0(Y_i|\O_{A,i})}\{g_j(\beta_0,\hat\boldeta;\O_i)^2-g_j(\beta_0,\boldeta_0;\O_i)^2 \}\\
&=&(\bbE_n-\bbE)\left[\frac{\delta}{G_0(Y|\O_A)}\{g_j(\beta_0,\hat\boldeta;\O)^2-g_j(\beta_0,\boldeta_0;\O)^2 \}\right]\\
&&+\bbE\left[\frac{\delta}{G_0(Y|\O_A)}\{g_j(\beta_0,\hat\boldeta;\O)^2-g_j(\beta_0,\boldeta_0;\O)^2 \}\right]\\
&=&\bbE\left(\bbE\left[\frac{\delta}{G_0(T|\O_A)}\{g_j(\beta_0,\hat\boldeta;\O_T)^2-g_j(\beta_0,\boldeta_0;\O_T)^2 \}|\O_T\right]\right)+o_p(\sqrt{m/n})\\
&=&\bbE\left(\{g_j(\beta_0,\hat\boldeta;\O_T)^2-g_j(\beta_0,\boldeta_0;\O_T)^2 \}\bbE\left[\frac{\delta}{G_0(T|X)}|\O_T\right]\right)+o_p(\sqrt{m/n})\\
&=&o_p(\sqrt{m/n}).
\ese
For $E_2$, the empirical process gives that 
$$
(\bbE_n-\bbE)\left[(1-\frac{\delta}{G_0(Y|\O_A)})\{\hat\xi_j(\beta_0;\O_A)^2-\xi_{0j}(\beta_0;\O_A)^2 \}\right]=o_p(\sqrt{m/n}).
$$
Also we have
\bse
&&\bbE\left[(1-\frac{\delta}{G_0(Y|\O_A)})\{\hat\xi_j(\beta_0;\O_A)^2-\xi_{0j}(\beta_0;\O_A)^2 \}\right]\\
&=&\bbE\left(\bbE\left[(1-\frac{\delta}{G_0(T|\O_A)})\{\hat\xi_j(\beta_0;\O_A)^2-\xi_{0j}(\beta_0;\O_A)^2 \}|\O_T\right]\right)\\
&=&\bbE\left(\{\hat\xi_j(\beta_0;\O_A)^2-\xi_{0j}(\beta_0;\O_A)^2 \}\bbE\left[1-\frac{\delta}{G_0(T|\O_A)}|\O_T\right]\right)\\
&=&0.
\ese
So $E_2=o_p(\sqrt{m/n})$. Similarly, we can prove $E_3=o_p(\sqrt{m/n})$. Thus $1/n\sum_{i=1}^n\{\psi_j(\beta_0,G_0,\hat\boldeta;\O_i)^2-\psi_j(\beta_0,G_0,\boldeta_0;\O_i)^2\}=o_p(\sqrt{m/n})$, and together with \eqref{eqtropp}, we obtain $\left\|\bar{\Sigma}\left(\beta_0,G_0, \hat{\boldeta}\right)-\Sigma_0\right\|=O_p(\sqrt{m\log(m)/n})$.

(ii)-(v). The proof is similar to (i) and Lemma \ref{lemmaomega}, and thus the details are omitted here.
\end{proof}

\begin{lemma} \label{lemmalambdacensor}
Under Assumptions \ref{condnuispara}-\ref{condkernel}, we have 
$\sup_{\beta\in \calB}\|\blambda(\beta,\hat G,\hat\boldeta)\|=O_p(\mu_n/\sqrt{n})$,\\ $\sup_{\beta\in \calB}\|\blambda(\beta,G_0,\boldeta_0)\|=O_p(\mu_n/\sqrt{n})$, $\|\blambda(\beta_0,G_0,\boldeta_0)\|=O_p(\sqrt{m/n})$. 
\end{lemma}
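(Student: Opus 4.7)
The plan is to mirror the uncensored argument in Lemma \ref{lemmalambda}(i), replacing $g$ with $\psi$ and the nuisance pair $\hat\boldeta$ with $(\hat G,\hat\boldeta)$. The essential ingredients carry over: concavity and smoothness of $\rho$ are unchanged, the positivity condition in Assumption \ref{condkernel}(i) keeps both $G_0$ and (with probability tending to one) $\hat G$ bounded away from zero so that the IPCW weights $\delta/G(Y|\O_A)$ in $\psi$ are uniformly bounded, and the eigenvalue bounds needed for the quadratic lower bound on $P$ follow from Lemma \ref{lemmasigma} combined with the censoring-nuisance rate in Lemma \ref{lemmapsi-G}.

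Concretely, I would set $\tau^{C} = \max_{i\le n}\sup_{\beta\in\calB}\|\psi(\beta,\hat G,\hat\boldeta;\O_i)\|$ and, using Assumption \ref{condmoment}(iii) together with the bounded IPCW weights and the boundedness of the augmentation term $\xi_0(\beta;\O_A)$, choose $\tau_n$ with $\tau_n\tau^{C} = o_p(1)$ and $\mu_n/\sqrt{n} = o(\tau_n)$. On the ball $L_n = \{\blambda:\|\blambda\|\le\tau_n\}$ we then have $\sup_{\blambda\in L_n,i,\beta}|\blambda\tp\psi(\beta,\hat G,\hat\boldeta;\O_i)|\le\tau_n\tau^{C}=o_p(1)$, so $L_n\subset L(\beta,\hat G,\hat\boldeta)$ and $\rho''$ stays bounded in $[-c^{-1},-c]$. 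A second-order Taylor expansion of $P(\blambda,\beta,\hat G,\hat\boldeta) = n^{-1}\sum_i\rho(\blambda\tp\psi(\beta,\hat G,\hat\boldeta;\O_i))$ around $\blambda=0$ gives, for the constrained maximizer $\blambda^{*}$ on $L_n$,
\[
0 \;=\; P(0)\;\le\; P(\blambda^{*})\;\le\; \|\blambda^{*}\|\,\|\bar\psi(\beta,\hat G,\hat\boldeta)\| \;-\; c\,\|\blambda^{*}\|^{2},
\]
so $\|\blambda^{*}\|\lesssim \|\bar\psi(\beta,\hat G,\hat\boldeta)\| = O_p(\mu_n/\sqrt{n}) = o_p(\tau_n)$ by Lemma \ref{lemmapsi-G}(iv). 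Since $\blambda^{*}$ is interior to $L_n$ and $P$ is concave in $\blambda$, this local maximum is the global maximum on $L(\beta,\hat G,\hat\boldeta)$, so $\blambda^{*}=\blambda(\beta,\hat G,\hat\boldeta)$ and the first claim follows. The second claim follows from the identical argument with $(G_0,\boldeta_0)$ replacing $(\hat G,\hat\boldeta)$. For the sharp third bound I would, at $\beta=\beta_0$, invoke instead $\|\bar\psi(\beta_0,G_0,\boldeta_0)\|=O_p(\sqrt{m/n})$, which is obtained directly from $\bbE\|\bar\psi(\beta_0,G_0,\boldeta_0)\|^{2}\le \tr(\Sigma_0)/n\lesssim m/n$ (Assumption \ref{condeigen}) and Markov's inequality, and plug it into the same quadratic estimate.

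The main technical obstacle is that everything is uniform in $\beta$ while $\hat G$ is random. I would handle this by restricting to the high-probability event $\{\inf_{\O_A,\,y\le b_2}\hat G(y|\O_A)\ge c/2\}$, which holds under Assumptions \ref{condkernel}(i)--(iii) by uniform convergence of the local Kaplan--Meier estimator; on this event the IPCW weights are uniformly bounded and both the envelope bound on $\tau^{C}$ and the lower eigenvalue bound on $\bar\Sigma(\beta,\hat G,\hat\boldeta)$ go through. The perturbation from replacing $G_0$ by $\hat G$ inside $\bar\psi$ and $\bar\Sigma$ contributes only an additional $O_p(\sqrt{m}\,\kappa)$ term (Lemma \ref{lemmapsi-G}), which is dominated by $\mu_n/\sqrt{n}$ under Assumption \ref{condweak}(ii) and therefore does not disturb any of the leading-order rates above.
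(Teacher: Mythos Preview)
Your proposal is correct and follows essentially the same route as the paper: mirror the argument of Lemma~\ref{lemmalambda}(i) with $\psi$ in place of $g$, feed in $\sup_{\beta}\|\bar\psi(\beta,\hat G,\hat\boldeta)\|=O_p(\mu_n/\sqrt{n})$ from Lemma~\ref{lemmapsi-G}(iv), and use $\|\bar\psi(\beta_0,G_0,\boldeta_0)\|=O_p(\sqrt{m/n})$ for the sharp bound at $\beta_0$. If anything you are slightly more careful than the paper, which simply says ``replace $g$ by $\psi$''; your explicit handling of the high-probability event $\{\hat G\ge c/2\}$ and the eigenvalue stability of $\bar\Sigma(\beta,\hat G,\hat\boldeta)$ via Lemma~\ref{lemmasigma} plus the $O_p(\sqrt{m}\,\kappa)$ perturbation are exactly the ingredients that make the transfer rigorous.
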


\begin{proof}
The proof is similar to that of Lemma \ref{lemmalambda} (i). Just by replace $g(\beta,\boldeta)$ to $\psi(\beta,G,\boldeta)$, we can obtain an inequality analogous to \eqref{eqlambda}:
$$\sup_{\beta\in \calB}\|\blambda(\beta,\hat G ,\hat\boldeta)\|\leq c\sup_{\beta\in \calB}\|\bar \psi(\beta,\hat G ,\hat\boldeta)\|=O_p(\frac{\mu_n}{\sqrt{n}}).$$
Similarly, we can obtain $\sup_{\beta\in \calB}\|\blambda(\beta,G_0,\boldeta_0)\|=O_p(\mu_n/\sqrt{n})$, $\|\blambda(\beta_0,G_0,\boldeta_0)\|=O_p(\sqrt{m/n})$ by $\|\bar\psi(\beta,G_0,\boldeta_0)\|=O_p(\mu_n/\sqrt{n})$ and $\|\bar\psi(\beta_0,G_0,\boldeta_0)\|=O_p(\sqrt{m/n})$, and $\sup_{\beta\in \calB}\|\blambda(\beta,\hat G ,\hat\boldeta)-\blambda(\beta,G_0 ,\hat\boldeta_0) \|=o_p(\mu_n/\sqrt{n})$.

Also, we can derive the expansion of $\blambda$ by following similar steps as in equation \eqref{Taylor_lambda}. 
\be
-\bar \psi(\beta_0,\hat G ,\hat\boldeta)-\bar\Sigma(\beta_0,\hat G ,\boldeta)\blambda(\beta_0,\hat G ,\hat\boldeta)+R=0,\n
\ee
where the remainder term $R=1/(2n)\sum_{i=1}^n\rho^{'''}(\bar x)(\blambda(\beta_0,\hat G ,\hat\boldeta)\tp \psi(\beta_0,\hat G ,\hat\boldeta;\O_i))^2\psi(\beta_0,\hat G ,\hat\boldeta;\O_i)$. According to Assumption \ref{condmoment} and \ref{condgel}, we have
\be
&&\|R\|\n\\
&\leq& c\sup_{1\leq i\leq n}\|\psi(\beta_0,\hat G ,\hat\boldeta;\O_i)\|\|\blambda(\beta_0,\hat G ,\hat\boldeta)\|^2\|\bar\Sigma(\beta_0,\hat G ,\hat\boldeta)\|\n\\
&=&O_p(n^{1/\gamma}\{\bbE(\|\psi(\beta_0,\hat G ,\hat\boldeta;\O)\|^\gamma)\}^{1/\gamma})O_p(\mu_n^2/n)\n\\
&=&o_p(\mu_n/\sqrt{n}).
\ee
So we have
\be  \label{R1}
\blambda(\beta_0,\hat G ,\hat\boldeta)=(-\bar \psi(\beta_0,\hat G ,\hat\boldeta)+R)\bar\Sigma(\beta_0,\hat G ,\hat\boldeta)^{-1}.
\ee  
Then based on Lemmas \ref{lemmapsi}-\ref{lemmapsi-G} and Assumption \ref{condweak}, $\|\blambda(\beta_0,\hat G ,\hat\boldeta)-\blambda(\beta_0, G_0 ,\boldeta_0)\|$ becomes
\be
&&\|\blambda(\beta_0,\hat G ,\hat\boldeta)-\blambda(\beta_0, G_0 ,\boldeta_0)\|\n\\
&=&\| \bar\psi(\beta_0,G_0 ,\boldeta_0)\bar\Sigma(\beta_0,G_0,\boldeta_0)-\bar\psi(\beta_0,\hat G ,\hat\boldeta)\bar\Sigma(\beta_0,\hat G,\hat\boldeta)+o_p(\mu_n/\sqrt{n})\|\n\\
&=&\|\bar\psi(\beta_0,G_0 ,\boldeta_0)-\bar\psi(\beta_0,\hat G ,\hat\boldeta)\|\|\bar\Sigma(\beta_0,G_0,\boldeta_0)\|\n\\
&&+\|\bar\psi(\beta_0,\hat G ,\hat\boldeta)\|\|\bar\Sigma(\beta_0,\hat G,\hat\boldeta)-\bar\Sigma(\beta_0,G_0,\boldeta_0)\|\n\\
&=&O_p(\sqrt{m}\kappa)O_p(1)+O_p(\mu_n\sqrt{n})o_p(1)\n\\
&=&o_p(\mu_n\sqrt{n}).\n
\ee
\end{proof}
\begin{lemma} \label{lemmaqc}
Let $\tilde Q(\beta,G,\boldeta) = \tilde \psi(\beta,G,\boldeta)\tp \Sigma(\beta,G,\boldeta)^{-1}\tilde \psi(\beta,G,\boldeta)\tp/2+m/(2n)$. Under Assumptions \ref{condnuispara}-\ref{condgel}, $$\sup_{\beta\in \calB}|\tilde Q(\beta,G_0,\boldeta_0)-\hat Q(\beta,\hat G ,\hat\boldeta)|=o_p(\frac{\mu_n^2}{n}).$$
\end{lemma}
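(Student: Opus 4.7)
The plan is to mirror the argument of Lemma \ref{lemmaq}, replacing $(\bar g, \bar\Omega)$ with $(\bar\psi, \bar\Sigma)$ and carefully tracking the extra error from the plug-in of the non-Neyman-orthogonal nuisance $\hat G$. First, since $\rho$ is three-times continuously differentiable with $\rho(0)=0$, $\rho'(0)=\rho''(0)=-1$, I expand $\rho(\blambda(\beta,\hat G,\hat\boldeta)\tp \psi(\beta,\hat G,\hat\boldeta;\O_i))$ around $0$ and average, obtaining
\[
\hat Q(\beta,\hat G,\hat\boldeta)
= -\blambda(\beta,\hat G,\hat\boldeta)\tp \bar\psi(\beta,\hat G,\hat\boldeta)
- \tfrac{1}{2}\blambda(\beta,\hat G,\hat\boldeta)\tp \bar\Sigma(\beta,\hat G,\hat\boldeta)\blambda(\beta,\hat G,\hat\boldeta) + r,
\]
where the cubic remainder $r$ is bounded by $c\,\|\blambda(\beta,\hat G,\hat\boldeta)\|^2\|\bar\Sigma(\beta,\hat G,\hat\boldeta)\|\max_i\|\psi(\beta,\hat G,\hat\boldeta;\O_i)\|$. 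Using $\sup_{\beta\in\calB}\|\blambda(\beta,\hat G,\hat\boldeta)\| = O_p(\mu_n/\sqrt{n})$ from Lemma~\ref{lemmalambdacensor}, $\|\bar\Sigma(\beta,\hat G,\hat\boldeta)\| = O_p(1)$ from Lemma~\ref{lemmasigma}, and Assumption~\ref{condmoment}(iii) to control $\max_i\|\psi\|$, the remainder is $o_p(\mu_n^2/n)$ uniformly in $\beta$.

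Second, I substitute the first-order expansion of $\blambda$ derived in \eqref{R1} of Lemma~\ref{lemmalambdacensor}, namely $\blambda(\beta,\hat G,\hat\boldeta) = -\bar\Sigma(\beta,\hat G,\hat\boldeta)^{-1}\bar\psi(\beta,\hat G,\hat\boldeta) + o_p(\mu_n/\sqrt{n})$, into the display above. After cancellation, this yields
\[
\hat Q(\beta,\hat G,\hat\boldeta) = \tfrac{1}{2}\bar\psi(\beta,\hat G,\hat\boldeta)\tp \bar\Sigma(\beta,\hat G,\hat\boldeta)^{-1} \bar\psi(\beta,\hat G,\hat\boldeta) + o_p(\mu_n^2/n)
\]
uniformly in $\beta \in \calB$. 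It now suffices to show that
$\sup_{\beta\in\calB}\big|\tfrac{1}{2}\bar\psi(\beta,\hat G,\hat\boldeta)\tp \bar\Sigma(\beta,\hat G,\hat\boldeta)^{-1}\bar\psi(\beta,\hat G,\hat\boldeta) - \tilde Q(\beta,G_0,\boldeta_0)\big| = o_p(\mu_n^2/n)$, where the centering $m/(2n)$ inside $\tilde Q$ accounts for the bias between the sample quadratic form and its population counterpart. I reduce this comparison by replacing $(\hat G,\hat\boldeta)$ with $(G_0,\boldeta_0)$ in two steps, invoking Lemma~\ref{lemmapsi-G} to bound $\|\bar\psi(\beta,\hat G,\hat\boldeta) - \bar\psi(\beta,G_0,\hat\boldeta)\| = O_p(\sqrt{m}\kappa)$ and Lemma~\ref{lemmapsi}(iii) to bound $\|\bar\psi(\beta,G_0,\hat\boldeta)-\bar\psi(\beta,G_0,\boldeta_0)\|=o_p(n^{-1/2})$, together with the analogous bounds for $\bar\Sigma$ from Lemmas~\ref{lemmasigma} and the uniform control on $\sup_{\beta}\|\bar\psi(\beta,G_0,\boldeta_0)\| = O_p(\mu_n/\sqrt{n})$ from Lemma~\ref{lemmapsi}(iv). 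The resulting bound uses Assumption~\ref{condweak}(ii), so that $\sqrt{m}\kappa = o(\mu_n/\sqrt{n})$, giving cross terms of order $O_p(\mu_n/\sqrt{n})\cdot o_p(\mu_n/\sqrt{n}) = o_p(\mu_n^2/n)$. The final step matches the resulting expression against $\tilde Q(\beta,G_0,\boldeta_0)$ by appealing to the analogue of Lemma~7 in \citet{ye2024}, applied here to the AIPCW moment $\psi$ instead of $g$.

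The main obstacle is controlling the propagation of the non-orthogonal plug-in error $\hat G - G_0$ through the quadratic form $\bar\psi\tp\bar\Sigma^{-1}\bar\psi$ uniformly in $\beta$. Unlike the orthogonal nuisances, where $\|\bar\psi(\beta,G_0,\hat\boldeta)-\bar\psi(\beta,G_0,\boldeta_0)\|$ is $o_p(n^{-1/2})$, the $G$-contribution only achieves rate $O_p(\sqrt{m}\kappa)$, and multiplying by the $O_p(\mu_n/\sqrt{n})$ factor from $\bar\psi$ itself just barely gives $o_p(\mu_n^2/n)$ under Assumption~\ref{condweak}(ii). Careful bookkeeping is required to ensure that all cross and quadratic terms involving $\hat G - G_0$ remain below this threshold uniformly in $\beta \in \calB$, and that the centering term $m/(2n)$ accurately captures the remaining bias coming from $\bbE_n[\psi\tp \Sigma_0^{-1}\psi]$ versus $\tilde\psi\tp\Sigma_0^{-1}\tilde\psi$.
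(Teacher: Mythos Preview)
Your proposal follows essentially the same route as the paper: Taylor expand $\rho$ around zero, control the cubic remainder, substitute the expansion of $\blambda$ from \eqref{R1}, reduce to the sample quadratic form $\tfrac12\bar\psi^\top\bar\Sigma^{-1}\bar\psi$, then pass from $(\hat G,\hat\boldeta)$ to $(G_0,\boldeta_0)$ using Lemmas~\ref{lemmapsi}--\ref{lemmasigma} and Assumption~\ref{condweak}(ii), and finally invoke the Newey--Windmeijer / Ye et al.\ result at the true nuisances. The paper cites Lemma~A.2 of \citet{newey2009} for that last step rather than Lemma~7 of \citet{ye2024}, but these play the same role.

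One small slip: your stated bound on the cubic remainder, $c\,\|\blambda\|^2\|\bar\Sigma\|\max_i\|\psi_i\|$, is missing a factor of $\|\blambda\|$. The correct bound is of the form $c\,\max_i|\blambda^\top\psi_i|\cdot \blambda^\top\bar\Sigma\,\blambda \le c\,\|\blambda\|\max_i\|\psi_i\|\cdot\|\blambda\|^2\|\bar\Sigma\|$. With only $\|\blambda\|^2$ you would get $O_p(\mu_n^2/n)\cdot\tau$, which diverges since $\tau\to\infty$. The extra $\|\blambda\|$ combines with $\max_i\|\psi_i\|$ to give $\|\blambda\|\max_i\|\psi_i\|\le\tau_n\tau=o_p(1)$ (this is exactly \eqref{suplambdag}), so the remainder is $o_p(1)\cdot O_p(\mu_n^2/n)=o_p(\mu_n^2/n)$ as in \eqref{rorder1}. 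Fix that factor and the argument is complete.
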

\begin{proof}
According to the Taylor expansion at $\blambda\tp g= 0$, we have
\be \label{qhat11}
&&\hat Q(\beta,\hat G ,\hat\boldeta) \n\\
&=& \frac{1}{n}\sum_{i=1}^n \rho( \blambda(\beta,\hat G ,\hat\boldeta)\tp \psi(\beta,\hat G ,\hat\boldeta;\O_i))\n\\
&=&\frac{1}{n}\sum_{i=1}^n \rho(0)+\frac{1}{n}\sum_{i=1}^n \rho^{'}(0) \blambda(\beta,\hat G ,\hat\boldeta)\tp \psi(\beta,\hat G ,\hat\boldeta;\O_i)\n\\
&&+ \frac{1}{2n}\sum_{i=1}^n \rho^{''}(0) \blambda(\beta,\hat G ,\hat\boldeta)\tp \psi(\beta,\hat G ,\hat\boldeta;\O_i)\psi(\beta,\hat G ,\hat\boldeta;\O_i)\tp \blambda(\beta,\hat G ,\hat\boldeta)+r,
\ee
where $r=1/(6n)\sum_{i=1}^n\rho^{'''}(\bar x)( \blambda(\beta,\hat G ,\hat\boldeta)\tp \psi(\beta,\hat G ,\hat\boldeta;\O_i))^3$ is a remainder term and $\bar x$ lies between $0$ and $ \blambda(\beta,\hat G ,\hat\boldeta)\tp \psi(\beta,\hat G ,\hat\boldeta;\O_i)$. It can be proved that 
\be\label{rorder1}
r&=&\frac{1}{6n}\sum_{i=1}^n\rho^{'''}(\bar x)( \blambda(\beta,\hat G ,\hat\boldeta)\tp \psi(\beta,\hat G ,\hat\boldeta;\O_i))^3\n\\
&\leq&c\|\blambda(\beta,\hat G ,\hat\boldeta)\|\max_{i}\|\psi(\beta,\hat G ,\hat\boldeta;\O_i)\|\blambda(\beta,\hat G ,\hat\boldeta)\tp\bar\Sigma(\beta,\hat G ,\hat\boldeta)\blambda(\beta,\hat G ,\hat\boldeta)   \n\\
&=&O_p(\frac{\mu_n}{\sqrt{n}}\tau_n)O_p(\mu_n^2/n)\n\\
&=&o_p(\mu_n^2/n).
\ee
Based on \eqref{R1} and \eqref{rorder1}, \eqref{qhat11} becomes
\be 
&&\hat Q(\beta,\hat G ,\hat\boldeta) \n\\
&=& -\frac{1}{n}\sum_{i=1}^n  (-\bar \psi(\beta,\hat G ,\hat\boldeta)+R)\tp\bar\Sigma(\beta,\hat G ,\hat\boldeta)^{-1} \psi(\beta,\hat G ,\hat\boldeta;\O_i)\n\\
&&- \frac{1}{2n}\sum_{i=1}^n(-\bar \psi(\beta,\hat G ,\hat\boldeta)+R)\tp\bar\Sigma(\beta,\hat G ,\hat\boldeta)^{-1} \psi(\beta,\hat G ,\hat\boldeta;\O_i)\psi(\beta,\hat G ,\hat\boldeta;\O_i)\tp\n\\
&&\quad\times\bar\Sigma(\beta,\hat G ,\hat\boldeta)^{-1}(-\bar \psi(\beta,\hat G ,\hat\boldeta)+R)+r\n\\
&=& \frac{1}{2} \bar  \psi(\beta,\hat G ,\hat\boldeta)\tp\bar\Sigma(\beta,\hat G ,\hat\boldeta)^{-1} \bar \psi(\beta,\hat G ,\hat\boldeta) +o_p(\mu_n^2/n).
\ee
Next we will prove $\sup_{\beta\in \calB}| \bar \psi(\beta,\hat G ,\hat\boldeta)\tp\bar\Sigma(\beta,\hat G ,\hat\boldeta)^{-1} \bar \psi(\beta,\hat G ,\hat\boldeta)-\tilde Q(\beta,G_0,\boldeta_0) |=o_p(\mu_n^2/n)$.According to Lemmas \ref{lemmapsi}-\ref{lemmasigma}, we have 
\be\label{eq33}
&&\sup_{\beta\in \calB}| \bar \psi(\beta,\hat G \hat\boldeta)\tp\bar\Sigma(\beta,\hat G ,\hat\boldeta)^{-1} \bar \psi(\beta,\hat G ,\hat\boldeta)-\tilde \psi(\beta,\hat G ,\boldeta_0)\tp\Sigma(\beta,G_0,\boldeta_0)^{-1} \tilde \psi(\beta,G_0,\boldeta_0)-m/(2n) |\n\\
&\leq &\sup_{\beta\in \calB}| \bar \psi(\beta,\hat G ,\hat\boldeta)\tp\bar\Sigma(\beta,\hat G ,\hat\boldeta)^{-1} \bar \psi(\beta,\hat G ,\hat\boldeta)-\bar \psi(\beta,G_0,\boldeta_0)\tp\bar\Sigma(\beta,\hat G ,\hat\boldeta)^{-1} \bar \psi(\beta,G_0,\boldeta_0) | \n\\
&&+\sup_{\beta\in \calB}|\bar \psi(\beta,G_0,\boldeta_0)\tp\bar\Sigma(\beta,\hat G ,\hat\boldeta)^{-1} \bar \psi(\beta,G_0,\boldeta_0)-\bar \psi(\beta,G_0,\boldeta_0)\tp\Sigma(\beta,G_0,\boldeta_0)^{-1} \bar \psi(\beta,G_0,\boldeta_0)| \n\\
&&+\sup_{\beta\in \calB}|\bar \psi(\beta,G_0,\boldeta_0)\tp\Sigma(\beta,G_0,\boldeta_0)^{-1} \bar \psi(\beta,G_0,\boldeta_0)-\tilde Q(\beta,G_0,\boldeta_0)|.\n
\ee
Since the nuisance functions in the third term are all at their true value, Lemma A.2 in \cite{newey2009} gives that the third term has order $o_p(\mu_n^2/n)$. Then \ref{eq33} becomes
\be
&&\sup_{\beta\in \calB}\| \bar \psi(\beta,\hat G ,\hat\boldeta)-\bar \psi(\beta,G_0,\boldeta_0)\|\sup_{\beta\in \calB}\| \bar \psi(\beta,\hat G ,\hat\boldeta)+\bar \psi(\beta,G_0,\boldeta_0)\|\sup_{\beta\in \calB}\|\bar\Sigma(\beta,\hat G ,\hat\boldeta)^{-1}\|\n\\
&&+ \sup_{\beta\in \calB}\| \bar \psi(\beta,G_0,\boldeta_0)\|^2\sup_{\beta\in \calB}\|\bar\Sigma(\beta,\hat G ,\hat\boldeta)^{-1}-\Sigma(\beta,G_0,\boldeta_0)^{-1}\|+o_p(\mu_n^2/n)\n\\
&=&O_p(\sqrt{m}\kappa)O_p(\mu_n/\sqrt{n})O_p(1)+O_p(\mu_n^2/n)o_p(1)+o_p(\mu_n^2/n)\n\\
&=&o_p(\mu_n^2/n).
\ee
Combining \eqref{qhat11} and \eqref{eq33}, we obtain $$\sup_{\beta\in \calB}|\tilde Q(\beta,G_0,\boldeta_0)-\hat Q(\beta,\hat G ,\hat\boldeta)|=o_p(\frac{\mu_n^2}{n}).$$
\end{proof}

\subsection{Corollary \ref{coro1}}
\begin{corollary}\label{coro1}
Suppose Assumptions \ref{condcondind}-\ref{condgel} hold, and the censoring rate is 0. $\hat{\beta}_{\textrm{GEL2}}$ is consistent and asymptotical normal, i.e., as $n \rightarrow \infty$, and $m^3/n \to 0$, $\hat{\beta}_{\textrm{GEL2}} \xrightarrow{p} \beta_0$ and
$$
\frac{\mu_n\left(\hat{\beta}_{\textrm{GEL2}}-\beta_0\right)}{\sqrt{H^{-1}(H+V_4)H^{-1}}} \xrightarrow{d} N(0,1),
$$
where  
$V_4 = \mu_n^{-2} \bbE\left[U_i \Omega_0^{-1} U_i\right]$,
$H = n g^{*\top} \Omega_0^{-1} g^{*}/\mu_n^2$,
and $U_i=g^{'}(\boldeta_0;\O_i)-g^{*}-\{\Omega_0^{-1} \bbE(g(\beta_0,\boldeta_0;\O)\allowbreak g^{'}(\boldeta_0;\O)\tp)\}\tp g(\beta_0,\boldeta_0;\O_i)$ is the population residual from least squares regression of $g^{'}(\boldeta_0;\O_i)-g^{*}$ on $g(\beta_0,\boldeta_0;\O_i)$. 
\end{corollary}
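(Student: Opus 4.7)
The plan is to specialize Theorem~\ref{thmcensor} to the case of no censoring and verify that the argument collapses cleanly, so that the only remaining variance contributions are the classical term $H$ and the many-weak-moment term $V_4$. Under zero censoring we have $\delta\equiv 1$, $G_0(Y\mid\O_A)\equiv 1$, so $\psi(\beta,G_0,\boldeta;\O)=g(\beta,\boldeta;\O)$ and $\Sigma_0=\Omega_0$. The entire $G$-nuisance, which was the sole non-Neyman-orthogonal component, disappears from the problem, and only the Neyman-orthogonal nuisances $\boldeta_0=\{f_1,\dots,f_m,h_1,h_2,h_3,h_4\}$ estimated by DNNs remain. Consequently $\hat\beta_{\textrm{GEL2}}=\arg\min_{\beta\in\calB}\hat Q(\beta,\hat\boldeta)$ in the sense of Section~\ref{sec:semiparametric}, and the uncensored versions of the preparatory lemmas (Lemmas~\ref{lemma1}--\ref{lemmaq}) are available.

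\textbf{Consistency.} First I would invoke Lemma~\ref{lemmaq}, which gives $\sup_{\beta\in\calB}|\hat Q(\beta,\hat\boldeta)-\tilde Q(\beta,\boldeta_0)|=o_p(\mu_n^2/n)$. Since Theorem~\ref{thmid} and Assumption~\ref{condeigen} imply that $\tilde Q(\cdot,\boldeta_0)$ is uniquely minimized at $\beta_0$ with curvature bounded away from zero through $\Omega_0$, the argmin theorem together with Lemma~\ref{lemma1}(i) yields $\hat\beta_{\textrm{GEL2}}\xrightarrow{p}\beta_0$ at rate $O_p(\mu_n^{-1}\sqrt{m/n})$.

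\textbf{Asymptotic normality.} The next step is a Taylor expansion of the first-order condition around $\beta_0$:
\begin{equation*}
0=\frac{n}{\mu_n}\frac{\partial\hat Q(\hat\beta,\hat\boldeta)}{\partial\beta}
=\frac{n}{\mu_n}\frac{\partial\hat Q(\beta_0,\hat\boldeta)}{\partial\beta}
+\frac{n}{\mu_n^2}\frac{\partial^2\hat Q(\bar\beta,\hat\boldeta)}{\partial\beta^2}\,\mu_n(\hat\beta-\beta_0),
\end{equation*}
for some $\bar\beta$ between $\hat\beta$ and $\beta_0$. For the second-order term, I would combine Lemmas~\ref{lemmag}--\ref{lemmaomega} with the expansion of $\blambda(\bar\beta,\hat\boldeta)$ in Lemma~\ref{lemmalambda} to show that $(n/\mu_n^2)\partial^2\hat Q(\bar\beta,\hat\boldeta)/\partial\beta^2\xrightarrow{p} H = n g^{*\top}\Omega_0^{-1}g^*/\mu_n^2$.

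\textbf{First-order term.} The decomposition displayed in \eqref{firstorderterm} collapses in the uncensored case because the final two lines (the Neyman-orthogonal contribution and the $G$-contribution) vanish: the orthogonal piece is zero by Proposition~\ref{lemmaneyman} applied to $\boldeta_0$, and there is no $G$-derivative term at all. What remains is (a) the empirical process remainder at the estimated versus true nuisance, and (b) the centered empirical process at the truth. Step (a) is controlled via the localization-based empirical process argument used in the proof of Lemma~\ref{lemmag}: under Assumption~\ref{condDNNs} the nuisance estimators lie in a shrinking $L^2$-ball and, combined with the bracketing bound \eqref{maxineq} and a maximal inequality, produce an $o_p(1)$ rate for the centered remainder. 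Step (b), after inserting the expansion $\blambda(\beta_0,\boldeta_0)=-\bar\Omega(\beta_0,\boldeta_0)^{-1}\bar g(\beta_0,\boldeta_0)+o_p(\mu_n/\sqrt n)$ from Lemma~\ref{lemmalambda}, reduces to the scaled average $\mu_n^{-1}\sum_i U_i^\top\Omega_0^{-1}g(\beta_0,\boldeta_0;\O_i)$ plus the classical GMM score, where $U_i$ is the population residual defined in the statement. A Lyapunov central limit theorem (valid under Assumption~\ref{condmoment} with $m^3/n\to 0$) then delivers a normal limit with variance $H+V_4$.

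\textbf{Main obstacle.} The chief difficulty is step (a), namely showing that the plug-in of $\hat\boldeta$ into the derivative of $\rho(\blambda^\top g(\cdot))$ contributes only $o_p(1)$ uniformly over the $m$-dimensional moment vector when $m$ itself diverges. The Neyman orthogonality of Proposition~\ref{lemmaneyman} kills the first-order population bias, but the second-order term, being quadratic in nuisance error, must be $o_p(n^{-1/2}m^{-1/2})$; this is precisely why Assumption~\ref{condDNNs} sharpens the usual $o_p(n^{-1/4})$ rate by the factor $m^{-1/4}$. Combining this refined rate with the localization argument and the maximal inequality \eqref{maxineq} is the central technical step, after which Slutsky's theorem yields the stated limit $\mu_n(\hat\beta_{\textrm{GEL2}}-\beta_0)/\sqrt{H^{-1}(H+V_4)H^{-1}}\xrightarrow{d} N(0,1)$.
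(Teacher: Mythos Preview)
Your proposal is correct and follows essentially the same route as the paper's proof: consistency via Lemma~\ref{lemmaq} combined with Lemma~\ref{lemma1}(i), a Taylor expansion of the first-order condition, control of the nuisance plug-in error using Neyman orthogonality and the localization argument of Lemma~\ref{lemmag}, the $\blambda$-expansion from Lemma~\ref{lemmalambda} to reduce the score to $g^{*\top}\Omega_0^{-1}\bar g + n^{-1}\sum_i U_i^\top\Omega_0^{-1}\bar g$, and convergence of the Hessian to $H$. The only cosmetic difference is that the paper works directly with the explicit algebraic form of $\partial\hat Q(\beta_0,\hat\boldeta)/\partial\beta$ (its equation \eqref{qda123}) and compares termwise to the true-nuisance version via \eqref{normal1}--\eqref{normal2}, whereas you organize the same computation through the high-level decomposition \eqref{firstorderterm}; both invoke the Newey--Windmeijer CLT (Lemma~A12 of \citealp{newey2009}) at the end.
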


\begin{proof}
According to Lemma \ref{lemma1} (i), it is sufficient to prove $\sqrt{n}\|\tilde{g}(\hat\beta,\boldeta_0)\|/\mu_n = o_p(1)$ to prove the consistency. Recall that $\tilde g(\beta_0,\boldeta_0)=\bbE(g(\beta_0,\boldeta_0;\O_i))=0$ from the robustness of the influence functon, which implies $\tilde Q(\beta_0,\boldeta_0) = \tilde g(\beta_0,\boldeta_0)\tp\tilde \Omega(\beta_0,\boldeta_0)^{-1}\tilde g(\beta_0,\boldeta_0)/2+m/(2n)=0+m/(2n)=m/(2n)$. From Lemma \ref{lemma1}, and Assumption \ref{condeigen}, we have
\bse
&&\|\hat\beta-\beta_0\|^2\\
&\leq & \frac{n}{\mu_n^2}\|\tilde g(\hat\beta,\boldeta_0) \|^2\\
&\leq&c \frac{n}{\mu_n^2}\tilde g(\hat\beta,\boldeta_0)\tp \Omega(\hat\beta,\boldeta_0)^{-1} \tilde g(\hat\beta,\boldeta_0)+m/(2n)-m/(2n)\\
&\leq&c \frac{n}{\mu_n^2}\{ \tilde Q(\hat\beta,\boldeta_0)-\tilde Q(\beta_0,\boldeta_0)\}\\
&\leq& c \frac{n}{\mu_n^2}\{\tilde Q(\hat\beta,\boldeta_0)-\hat Q(\hat\beta,\hat\boldeta)+\hat Q(\hat\beta,\hat\boldeta)-\hat Q(\beta_0,\hat\boldeta)+\hat Q(\beta_0,\hat\boldeta)-\tilde Q(\beta_0,\boldeta_0)\}\\
&=&c\frac{n}{\mu_n^2}o_p(\frac{\mu_n^2}{n})\\
&=&o_p(1),
\ese
where the second to the last step comes from Lemma \ref{lemmaq} and $\hat Q(\hat\beta,\hat\boldeta)\leq \hat Q(\beta_0,\hat\boldeta)$ because $\hat\beta$ minimizes $\hat Q(\beta,\hat\boldeta)$.
This completes the proof of the consistency.

Next, we will prove the asymptotic normality of $\hat\beta$. From Taylor expansion of the first order condition $\partial \hat{Q}(\beta, \hat{\boldeta}) /\partial \beta|_{\beta=\hat{\beta}}=0$, we have that
\be \label{Taylor}
0=\frac{n}{\mu_n}\left.\frac{\partial \hat{Q}(\beta, \hat{\boldeta})}{\partial \beta}\right|_{\beta=\hat{\beta}}=\frac{n}{\mu_n}\left.\frac{\partial \hat{Q}(\beta, \hat{\boldeta})}{\partial \beta}\right|_{\beta=\beta_0}+\frac{n}{\mu_n^2}\left.\frac{\partial^2 \hat{Q}(\beta, \hat{\boldeta})}{\partial \beta^2}\right|_{\beta=\bar{\beta}}\mu_n\left(\hat{\beta}-\beta_0\right),
\ee
where $\bar{\beta}$ lies between $\beta_0$ and $\hat{\beta}$. First we analyze $\partial \hat{Q}(\beta, \hat{\boldeta})/{\partial \beta}|_{\beta=\beta_0}$. The first order condition of $\blambda$ gives that
\be \label{lambdafirstorder}
\frac{1}{n}\sum_{i=1}^n\rho(\blambda(\beta_0,\hat\boldeta)\tp g(\beta_0,\hat\boldeta;\O_i))g(\beta_0,\hat\boldeta;\O_i)=0.
\ee
Define $\hat U_i=g^{'}(\boldeta_0;\O_i)-g^{*}-1/n\sum_{j=1}^n\{g(\beta_0,\hat\boldeta;\O_j)g^{'}(\hat\boldeta;\O_j)\tp\}\bar \Omega(\beta_0,\hat\boldeta)^{-1}g(\beta_0,\hat\boldeta;\O_i)$. Combining with \eqref{Taylor_lambda}, \eqref{lambdafirstorder}, and Lemma \ref{lemmalambda}, we have
\be \label{qda123}
&&\left.\frac{\partial \hat{Q}(\beta, \hat{\boldeta})}{\partial \beta}\right|_{\beta=\beta_0}\n\\
&=&\frac{1}{n}\sum_{i=1}^n\rho^{'}(\blambda(\beta_0,\hat\boldeta)\tp g(\beta_0,\hat\boldeta;\O_i))\blambda(\beta_0,\hat\boldeta)\tp g^{'}(\hat\boldeta;\O_i)\n\\
&=&\frac{1}{n}\sum_{i=1}^n\{-1-\blambda(\beta_0,\hat\boldeta)\tp g(\beta_0,\hat\boldeta;\O_i)+r\}\{-\bar g(\beta_0,\hat\boldeta)\tp\bar\Omega(\beta_0,\hat\boldeta)^{-1}+R\}g^{'}(\hat\boldeta;\O_i) \n\\
&=&-\bar g(\beta_0,\hat\boldeta)\tp\bar\Omega(\beta_0,\hat\boldeta)^{-1}\left\{\frac{1}{n}\sum_{i=1}^{n}g(\beta_0,\hat\boldeta;\O_i)g^{'}(\hat\boldeta;\O_i)\tp\right\}\bar\Omega(\beta_0,\hat\boldeta)^{-1}\bar g(\beta_0,\hat\boldeta) \n\\
&&+\bar {g^{'}}(\hat\boldeta)\tp\bar\Omega(\beta_0,\hat\boldeta)^{-1}\bar g(\beta_0,\hat\boldeta)+o_p({\mu_n}/{n})\n\\
&=& g^{*\top} \bar\Omega(\beta_0,\hat\boldeta)^{-1}\bar g(\beta_0,\hat\boldeta)+ \frac{1}{n}\sum_{i=1}^n\hat U_i\tp\bar \Omega(\beta_0,\hat\boldeta)^{-1}\bar g(\beta_0,\hat\boldeta)+o_p({\mu_n}/{n}).
\ee   
Lemma A12 in \cite{newey2009} gives that 
\be\label{eqlemmaa12}
\frac{n}{\mu_n}\{g^{*\top} \Omega(\beta_0,\boldeta_0)^{-1}\bar g(\beta_0,\boldeta_0)+ \frac{1}{n}\sum_{i=1}^n U_i\tp \Omega(\beta_0,\boldeta_0)^{-1}\bar g(\beta_0,\boldeta_0)+o_p(1)\}\xrightarrow{d} N(0,H+V_4).
\ee
The main difference between \eqref{qda123} and \eqref{eqlemmaa12} is that the nuisance parameter is $\widehat\boldeta$ in the former and $\boldeta_0$ in the latter. Thus, next we will prove
\be \label{normal1}
g^{*\top} \Omega(\beta_0,\boldeta_0)^{-1}\bar g(\beta_0,\boldeta_0)-g^{*\top} \bar\Omega(\beta_0,\hat\boldeta)^{-1}\bar g(\beta_0,\hat\boldeta)=o_p(\mu_n/n)
\ee
and 
\be \label{normal2}
\frac{1}{n}\sum_{i=1}^n U_i\tp \Omega(\beta_0,\boldeta_0)^{-1}\bar g(\beta_0,\boldeta_0)-\frac{1}{n}\sum_{i=1}^n\hat U_i\tp\bar \Omega(\beta_0,\hat\boldeta)^{-1}\bar g(\beta_0,\hat\boldeta)=o_p({\mu_n}/{n}).
\ee
For \eqref{normal1}, based on Assumption \ref{condeigen}, Lemmas \ref{lemmag} and \ref{lemmaomega}, we have
\be    
&& g^{*\top} \Omega(\beta_0,\boldeta_0)^{-1}\bar g(\beta_0,\boldeta_0)-g^{*\top} \bar\Omega(\beta_0,\hat\boldeta)^{-1}\bar g(\beta_0,\hat\boldeta)\n\\
&=&g^{*\top}\Omega(\beta_0,\boldeta_0)^{-1}\{\bar g(\beta_0,\boldeta_0)-\bar g(\beta_0,\hat\boldeta)\}+g^{*\top}\{\Omega(\beta_0,\boldeta_0)^{-1}-\bar\Omega(\beta_0,\hat\boldeta)^{-1}\}\bar g(\beta_0,\hat\boldeta)\n\\
&\leq&\|g^{*}\|\|\Omega(\beta_0,\boldeta_0)^{-1}\|\|\bar g(\beta_0,\boldeta_0)-\bar g(\beta_0,\hat\boldeta)\|\n\\
&&+\|g^{*}\|\|\bar g(\beta_0,\hat\boldeta)\|\|\Omega(\beta_0,\boldeta_0)^{-1}-\bar\Omega(\beta_0,\hat\boldeta)^{-1}\|\n\\
&=&O_p(\mu_n/\sqrt{n})O_p(1)o_p(n^{-1/2})+O_p(\mu_n/\sqrt{n})O_p(\sqrt{m/n})O_p(\sqrt{m\log(m)/n})\n\\
&=&o_p(\mu_n/n).\n
\ee
The last step comes from $O_p(\sqrt{m^2\log(m)/n})=\sqrt{m^3/n}=o_p(1)$. Before calculate the order of \eqref{normal2}, we will first derive the order of $\|1/n\sum_{i=1}^n(\hat U_i-U_i)\|$ by Lemmas \ref{lemmag}-\ref{lemmaomega} and Assumption \ref{condeigen}:
\be\label{ui}
&&\|\frac{1}{n}\sum_{i=1}^n(\hat U_i-U_i)\|\n\\
&\leq&\|\bar {g^{'}}(\hat\boldeta)-\bar {g^{'}}(\boldeta_0)\|+\Big\|\{\frac{1}{n}\sum_{i=1}^ng(\beta,\hat\boldeta;\O_i)g^{'}(\hat\boldeta;\O_i)\tp\}\bar \Omega(\beta,\hat\boldeta)^{-1}\bar g(\beta,\hat\boldeta)\n\\
&&-\bbE \{g(\beta,\boldeta_0;\O)g^{'}(\boldeta_0;\O)\tp\}\Omega(\beta,\boldeta_0)^{-1}\bar g(\beta,\boldeta_0)\Big\|\n\\
&=&\|\bar {g^{'}}(\hat\boldeta)-\bar {g^{'}}(\boldeta_0)\|\n\\
&&+\|[\frac{1}{n}\sum_{i=1}^ng(\beta,\hat\boldeta;\O_i)g^{'}(\hat\boldeta;\O_i)\tp-\bbE \{g(\beta,\boldeta_0;\O)g^{'}(\boldeta_0;\O)\tp\}]\Omega(\beta,\hat\boldeta)^{-1}\bar g(\beta,\hat\boldeta)   \|\n\\
&&+\|\bbE \{g(\beta,\boldeta_0;\O)g^{'}(\boldeta_0;\O)\tp\}\{\bar \Omega(\beta,\hat\boldeta)^{-1}-\Omega(\beta,\boldeta_0)^{-1}\}\bar g(\beta,\hat\boldeta)\|\n\\
&&+\|\bbE \{g(\beta,\boldeta_0;\O)g^{'}(\boldeta_0;\O)\tp\}\Omega(\beta,\boldeta_0)^{-1}\{\bar g(\beta,\hat\boldeta)-\bar g(\beta,\boldeta_0)\}\|\n\\
&=&o_p(n^{-1/2})+O_p(\sqrt{m\log(m)/n})O_p(\mu_n/\sqrt{n})+O_p(\sqrt{m\log(m)/n})O_p(\mu_n/\sqrt{n})+o_p(n^{-1/2})  \n\\
&=&O_p(\mu_n\sqrt{m\log (m)}/n)+o_p(n^{-1/2}).
\ee
For \eqref{normal2}, based on Assumption \ref{condeigen}, Lemmas \ref{lemmag}, \ref{lemmaomega} and \eqref{ui}, we have
\be
&&\frac{1}{n}\sum_{i=1}^n U_i\tp \Omega(\beta_0,\boldeta_0)^{-1}\bar g(\beta_0,\boldeta_0)-\frac{1}{n}\sum_{i=1}^n\hat U_i\tp\bar \Omega(\beta_0,\hat\boldeta)^{-1}\bar g(\beta_0,\hat\boldeta) \n\\
&=&\frac{1}{n}\sum_{i=1}^n U_i\tp \Omega(\beta_0,\boldeta_0)^{-1}\{\bar g(\beta_0,\boldeta_0)-\bar g(\beta_0,\hat\boldeta)\} \n\\
&&+\frac{1}{n}\sum_{i=1}^n U_i\tp \{\Omega(\beta_0,\boldeta_0)^{-1}-\bar \Omega(\beta_0,\hat\boldeta)^{-1}\}\bar g(\beta_0,\hat\boldeta)   \n\\
&&+\frac{1}{n}\sum_{i=1}^n(U_i-\hat U_i)\tp\bar \Omega(\beta_0,\hat\boldeta)^{-1}\bar g(\beta_0,\hat\boldeta)  \n\\
&\leq& O_p(\frac{\mu_n}{\sqrt{n}})O_p(1)o_p(\frac{1}{\sqrt{n}})+O_p(\frac{\mu_n}{\sqrt{n}})O_p(\frac{\sqrt{m\log (m)}}{\sqrt{n}})O_p(\frac{\sqrt{m}}{\sqrt{n}})\n\\
&&+O_p(\frac{\mu_n\sqrt{m\log (m)}}{n})O_p(1)O_p(\frac{\sqrt{m}}{\sqrt{n}})+ o_p(n^{-1/2})O_p(1)O_p(\frac{\sqrt{m}}{\sqrt{n}}) \n\\
&=&o_p(\mu_n/n).\n
\ee
where the second to last step comes from that $m/\mu_n^2$ is bounded and 
\bse
&&\frac{\mu_n}{\sqrt{n}}\|\frac{1}{n}\sum_{i=1}^n U_i\|\\
&\leq&\|\bar {g^{'}}(\boldeta_0)-g^{*}\|+\|\bbE\{g(\beta_0,\boldeta_0;\O)g^{*\top}\}\Omega_0^{-1}\bar g(\beta_0,\boldeta_0)\|\\
&=&O_p(\frac{\mu_n}{\sqrt{n}}).
\ese Combining \eqref{qda123}-\eqref{normal2}, we have  
\be\label{Taylor1}
\frac{n}{\mu_n}\left.\frac{\partial \hat{Q}(\beta, \hat{\boldeta})}{\partial \beta}\right|_{\beta=\beta_0}\xrightarrow{d} N(0,H+\Lambda).
\ee
Next, we will prove $\sup_{\beta\in \calB} n/\mu_n^2\partial^2 \hat{Q}(\beta, \hat\boldeta)/{\partial \beta^2}\to H$. Let $\calV_{gg}(\beta,\boldeta)=\sum_{i=1}^n\rho^{''}(\blambda(\beta,\boldeta)\tp g(\beta,\boldeta;\O_i))\\g(\beta,\boldeta;\O_i)g(\beta,\boldeta;\O_i)\tp/n$, $\calV_{gg^{'}}(\beta,\boldeta)=\sum_{i=1}^n\rho^{''}(\blambda(\beta,\boldeta)\tp g(\beta,\boldeta;\O_i))g(\beta,\boldeta;\O_i)g^{'}(\boldeta;\O_i)\tp/n$, $\calV_{gg^{'}}^{v}(\beta,\boldeta)=\sum_{i=1}^n\rho^{''}(v)g^{'}(\boldeta;\O_i)g^{'}(\boldeta;\O_i)\tp/n$, and $\calV_{g^{'}g^{'}}(\beta,\boldeta)=\sum_{i=1}^n\rho^{''}(\blambda(\beta,\boldeta)\tp g(\beta,\boldeta;\O_i))\\g^{'}(\boldeta;\O_i)g^{'}(\boldeta;\O_i)\tp/n$. 
Then, we have 
\be\label{partialqbb}
&&\frac{\partial^2 \hat{Q}(\beta, \boldeta)}{\partial \beta^2}\n\\
&=&\frac{1}{n}\sum_{i=1}^n \rho^{''}(\blambda(\beta,\boldeta)\tp g(\beta,\boldeta;\O_i))\{\blambda(\beta,\boldeta)\tp g^{'}(\boldeta;\O_i)\}\{\blambda(\beta,\boldeta)\tp g^{'}(\boldeta;\O_i)+\frac{\partial\blambda(\beta,\boldeta)\tp}{\partial\beta}g(\beta,\boldeta;\O_i)\}\n\\
&&+ \frac{1}{n}\sum_{i=1}^n \rho^{'}(\blambda(\beta,\boldeta)\tp g(\beta,\boldeta;\O_i))\frac{\partial\blambda(\beta,\boldeta)\tp}{\partial\beta}g^{'}(\boldeta;\O_i)\n\\
&=&\blambda(\beta,\boldeta)\tp\calV_{g^{'}g^{'}}(\beta,\boldeta)\blambda(\beta,\boldeta)+\blambda(\beta,\boldeta)\tp\calV_{gg^{'}}(\beta,\boldeta)\frac{\partial\blambda(\beta,\boldeta)}{\partial\beta}\n\\
&&+\frac{1}{n}\sum_{i=1}^n\{-1+\rho^{''}(\bar v_1)\blambda(\beta,\boldeta)\tp g(\beta,\boldeta;\O_i)\}\frac{\partial\blambda(\beta,\boldeta)\tp}{\partial\beta}g^{'}(\boldeta;\O_i)\n\\
&=&\blambda(\beta,\boldeta)\tp\calV_{g^{'}g^{'}}(\beta,\boldeta)\blambda(\beta,\boldeta)+\blambda(\beta,\boldeta)\tp\calV_{gg^{'}}(\beta,\boldeta)\frac{\partial\blambda(\beta,\boldeta)}{\partial\beta}\n\\
&&+\blambda(\beta,\boldeta)\tp\calV_{gg^{'}}^{\bar v_1}(\beta,\boldeta)\frac{\partial\blambda(\beta,\boldeta)}{\partial\beta}-\frac{\partial\blambda(\beta,\boldeta)\tp}{\partial\beta}\bar {g^{'}} (\boldeta)\n\\
&\equiv& D_1(\boldeta)+D_2(\boldeta)+D_3(\boldeta)+D_4(\boldeta),
\ee
where $|\bar v_1|\leq |\blambda(\beta,\hat\boldeta)\tp g(\beta,\hat\boldeta;\O_i)|$, and the derivative of $\blambda$ can be calculated by the implicit function theorem:
\bse
\frac{\partial\blambda(\beta,\boldeta)}{\partial \beta}&=&-\calV_{gg}(\beta,\boldeta)^{-1}\{\calV_{gg^{'}}(\beta,\boldeta)\blambda(\beta,\boldeta)+\frac{1}{n}\sum_{i=1}^n\rho^{'}(\blambda(\beta,\boldeta)\tp g(\beta,\boldeta;\O_i))g^{'}(\boldeta;\O_i)\}\\
&=&-\calV_{gg}(\beta,\boldeta)^{-1}[\{\calV_{gg^{'}}(\beta,\boldeta)+\calV_{gg^{'}}^{\bar v_2}(\beta,\boldeta)\}\blambda(\beta,\boldeta)-g^{'}(\boldeta;\O_i)].
\ese
Now we will prove $\sup_{\beta\in\calB}|D_i(\hat\boldeta)-D_i(\boldeta_0)|=o_p(\mu_n^2/n)$ for $i=1,2,3,4$. According to Taylor expansion and \eqref{suplambdag}, we have
\bse
&&\|\calV_{gg}(\beta,\boldeta_0)+\bar\Omega(\beta,\boldeta_0)\|\\
&=&\|\frac{1}{n}\sum_{i=1}^n\{\rho^{''}(\blambda(\beta,\boldeta)\tp g(\beta,\boldeta;\O_i))-\rho^{''}(0)\}g(\beta,\boldeta;\O_i)g(\beta,\boldeta;\O_i)\tp\|\\
&=&\|\frac{1}{n}\sum_{i=1}^n\rho^{'''}(x)\blambda(\beta,\boldeta_0)\tp g(\beta,\boldeta_0;\O_i) g(\beta,\boldeta_0;\O_i)g(\beta,\boldeta_0;\O_i)\tp\|\\
&\leq&c\max_{i\leq n}|\blambda(\beta,\boldeta_0)\tp g(\beta,\boldeta_0;\O_i)|\frac{1}{n}\sum_{i=1}^n\|g(\beta,\boldeta_0;\O_i)g(\beta,\boldeta_0;\O_i)\tp\| \\
&=&o_p(1)O_p(1)\\
&=&o_p(1),
\ese
where $|x|\leq |\blambda(\beta,\boldeta_0)\tp g(\beta,\boldeta_0;\O_i)|$. Similarly, we can prove that $ \|\calV_{gg}(\beta,\hat\boldeta)+\bar\Omega(\beta,\hat\boldeta)\|=o_p(1)$. Combining with Lemma \ref{lemmaomega}, it can be proved that
\bse
&&\|\calV_{gg}(\beta,\hat\boldeta)-\calV_{gg}(\beta,\boldeta_0)\|\\
&\leq&\|\calV_{gg}(\beta,\hat\boldeta)+\bar\Omega(\beta,\hat\boldeta)\|+ \|\bar\Omega(\beta,\hat\boldeta)-\bar\Omega(\beta,\boldeta_0)\| +\|\calV_{gg}(\beta,\boldeta_0)+\bar\Omega(\beta,\boldeta_0)\|\\
&=&o_p(1).
\ese
Similarly, we can prove $\|\calV_{gg^{'}}(\beta,\hat\boldeta)-\calV_{gg^{'}}(\beta,\boldeta_0)\|=o_p(1)$, $\|\calV_{gg^{'}}^v(\beta,\hat\boldeta)-\calV_{gg^{'}}^v(\beta,\boldeta_0)\|=o_p(1)$, and $\|\calV_{g^{'}g^{'}}(\beta,\hat\boldeta)-\calV_{g^{'}g^{'}}(\beta,\boldeta_0)\|=o_p(1)$.
So we have
\bse
&&\sup_{\beta\in\calB}|D_1(\hat\boldeta)-D_1(\boldeta_0)|\\
&=&\sup_{\beta\in\calB}|\{\blambda(\beta,\hat\boldeta)-\blambda(\beta,\boldeta_0)\}\tp\calV_{g^{'}g^{'}}(\beta,\hat\boldeta)\{\blambda(\beta,\hat\boldeta)+\blambda(\beta,\boldeta_0)\}|\\
&&+\sup_{\beta\in\calB}|\blambda(\beta,\hat\boldeta)\tp\{\calV_{g^{'}g^{'}}(\beta,\hat\boldeta)-\calV_{g^{'}g^{'}}(\beta,\boldeta_0)\}\blambda(\beta,\boldeta_0)|\\
&=&o_p(\mu_n/\sqrt{n})O_p(1)O_p(\mu_n/\sqrt{n})+O_p(\mu_n/\sqrt{n})o_p(1)O_p(\mu_n/\sqrt{n})\\
&=&o_p(\mu_n^2/n).
\ese
For $D_2$ and $D_3$, if we can prove $\sup_{\beta\in\calB}\|\partial \blambda(\beta,\hat\boldeta)/\partial \beta-\partial \blambda(\beta,\boldeta_0)/\partial \beta\|=o_p(\mu_n/\sqrt{n})$ and $ \sup_{\beta\in\calB}\|\partial \blambda(\beta,\boldeta)/\partial \beta\|=O_p(\mu_n/\sqrt{n})$, then we can just replace $\blambda$ with $\partial \blambda/\partial \beta$ in the proof of $D_1$ and obtain the desired result. Next, we will prove the above two equations. Based on Lemma \ref{lemmag} and \ref{lemmalambda}, we have
\bse
&&\sup_{\beta\in\calB}\|\frac{\partial \blambda(\beta,\boldeta)}{\partial \beta}\|\\
&\leq& \sup_{\beta\in\calB}\|\calV_{gg}(\beta,\boldeta)^{-1}\{\calV_{gg^{'}}(\beta,\boldeta)+\calV_{gg^{'}}^{\bar v_2}(\beta,\boldeta)\}\blambda(\beta,\boldeta)\|+\sup_{\beta\in\calB}\|\calV_{gg}(\beta,\boldeta)^{-1}g^{'}(\boldeta;\O_i)\|\\
&=&O_p(1)O_p(\mu_n/\sqrt{n})+O_p(1)O_p(\mu_n/\sqrt{n})\\
&=&O_p(\mu_n/\sqrt{n}).
\ese
Similarly, we can also prove $\sup_{\beta\in\calB}\|\partial \blambda(\beta,\hat\boldeta)/\partial \beta-\partial \blambda(\beta,\boldeta_0)/\partial \beta\|=o_p(\mu_n/\sqrt{n})$, which implies $\sup_{\beta\in\calB}|D_i(\hat\boldeta)-D_i(\boldeta_0)|=o_p(\mu_n^2/n)$ for $i=2,3$. For $D_4$, we have
\bse
&&\sup_{\beta\in\calB}|D_4(\hat\boldeta)-D_4(\boldeta_0)|\\
&\leq&\sup_{\beta\in\calB}|\frac{\partial\blambda(\beta,\hat\boldeta)\tp}{\partial\beta}\{g^{'}(\hat\boldeta)-g^{'}(\boldeta_0)\}|+\sup_{\beta\in\calB}|\{\frac{\partial\blambda(\beta,\hat\boldeta)\tp}{\partial\beta}-\frac{\partial\blambda(\beta,\boldeta_0)\tp}{\partial\beta}\} g^{'}(\boldeta_0)|\\
&=&o_p(\mu_n^2/n).
\ese
Combining $D_1$ to $D_4$, we have $\sup_{\beta\in\calB}|{\partial^2 \hat{Q}(\beta, \hat\boldeta)}/{\partial \beta^2}-{\partial^2 \hat{Q}(\beta, \boldeta_0)}/{\partial \beta^2}|=o_p(\mu_n^2/n)$. Note that Lemma 13 in \cite{newey2009} gives that $\sup_{\beta\in\calB}n/\mu_n^2|{\partial^2 \hat{Q}(\beta, \boldeta_0)}/{\partial \beta^2}|\to H$, which implies
\be\label{Taylor2}
\sup_{\beta\in\calB}\frac{n}{\mu_n^2}|\frac{\partial^2 \hat{Q}(\beta, \hat\boldeta)}{\partial \beta^2}|\to H.
\ee 
Finally, based on \eqref{Taylor}, \eqref{Taylor1}, and \eqref{Taylor2}, the asymptotic normality can be established as  
$$
\frac{\mu_n\left(\hat{\beta}-\beta_0\right)}{\sqrt{H^{-2}(H+V_4)}}\xrightarrow{d} N(0,1).
$$
\end{proof}

\subsection{Proof of Theorem \ref{thmcensor}} 

First, we will prove the consistency of $\hat\beta$. Since $\psi$ is a linear function of $\beta$, we have
$$
\|\tilde\psi(\beta,G_0,\boldeta_0)-\tilde\psi(\beta_0,G_0,\boldeta_0)\|=|\beta-\beta_0|\|\psi^{'}(G_0,\boldeta_0)\|=|\beta-\beta_0|\|g^{'}(\boldeta_0)\|,
$$
which implies $|\beta-\beta_0|\leq c\mu_n/\sqrt{n}\|\tilde\psi(\beta,G_0,\boldeta_0)\|$. Then it is sufficient to prove $\mu_n/\sqrt{n}\|\tilde\psi(\beta,G_0,\boldeta_0)\|=o_p(1)$. Note that $\tilde Q(\beta_0,G_0,\boldeta_0)=m/(2n)$ and $\hat Q(\hat\beta,\hat G ,\hat\boldeta)\leq\hat Q(\beta_0,\hat G ,\hat\boldeta)$, then according to Lemma \ref{lemmaqc}
\bse 
&&\frac{n}{\mu_n^2}\|\tilde\psi(\beta,G_0,\boldeta_0)\|^2\\
&\leq& \frac{n}{\mu_n^2}\|\tilde \psi(\hat\beta,G_0,\boldeta_0) \|^2\\
&\leq&c \frac{n}{\mu_n^2}\tilde \psi(\hat\beta,G_0,\boldeta_0)\tp \Sigma(\hat\beta,G_0,\boldeta_0)^{-1} \tilde \psi(\hat\beta,G_0,\boldeta_0)+m/(2n)-m/(2n)\\
&\leq&c \frac{n}{\mu_n^2}\{ \tilde Q(\hat\beta,G_0,\boldeta_0)-\tilde Q(\beta_0,G_0,\boldeta_0)\}\\
&\leq& c \frac{n}{\mu_n^2}\{\tilde Q(\hat\beta,G_0,\boldeta_0)-\hat Q(\hat\beta,\hat G ,\hat\boldeta)+\hat Q(\hat\beta,\hat G ,\hat\boldeta)-\hat Q(\beta_0,\hat G ,\hat\boldeta)+\hat Q(\beta_0,\hat G ,\hat\boldeta)-\tilde Q(\beta_0,G_0,\boldeta_0)\}\\
&=&c\frac{n}{\mu_n^2}o_p(\frac{\mu_n^2}{n})\\
&=&o_p(1).
\ese
Then we will prove the asymptotic normality. According to the Taylor expansion, we have 
\be \label{Taylor_censor}
0&=& \frac{n}{\mu_n}\frac{\partial \hat Q(\hat\beta,\hat G ,\hat\boldeta)}{\partial \beta}=\frac{n}{\mu_n}\frac{\partial\hat Q(\beta_0, \hat G ,\hat{\boldeta})}{\partial\beta}+\frac{n}{\mu_n^2}\frac{\partial^2\hat Q(\bar\beta,\hat G ,\hat\boldeta)}{\partial \beta^2}\mu_n(\hat{\beta}-\beta_0).
\ee  
Similar to the proof of Corollary \ref{coro1}, we can obtain the result with the nuisance functions being their true value:
$$
\frac{n}{\mu_n}\left.\frac{\partial \hat{Q}(\beta, G_0,{\boldeta_0})}{\partial \beta}\right|_{\beta={\beta_0}} \to N(0,H+V_1),
$$
and 
$$
\frac{n}{\mu_n^2}\left.\frac{\partial^2 \hat{Q}(\beta,G_0,{\boldeta_0})}{\partial \beta^2}\right|_{\beta=\beta_{0}}\to H.
$$
The difference between the first order term can be decomposed as
\be \label{Qfirstorder}
&&\frac{n}{\mu_n}\frac{\partial\hat Q(\beta_0, \hat G ,\hat{\boldeta})}{\partial\beta}-\frac{n}{\mu_n}\frac{\partial\hat Q(\beta_0, G_0 ,{\boldeta}_0)}{\partial\beta}\n\\
&=&\frac{n}{\mu_n}D_{\boldeta}\frac{\partial\hat Q(\beta_0,G_0,\boldeta_0)}{\partial \beta}(\hat{\boldeta}-\boldeta_0)+\frac{n}{\mu_n}D_{G}\frac{\partial\hat Q(\beta_0,G_0,\boldeta_0)}{\partial\beta}(\hat G -G_0)+o_p(1).
\ee
The reminder term is $o_p(1)$ because in the second order expansion, $\partial \psi/\partial \beta=O_p(\mu_n/\sqrt{n}) $ by Assumption \ref{condweak} and $\|\widehat\boldeta-\boldeta_0\|^2_{L^2}=o_p(n^{-1/2})$ and $\|\widehat G-G_0\|^2_{L^2}=o_p(n^{-1/2})$.
For the first term in \eqref{Qfirstorder}, we can prove $$\frac{n}{\mu_n}D_{\boldeta}\frac{\partial\hat Q(\beta_0,G_0,\boldeta_0)}{\partial\beta}(\hat\boldeta-\boldeta_0)=\frac{n}{\mu_n}\frac{\partial\hat Q(\beta_0,G_0,\boldeta_0)}{\partial \beta}-\frac{n}{\mu_n}\frac{\partial\hat Q(\beta_0,G_0,\hat\boldeta)}{\partial \beta}+o_p(1)=o_p(1).$$ 
Equation \eqref{qda123} gives the decomposition of ${\partial\hat Q(\beta_0,G_0,\boldeta_0)}/{\partial \beta}$:
\be \label{partialQbetacensor}
&&\left.\frac{\partial \hat{Q}(\beta,G,{\boldeta})}{\partial \beta}\right|_{\beta=\beta_0}\n\\
&=&\bar \psi^{'}(G,\boldeta)\tp\bar\Sigma(\beta_0,G,\boldeta)^{-1}\bar \psi(\beta_0,G,\boldeta)\n\\
&&-\bar \psi(\beta_0,G,\boldeta)\tp\bar\Sigma(\beta_0,G,\boldeta)^{-1}\left\{\frac{1}{n}\sum_{i=1}^{n}\psi(\beta_0,G,\boldeta;\O_i)\psi^{'}(G,\boldeta;\O_i)\tp\right\}\bar\Sigma(\beta_0,G,\boldeta)^{-1}\bar \psi(\beta_0,G,\boldeta)\n\\
&&+o_p({\mu_n}/{n})\n.
\ee
The decomposition has two terms. for the first term, based on Lemma \ref{lemmapsi} and \ref{lemmasigma}, we have
\be
&&\bar \psi^{'}(G_0,\boldeta_0)\tp\bar\Sigma(\beta_0,G_0,\boldeta_0)^{-1}\bar \psi(\beta_0,G_0,\boldeta_0)-\bar \psi^{'}(G_0,\hat\boldeta)\tp\bar\Sigma(\beta_0,G_0,\hat\boldeta)^{-1}\bar \psi(\beta_0,G_0,\hat\boldeta)\n\\
&=&\bar\psi^{'}(G_0,\boldeta_0)\tp\bar\Sigma(\beta_0,G_0,\boldeta_0)^{-1}\{\bar \psi(\beta_0,G_0,\boldeta_0)-\bar \psi(\beta_0,G_0,\hat\boldeta)\}\n\\
&&+ \bar\psi^{'}(G_0,\boldeta_0)\tp\{\bar\Sigma(\beta_0,G_0,\boldeta_0)^{-1}-\bar\Sigma(\beta_0,G_0,\hat\boldeta)^{-1}\}\bar \psi(\beta_0,G_0,\hat\boldeta)\n\\
&&+\{\bar\psi^{'}(G_0,\boldeta_0)-\bar \psi^{'}(G_0,\hat\boldeta)\}\tp\bar\Sigma(\beta_0,G_0,\hat\boldeta)^{-1}\bar \psi(\beta_0,G_0,\hat\boldeta)\n\\
&=&o_p(\mu_n/n).\n
\ee
Similar to the above proof, we can prove 
\small{\bse
&&\bar \psi(\beta_0,G_0,\boldeta_0)\tp\bar\Sigma(\beta_0,G_0,\boldeta_0)^{-1}\left\{\frac{1}{n}\sum_{i=1}^{n}\psi(\beta_0,G_0,\boldeta_0;\O_i)\psi^{'}(G_0,\boldeta_0;\O_i)\tp\right\}\bar\Sigma(\beta_0,G_0,\boldeta_0)^{-1}\bar \psi(\beta_0,G_0,\boldeta_0)\\
&-&\bar \psi(\beta_0,G_0,\hat\boldeta)\tp\bar\Sigma(\beta_0,G_0,\hat\boldeta)^{-1}\left\{\frac{1}{n}\sum_{i=1}^{n}\psi(\beta_0,G_0,\hat\boldeta;\O_i)\psi^{'}(G_0,\hat\boldeta;\O_i)\tp\right\}\bar\Sigma(\beta_0,G_0,\hat\boldeta)^{-1}\bar \psi(\beta_0,G_0,\hat\boldeta)\\
&&=o_p(\mu_n/n).
\ese}
Thus, $n/\mu_n{\hat Q(\beta_0,G_0,\boldeta_0)}/{\partial\beta}-n/\mu_n\hat Q(\beta_0,G_0,\hat\boldeta)/{\partial\beta} = o_p(1)$. 

Next, we will find the asymptotic property of $n/\mu_nD_G \partial\hat Q(\beta_0,G_0,\boldeta_0)/\partial \beta(\hat G -G_0)$. Different from $\boldeta$, $G$ is a non-Neyman orthogonal nuisance, thus special attention must be paid to the asymptotic behavior of this term. Theorem 2.3 in \cite{gonzalez1994} gives
\be\label{eq:ghatg0}
\hat G (T|\O_A)-G_0(T|\O_A) = \frac{1}{n}\sum_{j=1}^n  \phi(\O_{T,j}^{'},\O_T)+O_p((\frac{\log(n)}{nh^{m+d_x+1}})^{3/4}+h^{\lfloor 3(m+d_x+1)/2\rfloor+1}),
\ee 
where 
$$
\phi(\O_{T,j}^{'},\O_T) =B_{nj}(\O_T)G_0(T|\O_A)\left\{\frac{I(T^{'}\leq T,\delta^{'}=0)}{P(T^{'}<T|\O_A)}-\int_0^{\min\{T^{'},T\}}\frac{d P(T^{'}\leq s, \delta^{'}=0)}{P(T^{'}<s|\O_A)^2}\right\},
$$
and 
$$
\bbE \phi(\O_{T,j}^{'},\O_T) =0.
$$
Note that Theorem 2.3 in \cite{gonzalez1994} is about the distribution of $T$, but here we need the distribution of $C$. Due to the symmetry between $T$ and $C$, by replacing $\delta = 1$ in the original theorem with $\delta = 0$, the distribution of $C$ can be obtained. In addition, since we use the higher order kernel $K$ in Assumption \ref{condkernel}, the last term in \eqref{eq:ghatg0} is $h^{\lfloor 3(m+d_x+1)/2\rfloor+1}$ instead of $h^2$. This ensures that the reminder term is $o_p(n^{-1/2})$ by assigning proper $h$, such as $h=n^{-1/{3(m+d_x+1)+1}}$. 
Define 
$$
\calL_{\delta}=\{D_G \frac{\sqrt{n}}{\mu_n}\frac{\partial\rho(\beta_0,G_0,\boldeta_0;\O)}{\partial \beta}(G-G_0): \|G-G_0\|_{L^2}\leq\delta\}.
$$
Then there is a constant $c$ such that $\sigma^2=c\kappa^2$ and  $\sup_{l\in\calL_\delta}\bbE l^2\leq  n/\mu_n^2 \bbE\|\rho^{'}(\blambda\tp g(\beta_0,\boldeta_0;\O))\blambda\tp$ $\delta(g^{'}(\boldeta_0;\O)-\xi^{'}(\boldeta_0;\O))/G_0^2 \|^2\|G-G_0\|_{L^2}^2 \leq c n/\mu_n^2\|\blambda\|^2\kappa^2 \lesssim c\kappa^2 = \sigma ^2$ due to Assumptions \ref{condweak}, \ref{condkernel} and Lemma \ref{lemmalambdacensor}. In addition, Lemma 14 in \cite{kim2019} ensures that the covering number of a kernel density estimation can be bounded by $\sup_{Q_n}\calN(\epsilon\|F\|,\calL_1,\|\cdot\|_{2,Q_n}) \leq c(n/\epsilon)^m$. Similar to the proof of \eqref{pn-p} in Lemma \ref{lemmag}, we can obtain $\sqrt{n}(\bbE_n-\bbE)D_G \sqrt{n}/\mu_n\partial \rho/\partial \beta(\hat G-G_0)=O_p(\kappa\sqrt{m\log(n)}+m\log(n)/\sqrt{n})=o_p(1)$.
Then we have  
\be \label{normalG}
&&D_G\frac{n}{\mu_n}\frac{\partial\hat Q(\beta_0,G_0,\boldeta_0)}{\partial\beta}(\hat G -G_0)\n\\
&=&(\bbE_n-\bbE_{P_{\O_T}}+\bbE_{P_{\O_T}}) D_G\frac{n}{\mu_n}\frac{\partial\rho(\blambda\tp\psi(\beta_0,G_0,\boldeta_0;\O_{T}))}{\partial\beta}(\hat G -G_0)\n\\
&=&\bbE_{P_{\O_T}} D_G\frac{n}{\mu_n}\frac{\partial\rho(\blambda\tp\psi(\beta_0,G_0,\boldeta_0;\O_{T}))}{\partial\beta}(\hat G -G_0)+o_p(1)\n\\
&=&\bbE_{P_{\O_T}} D_G\frac{n}{\mu_n}\frac{\partial\rho(\blambda\tp\psi(\beta_0,G_0,\boldeta_0;\O_{T}))}{\partial\beta}\{\frac{1}{n}\sum_{i=1}^n\phi(\O_{T,i}^{'},\O_{T})+o_p(n^{-1/2})\}+o_p(1)\n\\
&=&\frac{1}{n}\sum_{i=1}^n \bbE_{P_{\O_T}} [D_G\frac{n}{\mu_n}\frac{\partial\rho(\blambda\tp\psi(\beta_0,G_0,\boldeta_0;\O_{T}))}{\partial\beta}\phi(\O^{'}_{T,i},\O_{T})]+o_p(1).
\ee 
Note that $\O_T$ is independent of $\O_T^{'}$, and is integrated out by $\bbE_{P_{\O_T}}$, thus the last line is an independent average of $n$ random variables with mean 0. As shown in \cite{stute1995,gerds2017}, such doubly integrated Kaplan-Meier quantity has $\sqrt{n}$ normality with zero expectation under Assumption \ref{condkernel}. This allows us to treat the summand in \eqref{normalG} as the $X_i$ term in Lemma A10 of \cite{newey2009}. 

Lemma A12 in \cite{newey2009} gives that 
\be\label{eqQ0}
&&\frac{n}{\mu_n}\left.\frac{\partial \hat{Q}(\beta, G_0,{\boldeta_0})}{\partial \beta}\right|_{\beta={\beta_0}}\n\\
&=&\frac{1}{n}\sum_{i=1}^n \frac{n}{\mu_n}\{\psi^{*}\Sigma_0^{-1}\psi(\beta_0,G_0,{\boldeta_0};\O_i)+\frac{1}{n}\sum_{i=1}^n U_{i}^{C\top}\Sigma_0^{-1}\psi(\beta_0,G_0,{\boldeta_0};\O_i) \}+o_p(1)\n\\
&\to& N(0,H+V_1).
\ee
Combining \eqref{normalG}, \eqref{eqQ0}, and Lemma A10 in \cite{newey2009}, \eqref{Qfirstorder} becomes
\be
&&\frac{n}{\mu_n}\frac{\partial\hat Q(\beta_0, \hat G ,\hat{\boldeta})}{\partial\beta}\n\\
&=& \frac{1}{n}\sum_{i=1}^n \frac{n}{\mu_n}\{\psi^{*}\Sigma_0^{-1}\psi(\beta_0,G_0,{\boldeta_0};\O_i)+\frac{1}{n}\sum_{i=1}^n U_{i}^{C\top}\Sigma_0^{-1}\psi(\beta_0,G_0,{\boldeta_0};\O_i) \}\n\\
&&+\frac{1}{n}\sum_{i=1}^n \bbE_{P_{\O_T}} [D_G\frac{n}{\mu_n}\frac{\partial\rho(\blambda\tp\psi(\beta_0,G_0,\boldeta_0;\O_{T}))}{\partial\beta}\phi(\O_{T,i}^{'},\O_{T})]+o_p(1)\n\\
&\to & N(0,H+V_1+V_2),
\ee
where $V_2 = \bbE_{P_{\O_T^{'}}}(\bbE_{P_{\O_T}}D_G\{n/\mu_n\partial \rho(\blambda\tp\psi(\beta_0,G_0,\boldeta_0;\O_{T}))/\partial \beta\allowbreak\phi(\O_{T}^{'},\O_{T})\})^2 $\\ $+2\bbE_{P_{\O_T^{'}}}(\bbE_{P_{\O_T}}\{D_Gn^2/\mu_n^2\partial\rho(\blambda\tp\psi(\beta_0,G_0,\boldeta_0;\O_{T}))/\partial\beta\phi(\O_{T}^{'},\O_{T})\}\{\psi^{*}\Sigma_0^{-1}\psi(\beta_0,G_0,{\boldeta_0};\O_{T})\})$ includes two parts: a variance term and a covariace term between $\psi^{*}\Sigma_0^{-1}\psi$ and $\phi$. Note that the covariace term related to $U_{i}^{C\top}\Sigma_0^{-1}\psi$ is 0 from Lemma A10 \cite{newey2009}.
Next, we will prove
$$
|\frac{n}{\mu_n^2}\frac{\partial^2\hat Q(\bar\beta,\hat G ,\hat\boldeta)}{\partial \beta^2}-\frac{n}{\mu_n^2}\frac{\partial^2\hat Q(\bar\beta,G_0,\boldeta_0)}{\partial \beta^2}|=o_p(1).
$$
Similar to \eqref{partialqbb}, we have
\bse
&&\frac{\partial^2 \hat{Q}(\bar\beta, G, \boldeta)}{\partial \beta^2}\\
&=&\frac{1}{n}\sum_{i=1}^n \rho^{''}(\blambda(\bar\beta ,G,\boldeta)\tp \psi(\bar\beta ,G,\boldeta;\O_i))\blambda(\bar\beta ,G,\boldeta)\tp \psi^{'}(G,\boldeta;\O_i)\psi^{'}(G,\boldeta;\O_i)\tp\blambda(\bar\beta ,G,\boldeta) \\
&&+\frac{1}{n}\sum_{i=1}^n \rho^{''}(\blambda(\bar\beta ,G,\boldeta)\tp \psi(\bar\beta ,G,\boldeta;\O_i))\blambda(\bar\beta ,G,\boldeta)\tp \psi^{'}(G,\boldeta;\O_i)\frac{\partial\blambda(\bar\beta ,G,\boldeta)\tp}{\partial\beta}\psi(\bar\beta ,G,\boldeta;\O_i)\\
&&+ \frac{1}{n}\sum_{i=1}^n \rho^{'}(\blambda(G,\bar\beta ,\boldeta)\tp \psi(\bar\beta ,G,\boldeta;\O_i))\frac{\partial\blambda(\bar\beta ,G,\boldeta)\tp}{\partial\beta}\psi^{'}(G,\boldeta;\O_i)\\
&=&(I_1+I_2+I_3)(\bar\beta , G, \boldeta).
\ese
For $I_1$, we have
\bse    
&&I_1(\bar\beta , \hat G,\hat\boldeta)\\
&=& \frac{1}{n}\sum_{i=1}^n(\rho^{''}(0)+ \rho^{'''}(\bar x)\blambda(\bar\beta ,\hat G,\hat\boldeta)\tp \psi(\hat G,\hat\boldeta;\O_i))\blambda(\bar\beta ,\hat G,\hat\boldeta)\tp \psi^{'}(\hat G,\hat\boldeta;\O_i)\psi^{'}(\hat G,\hat\boldeta;\O_i)\tp\blambda(\bar\beta ,\hat G,\hat\boldeta)\\
&=&-\frac{1}{n}\sum_{i=1}^n\blambda(\bar\beta ,\hat G,\hat\boldeta)\tp \psi^{'}(\hat G,\hat\boldeta;\O_i)\psi^{'}(\hat G,\hat\boldeta;\O_i)\tp\blambda(\bar\beta ,\hat G,\hat\boldeta)+R,
\ese
where the remainder term $R$ satisfies 
\be
R&=&\frac{1}{n}\sum_{i=1}^n\rho^{'''}(\bar x)\blambda(\bar\beta ,\hat G,\hat\boldeta)\tp \psi(\hat G,\hat\boldeta;\O_i)\blambda(\bar\beta ,\hat G,\hat\boldeta)\tp \psi^{'}(\hat G,\hat\boldeta;\O_i)\psi^{'}(\hat G,\hat\boldeta;\O_i)\tp\blambda(\bar\beta ,\hat G,\hat\boldeta)\n\\
&\leq& c\max_{i\leq n}\sup_{\beta\in\mathcal{B}}\|\psi(\beta ,\hat G,\hat\boldeta;\O_i)\|\|\blambda(\bar\beta ,\hat G,\hat\boldeta)\|^3\|\frac{1}{n}\sum_{i=1}^n \psi^{'}(\hat G,\hat\boldeta;\O_i)\psi^{'}(\hat G,\hat\boldeta;\O_i)\tp  \|\n\\
&=&o_p(\mu_n^2/n).
\ee
Then Lemmas \ref{lemmapsi}, \ref{lemmapsi-G}, and \ref{lemmalambdacensor} give that
\be
&&I_1(\bar\beta, \hat G, \hat\boldeta)-I_1(\bar\beta, G_0, \boldeta_0) \n\\
&=&\blambda(\bar\beta ,\hat G,\hat\boldeta)\tp\left(\frac{1}{n}\sum_{i=1}^n \psi^{'}( G_0,\boldeta_0;\O_i)\psi^{'}(G_0,\boldeta_0;\O_i)\tp-\frac{1}{n}\sum_{i=1}^n\psi^{'}(\hat G,\hat\boldeta;\O_i)\psi^{'}(\hat G,\hat\boldeta;\O_i)\tp\right)\blambda(\bar\beta ,\hat G,\hat\boldeta)\n\\
&&+\left(\blambda(\bar\beta ,G_0,\boldeta_0)-\blambda(\bar\beta ,\hat G,\hat\boldeta)\right)\tp \frac{1}{n}\sum_{i=1}^n \psi^{'}( G_0,\boldeta_0;\O_i)\psi^{'}(G_0,\boldeta_0;\O_i)\tp\left(\blambda(\bar\beta ,G_0,\boldeta_0)+\blambda(\bar\beta ,\hat G,\hat\boldeta)\right)\n\\
&=&O_p(\mu_n^2/n)o_p(1)+o_p(\mu_n/\sqrt{n})O_p(1)O_p(\mu_n/\sqrt{n})\n\\
&=&o_p(\mu_n^2/n).
\ee
Therefore, we prove $I_1(\bar\beta, \hat G, \hat\boldeta)-I_1(\bar\beta, G_0, \boldeta_0) =o_p(\mu_n^2/n)$, and the other two terms can be proved in the same way. Thus 
$$
\frac{n}{\mu_n^2}\frac{\partial^2\hat Q(\bar\beta,\hat G ,\hat\boldeta)}{\partial \beta^2}=\frac{n}{\mu_n^2}\frac{\partial^2\hat Q(\bar\beta,G_0 ,\boldeta_0)}{\partial \beta^2}+o_p(1)\to H,
$$
and \eqref{Taylor_censor} becomes 
\be 
&&\frac{n}{\mu_n^2}\frac{\partial^2\hat Q(\bar\beta,\hat G ,\hat\boldeta)}{\partial \beta^2}\mu_n(\hat{\beta}-\beta_0)=-\frac{n}{\mu_n}\frac{\partial\hat Q(\beta_0, \hat G,{\hat\boldeta})}{\partial\beta}+o_p(1)\n.
\ee  
Thus we obtain the asymptotic normality of $\widehat\beta$:
$$
\mu_n(\hat{\beta}-\beta_0)\xrightarrow{d} N(0,H^{-2}(H+V_1+V_2)).
$$

\subsection{Remarks on Theorem \ref{thmcensor}: Variance decomposition and impact of non-Neyman orthogonal nuisance estimation}\label{secvardecom}

We now analyze the asymptotic variance of the GEL estimator  $\hat{\beta}_{\textrm{GEL2}}^{\textrm{C}}$. Specifically, the variance of  $\hat{\beta}_{\textrm{GEL2}}^{\textrm{C}}$ can be decomposed into three main components: (i) a classical GMM-type variance term $H$ \citep{hansen1982}; (ii) an additional term $V_1$ that captures the contribution of weak identification \citep{newey2009}; and (iii) a non-Neyman orthogonal induced term $V_2$ that arises from estimating non-Neyman orthogonal nuisance functions. The $V_1$ captures the additional uncertainty from weak moment functions. GEL accounts for this component, which is negligible under strong identification but remains non-vanishing with many weak moments, making its inclusion essential for valid inference.

Figure~\ref{fig:moment_convergence} illustrates the main differences between our method and the classical GMM \citep{hansen1982}. The left panel shows that our estimator accommodates both weak moment conditions $(\mu_n < \sqrt{n})$ and strong moment conditions $(\mu_n=\sqrt{n})$. Thus, our convergence rate is adaptively determined by the moment strength rather than being fixed. The right panel shows that our method also allows the number of moments $m$ to diverge while preserving the convergence rate, while classical two-stage least squares estimators typically rely on a fixed or finite set of valid moments \citep{hansen1982}. This flexibility helps mitigate weak moment problems by incorporating many candidate instruments.

\begin{figure}[htbp]
\centering
\includegraphics[width=0.8\textwidth]{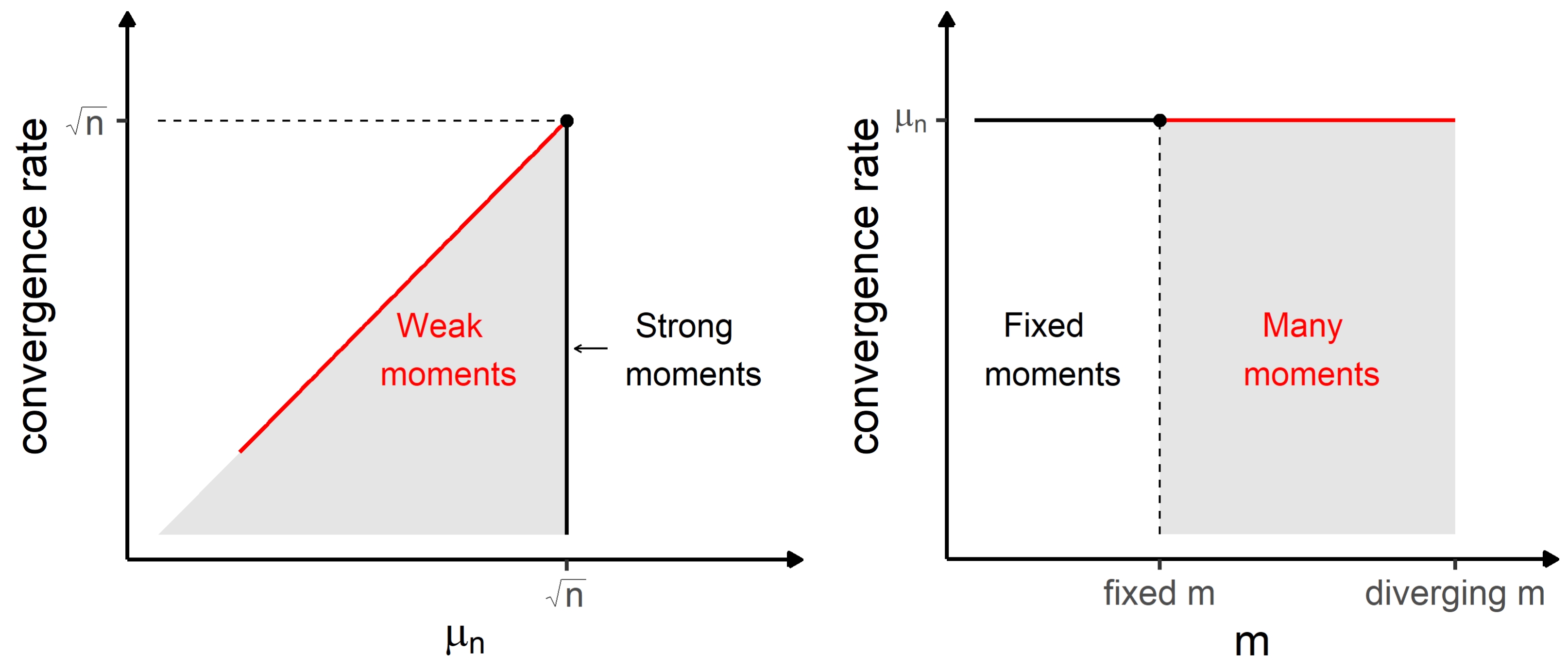}
\caption{Comparison of our proposed method to the classical GMM with strong moments \citep{hansen1982}. The left figure shows that the proposed estimator (red) attains the convergence rate over both the weak moment condition $(\mu_{n}<\sqrt{n})$ and the strong moment condition $\mu_{n}=\sqrt{n}$, while the classical GMM estimator (black) works only at $(\mu_{n}=\sqrt{n})$. The right figure shows that our method allows the number of moment conditions $m$ to diverge and at the same time preserves the convergence rate beyond the fixed $m$ regime (black).}
\label{fig:moment_convergence}
\end{figure} 

The new variance term $V_2$ consists of two parts: the variance from the non-Neyman orthogonal nuisance function $G$, and the covariance between this non-Neyman orthogonal term and the main empirical process term $(\bbE_n-\bbE)[n/\mu_n \partial \rho(\blambda\tp\psi(\beta_0, G_0, \boldeta_0;\O))/ \partial \beta]$. It must be taken into account in our framework where the conditional censoring distribution is a non-Neyman orthogonal nuisance function. When the censoring rate is non-zero, this extra variability cannot be ignored and is critical for ensuring accurate and valid inference. When censoring is absent, $V_2 = 0$, and our result reduces to the uncensored case, as shown in Corollary~\ref{coro1} in the Supplementary Material. 

In practice, $H$ and $V_1$ can be consistently estimated by
$$
\hat H = \frac{\partial^2\hat Q(\hat\beta_{\textrm{GEL2}}^{\textrm{C}},\hat G, \hat\boldeta)}{\partial\beta^2 },\quad \hat V_1= \hat H^{-1} \hat D\tp\bar\Sigma(\hat\beta_{\textrm{GEL2}}^{\textrm{C}},\hat G,\hat\boldeta)^{-1}\hat D\hat H^{-1}, 
$$
respectively, where
$$
\hat D = \sum_{i=1}^n \frac{\rho^{'}(\hat\blambda(\hat\beta_{\textrm{GEL2}}^{\textrm{C}},\hat G,\hat\boldeta)\tp \psi^{'}(\hat\beta_{\textrm{GEL2}}^{\textrm{C}},\hat G,\hat\boldeta;\O_i))\psi^{'}(\hat\beta_{\textrm{GEL2}}^{\textrm{C}},\hat G,\hat\boldeta;\O_i)}{\sum_{j=1}^n \rho^{'}(\hat\blambda(\hat\beta_{\textrm{GEL2}}^{\textrm{C}},\hat G,\hat\boldeta)\tp \psi^{'}(\hat\beta_{\textrm{GEL2}}^{\textrm{C}},\hat G,\hat\boldeta;\O_j))}. 
$$
In addition, $V_2$ can be consistently estimated by
\bse
&&\hat H^{-2} \frac{1}{n}\sum_{j=1}^{n}\left(\frac{1}{n}\sum_{i=1}^{n}D_G\frac{\partial \rho(\blambda(\hat\beta_{\textrm{GEL2}}^{\textrm{C}},\hat G,\hat\boldeta)\tp\psi(\hat\beta_{\textrm{GEL2}}^{\textrm{C}},\hat G,\hat\boldeta;\O_i))}{\partial \beta}\phi(\O_j,\O_i)\right)^2\n\\
&&+2\hat H^{-2}  \frac{1}{n}\sum_{j=1}^{n}\left(\frac{1}{n}\sum_{i=1}^{n}\{D_G\frac{\partial\rho(\blambda\tp\psi(\hat\beta,\hat G,\hat \boldeta;\O_{i}))}{\partial\beta}\phi(\O_{j},\O_{i})\bar\psi^{'}\bar\Sigma^{-1}\psi(\hat\beta_{\textrm{GEL2}}^{\textrm{C}},\hat G,{\hat\boldeta};\O_{i})\}\right). 
\ese

\section{Proof of Theorem \ref{thmtest}}
Following the proof of Theorem 4 in \cite{newey2009} and Theorem 3 in \cite{ye2024}, we have
\be\label{2nqhat}
\frac{2n\hat Q(\beta_0,G_0,\boldeta_0)-(m-1)}{\sqrt{2(m-1)}}\to N(0,1),
\ee
and 
$$
\frac{2n\hat Q(\hat\beta,\hat G,\hat \boldeta)-2n\hat Q(\beta_0,\hat G,\hat \boldeta)}{\sqrt{2(m-1)}}=o_p(1).
$$
Then we only have to calculate $2n\hat Q(\beta_0, \hat G ,\hat\boldeta)-2n\hat Q(\beta_0,G_0,\boldeta_0)$. 
By Taylor's expansion, $2n\hat Q(\hat\beta, \hat G ,\hat\boldeta)-2n\hat Q(\hat\beta,G_0,\boldeta_0)$ becomes
\be\label{test-Gn}
&&\frac{2n\hat Q(\beta_0, \hat G ,\hat\boldeta)-2n\hat Q(\beta_0,G_0,\boldeta_0)}{\sqrt{2(m-1)}}\n\\
&=&D_G\frac{2n}{\sqrt{2(m-1)}}{\hat Q(\beta_0,G_0,\boldeta_0)}(\hat G -G_0)+D_{\boldeta}\frac{2n}{\sqrt{2(m-1)}}{\hat Q(\beta_0,G_0,\boldeta_0)}(\hat\boldeta -\boldeta_0) +o_p(1).  \n
\ee
For the first term, we have 
\be
&&D_G\frac{2n}{\sqrt{2(m-1)}}{\hat Q(\hat\beta,G_0,\boldeta_0)}(\hat G -G_0)\n\\
&=&(\bbE_n-\bbE_{P_{\O_T}}+\bbE_{P_{\O_T}})D_G\frac{2n}{\sqrt{2(m-1)}}\rho(\blambda\tp\psi(\beta_0,G_0,\boldeta_0;\O_T))(\hat G -G_0)\n\\
&=&\bbE_{P_{\O_T}} D_G\frac{2n}{\sqrt{2(m-1)}}\rho(\blambda\tp\psi(\beta_0,G_0,\boldeta_0;\O_T))(\frac{1}{n}\sum_{i=1}^n\phi(\O^{'}_{T,i},\O_T)+o_p(n^{-1/2}))\n\\
&=&\frac{1}{n}\sum_{i=1}^n\bbE_{P_{\O_T}} D_G\frac{2n}{\sqrt{2(m-1)}}\rho(\blambda\tp\psi(\beta_0,G_0,\boldeta_0;\O_T))\phi(\O^{'}_{T,i},\O_T)+o_p(1)\n\\
&\to &N(0,V_3)\n,
\ee
where $V_3 = \bbE_{P_{\O_T}^{'}}(\bbE_{P_{\O_T}}D_G 2n/\sqrt{2(m-1)} \rho(\blambda\tp\psi(\beta_0,G_0,\boldeta_0;\O_T))\phi(\O_{T}^{'},\O_{T}))^2$. 

For the second term, since $\psi$ is Neyman orthogonal to $\boldeta$, this term is $o_p(1)$. %For the remainder term $R$, it contains several second order term. We provide a detailed proof for the order of one representative term below. The remaining terms can be analyzed using similar arguments and are omitted. The remainder term with regard to $G$ is
Then according to Lemma A10 in \cite{newey2009}, we have
\be
&&\frac{{2n\hat Q(\hat\beta,\hat G ,\hat\boldeta)-(m-1)}}{\sqrt{2(m-1)}}\n\\
&=&\frac{2n\hat Q(\hat\beta,\hat G ,\hat\boldeta)-2n\hat Q(\beta_0,\hat G ,\hat\boldeta)}{\sqrt{2(m-1)}}+\frac{2n\hat Q(\beta_0,\hat G ,\hat\boldeta)-2n\hat Q(\beta_0,G_0 ,\boldeta_0)}{\sqrt{2(m-1)}}\n\\
&&+\frac{2n\hat Q(\beta_0,G_0 ,\boldeta_0)-(m-1)}{\sqrt{2(m-1)}} \n\\
&\to& N(0,1+V_3).
\ee
This completes the proof of Theorem \ref{thmtest}.
\section{Higher-order Kernel}\label{suppsec:kernel}

Higher-order kernel functions are designed to reduce the bias in nonparametric estimation; see \citet{parzen1962, fan1992}. A kernel function $K$ of order $r$ satisfies the moment conditions
\[
\int u^i K(u)\, du = 0 \quad \text{for } i = 1, \dots, r-1, \quad \text{and} \quad \int K(u)\, du = 1.
\]
These assumptions ensure that the leading bias term vanishes up to order $h^r$ in kernel-based estimators. Specifically, for the kernel density estimator $\hat{f}(x)$ of a smooth function $f(x)$, the bias admits the expansion
\[
\mathbb{E}[\hat{f}(x)] - f(x) = \sum_{i=1}^r\frac{h^i}{i!} f^{(i)}(x) \int u^i K(u) \, du + o(h^r),
\]
where $f^{(r)}$ denotes the $r$th derivative of $f$. If $K$ is of order $r$, all lower-order terms vanish, and the leading bias becomes $O(h^r)$. Since kernel functions are typically chosen to be even, thus all odd-order moments vanish automatically, only even-order moments need to be constructed explicitly.

Following \citet{fan1992}, kernels of arbitrary order can be constructed recursively. In particular, starting from the standard Cauchy density $K(x) = {2}/{\pi(1 + x^2)^2}$, higher-order kernels $K^{(2r-1)}(x)$ satisfy the recursion
\[
(1 + x^2)K^{(2r-1)}(x) + 2(2r - 1)x K^{(2r-2)}(x) + (2r - 1)(2r - 2)K^{(2r-3)}(x) = 0.
\]
Closed-form expressions for $K^{(2r-1)}(x)$ up to order $8$ are given in Table~\ref{tab:highorderkernel}.

\begin{table}
\center
\caption{\label{tab:highorderkernel}Higher-order kernels derived from the standard Cauchy density}
\begin{tabular}{c c}
\hline
Kernel order & Kernel function \\ 
\hline
2 & $\displaystyle \frac{2}{\pi(1 + x^2)^2}$ \\
4 & $\displaystyle \frac{16(x^2 - 1)}{\pi(1 + x^2)^4}$ \\
6 & $\displaystyle \frac{90.54(3x^4 - 10x^2 + 3)}{(1 + x^2)^6}$ \\
8 & $\displaystyle \frac{130.38(x^6 - 7x^4 + 7x^2 - 1)}{(1 + x^2)^8}$ \\
\hline
\end{tabular}
\end{table}

Because higher-order kernels often involve high-degree polynomials, the resulting kernel functions may take negative values over parts of their domain. To ensure nonnegativity of the estimated density, we apply a positive truncation to the estimator. Specifically, we define the modified density estimator $\tilde{f}(x)$ as
\[
\tilde{f}(x) = \frac{ \hat{f}(x) \cdot I\{\hat{f}(x)>0\}}{\int \hat{f}(u) \cdot I\{\hat{f}(u)>0\} \, du},
\]
where $\hat{f}(x)$ is the original, possibly nonpositive, estimator. This transformation guarantees that $\tilde{f}(x)$ remains strictly positive and integrates to one. Importantly, this truncation and renormalization step preserves the asymptotic properties of the estimator; see, for example, \citet{hall1993}.

\section{Additional simulation results}
\subsection{Right-Censored data simulation}\label{suppsec:censorsimu}

In the main text, we present simulation results under nonlinear nuisance functions. For completeness, we also conducted simulations with linear nuisance functions. The only difference between the two settings lies in the data-generating process of the nuisance functions $\alpha$ and $\gamma$; all other components of the simulation design remain identical. Specifically, in the linear nuisance setting, $Z_j\sim \textrm{Unif}[-2,2]$ for $j=1,\dots,m$, $\alpha_1(\Z)=\sum_{j=1}^m \xi_{A,j}Z_j$, $\gamma_1(\Z)=\sum_{j=1}^m\xi_{Y,j}Z_j$, $\alpha_2(\X)=0$, and $\gamma_2(\X)=\sum_{j=1}^5 X_j$, where $\xi_{A,j}\sim N(0,0.4(1-h^2))$.

We use the ReLU activation function and the Adam optimizer to train the DNN, with a maximum of 1000 epochs. The remaining hyperparameters are selected via grid search, with the specific search ranges summarized in Table \ref{grid-search}. To prevent overfitting and reduce unnecessary computation, we adopt an early stopping strategy: 5\% of the training data is used as a validation set, and training is terminated early if the MSE on validation set does not improve for five consecutive epochs. The model corresponding to the lowest validation MSE during training is selected as the final model. In addition, the hyperparameters of RF and XGB are also chosen by cross validation.

\begin{table}
\center
\caption{\label{grid-search}Grid search ranges for DNN hyperparameters.}
\begin{tabular}{l l}
\hline
Hyperparameter & Search Range \\
\hline
Depth          & \{2, 3, 4, 5\} \\
Width & \{20, 50, 100, 200\} \\
Learning Rate  & \{0.001, 0.0005, 0.0001\} \\
Dropout Rate   & \{0.0, 0.1, 0.2, 0.3\} \\
Batch size & \{64, 128, 256, 512\}\\
\hline
\end{tabular}
\end{table}

Table \ref{censor-simu-linear} shows the under the linear nuisance function, all methods except AFT achieve low bias and coverage probabilities close to the normal level. This suggests that when the model structure aligns with the data generating process, both parametric and nonparametric methods can perform adequately. The bias of the AFT model in the presence of confounding demonstrates the need for specialized methods to perform valid causal inference.

\begin{table}
\center
\caption{\label{censor-simu-linear}Simulation results with censored outcomes under linear nuisance functions. The results are based on 200 repeated experiments with $n=10,000$ and $m=20$. BIAS means the percentage bias between $\hat\beta$ and $\beta_0$, SD is the sample standard deviation of $\hat\beta$ over 200 replications, SE is the averaged standard error, and CP is the 95\% coverage probability.}
\small{\begin{tabular}{lrrrrlrrrr}
\hline
\multicolumn{5}{c}{CASE 1 (No invalid IVs)}                                                                      & \multicolumn{5}{c}{CASE 2 (40\% invalid   IVs)}                                                                  \\ \hline
Model    & \multicolumn{1}{c}{BIAS} & \multicolumn{1}{c}{SD} & \multicolumn{1}{c}{SE} & \multicolumn{1}{c}{CP} & Model    & \multicolumn{1}{c}{BIAS} & \multicolumn{1}{c}{SD} & \multicolumn{1}{c}{SE} & \multicolumn{1}{c}{CP} \\ \hline
DNN\_ET  & 0.865\%                  & 0.115                   & 0.157                   & 0.950                  & DNN\_ET  & 0.911\%                  & 0.108                   & 0.158                   & 0.955                  \\
DNN\_EL  & 0.663\%                  & 0.115                   & 0.171                   & 0.955                  & DNN\_EL  & 0.361\%                  & 0.108                   & 0.175                   & 0.955                  \\
DNN\_CUE & 1.406\%                  & 0.115                   & 0.165                   & 0.970                  & DNN\_CUE & 1.526\%                  & 0.108                   & 0.167                   & 0.970                  \\
LR\_ET   & 0.616\%                  & 0.107                   & 0.154                   & 0.965                  & LR\_ET   & 0.034\%                  & 0.111                   & 0.159                   & 0.955                  \\
LR\_EL   & 0.416\%                  & 0.107                   & 0.169                   & 0.975                  & LR\_EL   & -0.310\%                 & 0.111                   & 0.178                   & 0.955                  \\
LR\_CUE  & 1.108\%                  & 0.107                   & 0.162                   & 0.975                  & LR\_CUE  & 0.596\%                  & 0.111                   & 0.168                   & 0.970                  \\
RF\_ET   & -1.472\%                 & 0.154                   & 0.234                   & 0.955                  & RF\_ET   & 6.307\%                  & 0.143                   & 0.232                   & 0.995                  \\
RF\_EL   & -5.655\%                 & 0.155                   & 0.435                   & 0.980                  & RF\_EL   & 2.947\%                  & 0.144                   & 0.370                   & 1.000                  \\
RF\_CUE  & 2.633\%                  & 0.150                   & 0.247                   & 0.965                  & RF\_CUE  & 6.248\%                  & 0.136                   & 0.249                   & 0.995                  \\
XGB\_ET  & -5.137\%                 & 0.110                   & 0.169                   & 0.965                  & XGB\_ET  & -4.847\%                 & 0.116                   & 0.169                   & 0.970                  \\
XGB\_EL  & -5.549\%                 & 0.110                   & 0.185                   & 0.975                  & XGB\_EL  & -5.459\%                 & 0.116                   & 0.195                   & 0.990                  \\
XGB\_CUE & -4.504\%                 & 0.111                   & 0.178                   & 0.965                  & XGB\_CUE & -4.175\%                 & 0.116                   & 0.177                   & 0.985                  \\
AFT      & -123.668\%               & 0.011                   & 0.008                   & 0.000                  & AFT      & -106.013\%               & 0.018                   & 0.010                   & 0.000                  \\ \hline
\multicolumn{5}{c}{CASE 3 (90\%   invalid IVs with InSIDE)}                                                      & \multicolumn{5}{c}{CASE 4 (90\% invalid IVs without InSIDE)}                                                     \\ \hline
Model    & \multicolumn{1}{c}{BIAS} & \multicolumn{1}{c}{SD} & \multicolumn{1}{c}{SE} & \multicolumn{1}{c}{CP} & Model    & \multicolumn{1}{c}{BIAS} & \multicolumn{1}{c}{SD} & \multicolumn{1}{c}{SE} & \multicolumn{1}{c}{CP} \\ \hline
DNN\_ET  & 0.611\%                  & 0.114                   & 0.161                   & 0.950                  & DNN\_ET  & 0.072\%                  & 0.119                   & 0.173                   & 0.930                  \\
DNN\_EL  & 0.186\%                  & 0.115                   & 0.182                   & 0.965                  & DNN\_EL  & -0.241\%                 & 0.121                   & 0.196                   & 0.930                  \\
DNN\_CUE & 1.160\%                  & 0.115                   & 0.169                   & 0.970                  & DNN\_CUE & 0.611\%                  & 0.119                   & 0.181                   & 0.940                  \\
LR\_ET   & 0.310\%                  & 0.110                   & 0.157                   & 0.950                  & LR\_ET   & -1.509\%                 & 0.123                   & 0.165                   & 0.935                  \\
LR\_EL   & -0.056\%                 & 0.110                   & 0.179                   & 0.965                  & LR\_EL   & -1.752\%                 & 0.124                   & 0.188                   & 0.945                  \\
LR\_CUE  & 0.832\%                  & 0.110                   & 0.166                   & 0.960                  & LR\_CUE  & -0.994\%                 & 0.124                   & 0.174                   & 0.950                  \\
RF\_ET   & 8.462\%                  & 0.149                   & 0.233                   & 0.980                  & RF\_ET   & 3.615\%                  & 0.147                   & 0.234                   & 0.960                  \\
RF\_EL   & 5.296\%                  & 0.151                   & 0.655                   & 1.000                  & RF\_EL   & -0.419\%                 & 0.148                   & 0.473                   & 0.990                  \\
RF\_CUE  & 8.099\%                  & 0.145                   & 0.249                   & 0.985                  & RF\_CUE  & 7.616\%                  & 0.145                   & 0.249                   & 0.970                  \\
XGB\_ET  & -5.306\%                 & 0.111                   & 0.172                   & 0.950                  & XGB\_ET  & -11.363\%                & 0.112                   & 0.166                   & 0.925                  \\
XGB\_EL  & -5.459\%                 & 0.112                   & 0.214                   & 0.970                  & XGB\_EL  & -11.695\%                & 0.113                   & 0.219                   & 0.960                  \\
XGB\_CUE & -4.639\%                 & 0.112                   & 0.181                   & 0.960                  & XGB\_CUE & -10.784\%                & 0.112                   & 0.174                   & 0.945                  \\
AFT      & -114.957\%               & 0.024                   & 0.009                   & 0.000                  & AFT      & -104.292\%               & 0.015                   & 0.009                   & 0.000                  \\ \hline
\end{tabular}}
\end{table}

\subsection{Uncensored data simulation}\label{suppsec:uncensorsimu}
We also conduct some simulations with uncensored outcomes. The data-generating process is the same as in the censored case, except that we do not generate censoring time $C$ and observed outcomes $Y$. Instead, we treat the true event time $T$ as fully observed.

Tables \ref{uncensor-linear} and \ref{uncensor-nonlinear} present results for different methods when sample size is $n=10,000$ and $n=100,000$. In Table \ref{uncensor-linear}, under the linear nuisance functions, all methods achieve low bias and satisfactory coverage probability. This is expected, as the model specification aligns with the DGP. The weak identification test statistic $F_{\textrm{MAWII}} = 5.58$ for $n = 10,000$ and 51.26 for $n = 100,000$, both  above the threshold of 2. This confirms that the model is identifiable via heteroscedasticity, with stronger evidence as the sample size increases. In contrast, Table \ref{uncensor-nonlinear} highlights the performance differences under nonlinear nuisance function. Traditional machine learning methods such as LR, RF, and XGB exhibit substantial bias and small coverage probabilities across all cases. These methods fail to account for the nonlinear relationships embedded in the data, which results in inaccurate estimation of the treatment effect. By comparison, DNN maintains stable and accurate performance across both linear and nonlinear settings. They consistently yield low bias, small standard deviation, and coverage probabilities close to the nominal level, indicating their ability to flexibly approximate complex functional forms. 

In the comparison of different GEL functions, the results with a larger sample size show that EL, ET, and CUE yield nearly identical performance, indicating their asymptotic equivalence. However, in the smaller sample setting, CUE tends to produce larger bias compared to EL and ET. This suggests that while all three methods perform similarly when the sample size is sufficiently large, CUE is relatively suboptimal in finite samples. EL and ET demonstrate greater stability and efficiency under limited data, making them more reliable choices in practice.

\begin{table}
\center
\caption{\label{uncensor-linear}Simulation results of uncensored data under linear nuisance functions. The results are based on 200 repeated experiments with $n=10,000$ and $m=20$. BIAS means the percentage bias between $\hat\beta$ and $\beta_0$, SD is the sample standard deviation of $\hat\beta$ over 200 replications, SE is the averaged standard error, and CP is the 95\% coverage probability.}
\footnotesize{\begin{tabular}{clrrrrrrrr}
\hline
&          & \multicolumn{4}{c}{$n=10,000$, $F_{\textrm{MAWII}}=5.58$}                                                                         & \multicolumn{4}{c}{$n=100,000$, $F_{\textrm{MAWII}}=51.26$}                                                                        \\ \cline{2-10} 
& Model    & \multicolumn{1}{c}{BIAS} & \multicolumn{1}{c}{SD} & \multicolumn{1}{c}{SE} & \multicolumn{1}{c}{CP} & \multicolumn{1}{c}{BIAS} & \multicolumn{1}{c}{SD} & \multicolumn{1}{c}{SE} & \multicolumn{1}{c}{CP} \\ \hline
\multirow{12}{*}{\begin{tabular}[c]{@{}c@{}}CASE 1\\ (No invalid IVs)\end{tabular}}                     & DNN\_ET  & 2.086\%                  & 0.071                   & 0.070                   & 0.955                  & 0.502\%                  & 0.021                   & 0.021                   & 0.955                  \\
& DNN\_EL  & 1.599\%                  & 0.072                   & 0.079                   & 0.950                  & 0.444\%                  & 0.021                   & 0.021                   & 0.955                  \\
& DNN\_CUE & 2.574\%                  & 0.071                   & 0.072                   & 0.965                  & 0.561\%                  & 0.021                   & 0.021                   & 0.955                  \\
& LR\_ET   & 1.927\%                  & 0.070                   & 0.069                   & 0.945                  & 0.355\%                  & 0.021                   & 0.021                   & 0.955                  \\
& LR\_EL   & 1.404\%                  & 0.070                   & 0.073                   & 0.945                  & 0.296\%                  & 0.021                   & 0.021                   & 0.955                  \\
& LR\_CUE  & 2.421\%                  & 0.071                   & 0.070                   & 0.960                  & 0.413\%                  & 0.021                   & 0.021                   & 0.955                  \\
& RF\_ET   & -1.219\%                 & 0.103                   & 0.099                   & 0.925                  & -3.275\%                 & 0.035                   & 0.033                   & 0.915                  \\
& RF\_EL   & -1.776\%                 & 0.104                   & 0.106                   & 0.930                  & -3.159\%                 & 0.035                   & 0.034                   & 0.920                  \\
& RF\_CUE  & -0.474\%                 & 0.103                   & 0.103                   & 0.940                  & -3.410\%                 & 0.035                   & 0.033                   & 0.920                  \\
& XGB\_ET  & -3.958\%                 & 0.088                   & 0.085                   & 0.945                  & -2.185\%                 & 0.023                   & 0.023                   & 0.925                  \\
& XGB\_EL  & -4.544\%                 & 0.088                   & 0.086                   & 0.940                  & -2.238\%                 & 0.023                   & 0.023                   & 0.925                  \\
& XGB\_CUE & -3.402\%                 & 0.088                   & 0.087                   & 0.955                  & -2.130\%                 & 0.023                   & 0.023                   & 0.930                  \\ \hline
\multirow{12}{*}{\begin{tabular}[c]{@{}c@{}}CASE 2\\ (40\% invalid IVs)\end{tabular}}                   & DNN\_ET  & 2.069\%                  & 0.072                   & 0.070                   & 0.945                  & 0.500\%                  & 0.021                   & 0.021                   & 0.955                  \\
& DNN\_EL  & 1.614\%                  & 0.072                   & 0.083                   & 0.940                  & 0.441\%                  & 0.021                   & 0.021                   & 0.955                  \\
& DNN\_CUE & 2.556\%                  & 0.072                   & 0.072                   & 0.955                  & 0.558\%                  & 0.021                   & 0.021                   & 0.955                  \\
& LR\_ET   & 1.926\%                  & 0.070                   & 0.069                   & 0.945                  & 0.355\%                  & 0.021                   & 0.021                   & 0.955                  \\
& LR\_EL   & 1.404\%                  & 0.070                   & 0.073                   & 0.945                  & 0.296\%                  & 0.021                   & 0.021                   & 0.955                  \\
& LR\_CUE  & 2.421\%                  & 0.071                   & 0.070                   & 0.960                  & 0.413\%                  & 0.021                   & 0.021                   & 0.955                  \\
& RF\_ET   & -0.769\%                 & 0.105                   & 0.101                   & 0.950                  & -3.478\%                 & 0.037                   & 0.034                   & 0.910                  \\
& RF\_EL   & -1.400\%                 & 0.107                   & 0.106                   & 0.950                  & -3.364\%                 & 0.037                   & 0.034                   & 0.915                  \\
& RF\_CUE  & 0.008\%                  & 0.104                   & 0.104                   & 0.960                  & -3.613\%                 & 0.037                   & 0.034                   & 0.910                  \\
& XGB\_ET  & -5.695\%                 & 0.079                   & 0.084                   & 0.945                  & -2.626\%                 & 0.021                   & 0.023                   & 0.935                  \\
& XGB\_EL  & -6.289\%                 & 0.080                   & 0.086                   & 0.930                  & -2.682\%                 & 0.021                   & 0.023                   & 0.935                  \\
& XGB\_CUE & -5.125\%                 & 0.079                   & 0.086                   & 0.945                  & -2.568\%                 & 0.021                   & 0.023                   & 0.935                  \\ \hline
\multirow{12}{*}{\begin{tabular}[c]{@{}c@{}}CASE 3\\ (90\% invalid IVs\\  with InSIDE  )\end{tabular}}  & DNN\_ET  & 2.140\%                  & 0.072                   & 0.070                   & 0.945                  & 0.500\%                  & 0.021                   & 0.021                   & 0.955                  \\
& DNN\_EL  & 1.623\%                  & 0.072                   & 0.076                   & 0.950                  & 0.441\%                  & 0.021                   & 0.021                   & 0.955                  \\
& DNN\_CUE & 2.628\%                  & 0.072                   & 0.072                   & 0.960                  & 0.558\%                  & 0.021                   & 0.021                   & 0.955                  \\
& LR\_ET   & 1.926\%                  & 0.070                   & 0.069                   & 0.945                  & 0.355\%                  & 0.021                   & 0.021                   & 0.955                  \\
& LR\_EL   & 1.404\%                  & 0.070                   & 0.073                   & 0.945                  & 0.296\%                  & 0.021                   & 0.021                   & 0.955                  \\
& LR\_CUE  & 2.421\%                  & 0.071                   & 0.070                   & 0.960                  & 0.413\%                  & 0.021                   & 0.021                   & 0.955                  \\
& RF\_ET   & -2.134\%                 & 0.114                   & 0.104                   & 0.925                  & -3.210\%                 & 0.037                   & 0.035                   & 0.930                  \\
& RF\_EL   & -2.566\%                 & 0.115                   & 0.111                   & 0.920                  & -3.085\%                 & 0.037                   & 0.035                   & 0.930                  \\
& RF\_CUE  & -1.394\%                 & 0.114                   & 0.108                   & 0.935                  & -3.348\%                 & 0.037                   & 0.035                   & 0.935                  \\
& XGB\_ET  & -6.294\%                 & 0.082                   & 0.085                   & 0.955                  & -2.670\%                 & 0.022                   & 0.023                   & 0.930                  \\
& XGB\_EL  & -6.808\%                 & 0.083                   & 0.088                   & 0.955                  & -2.726\%                 & 0.022                   & 0.023                   & 0.925                  \\
& XGB\_CUE & -5.738\%                 & 0.083                   & 0.087                   & 0.965                  & -2.612\%                 & 0.022                   & 0.023                   & 0.935                  \\ \hline
\multirow{12}{*}{\begin{tabular}[c]{@{}c@{}}CASE 4\\ (90\% invalid IVs\\  without InSIDE)\end{tabular}} & DNN\_ET  & 2.071\%                  & 0.071                   & 0.070                   & 0.955                  & 0.489\%                  & 0.021                   & 0.021                   & 0.955                  \\
& DNN\_EL  & 1.590\%                  & 0.072                   & 0.080                   & 0.950                  & 0.431\%                  & 0.021                   & 0.021                   & 0.955                  \\
& DNN\_CUE & 2.559\%                  & 0.072                   & 0.072                   & 0.960                  & 0.548\%                  & 0.021                   & 0.021                   & 0.955                  \\
& LR\_ET   & 1.927\%                  & 0.070                   & 0.069                   & 0.945                  & 0.355\%                  & 0.021                   & 0.021                   & 0.955                  \\
& LR\_EL   & 1.404\%                  & 0.070                   & 0.073                   & 0.945                  & 0.296\%                  & 0.021                   & 0.021                   & 0.955                  \\
& LR\_CUE  & 2.421\%                  & 0.071                   & 0.070                   & 0.960                  & 0.413\%                  & 0.021                   & 0.021                   & 0.955                  \\
& RF\_ET   & -2.874\%                 & 0.105                   & 0.101                   & 0.935                  & -3.796\%                 & 0.036                   & 0.034                   & 0.920                  \\
& RF\_EL   & -3.478\%                 & 0.105                   & 0.116                   & 0.940                  & -3.641\%                 & 0.036                   & 0.035                   & 0.920                  \\
& RF\_CUE  & -2.155\%                 & 0.105                   & 0.104                   & 0.945                  & -3.940\%                 & 0.036                   & 0.035                   & 0.920                  \\
& XGB\_ET  & -5.224\%                 & 0.080                   & 0.085                   & 0.935                  & -2.772\%                 & 0.021                   & 0.023                   & 0.945                  \\
& XGB\_EL  & -5.684\%                 & 0.080                   & 0.087                   & 0.940                  & -2.828\%                 & 0.021                   & 0.023                   & 0.945                  \\
& XGB\_CUE & -4.663\%                 & 0.080                   & 0.087                   & 0.955                  & -2.715\%                 & 0.021                   & 0.023                   & 0.945                  \\ \hline
\end{tabular}}
\end{table}

\begin{table}
\center
\caption{\label{uncensor-nonlinear}Simulation results of uncensored data under nonlinear nuisance functions. The results are based on 200 repeated experiments with $n=10,000$ and $m=20$. BIAS means the percentage bias between $\hat\beta$ and $\beta_0$, SD is the sample standard deviation of $\hat\beta$ over 200 replications, SE is the averaged standard error, and CP is the 95\% coverage probability.}
\footnotesize{\begin{tabular}{clrrrrrrrr}
\hline
&          & \multicolumn{4}{c}{$n=10,000$, $F_{\textrm{MAWII}}=2.75$}      & \multicolumn{4}{c}{$n=100,000$, $F_{\textrm{MAWII}}=24.92$}    \\ \cline{2-10} 
& Model    & BIAS      & SD   & SE    & CP    & BIAS      & SD   & SE   & CP    \\ \hline
\multirow{12}{*}{\begin{tabular}[c]{@{}c@{}}CASE 1\\ (No invalid IVs)\end{tabular}}                     & DNN\_ET  & 1.117\%   & 0.124 & 0.112  & 0.950 & -0.092\%  & 0.026 & 0.024 & 0.935 \\
& DNN\_EL  & 1.150\%   & 0.127 & 0.176  & 0.945 & -0.154\%  & 0.026 & 0.024 & 0.935 \\
& DNN\_CUE & 1.753\%   & 0.124 & 0.116  & 0.955 & -0.030\%  & 0.026 & 0.024 & 0.940 \\
& LR\_ET   & 7.466\%   & 0.540 & 0.589  & 0.730 & 0.341\%   & 0.881 & 0.288 & 0.070 \\
& LR\_EL   & -12.548\% & 0.038 & 0.096  & 0.665 & -12.304\% & 0.013 & 0.039 & 0.260 \\
& LR\_CUE  & -12.171\% & 0.035 & 0.037  & 0.730 & -11.980\% & 0.011 & 0.011 & 0.035 \\
& RF\_ET   & -8.065\%  & 1.036 & 1.881  & 0.855 & 7.619\%   & 0.073 & 0.070 & 0.975 \\
& RF\_EL   & -40.111\% & 1.021 & 2.463  & 0.845 & 7.867\%   & 0.075 & 0.073 & 0.955 \\
& RF\_CUE  & -16.776\% & 1.021 & 1.218  & 0.875 & 7.739\%   & 0.071 & 0.071 & 0.980 \\
& XGB\_ET  & -22.441\% & 0.359 & 0.309  & 0.740 & -10.220\% & 0.049 & 0.038 & 0.690 \\
& XGB\_EL  & -20.307\% & 0.336 & 0.295  & 0.745 & -10.214\% & 0.049 & 0.038 & 0.685 \\
& XGB\_CUE & -23.486\% & 0.372 & 0.317  & 0.745 & -10.205\% & 0.049 & 0.038 & 0.690 \\ \hline
\multirow{12}{*}{\begin{tabular}[c]{@{}c@{}}CASE 2\\ (40\% invalid IVs)\end{tabular}}                   & DNN\_ET  & 2.370\%   & 0.124 & 0.113  & 0.940 & -0.161\%  & 0.026 & 0.024 & 0.945 \\
& DNN\_EL  & 2.143\%   & 0.126 & 0.114  & 0.925 & -0.224\%  & 0.026 & 0.024 & 0.950 \\
& DNN\_CUE & 3.012\%   & 0.124 & 0.117  & 0.960 & -0.099\%  & 0.026 & 0.024 & 0.945 \\
& LR\_ET   & 1.766\%   & 0.488 & 0.490  & 0.695 & -0.446\%  & 0.872 & 0.260 & 0.140 \\
& LR\_EL   & -12.831\% & 0.040 & 0.071  & 0.640 & -12.784\% & 0.020 & 0.032 & 0.240 \\
& LR\_CUE  & -12.346\% & 0.038 & 0.037  & 0.700 & -12.418\% & 0.019 & 0.011 & 0.090 \\
& RF\_ET   & 8.665\%   & 0.927 & 1.282  & 0.865 & 10.611\%  & 0.079 & 0.072 & 0.930 \\
& RF\_EL   & -25.853\% & 1.088 & 2.659  & 0.825 & 10.716\%  & 0.083 & 0.081 & 0.925 \\
& RF\_CUE  & 25.758\%  & 0.957 & 1.324  & 0.870 & 10.771\%  & 0.079 & 0.073 & 0.930 \\
& XGB\_ET  & -11.994\% & 0.381 & 0.337  & 0.755 & -9.917\%  & 0.059 & 0.039 & 0.710 \\
& XGB\_EL  & -11.015\% & 0.440 & 0.426  & 0.755 & -9.900\%  & 0.059 & 0.039 & 0.710 \\
& XGB\_CUE & -14.031\% & 0.341 & 0.281  & 0.765 & -9.895\%  & 0.059 & 0.039 & 0.715 \\ \hline
\multirow{12}{*}{\begin{tabular}[c]{@{}c@{}}CASE 3\\ (90\% invalid IVs\\  with InSIDE  )\end{tabular}}  & DNN\_ET  & 3.053\%   & 0.120 & 0.116  & 0.955 & -0.276\%  & 0.026 & 0.025 & 0.950 \\
& DNN\_EL  & 2.504\%   & 0.122 & 0.123  & 0.945 & -0.341\%  & 0.026 & 0.025 & 0.945 \\
& DNN\_CUE & 3.801\%   & 0.120 & 0.120  & 0.960 & -0.212\%  & 0.026 & 0.025 & 0.950 \\
& LR\_ET   & 6.532\%   & 0.505 & 0.433  & 0.755 & 1.370\%   & 0.865 & 0.209 & 0.135 \\
& LR\_EL   & -10.488\% & 0.039 & 0.088  & 0.705 & -10.541\% & 0.015 & 0.033 & 0.320 \\
& LR\_CUE  & -10.001\% & 0.036 & 0.037  & 0.760 & -9.829\%  & 0.014 & 0.011 & 0.110 \\
& RF\_ET   & -40.107\% & 1.253 & 4.596  & 0.880 & 11.193\%  & 0.081 & 0.075 & 0.965 \\
& RF\_EL   & -51.942\% & 1.334 & 11.817 & 0.865 & 11.233\%  & 0.084 & 0.090 & 0.950 \\
& RF\_CUE  & -9.520\%  & 1.153 & 2.700  & 0.920 & 11.474\%  & 0.080 & 0.076 & 0.970 \\
& XGB\_ET  & -25.860\% & 0.494 & 0.335  & 0.715 & -13.013\% & 0.065 & 0.042 & 0.625 \\
& XGB\_EL  & -25.030\% & 0.510 & 0.463  & 0.715 & -12.955\% & 0.065 & 0.043 & 0.620 \\
& XGB\_CUE & -25.208\% & 0.513 & 0.331  & 0.720 & -12.991\% & 0.065 & 0.042 & 0.630 \\ \hline
\multirow{12}{*}{\begin{tabular}[c]{@{}c@{}}CASE 4\\ (90\% invalid IVs\\  without InSIDE)\end{tabular}} & DNN\_ET  & 0.508\%   & 0.119 & 0.111  & 0.940 & -0.256\%  & 0.026 & 0.024 & 0.945 \\
& DNN\_EL  & 0.109\%   & 0.120 & 0.113  & 0.930 & -0.319\%  & 0.026 & 0.024 & 0.945 \\
& DNN\_CUE & 1.143\%   & 0.119 & 0.114  & 0.955 & -0.195\%  & 0.026 & 0.024 & 0.945 \\
& LR\_ET   & 4.292\%   & 0.568 & 0.623  & 0.565 & 2.369\%   & 0.884 & 0.278 & 0.045 \\
& LR\_EL   & -16.671\% & 0.038 & 0.073  & 0.530 & -16.552\% & 0.014 & 0.045 & 0.165 \\
& LR\_CUE  & -16.533\% & 0.035 & 0.037  & 0.570 & -16.363\% & 0.012 & 0.011 & 0.000 \\
& RF\_ET   & -25.852\% & 1.009 & 1.243  & 0.920 & 13.543\%  & 0.081 & 0.075 & 0.950 \\
& RF\_EL   & 2.823\%   & 1.162 & 3.692  & 0.880 & 13.761\%  & 0.087 & 0.083 & 0.950 \\
& RF\_CUE  & -10.273\% & 0.987 & 1.384  & 0.920 & 13.716\%  & 0.080 & 0.076 & 0.950 \\
& XGB\_ET  & -29.868\% & 0.555 & 0.368  & 0.755 & -9.308\%  & 0.051 & 0.038 & 0.705 \\
& XGB\_EL  & -28.603\% & 0.504 & 0.301  & 0.745 & -9.298\%  & 0.051 & 0.038 & 0.710 \\
& XGB\_CUE & -30.264\% & 0.602 & 0.366  & 0.775 & -9.299\%  & 0.051 & 0.038 & 0.710 \\ \hline
\end{tabular}}

\end{table}

\section{Additional results in UK Biobank}\label{suppsec:ukb}

To clarify the disease definitions used in right-censored UK Biobank data analysis, we provide in Table \ref{diseasename} the detailed subcategories included in each disease group. Table \ref{samplesize} further summarizes the sample size, number of events, and censoring rates for the three clinical outcomes.

\begin{table}
\center
\caption{Subcategories of the three diseases in the real data analysis. (\url{https://phewas.genohub.org/phenotypes}) \label{diseasename}}
\begin{tabular}{ll}
\hline
Disease Name & Included Subcategories \\
\hline
\multirow{7}{*}{411: Ischemic heart disease} 
& Unstable angina \\
& Myocardial infarction \\
& Angina pectoris \\
& Coronary atherosclerosis \\
& Aneurysm and dissection of heart \\
& Other chronic ischemic heart disease \\
& Other acute and subacute ischemic heart disease \\
\hline
\multirow{6}{*}{\parbox{5cm}{290: Delirium dementia and\\ amnestic and other cognitive disorders}}
& Dementias \\
& Alzheimer's disease \\
& Dementia with cerebral degenerations \\
& Vascular dementia \\
& Delirium due to other conditions \\
& Persistent mental disorders due to other conditions \\
\hline
\multirow{7}{*}{250: Diabetes mellitus} 
& Type 1 diabetes \\
& Type 2 diabetes \\
& Insulin pump user \\
& Abnormal glucose \\
& Glycosuria or Acetonuria \\
& Polyneuropathy in diabetes \\
& Diabetic retinopathy \\
\hline
\end{tabular}   

\end{table}

\begin{table}
\center
\caption{Overview of sample size events and censoring in three clinical outcomes. IHD refers to ischaemic heart disease; DD refers to delirium dementia; DM refers to diabetes mellitus.\label{samplesize}}
\begin{tabular}{cccc}
\hline
\multicolumn{1}{l}{} & Sample size & Number of events & Censoring rate \\\hline
IHD & 32,810 & 3,944 & 0.880  \\
DD  & 35,722 & 1,039 & 0.971  \\
DM  & 34,213 & 2,664 & 0.922  \\ \hline
\end{tabular}
\end{table}
\begin{figure}[htbp]
\centering
\subfloat[Ischaemic heart disease]{%
\includegraphics[width=0.32\textwidth]{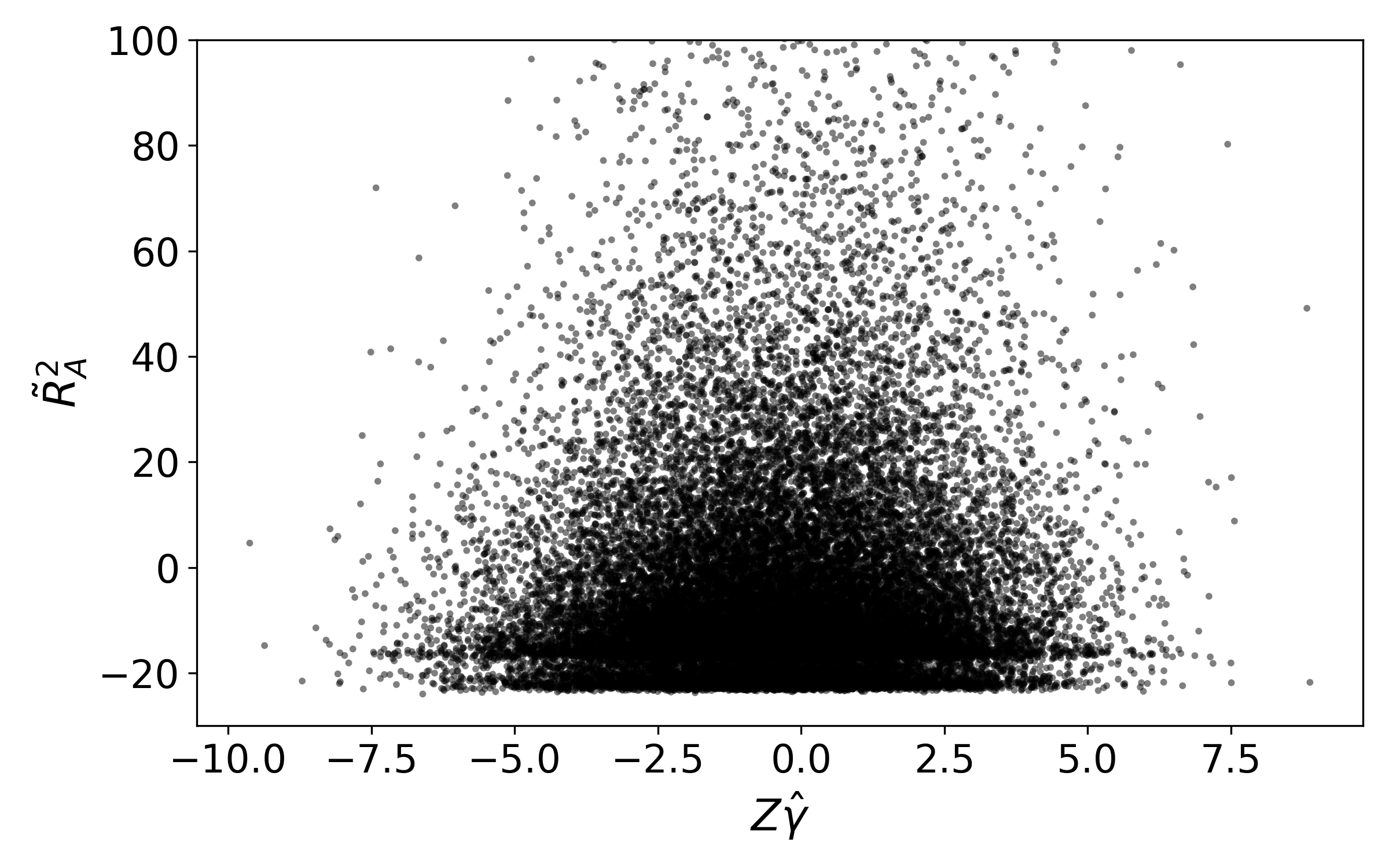}}
\hfill
\subfloat[Delirium dementia]{%
\includegraphics[width=0.32\textwidth]{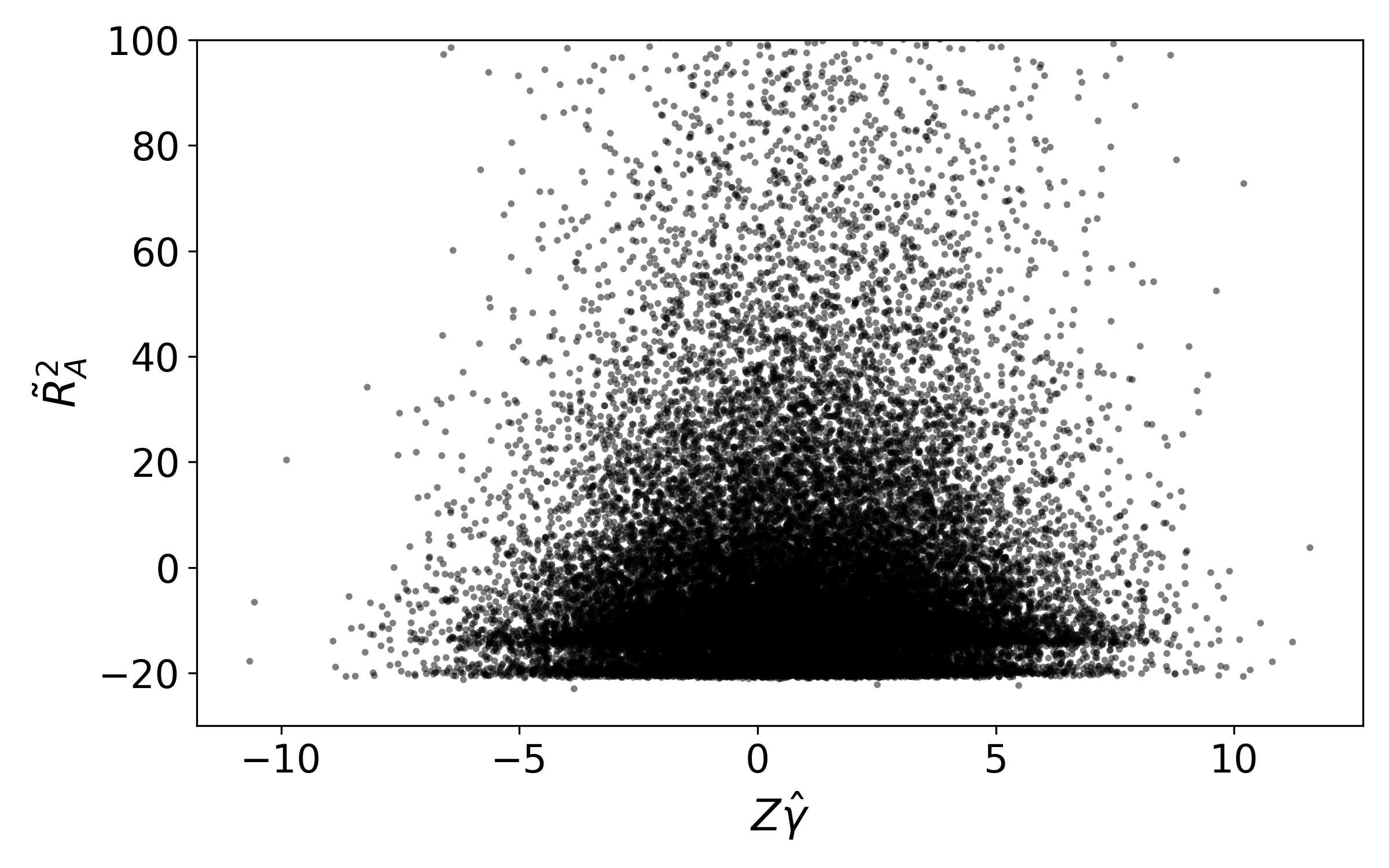}}
\hfill
\subfloat[Diabetes mellitus]{%
\includegraphics[width=0.32\textwidth]{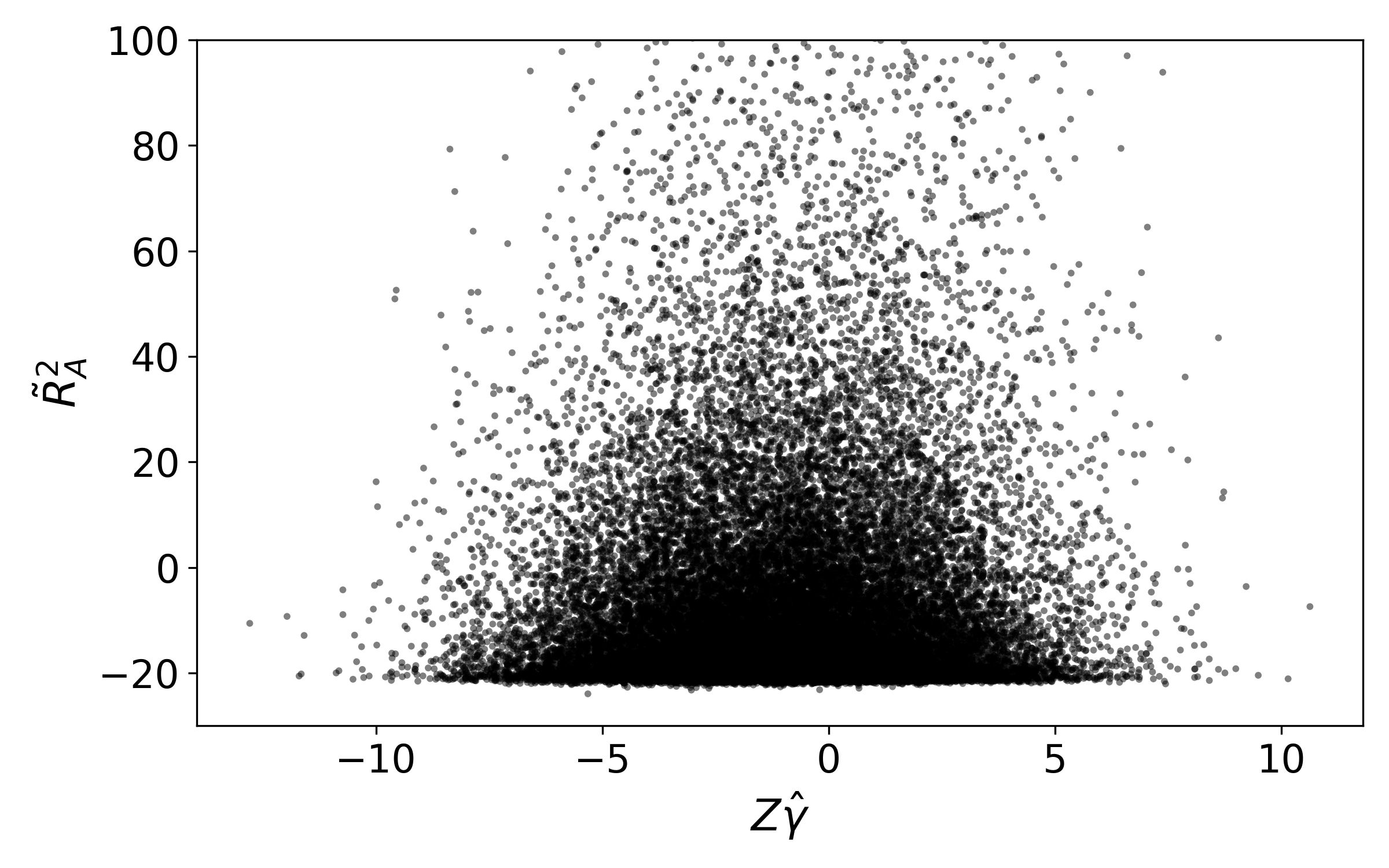}}
\caption{Residual variance plots based on the regression in \eqref{weakid}. The horizontal axis shows the fitted values $\Z\hat{\gamma}$, and the vertical axis represents $\tilde{R}_A^2$. The heteroscedastic structure visible in all three panels indicates that $\var(A\mid \Z,\X)$ varies with $\Z$, providing visual evidence for identification.\label{fig:residual_plots}}

\end{figure}
The residual variance plots in Figure~\ref{fig:residual_plots} further illustrate the non-constant variance in $\var(A | \Z,\X)$, thereby confirming that the heteroscedasticity identification assumption is satisfied. Table \ref{supptab:ihd} presents the estimated causal effect under different nuisance estimators and GEL functions. It can be seen that the results are generally similar across different nuisance estimators and GEL loss functions, except for the case of diabetes mellitus when LR is used to estimate the nuisance functions, where a substantial bias is observed. This may be due to strong nonlinearity in the nuisance components in this setting, which limits the performance of LR. It is also noteworthy that for delirium dementia, the estimated effects are consistently positive across all specifications, further supporting the presence of the obesity paradox.

\begin{table}
\caption{Estimated causal effect of BMI on three disease outcomes. Each cell shows the point estimate with standard error in parentheses, and the corresponding $p$-value is reported in the adjacent column. IHD refers to ischaemic heart disease; DD refers to delirium dementia; DM refers to diabetes mellitus.\label{supptab:ihd}}
\begin{tabular}{lcrcrcr}
\hline
Model & IHD & $p$-value & DD & $p$-value & DM & $p$-value \\
\hline
DNN\_ET  & -0.131 (0.048) & 0.006   & 0.083 (0.021) & \textless{}0.001 & -0.316 (0.055) & \textless{}0.001 \\
DNN\_EL  & -0.132 (0.051) & 0.010   & 0.086 (0.023) & \textless{}0.001 & -0.352 (0.067) & \textless{}0.001 \\
DNN\_CUE & -0.139 (0.042) & \textless{}0.001  & 0.081 (0.021) & \textless{}0.001 & -0.373 (0.032) & \textless{}0.001 \\
LR\_ET   & -0.115 (0.054) & 0.033   & 0.083 (0.025) & \textless{}0.001 &  3.620 (0.002) & \textless{}0.001 \\
LR\_EL   & -0.198 (0.073) & 0.007   & 0.068 (0.059) & 0.249  & -3.010 (0.007) & \textless{}0.001 \\
LR\_CUE  & -0.138 (0.041) & \textless{}0.001  & 0.080 (0.025) & 0.001  &  2.044 (0.001) & \textless{}0.001 \\
RF\_ET   & -0.118 (0.044) & 0.007   & 0.088 (0.024) & \textless{}0.001 & -0.586 (0.017) & \textless{}0.001 \\
RF\_EL   & -0.104 (0.024) & \textless{}0.001  & 0.086 (0.025) & \textless{}0.001 & -0.521 (0.059) & \textless{}0.001 \\
RF\_CUE  & -0.124 (0.046) & 0.007   & 0.088 (0.028) & 0.002  & -0.760 (0.006) & \textless{}0.001 \\
XGB\_ET  & -0.126 (0.042) & 0.003   & 0.094 (0.038) & 0.014  & -0.298 (0.052) & \textless{}0.001 \\
XGB\_EL  & -0.114 (0.030) & \textless{}0.001  & 0.089 (0.015) & \textless{}0.001 & -0.549 (0.083) & \textless{}0.001 \\
XGB\_CUE & -0.137 (0.043) & 0.002   & 0.091 (0.036) & 0.012  & -0.351 (0.031) & \textless{}0.001 \\
\hline
\end{tabular}

\end{table}

We next present results from the UK Biobank using uncensored data to estimate the causal effect of BMI on SBP, as shown in Table \ref{tab:sbp}. This setting does not involve censoring, and serves to demonstrate the performance of our method in a standard continuous outcome framework. The $F$-statistic for weak identification is 11.87 and the $p$-value of the over-identification test is 0.22, indicating that the causal effect is well identified and the moment conditions are compatible with the model assumptions. These diagnostics support the validity of our heteroscedasticity-based identification strategy in this uncensored-data setting. Using the DNN learner with the EL objective as an example, we obtain $\hat\beta = 0.261$ with a standard error of $0.061$, giving a 95\% confidence interval of $[0.141,0.381]$.  Hence a one-unit ($1\ \mathrm{kg/m^{2}}$) increase in BMI is associated with an average rise of about $0.261\ \mathrm{mmHg}$ in systolic blood pressure. All other estimator pairs yield positive coefficients of comparable magnitude, and the 95\% confidence intervals for all estimates exclude zero, confirming statistical significance. These findings align with the Mendelian-randomisation results of \cite{ye2024}, who reported $\hat\beta = 0.140$ (95\% CI: [0.005, 0.275]) for the effect of BMI on SBP in UK Biobank. The slight difference between our results and those of \cite{ye2024} may be attributed to covariate adjustment: while \cite{ye2024} did not include covariates, we incorporate age, which is known to have a strong effect on blood pressure \citep{franklin1997}. Nonetheless, both analyses find a statistically significant positive association between BMI and SBP, reinforcing the robustness of this causal relationship.

\begin{table}
\center
\caption{Estimated causal effect of BMI on systolic blood pressure. Each entry is point estimation of the effect with standard error in parentheses.\label{tab:sbp}}
\begin{tabular}{lccc}  
\hline
& CUE          & EL           & ET           \\ \hline
DNN          & 0.268 (0.061) & 0.261 (0.061) & 0.268 (0.060) \\
LR           & 0.314 (0.061) & 0.308 (0.061) & 0.314 (0.060) \\
XGB          & 0.311 (0.062) & 0.305 (0.062) & 0.310 (0.061) \\
RF           & 0.274 (0.067) & 0.276 (0.079) & 0.273 (0.066) \\ \hline
\end{tabular}
\end{table}

\bibliography{ref}

\end{document}